\title{Graphs with no 7-wheel subdivision}
\author{{Rebecca Robinson and Graham Farr}\\
	[1ex] Clayton School of Information Technology\\
	Monash University\\
	Clayton, Victoria, 3800\\
	Australia\\[1ex]
	Rebecca.Robinson@monash.edu\\
	Graham.Farr@monash.edu
	}
\date{June 13, 2012; amended 17 Dec 2013}
\newtheorem{thm}{Theorem}
\newtheorem{lem}{Lemma}
\newtheorem{red}{Reduction}
\begin{document}
\maketitle

\begin{abstract}
The subgraph homeomorphism problem, SHP($H$), has been shown to be polynomial-time solvable for any fixed pattern graph $H$, but practical algorithms have been developed only for a few specific pattern graphs. Among these are the wheels with four, five, and six spokes. This paper examines the subgraph homeomorphism problem where the pattern graph is a wheel with seven spokes, and gives a result that describes graphs with no $W_{7}$-subdivision, showing how they can be built up, using certain operations, from smaller `pieces' that meet certain conditions. We also discuss algorithmic aspects of the problem.

\end{abstract}

\section{Introduction}
\label{intro}

A graph $G$ is said to be a \emph{subdivision} of a graph $H$ if a graph isomorphic to $H$ can be obtained from $G$ by performing a series of edge-contractions on $G$ where the contractions are limited to edges with at least one endvertex of degree exactly 2. In such a situation, $G$ is often referred to as an \emph{$H$-subdivision}. A particularly important application of this concept is found in Kuratowski's famous theorem characterizing planarity, which tells us that a graph is non-planar if and only if it contains either a $K_{5}$-subdivision or a $K_{3,3}$-subdivision \cite{Kuratowski30}.

Characterizations of graphs containing no subdivisions of a particular fixed graph (which we call the \emph{pattern graph}) $H$ are few and far between. Some notable examples are where the pattern graph is $K_{4}$ \cite{Dirac52, Duffin65}, $K_{3,3}$ \cite{Hall43}, $W_{4}$ \cite{Farr88}, $W_{5}$ \cite{Farr88}, and $W_{6}$ \cite{Robinson08}, where $W_{k}$ denotes the wheel with $k$ spokes.

The algorithmic problem of determining whether or not some graph contains a subdivision of a particular pattern graph is known as the \emph{subgraph homeomorphism problem}, or \emph{topological containment}: 

\begin{tabular}{l}
\\
SUBGRAPH HOMEOMORPHISM ($H$) (abbreviated SHP($H$))\\
Instance: Graph $G$.\\
Question: Does $G$ contain a subdivision of $H$?\\
\\
\end{tabular}

Provided the pattern graph $H$ is fixed, it is known from general results of Robertson and Seymour that SHP($H$) can be solved in polynomial time for any $H$ (\cite{R&S95}, recently improved in \cite{Grohe11, Grohe11_2} to establish fixed-parameter tractability, using elements of the Robertson-Seymour approach). However, precise characterizations are only known for a handful of pattern graphs, including those listed above. The difficulty of finding a complete characterization increases very rapidly as the size of $H$ increases, hence few characterizations are known. A more complete review of previous research in this area can be found in \cite{Robinson08}.


This paper builds on the results of \cite{Farr88} and \cite{Robinson08} to give a result pertaining to graphs that do not contain a subdivision of $W_{7}$, the wheel with seven spokes. The result gives a characterization for such graphs, provided they have `pieces' of a particular bounded size (this is defined below in more detail), and leads to an efficient algorithm for solving SHP($W_{7}$).

The main result of this paper is Theorem \ref{theorem}, which characterizes (up to bounded size pieces) graphs that do not contain subdivisions of $W_{7}$. The proof is constructed in a similar way to the proofs of the main theorems in \cite{Robinson08} and \cite{Farr88}, which characterize graphs with no subdivisions of $W_{6}$ and $W_{5}$ respectively. The proof of Theorem \ref{theorem} begins by showing that for some graph $G$ that meets the conditions of the hypothesis, there must exist some $W_{6}$-subdivision $H$ centred on a specific vertex $v_{0}$ of degree $\ge 7$. It is then observed that some neighbour $u$ of $v_{0}$ exists such that $u$ is not a neighbour of $v_{0}$ in $H$, and that, since $G$ is 3-connected, there must be two disjoint paths in $G$ from $u$ to $H$ that do not meet $v_{0}$. The proof examines all possible placings of these paths, and shows that each resulting graph must contain a $W_{7}$-subdivision, if it is to satisfy the conditions of the Theorem. As in \cite{Robinson08}, parts of proofs in this paper requiring exhaustive case analysis depend on results generated by a program written in C. This program automates the construction of the small graphs arising as cases in the proof, and tests each graph for the presence of a $W_{7}$-subdivision. The complete code for the program can be found online at \verb|http://www.csse.monash.edu.au/~rebeccar/wheelcode.html|. For further discussion of this program and the algorithms used, see \cite{Robinson09_2}.

While the proof of Theorem \ref{theorem} uses the same overall method as the main theorems of \cite{Robinson08} and \cite{Farr88}, a key difference is that in this theorem, one of the conditions of the hypothesis is a minimum bound on $|V(G)|$. This has certain implications in designing an algorithm (given in Section \ref{algorithm}) which solves SHP($W_{7}$) for any given input graph, the steps of which follow from each of the restrictions placed on $G$ in Theorem \ref{theorem}. One of the steps in this algorithm involves exhaustive search, for a $W_{7}$-subdivision, in `pieces' of the graph that are of bounded size (where the maximum bound of each such `piece' is less than the minimum bound of $|V(G)|$ in Theorem \ref{theorem}). By contrast, the theorem for $W_{6}$ in \cite{Robinson08} describes completely the structure of a graph with no $W_{6}$-subdivisions, and as such the corresponding algorithm for solving SHP($W_{6}$) does not require an exhaustive search for $W_{6}$-subdivisions at any point.

This paper begins by presenting some definitions of terms used throughout. We then give a short section stating two simple but important lemmas which are proved in \cite{Robinson08}, but also used frequently in this paper. Sections \ref{reductions} and \ref{separatingsets} define various types of reductions and separating sets respectively, each of which is forbidden in a graph meeting the conditions of the main theorem, Theorem \ref{theorem}. With each such definition a corresponding theorem is given, proving that a certain operation can be performed on some input graph $G$ (either performing a reduction, or dividing the input graph into components along the given separating set) without altering the existence or otherwise of a $W_{7}$-subdivision in $G$. 

In Section \ref{6wheelresults}, we give some results on graphs with no 6-wheel subdivisions, building on the main result of \cite{Robinson08}. The final theorem in this section (Theorem \ref{w6cor}) gives the important result that any graph $G$ meeting the conditions of Theorem \ref{theorem} must contain a $W_{6}$-subdivision centred on any given vertex $v_{0}$ of degree $\ge 7$ in $G$. This result is key to proving Theorem \ref{theorem}. Section \ref{supportinglemmas} gives some further lemmas which support the main result, then Section \ref{mainresult} contains the main theorem of the paper, Theorem \ref{theorem}, and its proof. An algorithm which solves SHP($W_{7}$) follows from this result: this is given in Section \ref{algorithm}. Finally, some concluding remarks are given suggesting further work in this area.

\section{Definitions}
\label{definitions}

If $G$ is a graph, and $E'$ is a set of edges in $G$, then $G - E'$ denotes the graph obtained from $G$ after the removal of all the edges of $E'$.

If $X$ is a set of vertices in graph $G$, then $G - X$ denotes the graph obtained from $G$ after the removal of all the vertices of $X$.

A \emph{separating set} $S$ in a graph $G$ is a set of vertices in $G$ whose removal disconnects $G$.

The \emph{neighbourhood} $N_{G}(v)$ of a vertex $v$ in $G$ is the set of vertices which are adjacent to $v$ in $G$.

An \emph{internal 3-edge-cutset} in a graph $G$ is a set $E'$ of at most three edges of $G$ such that $G - E'$ is disconnected with each component having at least two vertices. 

An \emph{internal 4-edge cutset} in a graph $G$ is a set $E'$ of four edges of $G$ such that $G - E'$ is disconnected with each component having at least three vertices, and exactly two edges in $E'$ share an endpoint.

Given a path $P$ where $x, y \in V(P)$, then $xPy$ denotes the subpath of $P$ between $x$ and $y$, including $x$ and $y$.

If $W$ is a set of vertices in graph $G$, then $G|W$ denotes the set of all maximal subsets $U$ of $V(G)$ such that any two vertices of $U$ are joined by a path in $G$ with no internal vertex in $W$. Each element of $G|W$ is refered to as a \emph{bridge} of $G|W$.

The \emph{centre} of a wheel subdivision $W_{n}$ is the vertex of degree $n$ in that wheel subdivision. The \emph{rim} of a wheel subdivision $W_{n}$ is the cycle around the outside of that wheel subdivision (excluding the centre). The \emph{spoke-meets-rim vertices} of a wheel subdivision $W_{n}$ are the $n$ vertices of degree 3 in that wheel subdivision. The \emph{spokes} of a wheel subdivision $W_{n}$ are the $n$ paths from the centre vertex to the spoke-meets-rim vertices in that wheel subdivision. 

If $G$ is a graph, and $S$ is a set of vertices such that $S \in V(G)$, then $\langle S\rangle$ is the subgraph induced by $S$.

$X\setminus S$ denotes the set-theoretic difference of $X$ and $S$.

$X - v$ is equivalent to $X \setminus \{v\}$.

$H\cap X$ is equivalent to $H\cap \langle X\rangle$.

\section{Two important lemmas}
\label{twolemmas}

The proofs of the following two lemmas, Lemma \ref{lemma1} and Lemma \ref{lemma2}, are given in \cite{Robinson08}. These lemmas are used many times throughout the paper to support the proofs of other lemmas and theorems.

\begin{lem}
\label{lemma1}

Let $G$ be a 3-connected graph containing a separating set $S$ such that $|S| = 3$. Let $X$ be some bridge of $G|S$ which contains at least two vertices not in $S$. Suppose there are at least four edges joining the vertices in $S$ to the vertices in $X \setminus S$, and suppose there is some vertex $x \in S$ which has at least two neighbours $y$ and $z$ in $X \setminus S$. Then there exists:
\begin{itemize}
\item a path $P$ in $\langle X\rangle$ such that $P$ has only its endpoints in $S$ but does not meet $x$; and
\item two paths $Q_{1}$ and $Q_{2}$ and two vertices $q_{1}$ and $q_{2}$ such that $q_{1}$ and $q_{2}$ are two distinct vertices on $P$, and $Q_{1}$ and $Q_{2}$ are paths from $x$ to $q_{1}$ and $x$ to $q_{2}$ respectively, which meet only at $x$, and which contain no other vertex of $P$.
\end{itemize}

\end{lem}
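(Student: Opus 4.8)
The plan is to work inside the subgraph $\langle X\rangle$ and exploit the connectivity that $G$ inherits there. First I would establish the existence of the path $P$: since $X$ is a bridge of $G|S$ with at least two vertices outside $S$, and there are at least four edges from $S$ to $X\setminus S$, a counting argument shows that two vertices of $S$ — say $s_1, s_2$, with $x$ being the third — each have a neighbour in $X\setminus S$, and moreover $\langle X\rangle$ is connected with every vertex of $X\setminus S$ reachable from $S$. I would then argue that there is a path in $\langle X\rangle$ from $s_1$ to $s_2$ whose interior lies entirely in $X\setminus S$ and which avoids $x$; if every such $s_1$--$s_2$ path had to pass through $x$, then $x$ together with one of $s_1,s_2$ (or a suitable single vertex) would be a cut in $G$ of size $\le 2$ separating part of $X$ from the rest, contradicting 3-connectedness. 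This $P$ is the required path with only its endpoints in $S$ and not meeting $x$. (Some care is needed if only two of the four $S$-to-$X$ edges reach distinct vertices; here the hypothesis that $x$ itself has two neighbours $y,z$ in $X\setminus S$ guarantees enough edge-room so that the other two edges come from $s_1$ and $s_2$.)

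Next I would produce the two paths $Q_1, Q_2$ from $x$ to $P$. Consider the graph $\langle X\rangle - (V(P)\setminus\{\text{the two endpoints } s_1,s_2\})$ — actually it is cleaner to look at $\langle X\rangle$ and contract $P$ to a single vertex $p$. In this contracted graph I want two internally disjoint $x$--$p$ paths, i.e. I want $x$ not to be separated from $P$ by a single vertex of $X\setminus(V(P)\cup\{x\})$. Suppose some single vertex $w$ separated $x$ from $P$ in $\langle X\rangle$. Then in $G$, the set $\{w, s_1, s_2\}$ (or $\{w\}$ together with at most two vertices of $S$) would separate $x$ and its private portion of $X$ from the rest of $G$; since $x\in S$ has two neighbours $y,z\in X\setminus S$, there is genuinely something on $x$'s side to be cut off, so this is a 3-cut of $G$ that separates a nonempty set from a nonempty set — but I need to rule out that this is a legitimate separation, which it is not a contradiction by itself. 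The correct move is to use Menger's theorem in $\langle X\rangle\cup\{\text{edges at }x\}$: because $x$ has the two neighbours $y,z$ in $X\setminus S$, and because $3$-connectedness of $G$ forbids small cuts, there are two internally disjoint paths from $x$ into $V(P)$ meeting $P$ only at their far endpoints; call these endpoints $q_1,q_2$. If $q_1=q_2$ I would reroute: the two paths plus a subpath of $P$ give a cycle through $x$ meeting $P$ in at least two points, and taking $q_1,q_2$ to be two such points on $P$ yields the claim.

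The main obstacle, I expect, is the bookkeeping around the edge-count hypotheses: disentangling the cases according to how the four (or more) $S$-to-$X$ edges are distributed among the three vertices of $S$, and checking in each case that the vertex $x$ — which we must avoid when building $P$ but must connect from when building the $Q_i$ — has enough ``room'' on its side of every small separation. The role of the hypothesis ``$x$ has two neighbours $y,z$ in $X\setminus S$'' is exactly to guarantee that $x$'s side of any candidate 2- or 3-cut is non-trivial, so that 3-connectedness of $G$ can be invoked to derive a contradiction; keeping track of this across all the sub-cases, and handling the degenerate situation $q_1 = q_2$ by the rerouting argument above, is where the real work lies. Everything else is a routine application of Menger's theorem and the definition of a bridge.
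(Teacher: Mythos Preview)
The paper does not prove this lemma; it cites \cite{Robinson08} for the argument, so there is nothing in-paper to compare against and I assess your sketch on its own terms. Your overall strategy is sound: obtain $P$ from the connectivity of $X\setminus S$, then apply a Menger/fan argument inside $\langle X\rangle$ to produce $Q_1,Q_2$. But two steps, as written, do not go through.

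First, the separator you propose for the Menger contradiction is the wrong one, and you notice this yourself (``it is not a contradiction by itself'') without supplying the repair. If a single vertex $w$ separates $x$ from $V(P)$ in $\langle X\rangle$, the set $\{w,s_1,s_2\}$ does \emph{not} isolate anything useful, because $x\in S$ still reaches the other bridges of $G|S$. The cut that works is $\{x,w\}$: let $C_x$ be the component of $\langle X\rangle - w$ containing $x$; then $s_1,s_2\in V(P)\setminus\{w\}$ lie outside $C_x$, so $C_x\setminus\{x\}\subseteq X\setminus S$, and this set is nonempty because at least one of $y,z$ survives (here is precisely where the two-neighbour hypothesis on $x$ is spent). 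Every $G$-neighbour of $C_x\setminus\{x\}$ lies in $C_x\cup\{w\}$, so $\{x,w\}$ is a $2$-cut of $G$, contradicting $3$-connectedness.

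Second, your ``$q_1=q_2$ reroute'' fails. If the two paths meet $P$ only at a common endpoint $q$, the cycle they form meets $P$ in exactly that one vertex, so you cannot extract two distinct landing points from it. This case is in fact a phantom: use the fan version of Menger (vertex $x$ versus the set $V(P)$) rather than contracting $P$ to a point. Once you know that no single vertex of $\langle X\rangle\setminus\{x\}$ separates $x$ from $V(P)$, the fan lemma gives two $x$--$V(P)$ paths disjoint except at $x$, automatically landing at \emph{distinct} $q_1,q_2$. The contraction approach you first suggest is exactly what loses this distinctness. (For the path $P$ itself, no contradiction argument is needed: $s_1,s_2$ each have a neighbour in $X\setminus S$ by $3$-connectedness, and $\langle X\setminus S\rangle$ is a component of $G-S$, hence connected; concatenate.)
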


\begin{lem}
\label{lemma2}

Let $G$ be a 3-connected graph containing a separating set $S$ such that $|S| = 3$. Let $X$ be some bridge of $G|S$. Suppose there is some vertex $x \in S$ which has three neighbours $w$, $y$ and $z$ in $X \setminus S$. Then there exists:
\begin{itemize}
\item a path $P$ in $\langle X\rangle$ such that $P$ has only its endpoints in $S$, but does not meet $x$; and
\item three paths $Q_{1}$, $Q_{2}$ and $Q_{3}$ and three vertices $q_{1}$, $q_{2}$ and $q_{3}$, such that $q_{1}$, $q_{2}$ and $q_{3}$ are distinct vertices on path $P$, and $Q_{1}$, $Q_{2}$ and $Q_{3}$ are paths from $x$ to $q_{1}$, $x$ to $q_{2}$ and $x$ to $q_{3}$ respectively, that are pairwise vertex-disjoint except at $x$.
\end{itemize}

\end{lem}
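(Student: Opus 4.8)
The plan is to mimic the structure of Lemma \ref{lemma1}, but to extract three internally-disjoint paths from $x$ into a single path $P$ rather than two. First I would establish the existence of the path $P$. Since $G$ is $3$-connected and $S$ is a separating set of size $3$ with $x\in S$, the set $S-x = \{s_1, s_2\}$ cannot separate $G$; hence within the bridge $X$ there must be a path joining the two ``sides'' that avoids $x$. More carefully, I would argue that $\langle X\rangle - x$ is connected (any separation of $\langle X\rangle - x$ would, together with $x$, yield a $2$-cut of $G$, contradicting $3$-connectivity), so there is a path $P$ in $\langle X\rangle$ with both endpoints in $S-x$ and not meeting $x$. If $X\setminus S$ happens to be small one must check $P$ can be taken to actually use vertices of $X\setminus S$, but since $x$ has three neighbours $w,y,z$ in $X\setminus S$ this is not an issue for what follows.

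Next I would produce the three paths $Q_1,Q_2,Q_3$. Consider the graph $\langle X\rangle$ with $P$ contracted to a single vertex $p$ (equivalently, add a new vertex $p$ adjacent to every vertex of $P$ and delete $P$, or just work with $P$ as a ``terminal set''). The key claim is that there are three internally-disjoint paths from $x$ to $P$ in $\langle X\rangle$, using the three distinct initial edges $xw$, $xy$, $xz$. By Menger's theorem, such a family of three paths from $x$ to $V(P)$ fails to exist only if there is a set $T$ of at most two vertices separating $x$ from $P$ in $\langle X\rangle$. I would rule this out using $3$-connectivity of $G$: the set $T$ would be a small cut in $G$ once we observe that $P$ connects to the rest of $G$ through $S-x$ and that removing $T\cup\{?\}$ disconnects $x$ from enough of the graph — here one has to be a little careful to account for the at most two other bridges at $S$ and the vertices $s_1,s_2$, but since $x$ has \emph{three} neighbours inside $X\setminus S$, any two-vertex cut separating $x$ from $P$ inside $\langle X\rangle$ translates to a two-vertex cut of $G$ (together these two vertices separate $x$, which has all its $X$-neighbours trapped, from $s_1,s_2$ and hence from everything outside $X$). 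This contradiction gives the three paths.

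Finally I would clean up: given three internally-disjoint $x$–$V(P)$ paths, let $q_i$ be the endpoint of the $i$-th path on $P$ and truncate each path at its first meeting with $P$, so that the $Q_i$ meet $P$ only in $q_i$ and are pairwise disjoint except at $x$. The only remaining point is that $q_1,q_2,q_3$ are \emph{distinct}: this is immediate because the paths are internally disjoint and each $q_i$ is an endpoint, not an internal vertex, so no two can coincide. I expect the main obstacle to be the Menger/connectivity bookkeeping in the middle step — precisely, showing that a putative $2$-separator of $x$ from $P$ inside the bridge really does contradict $3$-connectivity of $G$, which requires correctly handling how $X$ attaches to the rest of $G$ via $S-x$ and confirming that all three neighbours $w,y,z$ are genuinely cut off by such a separator. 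Everything else follows the template of the proof of Lemma \ref{lemma1} in \cite{Robinson08}.
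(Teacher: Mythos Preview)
First, note that the paper does not actually prove Lemma~\ref{lemma2}: it simply cites \cite{Robinson08}. So there is no in-paper argument to compare against; I will just assess your sketch on its own terms.

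The existence of the path $P$ is fine (and your justification is essentially right: each vertex of $S$ has a neighbour in $X\setminus S$ by 3-connectivity, and $\langle X\setminus S\rangle$ is connected, so an $s_1$--$s_2$ path through $X\setminus S$ exists). The cleanup at the end is also fine. The real problem is the Menger step.

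You claim that if $T$ with $|T|\le 2$ separates $x$ from $V(P)$ in $\langle X\rangle$, then $T$ is a $2$-cut of $G$. This is false. The vertex $x$ lies in $S$, so $x$ belongs to \emph{every} bridge of $G|S$ and has neighbours in bridges other than $X$. Hence $T$ does not separate $x$ from $s_1,s_2$ in $G$: $x$ reaches them through the other bridges. What you actually get is that $T\cup\{x\}$ separates the nonempty set $C_x\setminus\{x\}\subseteq X\setminus S$ (where $C_x$ is the component of $x$ in $\langle X\rangle - T$) from the rest of $G$. That is a $3$-cut, perfectly compatible with $3$-connectivity. This is exactly where the analogy with Lemma~\ref{lemma1} breaks: there a $1$-separator gives, with $x$, a forbidden $2$-cut; here a $2$-separator gives only a $3$-cut, and you get no contradiction.

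Concretely, the $3$-fan to a \emph{fixed} $P$ can genuinely fail. Take $X\setminus S=\{w,y,z,t_1,t_2,p\}$ with $x$ adjacent to $w,y,z$; each of $w,y,z$ adjacent to both $t_1$ and $t_2$; edges $s_1t_1$, $t_1p$, $pt_2$, $t_2s_2$, $s_1p$, $s_2p$; and one further small bridge at $S$ to make $G$ $3$-connected. For $P=s_1\,p\,s_2$ the set $T=\{t_1,t_2\}$ separates $x$ from $V(P)$ in $\langle X\rangle$, so no $3$-fan exists. Yet the lemma still holds, because for the different choice $P=s_1\,t_1\,w\,t_2\,s_2$ the paths $xw$, $xy\,t_1$, $xz\,t_2$ give the required $3$-fan. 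So the argument cannot proceed by fixing $P$ first and then invoking Menger; one must choose $P$ adaptively (e.g.\ reroute $P$ through the region trapped behind $T$ so that enough neighbours of $x$ lie on it), or build the structure by a different construction. Your proposal does not address this, and the sentence ``any two-vertex cut separating $x$ from $P$ inside $\langle X\rangle$ translates to a two-vertex cut of $G$'' is precisely the step that does not go through.
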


\section{Reductions}
\label{reductions}

Each reduction defined in this section is forbidden in a graph that meets the conditions of the main theorem, Theorem \ref{theorem}. For each reduction given, we prove that performing that reduction on $G$ will not alter the presence or otherwise of a $W_{k}$-subdivision in $G$, for some bounded value of $k$. This means that each of the reductions are useful in creating an algorithm to solve SHP($W_{k}$), since they can be performed in polynomial time on the input graph, thus reducing the size of the graph, and modifying it to help meet the conditions of Theorem \ref{theorem}.

Note that Reductions \ref{r1} and \ref{r2} are generalizations of Reductions 1 and 2 in \cite{Robinson08}.

\begin{red}
\label{r1}
Let $G$ be a 3-connected graph containing a set $S = \{u, v, w\}$ of vertices. Suppose there are at least three bridges $X, Y, Z$ of $G|S$, such that each of the bridges $Y$ and $Z$ contains a subdivision of $X$. Suppose that $v$ and $w$ are either adjacent or joined by a path in some fourth bridge $A$ of $G|S$. Call this path (or edge) $P_{w}$. Suppose also that $v$ and $u$ are either adjacent or joined by a path in some bridge $B$ of $G|S$ other than $X$, $Y$, $Z$ or $A$ (if $A$ exists). Call this path (or edge) $P_{u}$.

Form $G'$ from $G$ by removing $X \setminus S$ and adding a single edge from $u$ to $w$, if such an edge does not already exist.
\end{red}

\begin{thm}
\label{reduction1}
Let $G$ be some 3-connected graph on which Reduction \ref{r1} can be performed. Let $G'$ be the resulting graph after Reduction \ref{r1} has been performed on $G$. Then $G$ contains a $W_{k}$-subdivision if and only if $G'$ contains a $W_{k}$-subdivision, where $k \ge 4$.
\end{thm}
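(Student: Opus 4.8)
The plan is to prove the two implications separately. In each, one takes a $W_k$-subdivision in one of the graphs, leaves untouched everything outside the bridge $X$, and reroutes only the pieces that meet $X$.

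For the implication that a $W_k$-subdivision $K'$ of $G'$ yields one of $G$: since $G' - uw$ is a subgraph of $G$, the only obstruction is the added edge $uw$. If $K'$ avoids $uw$ we are done. If $K'$ uses $uw$, I would replace that edge by a $u$-$w$ path $\rho$ of $G$ internally disjoint from $K' - uw$. Any such $\rho$ chosen inside $\langle X\rangle$ meets $K'$ only possibly at $v$, because $V(K') \cap (X\setminus S) = \varnothing$; so it is enough to find a $u$-$w$ path of $\langle X\rangle$ missing $v$. 3-connectivity of $G$ supplies one unless $v$ separates $u$ from $w$ inside $\langle X\rangle$, and in that residual case the path $P_u \cup P_w$ through $v$, which lies in the bridges $B$ and $A$ disjoint from $X$, does the job after a short clean-up. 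Either way $(K' - uw) + \rho$ is the required subdivision of $G$.

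For the reverse implication, let $K$ be a $W_k$-subdivision of $G$. If $V(K) \cap (X\setminus S) = \varnothing$ then $K \subseteq G - (X\setminus S) \subseteq G'$ and nothing is needed, so assume $K$ uses interior vertices of $X$. I would first pin down the trace $K \cap \langle X\rangle$: it is a union of subpaths of $K$, each with its ends in $S$ or at a branch vertex of $K$ in $X\setminus S$. Since only the three vertices of $S$ act as exits from $X$, while $K$ has $k+1 \ge 5$ branch vertices, a counting argument rules out the hub of $K$ lying in $X\setminus S$ and reduces the trace to a short list of configurations (essentially a single path between two vertices of $S$, or two such paths, possibly with one degree-$3$ branch vertex of $K$ inside $X$). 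In each configuration I would push the $X$-segments out of $X$: a $u$-$w$ segment becomes the new edge $uw$, a $u$-$v$ segment becomes $P_u$, a $v$-$w$ segment becomes $P_w$; and any trace too rich for this (in particular the case $K \subseteq \langle X\rangle$) is re-embedded wholesale into the subdivision of $X$ that the hypothesis provides inside $Y$ (or $Z$), with Lemmas \ref{lemma1} and \ref{lemma2} used to reproduce the needed branching there. The outcome lies in $G'$ and is again a $W_k$-subdivision, since each replacement preserves the incidences at the branch vertices of $K$.

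The main obstacle is the disjointness bookkeeping in the reverse implication: the rerouted pieces must be pairwise internally disjoint and disjoint from the untouched part of $K$, whereas $K$ may already occupy some of the structures we want to reroute into --- the bridges $Y$, $Z$, or the bridges $A$, $B$ carrying $P_w$, $P_u$. This is exactly the purpose of asking for three mutually similar bridges $X, Y, Z$ together with two further bridges realizing $P_u$ and $P_w$: because $|S| = 3$, a single $W_k$-subdivision can genuinely block only boundedly many bridges, so a fresh reservoir always survives, and when $K$ does touch the reservoir we want, Lemmas \ref{lemma1} and \ref{lemma2} let us re-choose the internally-disjoint connecting paths inside it. Doing this uniformly over all trace configurations, and checking that the single edge $uw$ is never claimed by two segments at once, is the technical core; it follows the pattern of the corresponding reduction theorem in \cite{Robinson08} with $k$ left general.
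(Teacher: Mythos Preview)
Your overall strategy---case analysis on how $K$ meets $X$, then rerouting through $Y$, $Z$, $P_u$, $P_w$, or the new edge $uw$---matches the paper's. But there is a genuine gap in the reverse implication.

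Your claimed counting argument does \emph{not} rule out the hub of $K$ lying in $X\setminus S$. The three exits through $S$ constrain how many times $K$ can cross the boundary of $X$, but they say nothing about how many branch vertices of $K$ sit inside $X$: the entire $W_k$-subdivision could live in $\langle X\rangle$, or the hub and most rim vertices could be in $X$ with only a couple of rim segments passing through $Y$ and $Z$. The paper treats this as its own case~(c) and handles it explicitly. When $K\subseteq\langle X\rangle$, or $K$ uses $X$ and one other bridge, the hypothesis that $Y$ and $Z$ each contain a subdivision of $X$ lets you transplant wholesale---you gesture at this, but your invocation of Lemmas~\ref{lemma1} and~\ref{lemma2} is misplaced: those lemmas produce specific path structures and are not used in this proof at all. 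The delicate sub-case you miss is when the hub is in $X\setminus S$ and $K$ occupies \emph{both} $Y$ and $Z$: then you cannot simply transplant into either, and the paper instead observes that $Y\setminus S$ and $Z\setminus S$ carry only degree-$2$ vertices of $K$, reroutes one of them through $P_u$, $P_w$, or $uw$, and then uses the freed bridge to host the transplant.

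Two smaller points. Your forward implication is over-engineered: the paper dispatches it in one line by noting that $G$ contains a subdivision of $G'$, since 3-connectivity gives three internally disjoint $x$--$S$ paths inside $\langle X\rangle$ for any $x\in X\setminus S$, and concatenating the ones to $u$ and $w$ yields a $u$--$w$ path avoiding $v$---your ``residual case'' never arises. And your reliance on Lemmas~\ref{lemma1} and~\ref{lemma2} for the disjointness bookkeeping is unnecessary; the hypothesis that $Y$ and $Z$ contain subdivisions of $X$ is doing all the work.
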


\begin{proof}
It is obvious that if $G'$ contains a $W_{k}$-subdivision, then $G$ will also, since $G$ contains a subdivision of $G'$.

Assume then that $G$ contains a $W_{k}$-subdivision, $H$. The centre of $H$ must either be in $G - X$, in $S$, or in $X \setminus S$.

Let us consider these three possibilities.

\textbf{(a)} The centre of $H$ is in $G - X$. Without loss of generality, assume that $\langle Y \setminus S \rangle$ contains the centre of $H$.

Suppose firstly that $H$ is contained in two bridges of $G|S$ (that is, $Y$ and some other bridge). If $X$ is not one of these two bridges, then removing $X \setminus S$ will have no effect on the existence of the $W_{k}$-subdivision. If, however, part of $H$ is in $\langle X \setminus S \rangle$, then another $W_{k}$-subdivision which does not pass through $\langle X \setminus S \rangle$ can be formed using parts of $\langle Z \setminus S \rangle$, since $Z$ contains a subdivision of a structure isomorphic to $X$. Thus, $X \setminus S$ can again be removed from the graph without altering the existence or otherwise of a $W_{k}$-subdivision.

Suppose now that $H$ is contained in at least three bridges of $G|S$ (that is, $Y$ and at least two other bridges). Assume without loss of generality that one of these bridges is $Z$. If $X$ is not another of the three bridges, then again, the removal of $X \setminus S$ will have no effect on the existence of $H$. Suppose then that $\langle X \setminus S\rangle$ is used in forming $H$. All spoke-meets-rim vertices in $H$ must be contained in $Y$ in this situation. Thus $X \setminus S$ and $Z \setminus S$ can only contain vertices of degree 2 in $H$, and no bridge of $G|S$ other than $X$, $Y$, and $Z$ can contain any part of $H$ at all (except for vertices of $S$). Thus the part of $\langle X \setminus S\rangle$ used in $G$ to form $H$ can be replaced by one of $P_{w}$, $P_{u}$, or $uw$ in $G'$.

\textbf{(b)} The centre of $H$ is in $S$. There are two possibilities:

\textbf{(b)(i)} Vertex $v$ forms the centre of $H$.

In this case, the rim of $H$ must be contained in at most two of the bridges of $G|S$. Assume without loss of generality that neither of these two bridges are $A$ or $B$. If the rim does not pass through $\langle X \setminus S \rangle$, then $X \setminus S$ can only be used in $H$ to form part of a spoke. However, any spoke passing through $X \setminus S$ can be replaced either by $P_{w}$ or $P_{u}$ in $G'$, so $X \setminus S$ is not necessary in forming a $W_{k}$-subdivision in $G'$.

Suppose now that the rim of $H$ passes through $\langle X \setminus S\rangle$. Without loss of generality, assume the rim of $H$ is contained in $\langle X \cup Y\rangle$. Then the remaining bridges of $G|S$ can only be used to form (at most two) spokes in $H$. If $\langle Z \setminus S \rangle$ is used to form any spokes in $H$, then instead use one or both of $P_{w}$ and $P_{u}$ to create the corresponding spokes in $G'$. Use $\langle Z \rangle$ in $G'$ to replace that portion of $H$ contained in $\langle X \rangle$ in $G$.

\textbf{(b)(ii)} Either $w$ or $u$ forms the centre of $H$ (assume $u$ without loss of generality).

Again, the rim of $H$ must be contained in at most two of the bridges of $G|S$. Suppose firstly that the rim of $H$ does not pass through $\langle X \setminus S \rangle$. Without loss of generality, assume that the rim is contained in $\langle Y \cup Z\rangle$. Again, if $X \setminus S$ is used in $H$, it must be to form part of a spoke, joining the rim of $H$ either at vertex $v$ or vertex $w$. This path can be replaced by one of $uw$ or $P_{u}$ in $G'$.

Suppose now that the rim of $H$ does pass through $\langle X \setminus S\rangle$. Assume without loss of generality that the rim is contained in $\langle X \cup Y\rangle$. Again, the other bridges of $G|S$ can only be used to form spokes in $H$. If $\langle Z \setminus S \rangle$ is used to form any spokes in $H$, then instead use one or both of $P_{u}$ and $uw$ to create these spokes in $G'$. In any case, use $\langle Z \rangle$ in $G'$ to replace that portion of $H$ contained in $\langle X \rangle$ in $G$.

\textbf{(c)} The centre of $H$ is in $X \setminus S$.

If $H$ is entirely contained in $X$, then in $G'$, either $Y$ or $Z$ can be used to create a $W_{k}$-subdivision, since each of these bridges contains a subdivision of $X$. Similiarly, if $H$ is contained in $X$ and only one other bridge, then at least one of $Y$ or $Z$ can still be used to replace the parts of $H$ contained in $X$ in $G$.

Suppose then that $H$ is contained in at least three bridges of $G|S$ (that is, $X$ and at least two other bridges). If $Y$ and $Z$ are not two of these bridges, then at least one of $Y$ or $Z$ can be used to replace the parts of $H$ contained in $X$ in $G$. Assume then that the two of the bridges are $Y$ and $Z$. All spoke-meets-rim vertices in $H$ must be contained in $X$ in this situation. Thus $Y \setminus S$ and $Z \setminus S$ can only contain vertices of degree 2 in $H$, and no bridge of $G|S$ other than $X$, $Y$, and $Z$ can contain any part of $H$ at all (except for vertices of $S$). In $G'$, then, replace the part of $Y \setminus S$ used in $G$ to form $H$ with one of $P_{w}$, $P_{u}$, or $uw$, and instead use $Y$ to form those parts of $H$ contained in $X$ in $G$.
\end{proof}

Reductions \ref{r1}A, \ref{r1}B, and \ref{r1}C, which follow, are special cases of Reduction \ref{r1}, where the bridge to be removed from the graph is limited in size.

\subsubsection*{Reduction \ref{r1}A}
Let $G$ be a 3-connected graph containing a set $S = \{u, v, w\}$ of vertices. Suppose there are at least three bridges $X$, $Y$, and $Z$ of $G|S$, such that $X \setminus S$ contains a single vertex $x$. Suppose that $v$ and $w$ are either adjacent or joined by a path in some fourth bridge $A$ of $G|S$. Call this path (or edge) $P_{w}$. Suppose also that $v$ and $u$ are either adjacent or joined by a path in some bridge $B$ of $G|S$ other than $X$, $Y$, $Z$ or $A$ (if $A$ exists). Call this path (or edge) $P_{u}$.

\begin{figure}[!h]
\begin{center}
\includegraphics[width=0.8\textwidth]{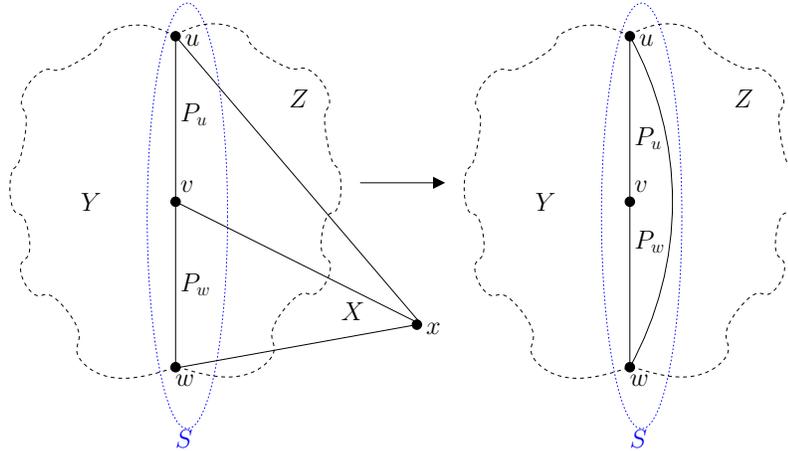}
\caption[Reduction \ref{r1}A]{Reduction \ref{r1}A: $G$ and $G'$}
\label{rt1}
\end{center}
\end{figure}

Form $G'$ from $G$ by removing vertex $x$ and adding a single edge from $u$ to $w$, as in Figure \ref{rt1}.

\subsubsection*{Reduction \ref{r1}B}
Let $G$ be a 3-connected graph containing a set $S = \{u, v, w\}$ of vertices. Suppose there are at least three bridges $X$, $Y$, and $Z$ of $G|S$, such that $X \setminus S$ contains exactly two vertices, and each of the bridges $Y$ and $Z$ contains a subdivision of $X$. Suppose that $v$ and $w$ are either adjacent or joined by a path in some fourth bridge $A$ of $G|S$. Call this path (or edge) $P_{w}$. Suppose also that $v$ and $u$ are either adjacent or joined by a path in some bridge $B$ of $G|S$ other than $X$, $Y$, $Z$ or $A$ (if $A$ exists). Call this path (or edge) $P_{u}$.

Form $G'$ from $G$ by removing $X \setminus S$ and adding a single edge from $u$ to $w$, if such an edge does not already exist.

\subsubsection*{Reduction \ref{r1}C}
Let $G$ be a 3-connected graph containing a set $S = \{u, v, w\}$ of vertices. Suppose there are at least three bridges $X$, $Y$, and $Z$ of $G|S$, such that $X \setminus S$ contains exactly three vertices, and each of the bridges $Y$ and $Z$ contains a subdivision of $X$. Suppose that $v$ and $w$ are either adjacent or joined by a path in some fourth bridge $A$ of $G|S$. Call this path (or edge) $P_{w}$. Suppose also that $v$ and $u$ are either adjacent or joined by a path in some bridge $B$ of $G|S$ other than $X$, $Y$, $Z$ or $A$ (if $A$ exists). Call this path (or edge) $P_{u}$.

Form $G'$ from $G$ by removing $X \setminus S$ and adding a single edge from $u$ to $w$, if such an edge does not already exist.

\begin{red}
\label{r2}
Let $G$ be a 3-connected graph containing a set $S = \{u, v, w\}$ of vertices such that both $u$ and $w$ have degree $< k$, where $k \ge 5$. Suppose there are three bridges of $G|S$, namely $X$, $Y$ and $Z$, such that each of the bridges $Y$ and $Z$ contains a subdivision of $X$. Suppose that $v$ and $w$ are either adjacent or joined by a path in some fourth bridge $A$ of $G|S$. Call this path (or edge) $P_{w}$.

Form $G'$ from $G$ by removing $X \setminus S$ and adding a single edge from $v$ to $u$. 
\end{red}

\begin{thm}
\label{reduction2}
Let $G$ be a graph on which Reduction \ref{r2} can be performed. Let $G'$ be the resulting graph after Reduction \ref{r2} has been performed on $G$, with $k \ge 5$. Then $G$ contains a $W_{k}$-subdivision if and only if $G'$ contains a $W_{k}$-subdivision.
\end{thm}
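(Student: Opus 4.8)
The plan is to mirror the proof of Theorem \ref{reduction1}, adapting it to the asymmetries introduced by Reduction \ref{r2}: here we add the edge $vu$ rather than $uw$, we only have a guaranteed connecting path $P_w$ between $v$ and $w$ (no $P_u$), but in compensation we know that both $u$ and $w$ have degree $< k$ in $G$. The forward direction is again immediate, since $G$ contains a subdivision of $G'$ (the added edge $vu$ can be realized by a path through $X$, because $X$ is a bridge of $G|S$ joining $v$ and $u$ — or, if $X$ happens not to attach at both $v$ and $u$, we argue separately; in any case $G'$ is a topological minor of $G$). So the substance is the reverse direction: assume $G$ has a $W_k$-subdivision $H$ and produce one in $G'$.

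First I would split on the location of the centre of $H$: in $G-X$, in $S$, or in $X\setminus S$. The case where the centre lies in $G-X$ (say in $\langle Y\setminus S\rangle$ without loss of generality) proceeds as in Theorem \ref{reduction1}: if $H$ meets at most two bridges and $X$ is not among them there is nothing to do; if $X$ is among them, then since $Y$ and $Z$ each contain a subdivision of $X$ and $H$ uses $X$ only for degree-2 material (all spoke-meets-rim vertices being forced into $Y$ when three bridges are involved), the portion of $H$ inside $\langle X\setminus S\rangle$ is a path with endpoints in $S$ and can be rerouted: either it runs between $v$ and $w$, in which case use $P_w$; or it runs between $v$ and $u$, in which case use the new edge $vu$; or it runs between $u$ and $w$, and \emph{this} is the genuinely new subcase. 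To handle a $u$--$w$ path through $X$ I would use the degree constraints: since $\deg_G(u),\deg_G(w)<k$ and $H$ is a $W_k$-subdivision, neither $u$ nor $w$ can be the centre of $H$, so in $H$ each of $u,w$ has degree $2$ or $3$; combined with the fact that at most two bridges carry the rim, a $u$--$w$ path through $\langle X\setminus S\rangle$ together with the rest of $H$ can be re-expressed using $vu$ and $P_w$ in series through $v$. The centre-in-$X\setminus S$ case is symmetric to case (a) with the roles of $X$ and $\{Y,Z\}$ swapped, again rerouting the degree-2 remnants in $Y$ (or $Z$) out to $P_w$ or $vu$.

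The case where the centre of $H$ is in $S$ is where I expect the real work, and it is cleaner here than in Theorem \ref{reduction1} precisely because of the degree hypotheses: the centre must be $v$, since $u$ and $w$ both have degree $<k$ and hence cannot be the hub of a $W_k$-subdivision. So I only have one sub-case, not two. With $v$ the centre, the rim of $H$ lies in at most two bridges of $G|S$; choose notation so that, if the rim meets $X\setminus S$ at all, it lies in $\langle X\cup Y\rangle$. If the rim avoids $X\setminus S$, then $X\setminus S$ contributes only to spokes of $H$, each such spoke being a $v$--$u$ or $v$--$w$ path through $X$; a $v$--$u$ spoke is replaced by the edge $vu$ in $G'$, and a $v$--$w$ spoke by $P_w$ (noting $P_w$ lies in bridge $A\ne X,Y,Z$ and so is disjoint from the rest of $H$ once we have pushed $H$ off $X$). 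If the rim does pass through $\langle X\setminus S\rangle$, then as in Theorem \ref{reduction1} the bridge $Z$ can only be supplying spokes; reroute any spoke through $Z$ using $vu$ and/or $P_w$, then use $\langle Z\rangle$ (which contains a subdivision of $X$) to rebuild the part of $H$ that lay in $\langle X\rangle$.

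The main obstacle, and the point deserving the most care, is bookkeeping the disjointness in case (b): when I replace a spoke or a rim-segment by the single edge $vu$, I must check that $vu$ is not already carrying another part of $H$, and similarly that $P_w$ (respectively $\langle Z\rangle$) is free; this is where the hypothesis that $A,B,Y,Z,X$ are \emph{distinct} bridges is used, together with the observation that at most two bridges contain rim or spoke-meets-rim material of $H$. A secondary subtlety is that, because only $P_w$ is assumed to exist (there is no $P_u$), any $v$--$u$ connection we need in $G'$ must come from the newly added edge $vu$ and nothing else — so I must verify in each sub-case that at most one $v$--$u$ rerouting is ever required simultaneously, which follows because a second $v$--$u$ path in $H$ would force a third bridge to carry a spoke-meets-rim vertex, contradicting the two-bridge bound once the centre is at $v$. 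Once these disjointness checks are organized, the argument closes exactly as in Theorem \ref{reduction1}.
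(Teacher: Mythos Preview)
Your overall architecture matches the paper's exactly: split on the location of the centre of $H$, use the degree hypotheses on $u$ and $w$ to collapse case (b) to the single sub-case ``centre is $v$'', and in each case reroute the material in $\langle X\setminus S\rangle$ via the new edge $vu$, the path $P_w$, or the spare bridge $Z$. Case (b) and the two-bridge parts of (a) and (c) are fine as you describe them.

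The gap is in your handling of the $u$--$w$ sub-case of (a) (and, by your own symmetry remark, of (c)). You propose to replace a $u$--$w$ path through $\langle X\setminus S\rangle$ by ``$vu$ and $P_w$ in series through $v$'', i.e.\ by the concatenated path $u\,vu\,v\,P_w\,w$. This does not work, because in the three-bridge situation $v$ is \emph{already} a vertex of $H$: once the $X$-path is $u$--$w$, the $Z$-path cannot also be $u$--$w$ (that would force the entire rim to be the cycle $u$--$X$--$w$--$Z$--$u$, leaving no room for the $k\ge 5$ spoke-meets-rim vertices, which must all lie in $Y$), so the $Z$-path is $v$--$u$ or $v$--$w$ and $v$ has degree $\ge 2$ in $H$. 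Routing your replacement path through $v$ would then raise the degree of $v$ in the new subdivision to $4$ or $5$, which is illegal. The degree constraints on $u$ and $w$ that you invoke are true but irrelevant here --- the obstruction is at $v$.

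The paper's fix is a swap rather than a series reroute: since the $Z$-path is $v$--$u$ or $v$--$w$, replace \emph{it} by the edge $vu$ or by $P_w$ respectively; this frees $\langle Z\rangle$, which (containing a subdivision of $X$) supplies the needed $u$--$w$ path in $G'$. The same swap handles the $u$--$w$ sub-case in (c). Once you adopt this mechanism, the rest of your sketch goes through.
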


\begin{proof}
It is obvious that if $G'$ contains a $W_{k}$-subdivision, then $G$ will also, since $G$ contains a subdivision of $G'$.

Assume then that $G$ contains a $W_{k}$-subdivision, $H$. The centre of $H$ must either be in $G - X$, in $S$, or in $X \setminus S$.

Consider the three possibilities.

\textbf{(a)} The centre of $H$ is in $G - X$. Without loss of generality, assume that $\langle Y \setminus S \rangle$ contains the centre of $H$.

Suppose firstly that $H$ is contained in at most two bridges of $G|S$ (that is, $Y$ and some other bridge). If $X$ is not one of these two bridges, then removing $X \setminus S$ will have no effect on the existence of the $W_{k}$-subdivision. If, however, part of $H$ is in $\langle X \setminus S \rangle$, then another $W_{k}$-subdivision which does not pass through $\langle X \setminus S \rangle$ can be formed using parts of $\langle Z \setminus S \rangle$, since $Z$ contains a subdivision of a structure isomorphic to $X$. Thus, $X \setminus S$ can again be removed from the graph without altering the existence or otherwise of a $W_{k}$-subdivision.

Suppose now that $H$ is contained in at least three bridges of $G|S$ (that is, $Y$ and at least two other bridges). Assume without loss of generality that one of these bridges is $Z$. If $X$ is not another of the three bridges, then again, the removal of $X \setminus S$ will have no effect on the existence of $H$. Suppose then that $\langle X \setminus S\rangle$ is used in forming $H$. All spoke-meets-rim vertices in $H$ must be contained in $Y$ in this situation. Thus $X \setminus S$ and $Z \setminus S$ can only contain vertices of degree 2 in $H$, and no bridge of $G|S$ other than $X$, $Y$, and $Z$ can contain any part of $H$ at all (except for vertices of $S$). The part of $\langle X \setminus S\rangle$ used in $G$ to form $H$ must be either a path from $u$ to $v$, from $v$ to $w$, or from $u$ to $w$. In the first two cases, this path can be replaced in $G'$ by $uv$ or $P_{w}$ respectively. In the third case, observe that the part of $\langle Z \setminus S\rangle$ used in $G$ to form $H$ must be either a path from $u$ to $v$ or from $v$ to $w$. This path, then, is replaced in $G'$ by either $uv$ or $P_{w}$, and $\langle Z \setminus S\rangle$ is instead used in $G'$ to create the path from $u$ to $w$.

\textbf{(b)} The centre of $H$ is in $S$. Since vertices $w$ and $u$ each have degree $< k$, $H$ must be centred on $v$.

In this case, the rim of $H$ must be contained in at most two of the bridges of $G|S$. Assume without loss of generality that neither of these two bridges are $A$. If the rim does not pass through $\langle X \setminus S \rangle$, then $X \setminus S$ can only be used in $H$ to form part of a spoke. However, any spoke passing through $X \setminus S$ can be replaced either by $P_{w}$ or $uv$ in $G'$, so $X \setminus S$ is not necessary in forming a $W_{k}$-subdivision in $G'$.

Suppose now that the rim of $H$ passes through $\langle X \setminus S\rangle$. Without loss of generality, assume the rim of $H$ is contained in $\langle X \cup Y\rangle$. Then the remaining bridges of $G|S$ can only be used to form (at most two) spokes in $H$.

If $\langle Z \setminus S \rangle$ is used to form any spokes in $H$, then instead use one or both of $P_{w}$ and $uv$ to create the corresponding spokes in $G'$. Use $\langle Z \rangle$ in $G'$ to replace that portion of $H$ contained in $\langle X \rangle$ in $G$.

\textbf{(c)} The centre of $H$ is in $X \setminus S$.

If $H$ is entirely contained in $X$, then in $G'$, either $Y$ or $Z$ can be used to create a $W_{k}$-subdivision, since these bridges each contain a subdivision of $X$. Similiarly, if $H$ is contained in $X$ and only one other bridge, then at least one of $Y$ or $Z$ can still be used to replace the parts of $H$ contained in $X$ in $G$.

Suppose then that $H$ is contained in at least three bridges of $G|S$ (that is, $X$ and at least two other bridges). If $Y$ and $Z$ are not two of these bridges, then at least one of $Y$ or $Z$ can be used to replace the parts of $H$ contained in $X$ in $G$. Assume then that two of the bridges are $Y$ and $Z$. All spoke-meets-rim vertices in $H$ must be contained in $X$ in this situation. Thus $Y \setminus S$ and $Z \setminus S$ can only contain vertices of degree 2 in $H$, and no bridge of $G|S$ other than $X$, $Y$, and $Z$ can contain any part of $H$ at all (except for vertices of $S$). The part of $\langle Y \setminus S\rangle$ used in $G$ to form $H$ must be either a path from $u$ to $v$, from $v$ to $w$, or from $u$ to $w$. In the first two cases, this path can be replaced in $G'$ by $uv$ or $P_{w}$ respectively, while $Y$ can instead be used to form those parts of $H$ contained in $X$ in $G$. In the third case, that part of $\langle Z \setminus S\rangle$ used in $G$ to form $H$ must be either a path from $u$ to $v$ or from $v$ to $w$. Replace this path in $G'$ by either $uv$ or $P_{w}$, then use $Z$ to form those parts of $H$ contained in $X$ in $G$.
\end{proof}

Reduction \ref{r2}A and Reduction \ref{r2}B, which follow, are special cases of Reduction \ref{r2}, where the bridge to be removed from the graph is limited in size.

\subsubsection*{Reduction \ref{r2}A}
Let $G$ be a 3-connected graph containing a set $S = \{u, v, w\}$ of vertices such that both $u$ and $w$ have degree $< k$, where $k \ge 5$. Suppose there are three bridges of $G|S$, namely $X$, $Y$ and $Z$, such that $X \setminus S$ contains a single vertex $x$. Suppose that $v$ and $w$ are either adjacent or joined by a path in some fourth bridge of $G|S$. Call this path (or edge) $P_{w}$.

\begin{figure}[!ht]
\begin{center}
\includegraphics[width=0.8\textwidth]{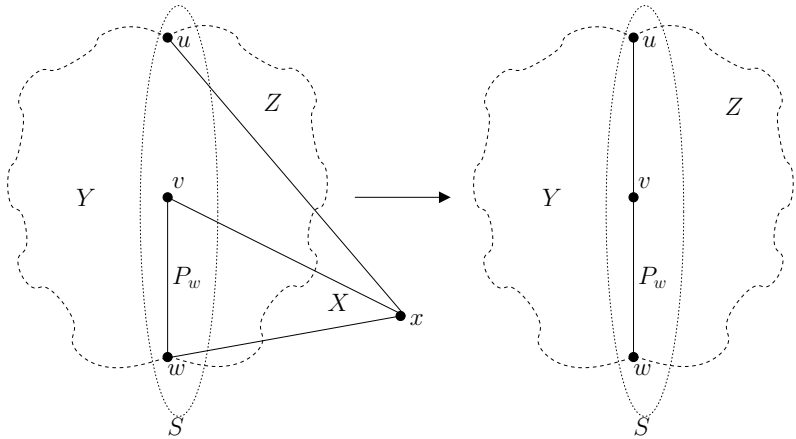}
\caption[Reduction \ref{r2}A]{Reduction \ref{r2}A: $G$ and $G'$}
\label{rt3}
\end{center}
\end{figure}

Form $G'$ from $G$ by removing vertex $x$ and adding a single edge from $v$ to $u$, as in Figure \ref{rt3}.

\subsubsection*{Reduction \ref{r2}B}
Let $G$ be a 3-connected graph containing a set $S = \{u, v, w\}$ of vertices such that both $u$ and $w$ have degree $< k$, where $k \ge 5$. Suppose there are three bridges of $G|S$, namely $X$, $Y$ and $Z$, such that $X \setminus S$ contains exactly two vertices, and each of the bridges $Y$ and $Z$ contains a subdivision of $X$. Suppose that $v$ and $w$ are either adjacent or joined by a path in some fourth bridge of $G|S$. Call this path (or edge) $P_{w}$.

Form $G'$ from $G$ by removing $X \setminus S$ and adding a single edge from $v$ to $u$.

\begin{red}
\label{r6}
Let $k$ be some integer $\ge 7$. Let $G$ be a 3-connected graph containing a set $S = \{t, u, v, w\}$ of vertices, such that there exists some bridge $X$ of $G|S$, where $|X| \le k$, and $X\cap S = \{u, v, w\}$. Suppose that for each vertex $i\in \{u, v, w\}$, either $i$ has degree $< k$, or every bridge of $G|S$ contains at most one neighbour of $i$ not in $S$. Suppose that $v$ and $u$ are adjacent, and that $v$ and $w$ are adjacent.

Form $G'$ from $G$ by removing $X \setminus S$ and adding an edge from $w$ to $u$, if such an edge does not already exist (as in Figure \ref{rf6}). 
\end{red}

\begin{figure}[!ht]
\begin{center}
\includegraphics[width=0.8\textwidth]{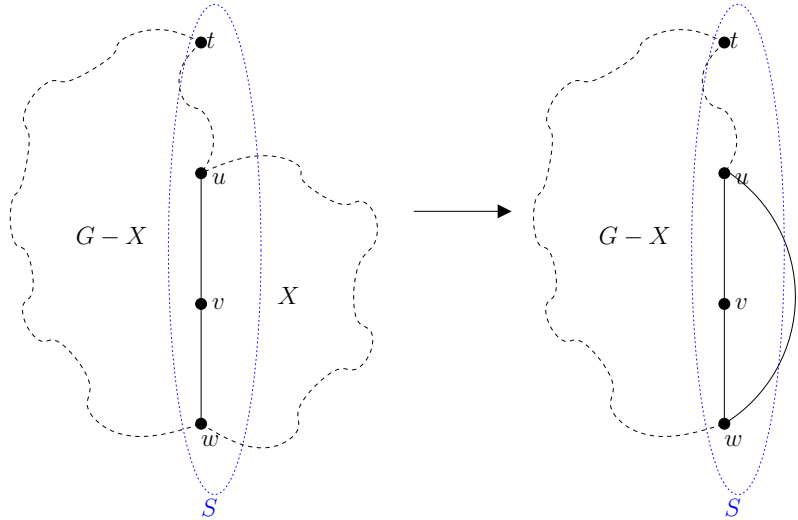}
\caption[Reduction \ref{r6}]{Reduction \ref{r6}: $G$ and $G'$}
\label{rf6}
\end{center}
\end{figure}

\begin{thm}
\label{reduction6}
Let $G$ be a graph on which Reduction \ref{r6} can be performed. Let $G'$ be the resulting graph after Reduction \ref{r6} has been performed on $G$, with $k \ge 7$. Then $G$ contains a $W_{k}$-subdivision if and only if $G'$ contains a $W_{k}$-subdivision.
\end{thm}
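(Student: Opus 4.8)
The plan is to mimic the structure of the proofs of Theorems \ref{reduction1} and \ref{reduction2}: the forward direction is immediate since $G$ contains a subdivision of $G'$ (the new edge $wu$ is realised by the path $wvu$ in $G$, using the edges $wv$ and $vu$ which exist by hypothesis), so the work is entirely in showing that a $W_{k}$-subdivision $H$ in $G$ yields one in $G'$. I would case-split on the location of the centre of $H$: in $G-X$, in $S$, or in $X\setminus S$.

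For the case where the centre lies in $G-X$, or in $S$, the idea is that $X$ can only be entered and left through the three attachment vertices $u,v,w$, so the portion of $H$ inside $\langle X\rangle$ is a collection of at most three internally-disjoint paths among $u,v,w$ (possibly with one of $u,v,w$ being a branch vertex of $H$ of degree $3$ inside $X$, but then $H$ would have to re-enter $X$, which uses up the attachments). Since $|X|\le k$ and $X$ is a bridge, a $W_{k}$-subdivision cannot have its centre inside $X\setminus S$ together with all its spoke-meets-rim vertices unless it is essentially confined to $X$ — this is where the size bound $|X|\le k$ does the work, ruling out the most dangerous sub-case. In the remaining sub-cases the part of $H$ in $\langle X\rangle$ is one or two or three of the paths $u\!-\!v$, $v\!-\!w$, $u\!-\!w$ (a path joining two of the attachments, using the third only as an internal vertex), and each of these can be rerouted in $G'$: a $u\!-\!v$ path is replaced by the edge $vu$, a $v\!-\!w$ path by the edge $vw$, and a $u\!-\!w$ path by the new edge $wu$. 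The degree hypothesis on $u,v,w$ (each either has degree $<k$ or has at most one neighbour outside $S$ in each bridge) is what guarantees that, when the centre of $H$ is in $S$, it cannot be at $u$ or $w$ with many spokes fanning out through several bridges; this restricts which of $u,v,w$ can be high-degree branch vertices and hence limits how $\langle X\rangle$ is used, exactly as in the proof of Theorem \ref{reduction2}.

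The case where the centre of $H$ is in $X\setminus S$ is the most delicate, and I expect it to be the main obstacle. Here one cannot simply delete $X\setminus S$; instead one must argue that because $X$ is a bridge separated from the rest of $G$ by only the three vertices $u,v,w$, any $W_{k}$-subdivision centred in $X\setminus S$ has all $k\ge 7$ of its spokes, and hence all its spoke-meets-rim vertices, essentially inside $X$; the rim can leave $X$ through $u,v,w$ only in limited ways (at most a couple of rim-arcs pass outside), and those excursions use paths joining two of $u,v,w$, which in $G$ go through the fourth vertex $t$ or other bridges. One then shows that such an excursion can be rerouted: a rim-arc realised outside $X$ as a $u\!-\!w$ path is replaced by the edge $wu$ in $G'$ (recall $wu$ is added precisely so this works), and a $u\!-\!v$ or $v\!-\!w$ rim-arc by the edge $vu$ or $vw$. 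The bound $|X|\le k$ is again essential: it forces the rim to use at least one vertex outside $X$ in any configuration where we would otherwise be stuck, so that there is "room" to perform the substitution, and it is why the hypothesis $k\ge 7$ (rather than $k\ge 4$ or $k\ge 5$) is needed. I would handle this by enumerating the possible patterns of how the rim of $H$ interacts with $\{u,v,w\}$ — how many of the three are rim vertices, how many rim-arcs lie outside $X$, and at which of $u,v,w$ the high-degree structure sits — appealing to the degree hypothesis to discard the patterns that would need a high-degree branch vertex at $u$ or $w$, and in each surviving pattern exhibiting the rerouting into $G'$ using the edges $vu$, $vw$, $wu$. Lemmas \ref{lemma1} and \ref{lemma2} may be invoked to produce the needed paths inside $\langle X\rangle$ when a vertex of $S$ has several neighbours there, mirroring their use elsewhere in the paper.
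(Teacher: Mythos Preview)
Your overall case structure is right, but you have badly misjudged where the difficulty lies. The case you call ``most delicate'' --- centre of $H$ in $X\setminus S$ --- is in fact a one-liner: since $|X|\le k$, every vertex of $X\setminus S$ has at most $k-1$ neighbours (all of them in $X$), so no such vertex can be the centre of a $W_k$-subdivision. There is nothing to reroute and no need for Lemmas~\ref{lemma1} or~\ref{lemma2}. Your elaborate plan to enumerate rim patterns, invoke the degree hypothesis, and reroute arcs through the new edge $wu$ is entirely unnecessary here; the bound $|X|\le k$ does all the work immediately.

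Similarly, the centre-in-$\{u,v,w\}$ case is not a matter of ``restricting which of $u,v,w$ can be high-degree branch vertices'' --- it is \emph{impossible}. If $v_0\in\{u,v,w\}$ is the centre then $v_0$ has degree $\ge k$, so by hypothesis every bridge of $G|S$ has at most one neighbour of $v_0$ outside $S$. The rim passes through $S\setminus\{v_0\}$ (three vertices) at most three times, so lies in at most three bridges; each contributes at most one spoke-meets-rim vertex outside $S$, and $S\setminus\{v_0\}$ contributes at most three more, giving at most $6<k$ spokes, a contradiction. This is where $k\ge 7$ is actually used. Only the centre-in-$(G-X)$ case requires rerouting, and there your plan is essentially correct: $\langle X\setminus S\rangle$ contains either a single path (replace by one of $uv$, $vw$, $uw$) or a single spoke-meets-rim vertex with its three incident paths (replace by two of those edges, making one of $u,v,w$ the new spoke-meets-rim vertex).

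One minor slip: realising the new edge $wu$ in $G$ by the path $wvu$ is unsafe, since $v$ may already be used in the $W_k$-subdivision in $G'$. The correct observation is that there is a $w$--$u$ path through $X\setminus S$ (as $u,w$ lie in the bridge $X$), and those vertices are absent from $G'$ and hence from any subdivision in $G'$.
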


\begin{proof}
It is obvious that if $G'$ contains a $W_{k}$-subdivision, then $G$ will also, since $G$ contains a subdivision of $G'$.

Assume then that $G$ contains a $W_{k}$-subdivision, $H$.

Since $|X| \le k$, the maximum degree of any vertex in $X\setminus S$ is $k - 1$. Therefore, the centre of $H$ cannot be in $X \setminus S$.

Suppose $H$ is centred on some vertex $v_{0} \in \{u, v, w\}$. Since $v_{0}$ must then have degree $\ge k$, each bridge of $G|S$ must contain at most one neighbour of $v_{0}$ not in $S$, by the hypothesis of the theorem. Since there are only three vertices in $S$ other than $v_{0}$, the rim can pass through $S$ at most three times, and thus can be contained in at most three bridges of $G|S$. Since $S$ can contain at most three spoke-meets-rim vertices of $H$, and each of the three bridges containing parts of $H$ can have at most one neighbour of $v_{0}$ not in $S$, $H$ can have at most 6 spokes. This is a contradiction, since $k \ge 7$. Therefore, the centre of $H$ cannot be in $\{u, v, w\}$.

Suppose then that the centre of $H$ is in $G - X$. If $\langle X\setminus S\rangle$ does not contain part of $H$, then removing $X \setminus S$ will have no effect on the existence of the $W_{k}$-subdivision. Suppose then that part of $H$ is contained in $\langle X \setminus S\rangle$.

Suppose firstly that $\langle X\setminus S\rangle$ contains only a single path belonging to $H$. Call this path $Q$. If $Q$ runs from $u$ to $w$, then replace $Q$ with the edge $uw$ in $G'$. If $Q$ runs from $u$ to $v$, then the edge $uv$ cannot be used as part of $H$ in $G$. Thus, use $uv$ to replace $Q$ in $G'$. Similarly, replace $Q$ with the edge $vw$ in $G'$ if $Q$ runs from $v$ to $w$.

Suppose then that $X\setminus S$ contains spoke-meets-rim vertices belonging to $H$. Since the centre of $H$ is not in $X$, $X\setminus S$ can contain only one spoke-meets-rim vertex of $H$. Two of the vertices in $\{u, v, w\}$ must then lie on the rim of $H$, while the third lies on a spoke. Thus, neither $uv$ nor $vw$ can be used to form part of $H$ in $G$. Use two of $uv$, $vw$, and $uw$, then, to form the required paths in $G'$, so that the vertex in $\{u, v, w\}$ that was previously on a spoke of $H$ now forms the required spoke-meets-rim vertex.

Thus, Reduction \ref{r6} can be performed on $G$ without altering the existence or otherwise of a $W_{k}$-subdivision.
\end{proof}

\begin{red}
\label{r7}
Let $k$ be some integer $\ge 7$. Let $G$ be a 3-connected graph containing a set $S = \{t, u, v, w\}$ of vertices, such that there exist at least four bridges of $G|S$: $W$, $X$, $Y$, and $Z$. Suppose that each of these four bridges contains all vertices of $S$. Suppose that either:

\begin{itemize}
\item[(i)] $|X| \le k$; or
\item[(ii)] $|X| = k+1$, and there are exactly four edges joining $S$ to $X\setminus S$.
\end{itemize}

Suppose that for each vertex $i\in S$, either $i$ has degree $< k$, or every bridge of $G|S$ contains at most one neighbour of $i$ not in $S$. Suppose that $v$ and $w$ are either adjacent or joined by a path in some fifth bridge $A$ of $G|S$. Call this path (or edge) $P_{w}$. Suppose also that $v$ and $u$ are either adjacent or joined by a path in some bridge $B$ of $G|S$ other than $W$, $X$, $Y$, $Z$ or $A$ (if $A$ exists). Call this path (or edge) $P_{u}$. Suppose also that $v$ and $t$ are either adjacent or joined by a path in some bridge $C$ of $G|S$ other than $W$, $X$, $Y$, $Z$, $A$, or $B$ (if $A$ and $B$ exist). Call this path (or edge) $P_{t}$.

Form $G'$ from $G$ by removing $X \setminus S$, as in Figure \ref{rf7}.
\end{red}

\begin{figure}[!ht]
\begin{center}
\includegraphics[width=0.8\textwidth]{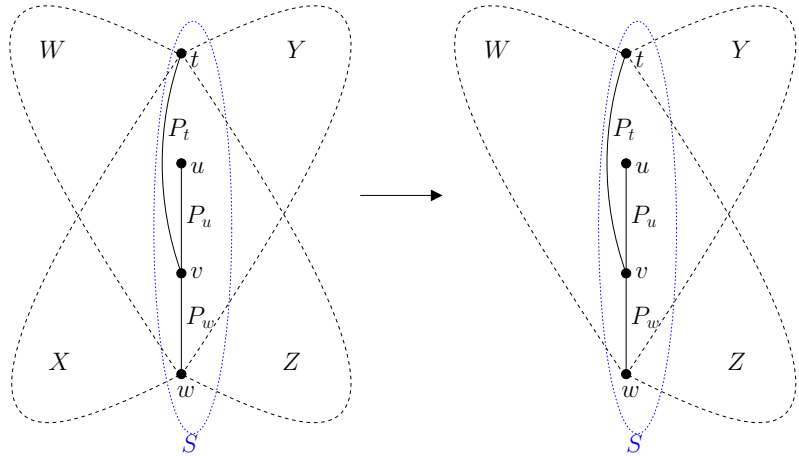}
\caption[Reduction \ref{r7}]{Reduction \ref{r7}: $G$ and $G'$}
\label{rf7}
\end{center}
\end{figure}

\begin{thm}
\label{reduction7}
Let $G$ be a graph on which Reduction \ref{r7} can be performed. Let $G'$ be the resulting graph after Reduction \ref{r7} has been performed on $G$, with $k \ge 7$. Then $G$ contains a $W_{k}$-subdivision if and only if $G'$ contains a $W_{k}$-subdivision.
\end{thm}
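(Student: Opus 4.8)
The plan is to follow the template of Theorems~\ref{reduction1}, \ref{reduction2} and~\ref{reduction6}. The forward implication is immediate, since $G'$ is a subgraph of $G$, so a $W_k$-subdivision in $G'$ is one in $G$. For the converse, suppose $G$ contains a $W_k$-subdivision $H$, with centre $v_0$, and split into cases according to whether $v_0$ lies in $X\setminus S$, in $S$, or in $G-X$.

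\textbf{Centre in $X\setminus S$.} I would rule this out entirely. Every neighbour of $v_0$ lies in $X$ (by the bridge property), so $\deg_G(v_0)\le|X|-1$. Under hypothesis~(i) this gives $\deg_G(v_0)\le k-1<k$, impossible for the centre. Under hypothesis~(ii) it gives $\deg_G(v_0)\le k$, hence $\deg_G(v_0)=k$ and $v_0$ is adjacent to every other vertex of $X$, in particular to all four vertices of $S$; these four edges are then \emph{all} of the edges between $S$ and $X\setminus S$, so no vertex of $X\setminus S\setminus\{v_0\}$ has a neighbour in $S$, nor (again by the bridge property) outside $X$. Since $|X\setminus S\setminus\{v_0\}|=k-4\ge 3>0$, the set $X\setminus S\setminus\{v_0\}$ is joined to the rest of $G$ only through $v_0$, so $\{v_0\}$ is a cutset of $G$, contradicting $3$-connectivity.

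\textbf{Centre in $S$.} This is handled exactly as in the proof of Theorem~\ref{reduction6}. Since $\deg_G(v_0)\ge k$, the hypothesis forces every bridge of $G|S$ to contain at most one neighbour of $v_0$ not in $S$. The rim avoids $v_0$ and can meet only the three vertices of $S\setminus\{v_0\}$, so it is contained in at most three bridges; those three bridges contribute at most three spoke-meets-rim vertices between them, and $S\setminus\{v_0\}$ contributes at most three more, giving at most $6<7\le k$ spokes, a contradiction. Hence $v_0\in G-X$. If $H$ does not meet $X\setminus S$ then $H\subseteq G'$ and we are done, so assume it does. Because $v_0\notin X$, no component of $H-S$ lying in $X$ contains $v_0$ or the whole rim (the latter would force all $k$ internally disjoint spokes to leave $X$ through the four vertices of $S$, impossible for $k\ge7$); so each component of $H-S$ in $X$, together with its edges to $S$, is a tree---a sub-arc of the rim with some pendant partial spokes---attaching to $S$ at a set of at most four ``boundary'' vertices.

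\textbf{Rerouting (the substantive case).} The task is to redraw each such tree inside $G-(X\setminus S)$. For this I would use the other three all-of-$S$ bridges $W,Y,Z$---each of which, being connected and meeting all of $S$, contains a path between any prescribed pair of vertices of $S$ avoiding the other two, and more generally can realise the required attachment pattern---together with the paths $P_w,P_u,P_t$ through $v$. Since $S\cap V(H)$ is a cutset of size at most $4$ of the $2$-connected, highly structured graph $H$, the graph $H-S$ has at most a few components (at most four, by inspecting how removing vertices disconnects a wheel subdivision), so $H$ meets the interiors of only a bounded number of bridges, whereas Reduction~\ref{r7} supplies at least the seven distinct bridges $W,X,Y,Z,A,B,C$; hence enough of $W,Y,Z,A,B,C$ avoid $H$ to carry out the replacement, leaving the part of $H$ outside $X$ untouched. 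The main obstacle is this last case: one must enumerate the possible shapes of $H\cap\langle X\rangle$ (a single $s_is_j$-path; two disjoint paths covering all of $S$; an arc with one or two pendant partial spokes; and so on) and, for each, exhibit a replacement of the \emph{same} combinatorial type in a free bridge or a suitable combination of $P_w,P_u,P_t$, while checking that no vertex of $S$ that must have degree $3$ in the rerouted wheel is forced to higher degree and that nothing clashes with the undisturbed part of $H$. The delicate bookkeeping is tracking which bridges $H$ already occupies---in particular whether it occupies $A$, $B$ or $C$, thereby blocking $P_w$, $P_u$ or $P_t$---and, as elsewhere in the paper, the finitely many residual configurations can be discharged by the case-checking program.
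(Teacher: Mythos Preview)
Your treatment of the centre-in-$X\setminus S$ and centre-in-$S$ cases is exactly what the paper does, and your overall three-case split is the right one. The divergence is entirely in the rerouting case, where you overcomplicate matters and introduce one factual slip.

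The slip: you count ``seven distinct bridges $W,X,Y,Z,A,B,C$'', but the definition of Reduction~\ref{r7} allows each of $P_w,P_u,P_t$ to be a single edge rather than a path through a bridge, so $A,B,C$ need not exist. Only $W,X,Y,Z$ are guaranteed bridges (each containing all of $S$); $P_w,P_u,P_t$ are guaranteed only as three mutually independent $v$--$w$, $v$--$u$, $v$--$t$ connections disjoint from those four bridges.

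The overcomplication: the paper's argument for the rerouting case is a short direct analysis with no program and no fine enumeration of tree shapes. The key observation is that the rim can meet $S$ at most four times, so $H$ occupies at most four bridges. If $H$ occupies exactly four (including $X$), then each non-centre bridge carries only a single path of $H$; the path $Q$ in $X$ is replaced by whichever of $P_t,P_u,P_w$ has the same endpoints if $v$ is one of them, and otherwise one first swaps out a different single-path bridge (one whose path does have $v$ as an endpoint) for the appropriate $P_*$, freeing that bridge to carry $Q$. If $H$ occupies at most three bridges, then at least one of $W,Y,Z$ is completely untouched by $H$; since that bridge contains all of $S$, it can absorb whatever $H\cap\langle X\rangle$ is---either a single path (handled as before) or a single spoke-meets-rim vertex with its three incident paths. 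That is the whole argument: no enumeration of ``two pendant partial spokes'' or ``two disjoint paths'', and no appeal to the case-checking program.
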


\begin{proof}
It is obvious that if $G'$ contains a $W_{k}$-subdivision, then $G$ will also, since $G$ contains a subdivision of $G'$.

Assume then that $G$ contains a $W_{k}$-subdivision, $H$.

By the same reasoning used in Theorem \ref{reduction6} for Reduction \ref{r6}, $H$ cannot be centred in $S$.

Suppose that $H$ is centred in $X\setminus S$. Thus, there exists some vertex $v_{0}$ in $X\setminus S$ with degree $\ge 7$. $G$ must then fall into case (ii) as described in the definition of Reduction \ref{r7}, where $|X| = k+1$, and there are exactly four edges joining $S$ to $X\setminus S$. Since there are exactly $k$ vertices in $X - v_{0}$, $v_{0}$ must be adjacent to every vertex in $X - v_{0}$. Therefore, each of the four edges joining $X$ to $X\setminus S$ has $v_{0}$ as an endpoint. The removal of $v_{0}$ will disconnect $X\setminus (S\cup\{v_{0}\})$ from the rest of the graph, then, thus violating 3-connectivity.

Suppose then that the centre of $H$ is in $G - X$. Let $U$ be the bridge of $G|S$ containing the centre of $H$. Note that $U$ could be any one of $W$, $Y$, $Z$, or some other bridge (other than $X$).

Since the rim of $H$ can pass through $S$ at most four times, $H$ can be contained in the union of at most four bridges of $G|S$ (that is, $U$ and at most three other bridges). If $X\setminus S$ does not contain part of $H$, then removing $X \setminus S$ will have no effect on the existence of the $W_{k}$-subdivision. Suppose then that part of $H$ is contained in $X \setminus S$.

Suppose $H$ is contained in exactly four bridges of $G|S$ (including $X$). Without loss of generality, assume these bridges are $W$, $X$, $Y$, and $Z$. Recall that $U$ is one of the bridges of $G|S$ other than $X$, that is, $U \in \{W, Y, Z\}$. Assume without loss of generality that $U = W$ --- in other words, $H$ is centred in $W$. Apart from $W$, each of the bridges containing parts of $H$ can contain only a single path belonging to $H$. In $X$, call this path $Q$. If $v$ is an endpoint of $Q$, replace it with one of $P_{t}$, $P_{u}$, or $P_{w}$ in $G'$. If not, then there exists some other bridge $U'$ (where $U' \in \{Y, Z\}$) that contains only a single path $Q'$ belonging to $H$, such that $v$ is an endpoint of $Q'$. One of $P_{t}$, $P_{u}$, or $P_{w}$ can be used to replace $Q'$ in $G'$, leaving $U'$ free. Parts of $U'$ can then be used to replace $Q$ in $G$, so that $X$ is no longer required.

Suppose now that $H$ is contained in at most three bridges of $G|S$. If the part of $H$ contained in $X\setminus S$ is only a single path, it can be replaced as in the previous paragraph. Suppose then that $X$ contains a single spoke-meets-rim vertex and the three paths meeting this vertex. Then, since at least one of $W$, $Y$, $Z$ is not used to form $H$ in $G$, use this bridge in $G'$ to form the parts of $H$ previously contained in $X$.

Thus, Reduction \ref{r7} can be performed on $G$ without altering the existence or otherwise of a $W_{k}$-subdivision.
\end{proof}

\begin{red}
\label{r8}
Let $k$ be some integer $\ge 7$. Let $G$ be a 3-connected graph containing a set $S = \{t, u, v, w\}$ of vertices, such that there exist at least three bridges of $G|S$: $X$, $Y$, and $Z$. Suppose that $|X| \le k$, and $X\cap S = \{u, v, w\}$, and that $Y$ and $Z$ also contain the vertices $u, v, w$. Suppose that for each vertex $i\in S$, either $i$ has degree $< k$, or every bridge of $G|S$ contains at most one neighbour of $i$ not in $S$. Suppose also that $v$ and $u$ are either adjacent, or joined by a path $P_{u}$ in some fourth bridge $A$ of $G|S$.

Form $G'$ from $G$ by removing $X \setminus S$ and adding an edge from $v$ to $w$, if such an edge does not already exist (as in Figure \ref{rf8}). 
\end{red}

\begin{figure}[!ht]
\begin{center}
\includegraphics[width=0.8\textwidth]{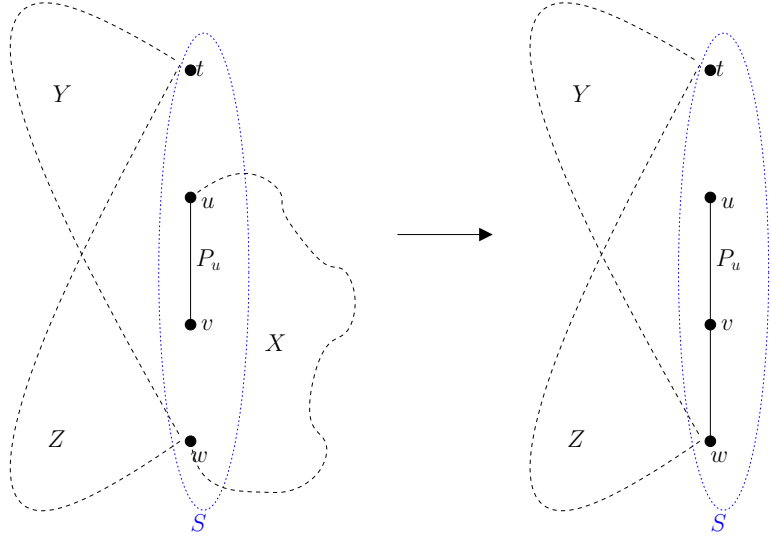}
\caption[Reduction \ref{r8}]{Reduction \ref{r8}: $G$ and $G'$}
\label{rf8}
\end{center}
\end{figure}

\begin{thm}
\label{reduction8}
Let $G$ be a graph on which Reduction \ref{r8} can be performed. Let $G'$ be the resulting graph after Reduction \ref{r8} has been performed on $G$, with $k \ge 7$. Then $G$ contains a $W_{k}$-subdivision if and only if $G'$ contains a $W_{k}$-subdivision.
\end{thm}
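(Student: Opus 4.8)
The forward implication is routine. If $vw$ is already an edge of $G$ then $G'$ is a subgraph of $G$; otherwise, since $X$ is a bridge of $G|S$ with $v,w\in X\cap S$, the set $X\setminus S$ is a connected component of $G-S$ adjacent to both $v$ and $w$, so $G$ contains a path from $v$ to $w$ whose interior lies entirely in $X\setminus S$, and substituting this path for the added edge $vw$ displays a subdivision of $G'$ inside $G$. Either way $G$ contains a subdivision of $G'$, so any $W_k$-subdivision of $G'$ yields one of $G$.

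For the converse, suppose $G$ contains a $W_k$-subdivision $H$ with centre $v_0$; the plan is to mimic the proofs of Theorems \ref{reduction6} and \ref{reduction7}. First I would locate $v_0$. Since $|X|\le k$ and $X\setminus S$ is a component of $G-S$, every vertex of $X\setminus S$ has all its neighbours in $X$ and hence degree at most $|X|-1\le k-1<k$, so $v_0\notin X\setminus S$. Nor can $v_0$ lie in $S$: if it did it would have degree at least $k\ge 7$, forcing (by hypothesis) every bridge of $G|S$ to contain at most one neighbour of $v_0$ outside $S$; the rim of $H$ would then meet the three vertices of $S\setminus\{v_0\}$ at most three times, hence lie in at most three bridges, so $H$ would have at most $3+3=6$ spokes, contradicting $k\ge 7$. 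Therefore $v_0$ lies in $U\setminus S$ for some bridge $U\ne X$.

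If $H$ avoids $X\setminus S$ then $H\subseteq G'$ and we are done; otherwise set $R=H\cap\langle X\setminus S\rangle$. Since $R$ is joined to the rest of $H$ only through $X\cap S=\{u,v,w\}$ and contains no vertex of degree at least $k$, the same structural analysis used in Theorem \ref{reduction7} applies, the principal cases being: (i) $R$ is a single $H$-path $Q$ both of whose ends lie in $\{u,v,w\}$; or (ii) $R$ is a single spoke-meets-rim vertex of $H$ together with the three internally disjoint $H$-paths joining it to $u$, $v$ and $w$. In case (i) I would replace $Q$ by a path of $G'$ between the same two vertices: the added edge $vw$ if the ends of $Q$ are $v$ and $w$; the edge $uv$ or the path $P_u$ if they are $u$ and $v$ (this connection is unused by $H$ precisely because $H$ already contains the internally disjoint path $Q$ between $u$ and $v$, and is free outright when $H$ misses the bridge carrying $P_u$); and the two-edge path from $u$ to $w$ through $v$, assembled from $uv$/$P_u$ and the added edge $vw$, if the ends of $Q$ are $u$ and $w$. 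In case (ii) I would use the elementary fact that a bridge of $G|S$ containing $u$, $v$ and $w$ has a vertex outside $S$ joined to $u$, $v$ and $w$ by three paths pairwise disjoint except at that vertex and meeting $S$ only at $u,v,w$ (take the median vertex of a Steiner tree, drawn inside the bridge's component, of one neighbour of $u$, one of $v$ and one of $w$); so whenever one of $Y$, $Z$ is disjoint from $H$, the whole of $R$ can be transplanted into it, recovering a $W_k$-subdivision of $G'$.

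The crux, exactly as in Theorem \ref{reduction7}, is to verify that the replacement ingredient actually needed -- the connection $uv$/$P_u$, or a bridge among $Y$, $Z$ disjoint from $H$ -- is genuinely available. Since the rim of $H$ meets the four vertices of $S$ at most four times, $H$ spans at most four bridges of $G|S$; the troublesome situation is when these include $X$ together with both $Y$ and $Z$, so that no named bridge is visibly free. I would resolve this by the device of Theorem \ref{reduction7}: in that situation the portion of $H$ inside one of $Y$, $Z$ is forced to be a single path, which -- possibly after first being rerouted through the added edge $vw$ or through $uv$/$P_u$ -- can be cleared out of that bridge, freeing it to carry the part of $H$ that formerly lay in $X$. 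Assembling the pieces then produces the required $W_k$-subdivision of $G'$, and I expect this bookkeeping to be the only genuinely delicate part of the argument.
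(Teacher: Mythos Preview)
Your treatment of the forward implication and of the location of the centre $v_0$ is correct and matches the paper. The difficulty lies in the rerouting of $H$ around $X\setminus S$, and here your sketch contains a genuine gap.

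The failure is in case (i) when $Q$ has endpoints $u$ and $w$. You propose replacing $Q$ by the concatenation of $P_u$ (or the edge $uv$) with the new edge $vw$, routing from $u$ through $v$ to $w$. But $v$ may already lie in $H$: it could be a spoke-meets-rim vertex, or an internal vertex of a rim segment or spoke lying in $Y$ or $Z$. In that event $v$ already has degree $2$ or $3$ in $H$, and forcing two new path-ends through $v$ raises its degree to at least $4$, destroying the $W_k$-subdivision. The paper does \emph{not} route through $v$ here; instead it replaces $Q$ by a path in $\langle Y\rangle$ when $Y\setminus S$ is unused by $H$, and when $Y\setminus S$ \emph{is} used, it carries out a substantial case analysis (parts 1.1 and 1.2 of the paper's proof) on how $H$ meets $Y$ --- whether $Y\setminus S$ carries a spoke-meets-rim vertex or only a single path, and which vertices of $S$ serve as its attachment points --- rerouting several pieces of $H$ through $P_u$, the new edge $vw$, and alternative paths in $\langle Y\rangle$ simultaneously.

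The same obstruction recurs in your case (ii) when neither $Y$ nor $Z$ is free. Your appeal to ``the device of Theorem~\ref{reduction7}'' does not transfer, because that theorem has three spare bridges $W,Y,Z$ and three auxiliary paths $P_t,P_u,P_w$ to work with, whereas here only $Y,Z$ and the single path $P_u$ are available; in particular the path in $\langle Y\rangle$ need not have $v$ as an endpoint, so it cannot always be absorbed by $vw$ or $P_u$. The paper's section 2 again splits on whether $v$ is already a spoke-meets-rim vertex of $H$, and when it is, the rerouting depends on which rim segment lies in which bridge and on whether the $H$-path in $\langle Y\rangle$ ends at $t$. The ``bookkeeping'' you anticipate as merely delicate is in fact the substance of the proof, and the specific replacements you propose do not survive the case $v\in V(H)$.
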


\begin{proof}
It is obvious that if $G'$ contains a $W_{k}$-subdivision, then $G$ will also, since $G$ contains a subdivision of $G'$.

Assume then that $G$ contains a $W_{k}$-subdivision, $H$.

Since $|X| \le k$, the maximum degree of any vertex in $X\setminus S$ is $k - 1$. Therefore, the centre of $H$ cannot be in $X \setminus S$.

Suppose $H$ is centred on some vertex $v_{0} \in S$. Since $v_{0}$ must have degree $\ge k$, each bridge of $G|S$ must contain at most one neighbour of $v_{0}$ not in $S$, by the hypothesis of the theorem. Since there are only three vertices in $S$ other than $v_{0}$, the rim can pass through $S$ at most three times, and thus can be contained in at most three bridges of $G|S$. Since $S$ can contain at most three spoke-meets-rim vertices of $H$, and each of the three bridges containing parts of $H$ can have at most one neighbour of $v_{0}$ not in $S$, $H$ can have at most 6 spokes. This is a contradiction, since $k \ge 7$. Therefore, the centre of $H$ cannot be in $S$.

Suppose then that the centre of $H$ is in $Z\setminus S$. If $\langle X\setminus S\rangle$ does not contain part of $H$, then removing $X \setminus S$ will have no effect on the existence of the $W_{k}$-subdivision. Suppose then that part of $H$ is contained in $\langle X \setminus S\rangle$.

\textbf{1.} Suppose firstly that $\langle X\setminus S\rangle$ contains only a single path belonging to $H$. Call this path $Q$.

Suppose $Q$ runs from $v$ to $u$. In this situation, $H$ cannot contain $P_{u}$, since $P_{u}$ would meet $Q$ at both its endpoints. Thus, $Q$ can be replaced with $P_{u}$ in $G'$.

Suppose $Q$ runs from $v$ to $w$. $Q$ can then be replaced with $vw$ in $G'$.

Assume then that $Q$'s endpoints are $u$ and $w$.

If $\langle Y\setminus S\rangle$ is not used to form any part of $H$, then replace $Q$ in $G'$ with some path in $\langle Y\rangle$. Suppose then that part of $\langle Y\setminus S\rangle$ is used to form part of $H$ in $G$.

\textbf{1.1.} Suppose firstly that $Y\setminus S$ contains spoke-meets-rim vertices of $H$.

Since the centre of $H$ is not in $Y$, and since two vertices in $S$ are already used in $H$ as endpoints of the path $Q$, $Y\setminus S$ can contain only a single spoke-meets-rim vertex of $H$ (else it is routine to show that $H$ cannot be a subdivision of $W_{k}$). Call this spoke-meets-rim vertex $y$. Recall that $Q$ runs from $u$ to $w$. Thus, $H\cap \langle Y\rangle$ must meet $S$ at the vertices $t$, $v$, and some vertex $x_{1}$, where $x_{1} \in \{u, w\}$. Let $x_{2}$ be the vertex in $\{u, w\}$ such that $x_{1} \neq x_{2}$.
There are three possibilities:

\begin{itemize}
\item[(i)] $tHy$ is part of a spoke of $H$, while $x_{1}Hy$ and $vHy$ are part of the rim;
\item[(ii)] $x_{1}Hy$ is part of a spoke of $H$ (and $Q$ is part of the same spoke), while $tHy$ and $vHy$ are part of the rim; or 
\item[(iii)] $vHy$ is part of a spoke of $H$, while $tHy$ and $x_{1}Hy$ are part of the rim.
\end{itemize}

Suppose (i) is true. If $Q$ also forms part of the rim of $H$, then $Q$ and $H\cap \langle Y\rangle$ can be replaced in $G'$ with a path $P_{t}$ in $\langle Y\rangle$ from $t$ to $x_{2}$, a path in $\langle Y \rangle $ from $x_{1}$ to an internal vertex $y_{1}$ of $P_{t}$, and \emph{either} the path $P_{u}$ (if $x_{1} = u$), \emph{or} the edge $vw$ (if $x_{1} = w$). The part of $H$'s rim previously formed by $x_{1}Hy$, $vHy$, and $Q$ is now formed by $x_{2}P_{t}y_{1}$, the path from $x_{1}$ to $y_{1}$, and either $P_{u}$ or $vw$. If $Q$ forms part of a spoke of $H$, then $Q$ and $H\cap \langle Y \rangle $ can be replaced in $G'$ with a path $P_{t}$ in $\langle Y \rangle $ from $t$ to $x_{2}$, a path in $\langle Y \rangle $ from $x_{1}$ to an internal vertex $y_{1}$ of $P_{t}$, and \emph{either} the edge $vw$ (if $x_{1} = u$), \emph{or} the path $P_{u}$ (if $x_{1} = w$). In both cases, the part of spoke previously formed by $tHy$ is now formed by $tP_{t}y_{1}$. 

If (ii) is true, $Q$ and $H\cap \langle Y \rangle $ can be replaced in $G'$ with a path $P_{t}$ in $\langle Y \rangle $ from $t$ to $x_{2}$, and either the edge $vw$ or the path $P_{u}$ . Then the part of $H$'s rim previously formed by $tHy$ and $vHy$ is now formed by $P_{t}$ and either $vw$ or $P_{u}$. The part of spoke that was formed by $Q$ and $x_{1}Hy$ is no longer needed, as the rim now meets $x_{2}$, making $x_{2}$ a spoke-meets-rim vertex where previously it was not.

Suppose (iii) is true. If $Q$ also forms part of the rim of $H$, then $Q$ and $vHy$ can be replaced in $G'$ with the path $P_{u}$ and the edge $vw$. The part of $H$'s rim previously formed by $Q$ is now formed by $P_{u}$ and $vw$ (note that the paths $tHy$ and $x_{1}Hy$ are still used). The part of spoke previously formed by $vHy$ is no longer needed, as $v$ is now a spoke-meets-rim vertex. If $Q$ forms part of a spoke of $H$, then $Q$ and $H\cap \langle Y \rangle $ can be replaced in $G'$ with the path $P_{u}$, the edge $vw$, and a path in $\langle Y\rangle$ from $t$ to $x_{2}$. The parts of spokes previously formed by $vHy$ and $Q$ are no longer needed, as $v$ and $x_{2}$ are now spoke-meets-rim vertices.

\textbf{1.2.} Assume now that $\langle Y\setminus S\rangle$ contains only a single path belonging to $H$. Call this path $R$.

Since $Q$ already forms a path from $u$ to $w$, one of the following must hold:

\begin{itemize}
\item[(i)] $R$ has $v$ as an endpoint; or
\item[(ii)] $R$ forms a path from $x$ to $t$, where $x\in \{u, w\}$.
\end{itemize}

Suppose (i) holds. If $R$ runs from $v$ to $u$, replace it with $P_{u}$ in $G'$. If $R$ runs from $v$ to $w$, replace it with $vw$ in $G'$. In each case, use part of $\langle Y\rangle$ to replace $Q$ in $G'$, so that $X\setminus S$ is no longer needed. If $R$ runs from $v$ to $t$, then regardless of whether $Q$ and $R$ form parts of spokes of $H$ or of the rim of $H$, they can be replaced in $G'$ \emph{either} by the edge $vw$ and a path in $\langle Y\setminus S\rangle$ from $t$ to $u$, \emph{or} by the path $P_{u}$ and a path in $\langle Y\setminus S\rangle$ from $t$ to $w$. Figure \ref{rf8_12} illustrates some examples of different configurations that may occur in this case. Note that in some cases, parts of $H\cap \langle Z \rangle $ that were previously part of the rim of $H$ may become part of a spoke in $G'$, and vice versa.

\begin{figure}[ht]
\begin{center}
\includegraphics[width=0.8\textwidth]{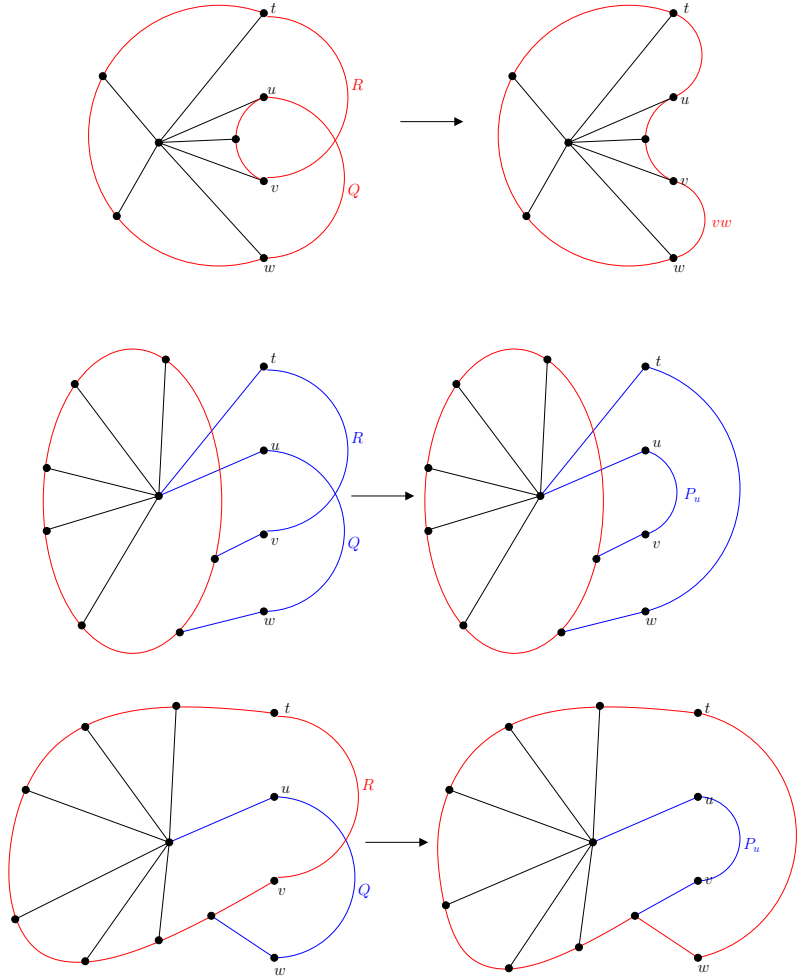}
\caption[Theorem \ref{reduction8}, Case 1.2]{Theorem \ref{reduction8}, Case 1.2: Replacing $Q$ and $R$ with different paths in $G'$ when $R$ runs from $v$ to $t$.}
\label{rf8_12}
\end{center}
\end{figure}

Suppose then that (ii) holds. Replace $Q$ in $G'$ with a path $P_{R}$ in $\langle Y\rangle$ that runs from $R$ to $w$ (if $x = u$), or $R$ to $u$ (if $x = w$). Let $p$ be the endpoint of $P_{R}$ that lies on $R$. If $x$ is not a spoke-meets-rim vertex in $H$, then $P_{R}$ also replaces the path $xRp$ in $G'$. If $x$ \emph{is} a spoke-meets-rim vertex in $H$, then $p$ replaces it as a spoke-meets-rim vertex in $G'$.

\textbf{2.} Suppose now that $X\setminus S$ contains spoke-meets-rim vertices belonging to $H$.

Since the centre of $H$ is not in $X$, $X\setminus S$ can contain only one spoke-meets-rim vertex of $H$.

If $v$ is not a spoke-meets-rim vertex in $H$, then replace $H\cap \langle X \rangle $ in $G'$ with $P_{u}$ and $vw$. Suppose then that $v$ is a spoke-meets-rim vertex in $G$.

If $\langle Y\setminus S\rangle$ is not used to form any part of $H$ in $G$, then use part of $\langle Y\setminus S\rangle$ in $G'$ to replace the part of $H$ previously contained in $\langle X\setminus S\rangle$. Suppose then that part of $\langle Y\setminus S\rangle$ is used to form part of $H$ in $G$. Only a single path belonging to $H$ can be contained in $\langle Y\rangle$. Call this path $Q$. The path $Q$ has $t$ as one endpoint, and some vertex $x$ as the other endpoint, where $x \in \{u, v, w\}$. 

Suppose $x = v$. If $Q$ forms part of a spoke of $H$, and the rim of $H$ in $\langle X\rangle$ runs from $u$ to $v$, then replace $H\cap \langle X \rangle $ in $G'$ with the edge $vw$ and a path in $\langle Y\rangle$ from $Q$ to $u$. If $Q$ forms part of a spoke of $H$, and the rim of $H$ in $\langle X\rangle$ runs from $v$ to $w$, then replace $H\cap \langle X \rangle $ in $G'$ with the path $P_{u}$ and a path in $\langle Y\rangle$  from $Q$ to $w$. If $Q$ forms part of the rim of $H$, and the rim of $H$ in $\langle X\rangle$ runs from $u$ to $v$, then replace $H\cap \langle X \rangle $ in $G'$ with the path $P_{u}$ and a path in $\langle Y\rangle$  from $Q$ to $w$.  If $Q$ forms part of the rim of $H$, and the rim of $H$ in $\langle X\rangle$ runs from $v$ to $w$, then replace $H\cap \langle X \rangle $ in $G'$ with the edge $vw$ and a path in $\langle Y\rangle$ from $Q$ to $u$.

Suppose then that $x \in \{u, w\}$.  Let $x_{2}$ be the other vertex in $\{u, w\}$, such that $x \neq x_{2}$. If $Q$ forms part of the rim of $H$, replace $H\cap \langle X \rangle $ in $G'$ with a path in $\langle Y\rangle$  from $Q$ to $x_{2}$, and \emph{either}  the path $P_{u}$ (if $x = u$), \emph{or} the edge $vw$ (if $x = w$). If $Q$ forms part of a spoke of $H$, replace $Q$ and $H\cap \langle X \rangle $ in $G'$ with a path $P_{t}$ in $Y$ from $t$ to $x_{2}$, and a path in $Y$ from $v$ to some internal vertex of $P_{t}$. 

Thus, if there exists a $W_{k}$-subdivision $H$ in $G$ that is centred in $Z\setminus S$, then a $W_{k}$-subdivision also exists in $G'$. If $H$ is instead centred in $Y\setminus S$, or in some other bridge of $G|S$ (other than $X$, which has already been dealt with), then essentially the same arguments used for the $Z\setminus S$ case can be used to show that a $W_{k}$-subdivision can still be formed in $G'$. 

Thus, Reduction \ref{r8} can be performed on $G$ without altering the existence or otherwise of a $W_{k}$-subdivision.
\end{proof}

\begin{red}
\label{r1_big}
Let $k$ be some integer $\ge 5$. Let $G$ be a 3-connected graph containing a separating set $S = \{t, u, v, w\}$ of vertices. Suppose there exist at least three bridges of $G|S$, $X$, $Y$, and $Z$, such that $X\cap S = \{u, v, w\}$, $Y\cap S = \{u, v, w\}$, $|X| \le k$, and $Z$ also contains $\{u, v, w\}$. Suppose that for each vertex $i\in \{u, v, w\}$, either $i$ has degree $< k$, or $X\setminus S$ contains at most one neighbour of $i$. Suppose that $v$ and $w$ are adjacent. Suppose also that $v$ and $u$ are either adjacent or joined by a path $P_{u_{1}}$ in some bridge $A$ of $G|S$ other than $X$, $Y$, or $Z$, and by a path $P_{u_{2}}$ in some bridge $B$ of $G|S$ other than $X$, $Y$, $Z$, or $A$.

Form $G'$ from $G$ by removing $X \setminus S$ and adding a single edge from $u$ to $w$, if such an edge does not already exist.
\end{red}

\begin{thm}
\label{reduction1big}
Let $G$ be some 3-connected graph on which Reduction \ref{r1_big} can be performed. Let $G'$ be the resulting graph after Reduction \ref{r1_big} has been performed on $G$. Then $G$ contains a $W_{k}$-subdivision if and only if $G'$ contains a $W_{k}$-subdivision, where $k \ge 5$.
\end{thm}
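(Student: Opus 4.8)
The plan is to follow the now-familiar template established by Theorems~\ref{reduction1}, \ref{reduction2}, \ref{reduction6}, \ref{reduction7}, and \ref{reduction8}: the forward direction (if $G'$ contains a $W_k$-subdivision then so does $G$) is immediate, since $G$ contains a subdivision of $G'$ (the new edge $uw$ in $G'$ can be traced through $\langle X\rangle$ in $G$, as $X$ is a bridge meeting $S$ in $\{u,v,w\}$ and so contains a $u$--$w$ path avoiding $v$, or at worst via $v$ using the edge $vw$; in fact $X$ contains at least two of the three vertices $u,v,w$ joined by internally-disjoint paths, so a $u$--$w$ path through $\langle X\setminus S\rangle$ or using $vw$ is available). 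So the substance is the reverse direction: assume $G$ has a $W_k$-subdivision $H$ and produce one in $G'$.

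As in the earlier proofs, I would split on the location of the centre of $H$: (a) in $G-X$, (b) in $S$, or (c) in $X\setminus S$. Case (c) is disposed of using $|X|\le k$, so the maximum degree of any vertex of $X\setminus S$ is $k-1<k$; hence the centre cannot lie there. Case (b): the centre is some $v_0\in S=\{t,u,v,w\}$; by the hypothesis, $v_0$ either has degree $<k$ (immediate contradiction) or every bridge of $G|S$ (here we only know $X\setminus S$ has at most one neighbour of $v_0$ — but note this version of the reduction, unlike Reductions~\ref{r6},~\ref{r7},~\ref{r8}, only restricts neighbours in $X\setminus S$, not in every bridge) — so this case needs a little more care. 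Actually $v_0$ could be $t$, which has no constraint at all; but $t\notin X$, so if $H$ is centred at $t$ then $H\cap\langle X\rangle$ is a collection of paths and spoke-meets-rim vertices, and the argument reduces to the $G-X$ analysis with the centre playing the role of a vertex outside $X$. If $v_0\in\{u,v,w\}$, then at most one spoke of $H$ can leave $v_0$ into $X\setminus S$; removing $X\setminus S$ destroys at most that one spoke-meets-rim vertex and at most one path of the rim lying in $\langle X\rangle$, and I rebuild using the edge $vw$, the edge $uw$ of $G'$, the paths $P_{u_1}$, $P_{u_2}$, and parts of $Y$ and $Z$ (which both contain $\{u,v,w\}$).

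The main work, Case (a), is to show that when the centre of $H$ lies in $G-X$, the portion of $H$ inside $\langle X\setminus S\rangle$ — which is either (1) a single path with endpoints among $\{u,v,w\}$, or (2) a single spoke-meets-rim vertex together with the three sub-paths through it (no more, since the centre is outside $X$) — can be rerouted in $G'$. A path from $v$ to $w$ is replaced by the edge $vw$; a path from $u$ to $w$ is replaced by the edge $uw$; a path from $u$ to $v$ (which cannot coexist in $H$ with $P_{u_1}$ and $P_{u_2}$, since those would meet it at both endpoints, so at most one of $P_{u_1},P_{u_2}$ is used by $H$) is replaced by whichever of $P_{u_1},P_{u_2}$ is free; this is exactly where having \emph{two} disjoint $v$--$u$ connections $P_{u_1},P_{u_2}$ plus the edge $vw$ is essential, mirroring Reduction~\ref{r1} but with the bound $|X|\le k$ replacing the ``$Y$ and $Z$ each contain a subdivision of $X$'' hypothesis. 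For sub-case (2), since the three sub-paths through the spoke-meets-rim vertex $x\in X\setminus S$ leave $\langle X\rangle$ at (at most) three of the four vertices of $S$, and since the centre is outside $X$, at most one of these sub-paths is part of a spoke and the configuration is rigid; I rebuild the spoke-meets-rim vertex and its incident paths using $Y$ or $Z$ (at least one of which is not otherwise used by $H$, since $H$ meets at most four bridges of $G|S$ and $X$ is one of them), together with $vw$, $uw$, and a free $P_{u_i}$. The main obstacle I anticipate is the careful bookkeeping in this last sub-case — verifying that, across all ways the three sub-paths can be apportioned between rim and spokes and all ways $H$ can meet the other bridges, there is always a free bridge among $Y,Z$ and a consistent set of replacement paths that preserves the cyclic order of the rim and the disjointness of the spokes; but this is routine case analysis of the same flavour already executed in Theorems~\ref{reduction1} and~\ref{reduction8}, and no genuinely new difficulty arises because $|X|\le k$ forbids the centre from hiding inside $X$.
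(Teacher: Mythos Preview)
Your proposal follows the paper's template closely: the same three-way split on the location of the centre of $H$, the same use of $|X|\le k$ to eliminate the centre-in-$X\setminus S$ case, and the same toolkit (the edge $vw$, the new edge $uw$, and the $u$--$v$ connections $P_{u_1}$, $P_{u_2}$) for rerouting. For the single-path sub-case your replacement scheme matches the paper's.

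There is one place where your argument diverges and leaves a gap. In your sub-case (2) (centre in $G-X$, a single spoke-meets-rim vertex $x\in X\setminus S$), you propose to rebuild using $Y$ or $Z$, asserting that ``at least one of which is not otherwise used by $H$, since $H$ meets at most four bridges of $G|S$ and $X$ is one of them.'' That counting does not force one of $Y$, $Z$ specifically to be free: the three non-$X$ bridges used by $H$ could perfectly well be $Y$, $Z$, and the bridge containing the centre. The paper does not attempt this. Instead, since the rim lies in at most three bridges, it assumes without loss of generality that $B$ (hence $P_{u_2}$) is unused, and then replaces the three sub-paths through $x$ (which exit $\langle X\rangle$ at $u$, $v$, $w$, since $X\cap S=\{u,v,w\}$) by two of $vw$, $uw$, $P_{u_2}$, so that one of $u$, $v$, $w$ itself becomes the new spoke-meets-rim vertex. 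No free $Y$ or $Z$ is needed at all.

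Your case (b) for the centre $v_0\in\{u,v,w\}$ is also sketchier than the paper's. When the rim of $H$ passes through $X$, the paper's key manoeuvre is to first reroute whatever $Y$ contributes to $H$ (if $Y$ is not on the rim, it carries at most a single spoke, which is replaced by one of $vw$, $uw$, $P_{u_2}$), and only then use the now-free $Y$---which satisfies $Y\cap S = X\cap S = \{u,v,w\}$---to stand in for $X$. You mention ``parts of $Y$ and $Z$'' but do not articulate this free-then-swap step, which is what actually makes the case go through.
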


\begin{proof}
It is obvious that if $G'$ contains a $W_{k}$-subdivision, then $G$ will also, since $G$ contains a subdivision of $G'$.

Assume then that $G$ contains a $W_{k}$-subdivision, $H$.

Since $|X| \le k$, the maximum degree of any vertex in $X\setminus S$ is $k - 1$. Therefore, the centre of $H$ cannot be in $X \setminus S$.

\textbf{(a)} Suppose $H$ is centred on some vertex $v_{0} \in \{u, v, w\}$.

The rim of $H$ can pass through at most three bridges of $G|S$. Assume without loss of generality that $B$ is not one of these bridges. If $X$ is also not one of these bridges, then $X\setminus S$ can only contain part or all of a single spoke of $H$. This can be replaced in $G'$ by one of $P_{u_{2}}$, $vw$, or $uw$.

Suppose then that $X$ contains part of the rim of $H$. Since $v_{0}$ must then have degree $\ge k$, $v_{0}$ can have at most one neighbour in $X\setminus S$, by the hypothesis of the theorem. Thus, $X\setminus S$ contains at most one spoke-meets-rim vertex of $H$.

If $Y$ also contains part of the rim of $H$, then, since $X\cap S = Y\cap S = \{u, v, w\}$ the rim of $H$ must be entirely contained in $\langle X\cup Y\rangle$. Thus, at most two other bridges can contain parts of $H$: each of these bridges may contain a single spoke from $v_{0}$ to some vertex in $\{u, v, w\} - v_{0}$. Assume without loss of generality that $B$ is not one of these bridges. Then these two spokes can be replaced in $G'$ by two of $P_{u_{2}}$, $vw$, and $uw$.

Suppose then that $Y$ does not contain part of the rim of $H$. Then $Y$ can only be used to form some spoke $P$ in $H$ from $v_{0}$ to some vertex in $\{u, v, w\} - v_{0}$. If $P$ is a path from $u$ to $w$, replace $P$ with $uw$ in $G'$. If $P$ is a path from $u$ to $v$, replace $P$ with $P_{u_{2}}$ in $G'$ (since it is assumed that $B$ is not used to form any part of $H$ in $G$). If $P$ is a path from $v$ to $w$, then the edge $vw$ cannot be used to form part of $H$ in $G$ --- thus, use $vw$ to replace $P$ in $G'$. The bridge $Y$ can then be used in $G'$ to replace that part of $H$ formed by $X$ in $G$.

\textbf{(b)} Suppose then that $H$ is centred in $G - X$. Let $U$ be the bridge of $G|S$ containing the centre of $H$.

The rim of $H$ must be contained in $U$ and at most three other bridges of $G|S$.

Suppose the rim of $H$ is contained in four bridges of $G|S$. Then no other bridge of $G|S$ can contain any part of $H$, since every spoke of $H$ must be contained in $\langle U\rangle$. Thus, if the rim of $H$ does not pass through $X\setminus S$, the removal of $X\setminus S$ will have no effect on the existence of $H$. Assume then that the rim of $H$ passes through $X\setminus S$. Without loss of generality, assume that $B$ is not one of the four bridges containing the rim of $H$. Then that part of the rim of $H$ contained in $\langle X\rangle$ can be replaced in $G'$ by one of $P_{u_{2}}$, $vw$, or $uw$.

Suppose then that the rim of $H$ is contained in at most three bridges of $G|S$. Without loss of generality, assume that $B$ is not one of these bridges. Note that $\langle X\setminus S\rangle$ can contain either a single path belonging to $H$, or a single spoke-meets-rim vertex of $H$ and the three paths that meet this vertex. If the former is true, replace this path in $G'$ with one of $P_{u_{2}}$, $vw$, or $uw$. If the latter is true, replace these paths in $G'$ with two of $P_{u_{2}}$, $vw$, and $uw$, so that some vertex in $\{u, v, w\}$ becomes the spoke-meets-rim vertex previously contained in $X\setminus S$.
\end{proof}

\section{Separating sets}
\label{separatingsets}

For each type of separating set $S$ defined in this section, we prove a theorem. Each such theorem supposes the existence of some graph $G$ containing such a set $S$, and forms two smaller graphs from $G$ by performing the following steps:

\begin{itemize}
\item[(a)] Divide $G$ along $S$ into two components $G_{1}$ and $G_{2}$.
\item[(b)] Add some small structure $X$ to $G_{1}$ to form $G'_{1}$, where $|V(X)| < |V(G_{2})|$.
\item[(c)] Add some small structure $Y$ to $G_{2}$ to form $G'_{2}$, where $|V(Y)| < |V(G_{1})|$.
\end{itemize}

It is then shown that if $G$ contains a $W_{7}$-subdivision, then either $G'_{1}$ or $G'_{2}$ will also, and if $G$ does not contain a $W_{7}$-subdivision, then neither $G'_{1}$ nor $G'_{2}$ will.

These theorems enable such separating sets to be used in an algorithm for SHP($W_{7}$), in that the input graph $G$ can be separated along any existing separating set of an appropriate type, and each resulting component then examined using a divide-and-conquer method. This method is also used in \cite{Farr88, Robinson08} with internal 3-edge-cutsets and in \cite{Robinson08} with internal 4-edge-cutsets, both of which are also used in the $W_{7}$ algorithm outlined here in Section \ref{algorithm}. The types of separating sets defined in this section are more complex, however, and fall into two distinct categories: the edge-vertex-cutsets, of which there are eight different types; and the internal $\{1,1,1,1\}$-cutset, which consists of four disjoint edges.

\vspace{0.1in}
\noindent \textbf{Definition}

A \emph{type 1 edge-vertex-cutset} in a graph $G$ is a set $S = \{e_{1}, e_{2}, v\}$ of two edges $e_{1}, e_{2}$ of $G$ and one vertex $v$ of $G$ such that $G - S$ is disconnected, with each component having at least four vertices.

\begin{figure}[!ht]
\begin{center}
\includegraphics[width=0.5\textwidth]{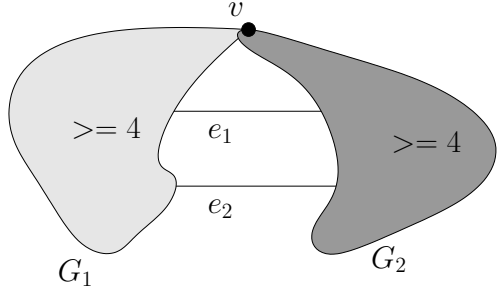}
\caption{Type 1 edge-vertex-cutset}
\label{evc1}
\end{center}
\end{figure}

\begin{thm}
\label{edgevertex}

Let $G$ be a 3-connected graph which contains a type 1 edge-vertex-cutset $S = \{e_{1}, e_{2}, v\}$. Let $G_{1}, G_{2}$ be the components of $G - S$. Let $u_{1}, \ldots , u_{i}$ be the neighbours of $v$ in $G_{1}$, and let $v_{1}, \ldots, v_{j}$ be the neighbours of $v$ in $G_{2}$.

Form $G'_{1}$ from $G$ by replacing $G_{2}$ with the subgraph $X$, where:

\begin{itemize}
\item if $j < 3$, $X$ contains only the vertices $v_{1}, \ldots, v_{j}$, all of which are made adjacent if they were not already, such that $v_{1}$ is an endpoint of $e_{1}$ and $v_{j}$ is an endpoint of $e_{2}$;
\item if $j \ge 3$, $X$ contains only the vertices $v_{1}, v_{2}, v_{3}$, such that $v_{1}$ is an endpoint of $e_{1}$, $v_{3}$ is an endpoint of $e_{2}$, and $v_{2}$ is adjacent to both $v_{1}$ and $v_{3}$.
\end{itemize}

Form $G'_{2}$ from $G$ by replacing $G_{1}$ with the subgraph $Y$, where:

\begin{itemize}
\item if $i < 3$, $Y$ contains only the vertices $u_{1}, \ldots, u_{i}$, all of which are made adjacent if they were not already, such that $u_{1}$ is an endpoint of $e_{1}$ and $u_{i}$ is an endpoint of $e_{2}$;
\item if $i \ge 3$, $Y$ contains only the vertices $u_{1}, u_{2}, u_{3}$, such that $u_{1}$ is an endpoint of $e_{1}$, $u_{3}$ is an endpoint of $e_{2}$, and $u_{2}$ is adjacent to both $u_{1}$ and $u_{3}$.
\end{itemize}

Then $G$ contains a $W_{7}$-subdivision if and only if at least one of $G'_{1}$ and $G'_{2}$ contains a $W_{7}$-subdivision.
\end{thm}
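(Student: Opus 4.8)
The plan is to prove both implications by controlling how a $W_7$-subdivision can interact with the cutset $S=\{e_1,e_2,v\}$. I would begin with a few structural observations. Since $G-S$ splits into $G_1$ and $G_2$, the edges $e_1,e_2$ are the only edges of $G$ joining $V(G_1)$ to $V(G_2)$; write $a_r\in V(G_1)$ and $b_r\in V(G_2)$ for the endpoints of $e_r$ ($r=1,2$). Because $|V(G_1)|,|V(G_2)|\ge 4$ and $G$ is $3$-connected, $a_1\neq a_2$ and $b_1\neq b_2$ (otherwise $\{a_1,v\}$ or $\{b_1,v\}$ would be a $2$-separator), so both $\{a_1,a_2,v\}$ and $\{b_1,b_2,v\}$ are minimum separating sets of $G$, and the $G_2$-side of $G$ behaves as a bridge of $G\,|\,\{b_1,b_2,v\}$ attached at all three of $b_1,b_2,v$ (symmetrically on the other side). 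This is precisely the situation to which Lemmas~\ref{lemma1} and~\ref{lemma2} apply, and they will supply the path-rerouting needed in both directions.

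\textbf{The direction ``$G'_1$ or $G'_2$ has a $W_7$-subdivision $\Rightarrow$ $G$ does''} is the easier one. Suppose $G'_1$ contains a $W_7$-subdivision $H'$ (the case of $G'_2$ is identical with the two sides swapped). The gadget $X$ consists of at most three vertices joined in a path, together with the crossing edges $e_1,e_2$ and the edges from $v$ to the retained $v_r$; hence $H'\cap\langle X\rangle$ is a union of at most a couple of vertex-disjoint paths whose ends lie among the $v_r$ and whose only links to the rest of $H'$ run through $v$ and through $e_1,e_2$. I would replace $H'\cap\langle X\rangle$ by a set of vertex-disjoint paths in $\langle V(G_2)\rangle$ realizing the same connection pattern among the corresponding subset of $\{b_1,b_2\}\cup\bigl(N_G(v)\cap V(G_2)\bigr)$; since $\{b_1,b_2,v\}$ separates $G$ and $|V(G_2)|\ge 4$, Lemmas~\ref{lemma1} and~\ref{lemma2}, applied to the $G_2$-side bridge, produce exactly such paths. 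The result is a $W_7$-subdivision in $G$.

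\textbf{The main direction} is ``$G$ has a $W_7$-subdivision $H$ $\Rightarrow$ $G'_1$ or $G'_2$ does''. Let $c$ be the centre of $H$ and split on its location. If $c\in V(G_1)$ (the case $c\in V(G_2)$ is symmetric and yields a subdivision in $G'_2$), let $J=H\cap\langle V(G_2)\rangle$ be the part of $H$ strictly on the $G_2$ side. Its attachments to $H\setminus J$ go only through $e_1$, through $e_2$, and through edges of $H$ incident to $v$, and the wheel structure forces these to be few: a degree-$3$ vertex of $H$ cannot send all three of its $H$-edges into $V(G_2)$ (the three strands would all need to return to the $G_1$ side, but only $e_1,e_2$ are available), and likewise the rim cycle and each spoke can cross the cut only a bounded number of times. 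Carrying this bookkeeping through reduces $J$ to a short list of configurations — essentially at most two vertex-disjoint paths with ends in $\{b_1,b_2\}\cup\bigl(N_G(v)\cap V(G_2)\bigr)$, possibly carrying a single spoke-meets-rim vertex of $H$ — and in each case I would exhibit the matching structure inside $X$: a single crossing path of $H$ becomes one of $v_1v_j$ (if $j<3$) or $v_1v_2v_3$ (if $j\ge 3$) together with the appropriate $e_r$-edges, while a spoke-meets-rim vertex lying strictly inside $G_2$ is relocated onto $v_2$, which has degree $3$ in $\langle X\cup\{v\}\rangle$ exactly when $j\ge 3$ — this is why the gadget needs three vertices in that case and only the actual $\le 2$ neighbours of $v$ otherwise. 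The remaining case $c=v$ is handled by noting that $v$ has seven $H$-edges, so at most one of the two sides of the cut receives four or more of them; replacing the other side — which then carries at most three strand-ends, and whose gadget accordingly has enough vertices to host them — by its gadget produces a $W_7$-subdivision in the corresponding $G'_i$.

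I expect the main obstacle to be the forward direction when $c\in V(G_1)$ or $c=v$: one must enumerate exactly which configurations of $J$ (resp.\ of the $G_2$-side portion of $H$ at $v$) can occur and verify, configuration by configuration, that the prescribed three-vertex path-gadget, with its fixed incidences to $v$ and to $e_1,e_2$, can reproduce the same spokes, rim arcs and branch vertex. The delicate points are the cases where a spoke-meets-rim vertex of $H$ sits strictly inside $G_2$, so that $X$ must supply a genuine degree-$3$ vertex with the rim/spoke split correct, and ensuring the counting that bounds the number of crossing strands is tight enough that nothing escapes the list; once the configurations are pinned down, the actual rerouting into or out of $\langle V(G_2)\rangle$ is routine via Lemmas~\ref{lemma1} and~\ref{lemma2}.
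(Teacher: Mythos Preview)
Your plan is essentially the paper's own argument: split on the location of the centre (in $G_1$, in $G_2$, or at $v$), enumerate the few possible shapes of $H$ on the far side of the cut, and realise them in the gadget (forward direction) or reroute them into $G_2$ via $3$-connectivity together with Lemmas~\ref{lemma1} and~\ref{lemma2} (backward direction). One small imprecision: in the backward direction your description of $H'\cap\langle X\rangle$ as ``a couple of vertex-disjoint paths'' omits the case where $H'$ is centred on $v$ and $X$ carries two or three spoke-meets-rim vertices (a fan, not paths); this is exactly the case requiring Lemmas~\ref{lemma1}/\ref{lemma2}, which you do invoke, so the gap is only in the wording.
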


\begin{proof}

$(\Rightarrow )$ Suppose firstly that $G$ contains a $W_{7}$-subdivision $H$.

If $H$ is entirely contained in $G_{1}$ or $G_{2}$, then $H$ will also be contained in $G'_{1}$ or $G'_{2}$ respectively. Suppose then that $H$ contains parts of both $G_{1}$ and $G_{2}$.

Suppose firstly that $H$ is centred in $G - v$. Without loss of generality, suppose $H$ is centred in $G_{1}$. Then one of the following must be true:

\begin{itemize}
\item[(A)] $G_{2}$ contains a single path of $H$;
\item[(B)] $v$ forms a spoke-meets-rim vertex of $H$, and two separate paths of $H$ enter $G_{2}$ at this vertex and leave $G_{2}$ via $e_{1}$ and $e_{2}$; or
\item[(C)] $G_{2}$ contains a single spoke-meets-rim vertex of $H$.
\end{itemize}

Each of the three possibilities can also be formed by $X$ in $G'_{1}$. (If $H$ is instead centred in $G_{2}$, then similarly $H$ will also be contained in $G'_{2}$.)

Suppose now that $H$ is centred on $v$. Since $|N_{H}(v)| = 7$, one of $G_{1}$, $G_{2}$ can contain at most three of the vertices in $N_{H}(v)$. Assume without loss of generality that this is $G_{2}$. Then, since $G_{2}$ can contain at most three spoke-meets-rim vertices of $H$, $X$ can be used in $G'_{2}$ to replace the parts of $H$ previously contained in $G_{2}$. (If instead $G_{1}$ contains no more than three vertices in $N_{H}(v)$, then the same argument applies to show that $H$ is also contained in $G'_{1}$.)

Thus, whenever $G$ contains a $W_{7}$-subdivision, at least one of $G'_{1}$ or $G'_{2}$ does also.

$(\Leftarrow )$ Suppose now that either $G'_{1}$ or $G'_{2}$ contains a $W_{7}$-subdivision --- assume $G'_{1}$ without loss of generality. If $H$ is entirely contained in $G'_{1} - X$, then $H$ is also contained in $G$. Suppose then that $H$ contains parts of $X$. One of the following must hold:

\begin{itemize}
\item[(a)] $X$ contains a single path of $H$;
\item[(b)] $v$ forms a spoke-meets-rim vertex of $H$, and two separate paths of $H$ enter $X$ at this vertex and leave $X$ via $e_{1}$ and $e_{2}$;
\item[(c)] $X$ contains a single spoke-meets-rim vertex of $H$; or
\item[(d)] $H$ is centred on $v$, and $X$ contains two or three spoke-meets-rim vertices of $H$.
\end{itemize}

If (a), (b), or (c) hold, then by 3-connectivity of $G$, the required paths can be formed in $G_{2}$. If (d) holds, then by Lemma \ref{lemma1} or Lemma \ref{lemma2}, the required structure can also be formed in $G_{2}$.

Thus, whenever $G'_{1}$ contains a $W_{7}$-subdivision, $G$ does also. (By the same argument, whenever $G'_{2}$ contains a $W_{7}$-subdivision, $G$ does also.)
\end{proof}

\noindent \textbf{Definition}

A \emph{type 1a edge-vertex-cutset} in a graph $G$ is a set $S = \{e_{1}, e_{2}, v\}$ of two edges $e_{1}, e_{2}$ of $G$ and one vertex $v$ of $G$ such that $G - S$ is disconnected, with each component having at least three vertices, and $v$ has degree $< 7$.

\begin{thm}
\label{edgevertex1a}

Let $G$ be a 3-connected graph which contains a type 1a edge-vertex-cutset $S = \{e_{1}, e_{2}, v\}$. Let $G_{1}, G_{2}$ be the components of $G - S$. Let $u_{1}, \ldots , u_{i}$ be the neighbours of $v$ in $G_{1}$, and let $v_{1}, \ldots, v_{j}$ be the neighbours of $v$ in $G_{2}$.

Form $G'_{1}$ from $G$ by replacing $G_{2}$ with the subgraph $X$, where $X$ contains only the two adjacent vertices $x_{1}, x_{2}$, both of which are made adjacent to $v$, such that $x_{1}$ is an endpoint of $e_{1}$ and $x_{2}$ is an endpoint of $e_{2}$.

Form $G'_{2}$ from $G$ by replacing $G_{1}$ with the subgraph $Y$, where $Y$ contains only the two adjacent vertices $y_{1}, y_{2}$, both of which are made adjacent to $v$, such that $y_{1}$ is an endpoint of $e_{1}$ and $y_{2}$ is an endpoint of $e_{2}$.

Then $G$ contains a $W_{7}$-subdivision if and only if at least one of $G'_{1}$ and $G'_{2}$ contains a $W_{7}$-subdivision.
\end{thm}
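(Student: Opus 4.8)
The plan is to follow the proof of Theorem~\ref{edgevertex} almost line for line, the two differences being that now $\deg_{G}(v)<7$ --- so $v$ can never be the centre of a $W_{7}$-subdivision that lives in $G$ --- and that the replacement gadget has only two vertices $x_{1},x_{2}$, which is admissible because each component of $G-S$ has at least three vertices (so $|V(X)|=2<|V(G_{2})|$ and $|V(Y)|=2<|V(G_{1})|$). Write $e_{1}=a_{1}b_{1}$ and $e_{2}=a_{2}b_{2}$ with $a_{1},a_{2}\in V(G_{1})$, $b_{1},b_{2}\in V(G_{2})$; in $G'_{1}$ the edges $e_{1},e_{2}$ become $a_{1}x_{1},a_{2}x_{2}$, the vertices $x_{1},x_{2}$ are adjacent to each other and to $v$, so each of $x_{1},x_{2}$ has degree $3$ in $G'_{1}$ (symmetrically for $G'_{2}$).

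$(\Rightarrow)$ Given a $W_{7}$-subdivision $H$ of $G$: if it lies inside $\langle V(G_{1})\cup\{v\}\rangle$ or $\langle V(G_{2})\cup\{v\}\rangle$ it survives in $G'_{1}$ or $G'_{2}$. Otherwise $H$ meets both sides; its centre has degree $\ge 7$, hence is not $v$, so it lies in $G_{1}$, say. As in cases (A)--(C) of Theorem~\ref{edgevertex}, the part of $H$ on the $G_{2}$-side is then a single path, or a pair of $H$-paths from $v$ (a spoke-meets-rim vertex of $H$) leaving via $e_{1},e_{2}$, or a single spoke-meets-rim vertex of $H$ with its three incident $H$-paths leaving via $e_{1},e_{2},v$; and each of these is realised by the gadget, $x_{1}x_{2}$ serving for the single path, $vx_{1},vx_{2}$ for the pair from $v$, and $x_{1}$ (with its three incident edges $x_{1}a_{1}$, $x_{1}x_{2}$, $x_{1}v$) for the spoke-meets-rim vertex. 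This gives a $W_{7}$-subdivision of $G'_{1}$, and symmetrically of $G'_{2}$ when the centre is in $G_{2}$.

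$(\Leftarrow)$ Assume, without loss of generality, $G'_{1}$ has a $W_{7}$-subdivision $H$; if $H$ avoids $\{x_{1},x_{2}\}$ it already lies in $G$. Otherwise, since $|V(X)|=2$ with both vertices of degree $3$, the trace of $H$ on $\langle X\cup\{v\}\rangle$ is one of: (a) a single path through $X$; (b) $v$ a spoke-meets-rim vertex with two $H$-paths entering $X$ and leaving via $e_{1},e_{2}$; (c) a single spoke-meets-rim vertex inside $X$; or (d) $H$ centred on $v$ with $x_{1},x_{2}$ both spoke-meets-rim vertices. Cases (a) and (c) are handled by $3$-connectivity of $G$, which gives inside $\langle V(G_{2})\cup\{v\}\rangle$ the needed path between two of $a_{1},a_{2},v$, respectively a vertex of $G_{2}$ with three disjoint paths to $b_{1}$, $b_{2}$ and to a neighbour of $v$; substituting these recovers a $W_{7}$-subdivision of $G$. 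Cases (b) and (d) both require two vertex-disjoint paths from $v$ into $G_{2}$ reaching $b_{1}$ and $b_{2}$ (in (d) this situation can only arise because the gadget has raised $\deg(v)$ to exactly $7$), which is precisely the conclusion of Lemma~\ref{lemma1} applied to the separating triad $\{v,b_{1},b_{2}\}$ with $x=v$ (or of Lemma~\ref{lemma2} in the degenerate sub-cases); reading that off and substituting, $H$ again pulls back to $G$.

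The step I expect to be the main obstacle is this last one: one must check that in cases (b) and (d) the triad $\{v,b_{1},b_{2}\}$ really satisfies the hypotheses of Lemma~\ref{lemma1} (enough neighbours of $v$ on the $G_{2}$-side, and enough edges joining the triad to it), and dispose by hand of the small boundary configurations where those hypotheses are tight --- for instance when $|V(G_{2})|$ is as small as $3$, or when $b_{1}=b_{2}$ in $G$. Everything else is a routine transcription of the argument for Theorem~\ref{edgevertex}, made shorter by the two-vertex gadget and by the impossibility of $v$ being a $W_{7}$-centre in $G$.
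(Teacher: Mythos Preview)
Your cases (a)--(c) follow the paper and are fine; the problem is your case (d). If $v$ is the centre of a $W_7$-subdivision in $G'_1$ then $\deg_{G'_1}(v)\ge 7$, and since $\deg_{G'_1}(v)=i+2$ where $i=|N_G(v)\cap V(G_1)|$, while $i+j\le 6$ with $j=|N_G(v)\cap V(G_2)|\ge 1$ by $3$-connectivity, case (d) forces $i=5$ and $j=1$. So $v$ has exactly \emph{one} neighbour in $G_2$, and the hypothesis of Lemma~\ref{lemma1} (that $x=v$ have at least two neighbours in the bridge) fails outright --- not just in a boundary configuration, but in the only configuration in which (d) can arise. The ``tight'' cases you hope to dispose of by hand are in fact counterexamples to the statement as written: take $G_1$ to be the $7$-cycle $a_1u_1u_2u_3u_4u_5a_2a_1$ with $v$ joined to $u_1,\dots,u_5$, take $G_2$ to be the triangle on $b_1,b_2,v_1$, set $e_1=a_1b_1$, $e_2=a_2b_2$, and add the edge $vv_1$. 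This $G$ is $3$-connected of maximum degree $6$, hence has no $W_7$-subdivision, yet $G'_1$ has one centred on $v$ (spokes $vu_1,\dots,vu_5,vx_1,vx_2$; rim $x_1a_1u_1\cdots u_5a_2x_2x_1$).

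The paper's own proof simply does not list your case (d); it asserts that only (a), (b), (c) can occur. That assertion is correct only under the extra assumption that $v$ has at least two neighbours in each $G_i$ --- equivalently, that $\{e_1,e_2\}$ together with the unique edge from $v$ into a side is not an internal $3$-edge-cutset. In the paper's algorithm this is automatic, since Step~3 eliminates internal $3$-edge-cutsets before Step~5 ever invokes this theorem; under that assumption $j\ge 2$, hence $i\le 4$ and $\deg_{G'_1}(v)\le 6$, so $v$ cannot be the centre in $G'_1$ and case (d) is vacuous. Your write-up should add this hypothesis (or note that $j=1$ would yield an internal $3$-edge-cutset and declare that out of scope), after which the paper's three-case argument goes through and (d) disappears.
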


\begin{proof}

$(\Rightarrow )$ Suppose firstly that $G$ contains a $W_{7}$-subdivision $H$.

If $H$ is entirely contained in $G_{1}$ or $G_{2}$, then $H$ will also be contained in $G'_{1}$ or $G'_{2}$ respectively. Suppose then that $H$ contains parts of both $G_{1}$ and $G_{2}$.

Since $v$ has degree $< 7$, $H$ must be centred in $G - v$. Without loss of generality, suppose $H$ is centred in $G_{1}$. Then one of the following must be true:

\begin{itemize}
\item[(A)] $G_{2}$ contains a single path of $H$;
\item[(B)] $v$ forms a spoke-meets-rim vertex of $H$, and two separate paths of $H$ enter $G_{2}$ at this vertex and leave $G_{2}$ via $e_{1}$ and $e_{2}$; or
\item[(C)] $G_{2}$ contains a single spoke-meets-rim vertex of $H$.
\end{itemize}

Each of the three possibilities can also be formed by $X$ in $G'_{1}$. (If $H$ is instead centred in $G_{2}$, then similarly $H$ will also be contained in $G'_{2}$.)

Thus, whenever $G$ contains a $W_{7}$-subdivision, at least one of $G'_{1}$ or $G'_{2}$ does also.

$(\Leftarrow )$ Suppose now that either $G'_{1}$ or $G'_{2}$ contains a $W_{7}$-subdivision --- assume $G'_{1}$ without loss of generality. If $H$ is entirely contained in $G'_{1} - X$, then $H$ is also contained in $G$. Suppose then that $H$ contains parts of $X$. One of the following must hold:

\begin{itemize}
\item[(a)] $X$ contains a single path of $H$;
\item[(b)] $v$ forms a spoke-meets-rim vertex of $H$, and two separate paths of $H$ enter $X$ at this vertex and leave $X$ via $e_{1}$ and $e_{2}$; or
\item[(c)] $X$ contains a single spoke-meets-rim vertex of $H$.
\end{itemize}

By 3-connectivity of $G$, the required paths can be formed in $G_{2}$ for any of these three cases. Thus, whenever $G'_{1}$ contains a $W_{7}$-subdivision, $G$ does also. (By the same argument, whenever $G'_{2}$ contains a $W_{7}$-subdivision, $G$ does also.)
\end{proof}

\vspace{0.1in}
\noindent \textbf{Definition}

A \emph{type 2 edge-vertex-cutset} in a graph $G$ is a set $S = \{e, v_{1}, v_{2}\}$ of two vertices $v_{1}, v_{2}$ of $G$ and one edge $e$ of $G$ such that $G - S$ is disconnected, with each component having at least four vertices.

\begin{thm}
\label{edgevertex2}

Let $G$ be a 3-connected graph with no type 1 edge-vertex-cutsets, but which contains a type 2 edge-vertex-cutset $S = \{e, v_{1}, v_{2}\}$. Let $G_{1}, G_{2}$ be the components of $G - S$.

Form $G'_{1}$ from $G$ by replacing $G_{2}$ with the subgraph $X$, where:

\begin{itemize}
\item[(i)] if $v_{1}$ and $v_{2}$ both have fewer than three neighbours in $G_{2}$, $X$ contains two adjacent vertices, $x_{1}$ and $x_{2}$, each of which is adjacent to both $v_{1}$ and $v_{2}$, such that $x_{1}$ is an endpoint of $e$;
\item[(ii)] if both $v_{1}$ and $v_{2}$ have $\ge 3$ neighbours in $G_{2}$, $X$ contains three pairwise-adjacent vertices, $x_{1}, x_{2}, x_{3}$, each of which is adjacent to both $v_{1}$ and $v_{2}$, such that $x_{1}$ is an endpoint of $e$;
\item[(iii)] if only one of $v_{1}$ and $v_{2}$ has $\ge 3$ neighbours in $G_{2}$, $X$ is formed as in (ii), but with no edge between $x_{1}$ and the member of $\{v_{1}, v_{2}\}$ with fewer than three neighbours in $G_{2}$.
\end{itemize}

Form $G'_{2}$ from $G$ by replacing $G_{1}$ with the subgraph $Y$ in the same manner.

Then $G$ contains a $W_{7}$-subdivision if and only if at least one of $G'_{1}$ and $G'_{2}$ contains a $W_{7}$-subdivision.
\end{thm}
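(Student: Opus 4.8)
The plan is to follow the template of the proofs of Theorems \ref{edgevertex} and \ref{edgevertex1a}, establishing the two directions separately. Before the case analysis it is convenient to record a structural fact used throughout: since $G$ is 3-connected and has no type 1 edge-vertex-cutsets, each of $v_1$ and $v_2$ has at least two neighbours in each of $G_1$ and $G_2$. Indeed, if (say) $v_1$ had no neighbour in $G_2$ then $v_2$ together with an endpoint of $e$ would be a 2-vertex-cut of $G$; and if $v_1$ had exactly one neighbour $b'$ in $G_2$ then $\{e, v_1b', v_2\}$ would be a type 1 edge-vertex-cutset whose two sides, $V(G_1)\cup\{v_1\}$ and $V(G_2)$, each have at least four vertices. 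This makes cases (i)--(iii) in the definition of $X$ (and of $Y$) exhaustive and guarantees that, in each case, $X$ supplies $v_1$ and $v_2$ with exactly as many neighbours inside $X$ as they can use: two each, or three when the vertex in question has $\ge 3$ neighbours in $G_2$.

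For the forward direction, let $H$ be a $W_7$-subdivision in $G$; if it lies on one side we are done, so assume it meets the interior of both $G_1$ and $G_2$. If $H$ is centred in the interior of $G_1$ (the case of $G_2$ is symmetric), then the portion of $H$ lying in $\langle V(G_2)\cup\{v_1,v_2\}\rangle$ together with the edge $e$ attaches to the rest of $H$ only at the three ports $v_1$, $v_2$, $e$; because $H$ is a subdivision of $W_7$, this portion must be one of: a single path between two ports; a spoke-meets-rim vertex at $v_1$ or $v_2$ with its two remaining legs running into $G_2$ and out the other two ports; or a single spoke-meets-rim vertex interior to $G_2$ with one leg out each port. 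These are the possibilities (A)--(C) of the earlier proofs, and each is reproducible inside $X$ in $G'_1$ using the edges of $X$ and the pairwise adjacency of the $x_i$; this is where cases (i)--(iii) are used, the one needing care being the asymmetric case (iii), in which $x_1$ is the $e$-endpoint and lacks an edge to whichever of $v_1,v_2$ has only two neighbours in $G_2$. If instead $H$ is centred on $v_1$ or $v_2$, say $v_1$ (of degree $7$), then since the centre is not on the rim, the rim can cross between the two sides only through $v_2$ or through $e$, hence at most twice; in the generic case (it crosses through both) the rim is a single arc on each side, no spoke crosses, and pigeonhole on the seven spoke-starts $N_H(v_1)$ puts at most three of them, hence at most three spoke-meets-rim vertices and a single rim-arc, on one side, which $X$ (or $Y$) reproduces since it supplies $v_1$ with the up to three required neighbours and the $x_i$ are pairwise adjacent and adjacent to $v_2$. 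In the degenerate cases, where the rim lies wholly on one side, the other side's interior carries only fragments of at most two spokes, easily accommodated by the two- or three-vertex gadget, and one uses $G'_1$ or $G'_2$ accordingly.

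For the reverse direction, assume without loss of generality that $G'_1$ has a $W_7$-subdivision $H$. If $H$ avoids $X$ it already lies in $G$; otherwise the part of $H$ in $X$ attaches at the same three ports and so is a single path, a branch at $v_1$ or $v_2$ with two legs into $X$, a single interior spoke-meets-rim vertex with three legs, or (when $H$ is centred at $v_1$ or $v_2$) two or three spoke-meets-rim vertices of $X$ together with the rim-arc and spokes through them. In the first three cases the required connections can be rerouted through $G_2$ by 3-connectivity of $G$. In the last case one applies Lemma \ref{lemma1} or Lemma \ref{lemma2} with the relevant centre $v_i\in\{v_1,v_2\}$ in the role of $x$ and, crucially, with the separating set taken to be $\{v_1,v_2,a\}$, where $a$ is the $G_1$-endpoint of $e$ (so that the bridge containing $G_2$ has $V(G_2)$ as its interior and the edges from the separating set into that interior number at least $2+2+1=5$): when two legs are needed Lemma \ref{lemma1} supplies a rim-path avoiding $v_i$ and two internally-disjoint spokes, and when three legs are needed---which, by the structural fact, happens only in gadget-cases (ii)/(iii), exactly when $v_i$ has $\ge 3$ neighbours in $G_2$ and hence $\ge 3$ in the bridge interior---Lemma \ref{lemma2} supplies a rim-path from $v_2$ to $a$ avoiding $v_i$ together with three internally-disjoint spokes; splicing this onto the $G_1$-side of $H$ gives the $W_7$-subdivision in $G$.

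The main obstacle I expect is the forward direction's ``centre at $v_1$'' sub-case: rigorously ruling out spokes crossing between the two sides and handling the degenerate rim configurations, and then the bookkeeping that a $\le 3$-vertex gadget genuinely hosts the resulting rim-arc, spoke-meets-rim vertices, and spokes across all of cases (i)--(iii). A secondary nuisance, already flagged above, is choosing the separating set $\{v_1,v_2,a\}$ rather than one literally containing the edge $e$, so that Lemmas \ref{lemma1} and \ref{lemma2} apply verbatim with the correct neighbour counts.
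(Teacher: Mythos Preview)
Your proposal is correct and follows essentially the same approach as the paper's proof: the same forward/backward split, the same case enumeration (the paper's (A)--(C) and (a)--(e)), the same pigeonhole on $N_H(v_1)$ for the centre-at-$v_i$ case, and the same appeal to Lemmas~\ref{lemma1} and~\ref{lemma2} in the reverse direction. Your presentation is in fact slightly more explicit than the paper's in two places---you record the ``each $v_i$ has $\ge 2$ neighbours on each side'' fact up front (the paper invokes it only inside case~(d)), and you name the separating set $\{v_1,v_2,a\}$ to which Lemmas~\ref{lemma1}/\ref{lemma2} are applied, whereas the paper just says ``the required structure can be formed in $G_2$''---but these are expository differences, not mathematical ones.
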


\begin{proof}

$(\Rightarrow )$ Suppose firstly that $G$ contains a $W_{7}$-subdivision $H$.

If $H$ is entirely contained in $G_{1}$ or $G_{2}$, then $H$ will also be contained in $G'_{1}$ or $G'_{2}$ respectively. Suppose then that $H$ contains parts of both $G_{1}$ and $G_{2}$.

Suppose firstly that $H$ is centred in $G \setminus \{v_{1}, v_{2}\}$. Without loss of generality, suppose $H$ is centred in $G_{1}$. Then one of the following must be true:

\begin{itemize}
\item[(A)] $G_{2}$ contains a single path of $H$;
\item[(B)] one of $v_{1}, v_{2}$ forms a spoke-meets-rim vertex of $H$, and two separate paths of $H$ enter $G_{2}$ at this vertex and leave $G_{2}$ via the other two members of $S$; or
\item[(C)] $G_{2}$ contains a single spoke-meets-rim vertex of $H$.
\end{itemize}

Each of the three possibilities can easily be formed by $X$ in $G'_{1}$. (If $H$ is instead centred in $G_{2}$, then similarly $H$ will also be contained in $G'_{2}$.)

Suppose now that $H$ is centred on either $v_{1}$ or $v_{2}$ --- assume $v_{1}$ without loss of generality. By the same argument used in Theorem \ref{edgevertex}, some component $G_{x} \in \{G_{1}, G_{2}\}$ contains at most three members of $N_{H}(v_{1})$, and thus contains at most three spoke-meets-rim vertices of $H$. Suppose $G_{x} = G_{2}$. If $G_{2}$ contains three spoke-meets-rim vertices of $H$, then $G_{2}$ must contain at least three neighbours of $v_{1}$, thus $X$ is formed as in either case (ii) or (iii) of the hypothesis, and vertices $x_{1}$, $x_{2}$ and $x_{3}$ can be used to form the three spoke-meets-rim vertices previously contained in $G_{2}$. If $G_{2}$ contains fewer than three spoke-meets-rim vertices of $H$, then one or both of $x_{1}$ and $x_{2}$ can be used in $X$ to form these spoke-meets-rim vertices.

If $G_{x} = G_{1}$, $Y$ can be used in $G'_{1}$ in the same manner to replace the parts of $H$ previously contained in $G_{1}$. 

Thus, whenever $G$ contains a $W_{7}$-subdivision, at most one of $G'_{1}$ or $G'_{2}$ does also.

$(\Leftarrow )$ Suppose now that either $G'_{1}$ or $G'_{2}$ contains a $W_{7}$-subdivision --- assume $G'_{1}$ without loss of generality. If $H$ is entirely contained in $G'_{1} - X$, then $H$ is also contained in $G$. Suppose then that $H$ contains parts of $X$. One of the following must hold:

\begin{itemize}
\item[(a)] $X$ contains a single path of $H$;
\item[(b)] one of $v_{1}, v_{2}$ forms a spoke-meets-rim vertex of $H$, and two separate paths of $H$ enter $G_{2}$ at this vertex and leave $G_{2}$ via the other two members of $S$;
\item[(c)] $X$ contains a single spoke-meets-rim vertex of $H$;
\item[(d)] $H$ is centred on $v_{1}$ or $v_{2}$, and $X$ contains two spoke-meets-rim vertices of $H$; or
\item[(e)] $H$ is centred on $v_{1}$ or $v_{2}$, and $X$ contains three spoke-meets-rim vertices of $H$.
\end{itemize}

If (a), (b), or (c) hold, then by 3-connectivity of $G$, the required paths can be formed in $G_{2}$.

Suppose (d) holds. Without loss of generality, suppose $H$ is centred on $v_{1}$. If $v_{1}$ has only one neighbour in $G_{2}$, then a type 1 edge-vertex-cutset exists in $G$. Assume then that $G_{2}$ contains $\ge 2$ neighbours of $v_{1}$. By Lemma \ref{lemma1}, then, the required structure can be formed in $G_{2}$.

Suppose (e) holds. Since $X$ contains three vertices in this case, $X$ must have been formed using case (ii) or (iii), thus $G_{2}$ must contain at least three neighbours of $v_{1}$. By Lemma \ref{lemma2}, then, the required structure can be formed in $G_{2}$.

Thus, whenever $G'_{1}$ contains a $W_{7}$-subdivision, $G$ does also. (By the same argument, whenever $G'_{2}$ contains a $W_{7}$-subdivision, $G$ does also.)
\end{proof}

\vspace{0.1in}
\noindent \textbf{Definition}

A \emph{type 2a edge-vertex-cutset} in a graph $G$ is a set $S = \{e, v_{1}, v_{2}\}$ of two vertices $v_{1}, v_{2}$ of $G$ and one edge $e$ of $G$ such that $G - S$ is disconnected with each component having at least three vertices, and for each vertex $v_{i}$, $1 \le i \le 2$, either:

\begin{itemize}
\item $v_{i}$ has at most two neighbours in one of the components of $G - S$; or
\item $v_{i}$ has degree $< 7$.
\end{itemize}

\begin{thm}
\label{edgevertex2a}
Let $G$ be a 3-connected graph with no type 1 edge-vertex-cutsets, but which contains a type 2a edge-vertex-cutset $S = \{e, v_{1}, v_{2}\}$. Let $G_{1}, G_{2}$ be the components of $G - S$.

Form $G'_{1}$ from $G$ by replacing $G_{2}$ with the subgraph $X$, where $X$ contains two adjacent vertices, $x_{1}$ and $x_{2}$, each of which is adjacent to both $v_{1}$ and $v_{2}$, such that $x_{1}$ is an endpoint of $e$.

Form $G'_{2}$ from $G$ by replacing $G_{1}$ with the subgraph $Y$ in the same manner.

Then $G$ contains a $W_{7}$-subdivision if and only if at least one of $G'_{1}$ and $G'_{2}$ contains a $W_{7}$-subdivision.
\end{thm}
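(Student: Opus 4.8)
The plan is to follow the proof of Theorem~\ref{edgevertex2} almost verbatim, exploiting the fact that the type~2a hypothesis guarantees that the replacement gadgets $X$ and $Y$ never need more than two vertices. Consequently the three-vertex sub-cases of Theorem~\ref{edgevertex2}, together with its appeal to Lemma~\ref{lemma2}, do not arise here; only Lemma~\ref{lemma1} is invoked.

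$(\Rightarrow)$ Suppose $G$ has a $W_7$-subdivision $H$. If $H$ lies entirely in $G_1$ or $G_2$ we are done, so assume it meets both. If the centre of $H$ is not in $\{v_1,v_2\}$, say it lies in $G_1$, then, exactly as in Theorem~\ref{edgevertex2}, the portion of $H$ inside $G_2$ is one of: a single path of $H$; a spoke-meets-rim vertex of $H$ at $v_1$ or $v_2$ from which two paths of $H$ run into $G_2$ and leave via the other two members of $S$; or a single spoke-meets-rim vertex of $H$. Each of these is realizable inside the two-vertex gadget $X$ (a path through $x_1$ or through $x_1x_2$; the two paths $v_ix_1$ and $v_ix_2v_{3-i}$; or $x_1$ serving as the spoke-meets-rim vertex, using its three edges to $v_1$, to $v_2$, and along $e$), so $G'_1$ contains a $W_7$-subdivision (and symmetrically if the centre is in $G_2$). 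If instead $H$ is centred on $v_1$, then $\deg_G(v_1)\ge 7$, so by the type~2a condition $v_1$ has at most two neighbours in one of the components, say $G_a$. Since the rim of $H$ can cross the cutset at most twice and $v_1$ is the centre, $G_a$ contains at most two spoke-meets-rim vertices of $H$, and the part of $H$ in $G_a$ (a single rim-path together with at most two spokes from $v_1$) is realizable in the two-vertex gadget replacing $G_a$; thus $G'_1$ or $G'_2$ contains a $W_7$-subdivision.

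$(\Leftarrow)$ Assume, without loss of generality, that $G'_1$ contains a $W_7$-subdivision $H$. If $H$ avoids $X$ then $H\subseteq G$. Otherwise the part of $H$ meeting $X$ is one of: (a) a single path of $H$; (b) a spoke-meets-rim vertex of $H$ at $v_1$ or $v_2$ with two paths of $H$ running into $X$ and leaving via the other two members of $S$; (c) a single spoke-meets-rim vertex of $H$; or (d) $H$ is centred on $v_1$ or $v_2$ and $X$ contains two spoke-meets-rim vertices of $H$ (it cannot contain three, since $|V(X)|=2$). In cases (a)--(c), $3$-connectivity of $G$ supplies the corresponding path(s) through $G_2$, exactly as in Theorems~\ref{edgevertex} and~\ref{edgevertex2}. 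In case (d), say $H$ is centred on $v_1$: as in Theorem~\ref{edgevertex2}, $v_1$ must have at least two neighbours in $G_2$, for otherwise the unique $v_1$--$G_2$ edge, together with $e$ and $v_2$, forms an edge-vertex-cutset; since $G$ has no type~1 edge-vertex-cutset, the only way this is not immediately contradictory is in small configurations with $|V(G_2)|=3$, which are checked directly. With $v_1$ having two neighbours in $G_2$, replace $e$ by its endpoint in $G_2$ to obtain a $3$-element separating set $S'$, and apply Lemma~\ref{lemma1} with $x=v_1$ to get a path $P$ in $\langle G_2\rangle$ through $S'$ avoiding $v_1$ and two $v_1$-paths $Q_1,Q_2$ to interior vertices of $P$; these reconstruct the part of $H$ previously in $X$. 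Hence $G$ contains a $W_7$-subdivision, and the argument for $G'_2$ is identical.

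I expect the main obstacle to be case (d) of the reverse direction: one must verify that the hypotheses of Lemma~\ref{lemma1} — in particular the requirement of at least four edges from the separating set into the bridge — hold for the auxiliary set $S'$, and deal with the genuinely small configurations admissible here only because type~2a relaxes the component-size bound from $\ge 4$ to $\ge 3$. This is exactly where the exclusion of type~1 edge-vertex-cutsets, supplemented if necessary by a direct inspection of a handful of tiny cases, does its work. The remainder is a routine transcription of the proof of Theorem~\ref{edgevertex2} with the three-vertex gadget and Lemma~\ref{lemma2} deleted.
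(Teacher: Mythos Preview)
Your proposal is correct and follows essentially the same approach as the paper's proof: both directions defer to the argument of Theorem~\ref{edgevertex2}, with the type~2a hypothesis eliminating the need for a three-vertex gadget (and hence Lemma~\ref{lemma2}), leaving only Lemma~\ref{lemma1} for case~(d) of the reverse direction. You are in fact more scrupulous than the paper about the boundary case $|V(G_2)|=3$ in~(d), which the paper handles simply by citing Theorem~\ref{edgevertex2} without further comment.
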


\begin{proof}

$(\Rightarrow )$ Suppose firstly that $G$ contains a $W_{7}$-subdivision $H$.

If $H$ is entirely contained in $G_{1}$ or $G_{2}$, then $H$ will also be contained in $G'_{1}$ or $G'_{2}$ respectively. Suppose then that $H$ contains parts of both $G_{1}$ and $G_{2}$. If $H$ is centred in $G \setminus \{v_{1}, v_{2}\}$, then by the same arguments used in Theorem \ref{edgevertex2}, $H$ is also contained in either $G'_{1}$ or $G'_{2}$.

Suppose now that $H$ is centred on $v_{1}$ or $v_{2}$ --- assume $v_{1}$ without loss of generality. Since this means that $v_{1}$ has degree $\ge 7$, some component $G_{x} \in \{G_{1}, G_{2}\}$ must contain at most two neighbours of $v_{1}$, and thus contains at most two spoke-meets-rim vertices of $H$. By Lemma \ref{lemma1}, this structure can be replaced by $X$ if $G_{x} = G_{2}$, or $Y$ if $G_{x} = G_{1}$.

Thus, whenever $G$ contains a $W_{7}$-subdivision, at least one of $G'_{1}$ or $G'_{2}$ does also.

$(\Leftarrow )$ Suppose now that either $G'_{1}$ or $G'_{2}$ contains a $W_{7}$-subdivision --- assume $G'_{1}$ without loss of generality. If $H$ is entirely contained in $G'_{1} - X$, then $H$ is also contained in $G$. Suppose then that $H$ contains parts of $X$. One of the following must hold:

\begin{itemize}
\item[(a)] $X$ contains a single path of $H$;
\item[(b)] one of $v_{1}, v_{2}$ forms a spoke-meets-rim vertex of $H$, and two separate paths of $H$ enter $G_{2}$ at this vertex and leave $G_{2}$ via the other two members of $S$;
\item[(c)] $X$ contains a single spoke-meets-rim vertex of $H$; or
\item[(d)] $H$ is centred on $v_{1}$ or $v_{2}$, and $X$ contains two spoke-meets-rim vertices of $H$.
\end{itemize}

By the same arguments used in Theorem \ref{edgevertex2}, whenever $G'_{1}$ contains a $W_{7}$-subdivision, $G$ does also. (Similarly, whenever $G'_{2}$ contains a $W_{7}$-subdivision, $G$ does also.)
\end{proof}

\vspace{0.1in}
\noindent \textbf{Definition}

A \emph{type 3 edge-vertex-cutset} in a graph $G$ is a set $S = \{v, e_{1}, e_{2}, e_{3}, e_{4}\}$ of one vertex $v$ of $G$ and four edges $e_{1}, \ldots, e_{4}$ of $G$ such that $G - S$ is disconnected, with each component having at least four vertices, and with one of the components containing exactly two vertices incident with $e_{1}, \ldots , e_{4}$.

\begin{thm}
\label{edgevertex3}

Let $G$ be a 3-connected graph with no type 1 or 2 edge-vertex-cutsets, but which contains a type 3 edge-vertex-cutset $S = \{v, e_{1}, e_{2}, e_{3}, e_{4}\}$. Let $G_{1}$ be the component of $G - S$ that contains exactly two vertices, say $v_{1}$ and $v_{2}$, incident with $e_{1}, \ldots , e_{4}$, and let $G_{2}$ be the other component of $G - S$.

Form $G'_{1}$ from $G$ by replacing $G_{2}$ with the subgraph $X$, where:

\begin{itemize}
\item[(i)] if $v$ has fewer than three neighbours in $G_{2}$, $X$ contains two adjacent vertices, $x_{1}$ and $x_{2}$, each of which is adjacent to $v$, such that each of $x_{1}, x_{2}$ forms an endpoint of exactly two edges in $e_{1}, \ldots, e_{4}$;
\item[(ii)] if $v$ has $\ge 3$ neighbours in $G_{2}$, $X$ contains three pairwise-adjacent vertices, $x_{1}, x_{2}, x_{3}$, each of which is adjacent to $v$, such that each of $x_{1}$ and $x_{3}$ forms an endpoint of exactly one edge in $e_{1}, \ldots, e_{4}$, while $x_{2}$ forms an endpoint of the two remaining edges in $e_{1}, \ldots, e_{4}$.
\end{itemize}

Form $G'_{2}$ from $G$ by replacing $G_{1}$ with the subgraph $Y$, where:

\begin{itemize}
\item[(i)] if $v$ has fewer than three neighbours in $G_{1}$, $Y$ contains only the two vertices $v_{1}$ and $v_{2}$, and the edges $vv_{1}$, $vv_{2}$, and $v_{1}v_{2}$;
\item[(ii)] if $v$ has $\ge 3$ neighbours in $G_{1}$, $Y$ contains the vertices $v_{1}$ and $v_{2}$ and a third vertex, $y$, such that these three vertices are pairwise-adjacent and are each adjacent to $v$.
\end{itemize}

Then $G$ contains a $W_{7}$-subdivision if and only if at least one of $G'_{1}$ and $G'_{2}$ contains a $W_{7}$-subdivision.
\end{thm}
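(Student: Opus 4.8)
The plan is to follow the template of the proofs of Theorems~\ref{edgevertex}--\ref{edgevertex2a}: prove the two implications separately, and in each direction reduce to a short list of ways in which a $W_7$-subdivision can interact with the cutset, then check that the surgery preserves each such interaction. The key preliminary observation is that $\{v,v_1,v_2\}$ is a genuine $3$-element \emph{vertex} separating set of $G$: every $G_1$--$G_2$ path meets $v$ or one of $e_1,\dots,e_4$, hence meets $v$, $v_1$ or $v_2$, and $V(G_1)\setminus\{v_1,v_2\}\neq\emptyset$ because $|V(G_1)|\ge 4$. This allows Lemmas~\ref{lemma1} and~\ref{lemma2} to be invoked with $S=\{v,v_1,v_2\}$, which is exactly what is needed to pull a ``claw'' of three internally disjoint paths out of one component.

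For $(\Rightarrow)$, let $H$ be a $W_7$-subdivision of $G$; if it lies in one of $G_1,G_2$ it already lies in $G'_1$ or $G'_2$, so assume it meets both, and split on the location of the hub $c$ of $H$. If $c\notin\{v,v_1,v_2\}$, say $c$ in $G_2$, then the trace of $H$ on $G_1$ is a piece of the wheel attached only at $v,v_1,v_2$; by $3$-connectivity it is one of: a single path of $H$ joining two of these vertices; two paths of $H$ meeting at a spoke-meets-rim vertex among $\{v,v_1,v_2\}$; or an internal spoke-meets-rim vertex of $H$ reached by up to three disjoint paths to $v,v_1,v_2$. The replacement $Y$ --- the two vertices $v_1,v_2$, with a third vertex $y$ precisely when $v$ has $\ge 3$ neighbours in $G_1$, which is exactly the case in which an internal spoke-meets-rim vertex fed through $v$ can occur --- realises each of these in $G'_2$; symmetrically if $c$ is in $G_1$. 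If $c=v$, then none of the seven spoke-meets-rim vertices is $v$, so they all lie in $V(G_1)\cup V(G_2)$ and at most three lie in one of the two components, say $G_2$; then $X$ (two vertices, or three exactly when $v$ has $\ge 3$ neighbours in $G_2$) absorbs that side just as in Theorem~\ref{edgevertex}.

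The delicate case --- with no counterpart in Theorems~\ref{edgevertex}--\ref{edgevertex2a} --- is $c\in\{v_1,v_2\}$, say $c=v_1$, so $c$ lies in $G_1$ and $\deg_G(v_1)\ge 7$ forces $|V(G_1)|\ge 5$ (as $v_1$ has at most three incident cut edges). Here I would argue: in $G-v_1$ the pair $\{v,v_2\}$ is a $2$-vertex cut, so the rim of $H$ (a cycle avoiding $v_1$) meets it at most twice and hence has at most one maximal arc on the $G_2$-side; the endpoints of that arc are $v$ and $v_2$, which are then rim vertices and so cannot also be passed through by a spoke, whence every spoke ending strictly inside $G_2$ must leave $v_1$ along one of its $\le 3$ cut edges. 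Thus at most three spoke-meets-rim vertices lie strictly in $G_2$, and the trace of $H$ on $G_2$ can be rebuilt in $X$. I expect the main obstacle to be matching this bound to the \emph{size} of $X$: when $X$ has only two vertices (i.e.\ $v$ has $\le 2$ neighbours in $G_2$) one must exclude three spoke-meets-rim vertices occurring strictly in $G_2$, and this is where I would lean on $3$-connectivity together with the hypothesis that $G$ has no type~1 or type~2 edge-vertex-cutset (for instance a single $v$--$G_2$ edge together with $\{v_1,v_2\}$ would be a forbidden type~2 cutset), possibly with a short re-routing argument.

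For $(\Leftarrow)$, suppose $G'_1$ has a $W_7$-subdivision $H$ (the $G'_2$ case is identical). If $H$ misses $X$ it already lies in $G$. Otherwise $H$ uses $X$ in one of the standard ways: a single path through $X$; one of $v,v_1,v_2$ a spoke-meets-rim vertex with two $H$-paths into $X$; a single internal spoke-meets-rim vertex among $x_1,x_2,x_3$; or $H$ hubbed at $v$ with two or three of the $x_i$ as spoke-meets-rim vertices. The first three are handled by $3$-connectivity of $G$, which supplies the required (up to three) disjoint paths in $G_2$ to $v,v_1,v_2$; for the last, if three $x_i$ are used then $X$ was built with three vertices, so $v$ has $\ge 3$ neighbours in $G_2$ and Lemma~\ref{lemma2} (with $S=\{v,v_1,v_2\}$, $x=v$) rebuilds the claw in $G_2$, while if two are used Lemma~\ref{lemma1} does the same once the argument of the previous paragraph (again using the absence of type~1/type~2 edge-vertex-cutsets) guarantees that $v$ has enough neighbours in $G_2$ for the lemma to apply.
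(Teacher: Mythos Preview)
Your overall approach matches the paper's: both directions are proved by case analysis on how the hub of $H$ sits relative to the cut $\{v,v_1,v_2\}$, with Lemmas~\ref{lemma1} and~\ref{lemma2} supplying the claws needed to rebuild a side. Two points, however, are worth tightening.

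First, your ``delicate case'' $c\in\{v_1,v_2\}$ is not actually delicate. The paper's very first step is the observation you skipped: if either $v_1$ or $v_2$ were incident with only one of $e_1,\dots,e_4$, then that single edge together with the other vertex and $v$ would form a type~2 edge-vertex-cutset, which is excluded. Hence $v_1$ and $v_2$ are each incident with \emph{exactly two} of $e_1,\dots,e_4$. With this in hand, when $c=v_1$ the rim arc in $G_2$ has its ends in $\{v,v_2\}$ (as you note), so any spoke ending strictly inside $G_2$ must leave $v_1$ along $e_1$ or $e_2$ --- only two edges, not three. Thus $G_2$ carries at most two spoke-meets-rim vertices, and since $X$ always has at least two vertices, there is nothing to reconcile. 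The paper simply folds this into ``centre in $G_1$'' (its Case~1(D)) and dispatches it in one line.

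Second, the two halves $G'_1$ and $G'_2$ are not symmetric, so ``the $G'_2$ case is identical'' is too quick. In $(\Leftarrow)$ for $G'_1$, case~(e) (three spoke-meets-rim vertices in $X$) is handled exactly as you say, via Lemma~\ref{lemma2} using that $v$ has $\ge 3$ neighbours in $G_2$. But for $G'_2$, the three spoke-meets-rim vertices in $Y$ are specifically $v_1,v_2,y$, and the rebuilt structure in $G_1$ must keep $v_1,v_2$ on the rim (since the rim continues into $G_2$ via $e_1,\dots,e_4$). The paper does not invoke Lemma~\ref{lemma2} here; it takes two $v_1$--$v$ and $v_2$--$v$ paths in $G_1\cup\{v\}$, lets $p$ be their first meeting point from $v_1$, and uses $v_1,p,v_2$ as the three spoke-meets-rim vertices. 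Also, in your $(\Leftarrow)$ list for $G'_1$ you only allow the hub at $v$ in the multi-spoke case; the paper's case~(d) allows the hub at $v_1$ or $v_2$ as well, which is handled by Lemma~\ref{lemma1} since each of $v_1,v_2$ has two neighbours in $G_2$ (again by the preliminary observation above).
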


\begin{proof}

If either $v_{1}$ or $v_{2}$ has only one incident edge in $e_{1}, \ldots, e_{4}$, then a type 2 edge-vertex-cutset exists in $G$. Assume then that this is not the case; thus, $v_{1}$ and $v_{2}$ each have exactly two incident edges in $e_{1}, \ldots, e_{4}$. Without loss of generality, assume that $v_{1}$ is incident with $e_{1}$ and $e_{2}$, and that $v_{2}$ is incident with $e_{3}$ and $e_{4}$.

$(\Rightarrow )$ Suppose firstly that $G$ contains a $W_{7}$-subdivision $H$.

If $H$ is entirely contained in $G_{1}$ or $G_{2}$, then $H$ will also be contained in $G'_{1}$ or $G'_{2}$ respectively. Suppose then that $H$ contains parts of both $G_{1}$ and $G_{2}$.

The centre of $H$ can be either in $G_{1}$, in $G_{2}$, or $v$. Consider each of the cases.

\vspace{0.1in}
\noindent \textbf{Case 1}

Suppose $H$ is centred in $G_{1}$. Then one of the following must be true:

\begin{itemize}
\item[(A)] $G_{2}$ contains a single path of $H$;
\item[(B)] one of $v$, $v_{1}$, or $v_{2}$ is contained in $H$, and two separate paths of $H$ leave $G_{1}$ at this vertex and return to $G_{1}$ via other members of $S$, such that these paths are vertex-disjoint within $G_{2}$;
\item[(C)] $G_{2}$ contains a single spoke-meets-rim vertex of $H$; or
\item[(D)] $H$ is centred on one of $v_{1}, v_{2}$, and $G_{2}$ contains two spoke-meets-rim vertices of $H$.
\end{itemize}

In each case, the required structure can easily be formed in $X$, giving a $W_{7}$-subdivision in $G'_{1}$.

\vspace{0.1in}
\noindent \textbf{Case 2}

Suppose now $H$ is centred in $G_{2}$. Then one of the following must be true:

\begin{itemize}
\item[(A)] $G_{1}$ contains a single path of $H$;
\item[(B)] $G_{1}$ contains two disjoint paths of $H$, one being just a single vertex from $\{v_{1}, v_{2}\}$;
\item[(C)] $G_{1}$ contains a single spoke-meets-rim vertex of $H$;
\item[(D)] $G_{1}$ contains two spoke-meets-rim vertices of $H$; or
\item[(E)] $v_{1}$, $v_{2}$, and some third vertex in $G_{1}$ form three spoke-meets-rim vertices of $H$.
\end{itemize}

In each case, the required structure can easily be formed in $Y$, giving a $W_{7}$-subdivision in $G'_{2}$.

\vspace{0.1in}
\noindent \textbf{Case 3}

Suppose now that $H$ is centred on $v$. Since $|N_{H}(v)| = 7$, some component $G_{x} \in \{G_{1}$, $G_{2}\}$ contains at most three of the vertices in $N_{H}(v)$. If $G_{x} = G_{2}$, then, since $G_{2}$ can contain at most three spoke-meets-rim vertices of $H$, $X$ can be used in $G'_{2}$ to replace the parts of $H$ previously contained in $G_{2}$. If $G_{x} = G_{1}$, $Y$ can be used in $G'_{1}$ to replace the parts of $H$ previously contained in $G_{1}$. 

Thus, whenever $G$ contains a $W_{7}$-subdivision, at least one of $G'_{1}$ or $G'_{2}$ does also.

$(\Leftarrow )$ Suppose now that either $G'_{1}$ or $G'_{2}$ contains a $W_{7}$-subdivision.

\vspace{0.1in}
\noindent \textbf{Case 1}

Suppose firstly that $G'_{1}$ contains a $W_{7}$-subdivision. If $H$ is entirely contained in $G'_{1} - X$, then $H$ is also contained in $G$. Suppose then that $H$ contains parts of $X$. One of the following must hold:

\begin{itemize}
\item[(a)] $X$ contains a single path of $H$;
\item[(b)] one of $v$, $v_{1}$, or $v_{2}$ is contained in $H$, and two separate paths of $H$ leave $G_{1}$ at this vertex and return to $G_{1}$ via other members of $S$, such that these paths are vertex-disjoint within $X$;
\item[(c)] $X$ contains a single spoke-meets-rim vertex of $H$;
\item[(d)] $H$ is centred on one of $v$, $v_{1}$, or $v_{2}$, and $X$ contains two spoke-meets-rim vertices of $H$; or
\item[(e)] $H$ is centred on $v$, and $X$ contains three spoke-meets-rim vertices of $H$.
\end{itemize}

If (a) or (c) hold, then by 3-connectivity of $G$, the required paths can be formed in $G_{2}$.

Suppose (b) holds. The two paths in $G'_{1}$ must meet at either $v_{1}$ or $v_{2}$ --- suppose $v_{1}$ without loss of generality. Thus, one of the paths contains $e_{1}$, while the other contains $e_{2}$. By 3-connectivity, there must be two paths in $G_{2}\cup S\cup \{v_{1}, v_{2}\}$ joining $v_{1}$ to $\{v, v_{2}\}$ such that these paths are vertex-disjoint except at $v_{1}$, otherwise the removal of $v_{1}$ and some other vertex in $G_{2}$ will disconnect the graph. Use these two paths to replace the original paths in $G'_{1}$.

Suppose (d) holds. If $v$ has only one neighbour in $G_{2}$, then a type 2 edge-vertex-cutset exists in $G$. Assume then that $G_{2}$ contains at least two neighbours of $v$. Thus, if $H$ is centred on $v$, by Lemma \ref{lemma1} the required structure can also be formed in $G_{2}$. Similarly, if $H$ is centred on $v_{1}$ or $v_{2}$, then by Lemma \ref{lemma1} the required structure can be formed in $G_{2}$, since each of these vertices has two neighbours in $G_{2}$.

Suppose (e) holds.  Since $X$ contains three vertices in this case, $X$ must have been formed using case (ii) in the Theorem, thus $G_{2}$ must contain at least three neighbours of $v$. By Lemma \ref{lemma2}, then, the required structure can be formed in $G_{2}$.

\vspace{0.1in}
\noindent \textbf{Case 2}

Suppose now that $G'_{2}$ contains a $W_{7}$-subdivision. If $H$ is entirely contained in $G'_{2} - X$, then $H$ is also contained in $G$. Suppose then that $H$ contains parts of $Y$. One of the following must hold:

\begin{itemize}
\item[(a)] $Y$ contains a single path of $H$;
\item[(b)] $Y$ contains two disjoint paths of $H$;
\item[(c)] $Y$ contains a single spoke-meets-rim vertex of $H$;
\item[(d)] $Y$ contains two spoke-meets-rim vertices of $H$; or
\item[(e)] $Y$ contains three spoke-meets-rim vertices of $H$.
\end{itemize}

If (a), (b), (c), or (d) hold, then by 3-connectivity of $G$, the required paths can be formed in $G_{1}$.

Suppose (e) holds. By 3-connectivity, there exists a path $P_{1}$ in $\langle V(G_{1})\cup \{v\}\rangle$ from $v_{1}$ to $v$, and a path $P_{2}$ in $\langle V(G_{1})\cup \{v\}\rangle$ from $v_{2}$ to $v$. Let $p$ be the vertex closest to $v_{1}$ along $P_{1}$ where these paths meet. Then the paths $P_{2}$ and $v_{1}P_{1}p$ form the required structure in $G_{1}$, with $v_{1}$, $v_{2}$, and $p$ forming the three spoke-meets-rim vertices.                                                                                                                                                                                                                                                                                                                                                                                                                                                                                                                                                                                                                                                                                                                                                                                                                                                                                                                                                                                                                                                                                                                                                                                                                                                                                                                                                                                                                                                                                                                                                                                                                                                                                                                                                                                                                                                                                                                                                                                                                                                                                                                                                                                                                                                                                                                                                                                                                                                                                                                   

Thus, whenever one of $G'_{1}$ or $G'_{2}$ contains a $W_{7}$-subdivision, $G$ does also.
\end{proof}

\vspace{0.1in}
\noindent \textbf{Definition}

A \emph{type 3a edge-vertex-cutset} in a graph $G$ is a set $S = \{v, e_{1}, e_{2}, e_{3}, e_{4}\}$ of one vertex $v$ of $G$ and four edges $e_{1}, \ldots, e_{4}$ of $G$ such that $G - S$ is disconnected with each component having at least three vertices, $v$ either has degree $< 7$, or has at most two neighbours in one of the components of $G - S$, and one of the components of $G - S$ contains exactly two vertices, $v_{1}$ and $v_{2}$, incident with $e_{1}, \ldots , e_{4}$, such that $v_{1}$ is incident with $e_{1}$ and $e_{2}$, and $v_{2}$ is incident with $e_{3}$ and $e_{4}$.

\begin{thm}
\label{edgevertex3a}

Let $G$ be a 3-connected graph with no type 1 or 2 edge-vertex-cutsets, but which contains a type 3a edge-vertex-cutset $S = \{v, e_{1}, e_{2}, e_{3}, e_{4}\}$. Let $G_{1}$ be the component of $G - S$ that contains exactly two vertices, say $v_{1}$ and $v_{2}$, incident with $e_{1}, \ldots , e_{4}$, and let $G_{2}$ be the other component of $G - S$.

Form $G'_{1}$ from $G$ by replacing $G_{2}$ with the subgraph $X$, where $X$ contains two adjacent vertices, $x_{1}$ and $x_{2}$, each of which is adjacent to $v$, such that each of $x_{1}, x_{2}$ forms an endpoint of exactly two edges in $e_{1}, \ldots, e_{4}$.

Form $G'_{2}$ from $G$ by replacing $G_{1}$ with the subgraph $Y$, where $Y$ contains only the two vertices $v_{1}$ and $v_{2}$, and the edges $vv_{1}$, $vv_{2}$, and $v_{1}v_{2}$.

Then $G$ contains a $W_{7}$-subdivision if and only if at least one of $G'_{1}$ and $G'_{2}$ contains a $W_{7}$-subdivision.
\end{thm}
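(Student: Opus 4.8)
The plan is to mirror the proof of Theorem~\ref{edgevertex3} almost line for line, using the extra hypothesis on $v$ to dispense with the three-vertex replacement gadgets needed there. As in that proof, I would first note that if either $v_1$ or $v_2$ were incident with only one of $e_1,\dots,e_4$ then $G$ would contain a type~2 edge-vertex-cutset, contrary to hypothesis; so each of $v_1,v_2$ is incident with exactly two of the $e_i$, and after reordering $v_1$ is incident with $e_1,e_2$ and $v_2$ with $e_3,e_4$, while $x_1$ is an endpoint of two of the $e_i$ and $x_2$ of the other two. Note also that, since $G-\{v,e_1,\dots,e_4\}$ separates $G_1$ from $G_2$, every edge from a vertex of $G_1$ to $G_2$ is one of the $e_i$.

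For $(\Rightarrow)$: let $H$ be a $W_7$-subdivision in $G$; if $H$ lies wholly in $G_1$ or $G_2$ it survives in $G'_1$ or $G'_2$, so assume it meets both. Its centre lies in $G_1$, in $G_2$, or is $v$. When the centre is in $G_1$ (symmetrically $G_2$), the part of $H$ meeting the other component is, exactly as enumerated in Cases~1 and~2 of Theorem~\ref{edgevertex3}, one of: a single path; two internally-disjoint paths sharing a vertex of $\{v,v_1,v_2\}$; a single spoke-meets-rim vertex; two spoke-meets-rim vertices (only when $H$ is centred on $v_1$ or $v_2$); or three spoke-meets-rim vertices at $v_1,v_2$ and a third vertex. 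The first four are realised using $x_1,x_2$ (resp.\ $v_1,v_2$) just as in Theorem~\ref{edgevertex3}. For the last, I would argue that the third spoke is forced to pass through $v$: the $e_i$ reach only two distinct vertices ($v_1$ and $v_2$) of that component, so at most two of the three spokes into the component can enter via the $e_i$, forcing the remaining one through $v$; truncating that spoke at $v$ makes $v$ a spoke-meets-rim vertex, and the piece of rim previously running between $v_1$ and $v_2$ through the third vertex can be rerouted through $v$ via the edges $vx_1,x_1x_2,vx_2$ (resp.\ $vv_1,v_1v_2,vv_2$). Thus two attachment vertices suffice. When the centre is $v$ we have $\deg_G v\ge 7$, so by the type~3a hypothesis some component $G_x$ has at most two neighbours of $v$, hence at most two spoke-meets-rim vertices of $H$, and the two-vertex gadget $X$ (if $G_x=G_2$) or $Y$ (if $G_x=G_1$) absorbs that part of $H$.

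For $(\Leftarrow)$: assume without loss of generality that $G'_1$ has a $W_7$-subdivision $H$; if it avoids $X$ it already lies in $G$. Otherwise the portion of $H$ in $X$ is one of: a single path; two internally-disjoint paths through a vertex of $\{v,v_1,v_2\}$; a single spoke-meets-rim vertex; or, since $|V(X)|=2$, $H$ centred on one of $v,v_1,v_2$ with $x_1$ and $x_2$ both spoke-meets-rim vertices of $H$. In the first and third cases, $3$-connectivity of $G$ supplies the replacement path(s) in $G_2$; in the second, $3$-connectivity supplies the two internally-disjoint paths in $G_2$ (else deleting the shared vertex together with one more vertex of $G_2$ would disconnect $G$). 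In the last case the relevant vertex of $\{v,v_1,v_2\}$ has at least two neighbours in $G_2$ --- immediate for $v_1,v_2$ from their two incident $e_i$, and for $v$ because otherwise a type~2 edge-vertex-cutset would appear --- so Lemma~\ref{lemma1} yields the path $P$ and the two attaching paths inside $G_2$. The symmetric argument handles $G'_2$ (using $Y$, Lemma~\ref{lemma1}, and the rerouting through $v$ above for the three-spoke-meets-rim possibility of Theorem~\ref{edgevertex3}, Case~2).

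The main obstacle I anticipate is the bookkeeping in the ``three spoke-meets-rim vertices in one component'' configuration: showing cleanly that the third spoke is genuinely forced through $v$ and that the rim can always be rerouted so that $v$ legitimately becomes the third spoke-meets-rim vertex without creating a forbidden coincidence (for instance, $v$ already appearing elsewhere in $H$, or one of the new rim edges being needed simultaneously for a spoke). Everything else should be a routine transcription of the arguments in Theorems~\ref{edgevertex3} and~\ref{edgevertex2a} and the standard applications of Lemmas~\ref{lemma1} and~\ref{lemma2}.
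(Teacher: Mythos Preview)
Your proposal is correct and follows essentially the same route as the paper, which simply appeals to ``the same arguments used in Theorem~\ref{edgevertex3}'' for the non-$v$-centred cases and invokes the type~3a degree condition when $H$ is centred on $v$. You are in fact more explicit than the paper about the one genuinely new point---that the three-spoke-meets-rim configuration (Case~2(E) of Theorem~\ref{edgevertex3}) can still be realised with the two-vertex gadget $Y$ by making $v$ the third spoke-meets-rim vertex---and your reasoning there (the third spoke is forced through $v$ because $e_1,\dots,e_4$ meet $G_1$ only at $v_1,v_2$) is sound. Two small clean-ups: the rim reroute uses only the edges $vv_1$ and $vv_2$, not $v_1v_2$; and the three-spoke-meets-rim case arises only when the centre lies in $G_2$ (Case~1 of Theorem~\ref{edgevertex3} tops out at two), so your worry about $v$ appearing elsewhere in $H$ is moot---if the spoke to the third vertex passes through $v$, then $v$ is internal to that spoke and cannot lie on any other path of $H$.
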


\begin{proof}

Without loss of generality, assume that $v_{1}$ is incident with $e_{1}$ and $e_{2}$, and that $v_{2}$ is incident with $e_{3}$ and $e_{4}$.

$(\Rightarrow )$ Suppose firstly that $G$ contains a $W_{7}$-subdivision $H$.

If $H$ is entirely contained in $G_{1}$ or $G_{2}$, then $H$ will also be contained in $G'_{1}$ or $G'_{2}$ respectively. Suppose then that $H$ contains parts of both $G_{1}$ and $G_{2}$. If $H$ is centred in $G \setminus \{v\}$, then by the same arguments used in Theorem \ref{edgevertex3}, $H$ is also contained in either $G'_{1}$ or $G'_{2}$.

Suppose now that $H$ is centred on $v$. Since this means that $v$ has degree $\ge 7$, some component $G_{x} \in \{G_{1}$, $G_{2}\}$ must contain at most two neighbours of $v$, and thus contains at most two spoke-meets-rim vertices of $H$. This structure can be replaced by $X$ if $G_{x} = G_{2}$, or $Y$ if $G_{x} = G_{1}$.

Thus, whenever $G$ contains a $W_{7}$-subdivision, at least one of $G'_{1}$ or $G'_{2}$ does also.

$(\Leftarrow )$ Suppose now that either $G'_{1}$ or $G'_{2}$ contains a $W_{7}$-subdivision.

\vspace{0.1in}
\noindent \textbf{Case 1}

Suppose firstly that $G'_{1}$ contains a $W_{7}$-subdivision. If $H$ is entirely contained in $G'_{1} - X$, then $H$ is also contained in $G$. Suppose then that $H$ contains parts of $X$. One of the following must hold:

\begin{itemize}
\item[(a)] $X$ contains a single path of $H$;
\item[(b)] one of $v$, $v_{1}$, or $v_{2}$ is contained in $H$, and two separate paths of $H$ leave $G_{1}$ at this vertex and return to $G_{1}$ via other members of $S$, such that these paths are vertex-disjoint within $X$;
\item[(c)] $X$ contains a single spoke-meets-rim vertex of $H$; or
\item[(d)] $H$ is centred on one of $v$, $v_{1}$, or $v_{2}$, and $G_{2}$ contains two spoke-meets-rim vertices of $H$.
\end{itemize}

By the same arguments used in Theorem \ref{edgevertex3}, whenever $G'_{1}$ contains a $W_{7}$-subdivision, $G$ does also.

\vspace{0.1in}
\noindent \textbf{Case 2}

Suppose now that $G'_{2}$ contains a $W_{7}$-subdivision. If $H$ is entirely contained in $G'_{2} - X$, then $H$ is also contained in $G$. Suppose then that $H$ contains parts of $Y$. One of the following must hold:

\begin{itemize}
\item[(a)] $Y$ contains a single path of $H$;
\item[(b)] $Y$ contains two disjoint paths of $H$;
\item[(c)] $Y$ contains a single spoke-meets-rim vertex of $H$; or
\item[(d)] $Y$ contains two spoke-meets-rim vertices of $H$.
\end{itemize}

Again, by the same arguments used in Theorem \ref{edgevertex3}, whenever $G'_{2}$ contains a $W_{7}$-subdivision, $G$ does also.

Thus, whenever one of $G'_{1}$ or $G'_{2}$ contains a $W_{7}$-subdivision, $G$ does also.
\end{proof}

\vspace{0.1in}
\noindent \textbf{Definition}

A \emph{type 4 edge-vertex-cutset} in a graph $G$ is a set $S = \{v_{1}, v_{2}, e_{1}, e_{2}\}$ of two vertices $v_{1}, v_{2}$ of $G$ and two edges $e_{1}, e_{2}$ of $G$ such that $G - S$ is disconnected, with each component having at least four vertices, and with one of the components containing exactly one vertex incident with $e_{1}$ and $e_{2}$.

\begin{thm}
\label{edgevertex4}

Let $G$ be a 3-connected graph with no type 1, 2, or 3 edge-vertex-cutsets, but which contains a type 4 edge-vertex-cutset $S = \{v_{1}, v_{2}, e_{1}, e_{2}\}$. Let $G_{1}$ be the component of $G - S$ that contains exactly one vertex incident with $e_{1}, e_{2}$, and let $G_{2}$ be the other component of $G - S$.

Form $G'_{1}$ from $G$ by replacing $G_{2}$ with the subgraph $X$, where $X$ contains three pairwise-adjacent vertices, $x_{1}, x_{2}, x_{3}$, each of which is adjacent to both $v_{1}$ and $v_{2}$, such that $x_{1}$ and $x_{2}$ form endpoints of $e_{1}$ and $e_{2}$ respectively.

Form $G'_{2}$ from $G$ by replacing $G_{1}$ with the subgraph $Y$, where $Y$ contains three pairwise-adjacent vertices, $y_{1}, y_{2}, y_{3}$, each of which is adjacent to both $v_{1}$ and $v_{2}$, such that $y_{1}$ forms an endpoint of both $e_{1}$ and $e_{2}$.

Then $G$ contains a $W_{7}$-subdivision if and only if at least one of $G'_{1}$ and $G'_{2}$ contains a $W_{7}$-subdivision.
\end{thm}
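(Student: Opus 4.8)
The plan is to follow the pattern of Theorems~\ref{edgevertex}--\ref{edgevertex3a}, proving the two implications separately; in each direction one reduces the interaction of a $W_7$-subdivision $H$ with the cutset to a bounded list of configurations and then checks that the gadget $X$ (resp.\ $Y$) reproduces each of them. The type~4 case is closest to Theorem~\ref{edgevertex3}, with the single vertex $z$ of $G_1$ incident with both $e_1$ and $e_2$ playing the role that the pair $v_1,v_2$ plays there. The key structural remark to record at the outset is that $\{v_1,v_2,z\}$ is a $3$-element separating set of $G$, with $V(G_1)\setminus\{z\}$ and $V(G_2)$ lying in distinct bridges of $G|\{v_1,v_2,z\}$: indeed any path from $V(G_1)\setminus\{z\}$ to $V(G_2)$ avoiding $v_1,v_2$ must traverse $e_1$ or $e_2$ and hence pass through $z$. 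This is exactly why the gadgets carry three vertices with the prescribed incidences, and it is what licenses the later use of Lemmas~\ref{lemma1} and~\ref{lemma2}.

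$(\Rightarrow)$ Suppose $G$ has a $W_7$-subdivision $H$ meeting both $G_1-S$ and $G_2-S$ (otherwise $H$ already lies in $G'_1$ or $G'_2$). Every transition of $H$ from the $G_1$-side to the $G_2$-side is made at $v_1$, at $v_2$, or ``through $z$'' along $e_1$ or $e_2$; since $z$ meets the rim at most once and meets each spoke at most once, a short count (as in Theorem~\ref{edgevertex3}) shows that the part of $H$ on the $G_2$-side is one of: a single path of $H$ with ends in $\{v_1,v_2,z\}$; a pair of paths of $H$ that are disjoint or meet only at one vertex of $\{v_1,v_2,z\}$; a single spoke-meets-rim vertex of $H$ with its three incident segments; or, when $H$ is centred on $v_1$ or $v_2$, up to three spoke-meets-rim vertices of $H$ (using that the side in question then contains at most three of the seven $H$-neighbours of the centre). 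In every case the three mutually adjacent vertices $x_1,x_2,x_3$, each adjacent to both $v_1$ and $v_2$ and with $e_1,e_2$ re-attached at $x_1,x_2$, carry the required paths, so $G'_1$ contains a $W_7$-subdivision; by symmetry, when the $G_1$-side portion is the one to be absorbed, $Y$ (with $y_1$ in the role of $z$) does the job for $G'_2$.

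$(\Leftarrow)$ Assume without loss of generality that $G'_1$ has a $W_7$-subdivision $H$; if $H$ avoids $X$ it already lies in $G$, so suppose not. Since $X$ has only three vertices and attaches to $G'_1-X$ only at $v_1,v_2,z$, the interaction of $H$ with $X$ again falls into the list above. In the cases where $X$ carries only paths of $H$ (a single path, or a disjoint or meeting pair), $3$-connectivity of $G$ supplies the corresponding paths inside the bridge $B=\langle V(G_2)\cup\{v_1,v_2,z\}\rangle$ by the Menger argument used in Theorem~\ref{edgevertex3}. In the spoke-meets-rim cases we apply Lemma~\ref{lemma1} or Lemma~\ref{lemma2} with separating set $\{v_1,v_2,z\}$ and bridge $B$; the hypothesis that $G$ has no type~1, 2, or 3 edge-vertex-cutset is used exactly as in Theorem~\ref{edgevertex3} to exclude the degenerate situations in which the relevant member of $\{v_1,v_2,z\}$ has too few neighbours in $G_2$ for the lemma to apply. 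The symmetric argument rebuilds a $W_7$-subdivision of $G$ from one of $G'_2$ using $G_1$.

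The main obstacle is the forward-direction bookkeeping: verifying that the particular three-vertex gadget $X$ (and likewise $Y$) really reproduces every configuration of $H$ on the relevant side, in particular the awkward ones where a spoke-meets-rim vertex of $H$ lies in $G_2$ with its three branches distributed among $v_1,v_2,z$, and where $z$ itself lies on the rim with $e_1$ and $e_2$ as its two rim-edges. Getting the incidences of $X$ and $Y$ exactly right so that all of these subcases go through, together with invoking the absence of type~1, 2, and 3 edge-vertex-cutsets at precisely the points where an otherwise-possible low-degree configuration would break the Menger step or Lemmas~\ref{lemma1}/\ref{lemma2}, is where essentially all of the work lies; the remainder is parallel to the proofs of Theorems~\ref{edgevertex} and~\ref{edgevertex3}.
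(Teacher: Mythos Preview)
Your proposal is correct and follows essentially the same approach as the paper: both directions proceed by the same case analysis on the location of the centre of $H$ and on the structure that the missing side must carry, and the backward direction is resolved via 3-connectivity together with Lemmas~\ref{lemma1} and~\ref{lemma2}, using the absence of type~2 and type~3 edge-vertex-cutsets to guarantee enough neighbours of $v_1,v_2$ in $G_2$. One small remark: the $G'_2\to G$ step is not fully symmetric to the $G'_1\to G$ step, and in the paper it is actually simpler (only four subcases, handled by 3-connectivity alone without Lemmas~\ref{lemma1}/\ref{lemma2} or the cutset hypotheses), so you need not invoke those tools there.
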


\begin{proof}

Let $u$ be the vertex incident with both $e_{1}$ and $e_{2}$ in $G_{1}$.

$(\Rightarrow )$ Suppose firstly that $G$ contains a $W_{7}$-subdivision $H$.

If $H$ is entirely contained in $G_{1}$ or $G_{2}$, then $H$ will also be contained in $G'_{1}$ or $G'_{2}$ respectively. Suppose then that $H$ contains parts of both $G_{1}$ and $G_{2}$.

The centre of $H$ can be either in $G_{1}$, in $G_{2}$, or on one of $v_{1}, v_{2}$. Consider each of the cases.

\vspace{0.1in}
\noindent \textbf{Case 1}

Suppose $H$ is centred in $G_{1}$. Then one of the following must be true:

\begin{itemize}
\item[(A)] $G_{2}$ contains a single path of $H$;
\item[(B)] one of $v_{1}$, $v_{2}$, or $u$ is contained in $H$, and two separate paths of $H$ leave $G_{1}$ at this vertex and return to $G_{1}$ via other members of $S$, such that these paths are vertex-disjoint within $G_{2}$;
\item[(C)] $G_{2}$ contains a single spoke-meets-rim vertex of $H$; or
\item[(D)] $H$ is centred on $u$, and $G_{2}$ contains two spoke-meets-rim vertices of $H$.
\end{itemize}

In each case, the required structure can easily be formed in $X$.

\vspace{0.1in}
\noindent \textbf{Case 2}

Suppose now $H$ is centred in $G_{2}$. Then one of the following must be true:

\begin{itemize}
\item[(A)] $G_{1}$ contains a single path of $H$;
\item[(B)] $G_{1}$ contains two disjoint paths of $H$;
\item[(C)] $G_{1}$ contains a single spoke-meets-rim vertex of $H$; or
\item[(D)] $u$ and some other vertex in $G_{1}$ form two spoke-meets-rim vertices of $H$.
\end{itemize}

In each case, the required structure can easily be formed in $Y$.

\vspace{0.1in}
\noindent \textbf{Case 3}

Suppose now that $H$ is centred on either $v_{1}$ or $v_{2}$ --- assume $v_{1}$ without loss of generality. Since $|N_{H}(v_{1})| = 7$, some component $G_{x} \in \{G_{1}$, $G_{2}\}$ contains at most three of the vertices in $N_{H}(v)$. If $G_{x} = G_{2}$, then, since $G_{2}$ can contain at most three spoke-meets-rim vertices of $H$, $X$ can be used in $G'_{2}$ to replace the parts of $H$ previously contained in $G_{2}$. If $G_{x} = G_{1}$, $Y$ can be used in $G'_{1}$ to replace the parts of $H$ previously contained in $G_{1}$. 

Thus, whenever $G$ contains a $W_{7}$-subdivision, at least one of $G'_{1}$ or $G'_{2}$ does also.

$(\Leftarrow )$ Suppose now that either $G'_{1}$ or $G'_{2}$ contains a $W_{7}$-subdivision.

\vspace{0.1in}
\noindent \textbf{Case 1}

Suppose firstly that $G'_{1}$ contains a $W_{7}$-subdivision. If $H$ is entirely contained in $G'_{1} - X$, then $H$ is also contained in $G$. Suppose then that $H$ contains parts of $X$. One of the following must hold:

\begin{itemize}
\item[(a)] $X$ contains a single path of $H$;
\item[(b)] one of $v_{1}$, $v_{2}$, or $u$ is contained in $H$, and two separate paths of $H$ leave $G_{1}$ at this vertex and return to $G_{1}$ via other members of $S$, such that these paths are vertex-disjoint within $X$;
\item[(c)] $X$ contains a single spoke-meets-rim vertex of $H$;
\item[(d)] $H$ is centred on one of $v_{1}$, $v_{2}$, or $u$, and $G_{2}$ contains two spoke-meets-rim vertices of $H$; or
\item[(e)] $H$ is centred on $v_{1}$ or $v_{2}$, and $X$ contains three spoke-meets-rim vertices of $H$.
\end{itemize}

If (a), (b), or (c) hold, then by 3-connectivity of $G$, the required paths can be formed in $G_{2}$. 

Suppose (d) holds. If either of $v_{1}$ or $v_{2}$ has only one neighbour in $G_{2}$, then a type 2 edge-vertex-cutset exists in $G$. Assume then that $G_{2}$ contains at least two neighbours of $v_{1}$ and at least two neighbours of $v_{2}$. Thus, if $H$ is centred on $v_{1}$ or $v_{2}$, by Lemma \ref{lemma1} the required structure can also be formed in $G_{2}$. Similarly, if $H$ is centred on $u$, then by Lemma \ref{lemma1} the required structure can be formed in $G_{2}$, since $u$ has two neighbours in $G_{2}$.

Suppose (e) holds.  If either of $v_{1}$ or $v_{2}$ has fewer than three neighbours in $G_{2}$, then a type 3 edge-vertex-cutset exists in $G$. Assume then that $G_{2}$ contains at least three neighbours of $v_{1}$ and at least three neighbours of $v_{2}$. Thus, by Lemma \ref{lemma2} the required structure can also be formed in $G_{2}$.

\vspace{0.1in}
\noindent \textbf{Case 2}

Suppose now that $G'_{2}$ contains a $W_{7}$-subdivision. If $H$ is entirely contained in $G'_{2} - X$, then $H$ is also contained in $G$. Suppose then that $H$ contains parts of $Y$. One of the following must hold:

\begin{itemize}
\item[(a)] $Y$ contains a single path of $H$;
\item[(b)] $Y$ contains two disjoint paths of $H$;
\item[(c)] $Y$ contains a single spoke-meets-rim vertex of $H$; or
\item[(d)] $Y$ contains two spoke-meets-rim vertices of $H$.
\end{itemize}

By 3-connectivity of $G$, the required paths can be formed in $G_{1}$ for any of these cases.

Thus, whenever one of $G'_{1}$ or $G'_{2}$ contains a $W_{7}$-subdivision, $G$ does also.
\end{proof}

\vspace{0.1in}
\noindent \textbf{Definition}

A \emph{type 4a edge-vertex-cutset} in a graph $G$ is a set $S = \{v_{1}, v_{2}, e_{1}, e_{2}\}$ of two vertices $v_{1}, v_{2}$ of $G$ and two edges $e_{1}, e_{2}$ of $G$ such that $G - S$ is disconnected with each component having at least three vertices, with one of the components containing exactly one vertex incident with $e_{1}$ and $e_{2}$, and for each vertex $v_{i}$, $1\le i\le 2$, either:

\begin{itemize}
\item $v_{i}$ has at most two neighbours in one of the components of $G - S$; or
\item $v_{i}$ has degree $< 7$.
\end{itemize}

\begin{thm}
\label{edgevertex4a}

Let $G$ be a 3-connected graph with no type 1, 2, 2a, or 3 edge-vertex-cutsets, but which contains a type 4a edge-vertex-cutset $S = \{v_{1}, v_{2}, e_{1}, e_{2}\}$. Let $G_{1}$ be the component of $G - S$ that contains exactly one vertex incident with $e_{1}, e_{2}$, and let $G_{2}$ be the other component of $G - S$.

Form $G'_{1}$ from $G$ by replacing $G_{2}$ with the subgraph $X$, where $X$ contains two adjacent vertices, $x_{1}$ and $x_{2}$, each of which is adjacent to both $v_{1}$ and $v_{2}$, such that $x_{1}$ and $x_{2}$ form endpoints of $e_{1}$ and $e_{2}$ respectively.

Form $G'_{2}$ from $G$ by replacing $G_{1}$ with the subgraph $Y$, where $Y$ contains two adjacent vertices, $y_{1}$ and $y_{2}$, each of which is adjacent to both $v_{1}$ and $v_{2}$, such that $y_{1}$ forms an endpoint of both $e_{1}$ and $e_{2}$.

Then $G$ contains a $W_{7}$-subdivision if and only if at least one of $G'_{1}$ and $G'_{2}$ contains a $W_{7}$-subdivision.
\end{thm}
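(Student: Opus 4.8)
The plan is to follow the template established by Theorems \ref{edgevertex4} and \ref{edgevertex3a}, exploiting the fact that the type 4a condition caps at two the number of spoke-meets-rim vertices any one component can be forced to carry, so that a two-vertex gadget suffices in place of the three-vertex gadgets of Theorem \ref{edgevertex4}.

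For the forward direction, suppose $G$ has a $W_{7}$-subdivision $H$, and let $u$ be the unique vertex of $G_{1}$ incident with both $e_{1}$ and $e_{2}$. If $H$ lies wholly inside $G_{1}$ or $G_{2}$ it survives in $G'_{1}$ or $G'_{2}$, so assume it straddles the cutset. When the centre of $H$ lies in $G\setminus\{v_{1},v_{2}\}$, the arguments of Theorem \ref{edgevertex4} (Cases 1 and 2) apply essentially verbatim: whichever of $G_{1},G_{2}$ does not contain the centre contributes to $H$ only a single path, or a single spoke-meets-rim vertex with the paths meeting at it, or --- when the centre is $u$ --- two spoke-meets-rim vertices; each such configuration is realisable inside the two-vertex gadget replacing that component. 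When $H$ is centred on $v_{1}$ or $v_{2}$, that vertex has degree $\ge 7$, so by the defining property of a type 4a cutset it has at most two neighbours in one component $G_{x}$; hence $G_{x}$ holds at most two of the seven spoke-meets-rim vertices, and by Lemma \ref{lemma1} applied within $G_{x}$ its contribution to $H$ can be mimicked by the two-vertex gadget ($X$ if $G_{x}=G_{2}$, $Y$ if $G_{x}=G_{1}$). Thus at least one of $G'_{1}$, $G'_{2}$ contains a $W_{7}$-subdivision.

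For the reverse direction, assume without loss of generality that $G'_{1}$ contains a $W_{7}$-subdivision $H$; if $H$ avoids $X$ we are done, so suppose it meets $X$. Since $X$ has only two vertices, $H$ meets $X$ in one of the ways listed in the statement: a single path of $H$; two paths of $H$ meeting at one of $v_{1},v_{2},u$ and vertex-disjoint inside $X$; a single spoke-meets-rim vertex; or $H$ centred on $v_{1}$, $v_{2}$ or $u$ with exactly two spoke-meets-rim vertices in $X$. The three-spoke-meets-rim case (e) of Theorem \ref{edgevertex4} cannot arise here because $X$ has only two vertices. For the single-path and single-spoke-meets-rim cases, 3-connectivity of $G$ routes the required paths through $G_{2}$; for the two-path case we argue exactly as in Theorem \ref{edgevertex4} that $G_{2}\cup S$ contains two suitably disjoint paths, else deleting the relevant vertex together with one more vertex of $G_{2}$ disconnects $G$. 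For the two-spoke-meets-rim case we need the relevant vertex to have at least two neighbours in $G_{2}$: if $v_{1}$ or $v_{2}$ had only one neighbour in $G_{2}$ a type 2 edge-vertex-cutset would appear in $G$, contrary to hypothesis, and $u$ always has two neighbours in $G_{2}$ by construction; hence Lemma \ref{lemma1} rebuilds the structure inside $G_{2}$. The symmetric argument handles the case that $G'_{2}$ contains the subdivision, with $y_{1}$ playing the role of $u$.

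The main obstacle I anticipate is the bookkeeping in the reverse direction: one must check that the hypotheses \emph{no type 1, 2, 2a, or 3 edge-vertex-cutsets} really do supply, in every sub-case, the connectivity that Lemma \ref{lemma1} requires --- that is, that the only degenerate situations (a vertex of $S$ with too few neighbours in $G_{2}$, or $u$ separating off a small piece) are exactly those ruled out by the excluded cutset types, so that no sub-case slips past the Lemma. Beyond that verification, the proof is a routine transcription of the Theorem \ref{edgevertex4} and \ref{edgevertex3a} arguments with ``three'' replaced by ``two'' throughout.
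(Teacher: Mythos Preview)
Your proposal is correct and follows essentially the same approach as the paper's own proof, which likewise defers the forward direction (centre outside $\{v_1,v_2\}$) and the reverse direction to the arguments of Theorem~\ref{edgevertex4}, and handles the centre-on-$v_i$ case via the type~4a degree restriction together with Lemma~\ref{lemma1}. The bookkeeping you flag as a potential obstacle is exactly what the paper sweeps under the phrase ``by the same arguments used in Theorem~\ref{edgevertex4}'', so your more explicit treatment of the two-spoke-meets-rim case (ruling out a single neighbour in $G_2$ via the absence of type~2 cutsets) is, if anything, a slight improvement in clarity.
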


\begin{proof}

Let $u$ be the vertex incident with both $e_{1}$ and $e_{2}$ in $G_{1}$.

$(\Rightarrow )$ Suppose firstly that $G$ contains a $W_{7}$-subdivision $H$.

If $H$ is entirely contained in $G_{1}$ or $G_{2}$, then $H$ will also be contained in $G'_{1}$ or $G'_{2}$ respectively. Suppose then that $H$ contains parts of both $G_{1}$ and $G_{2}$.

If $H$ is centred in $G\setminus \{v_{1}, v_{2}\}$, then by the same arguments used in Theorem \ref{edgevertex4}, $H$ is also contained in either $G'_{1}$ or $G'_{2}$.

Suppose then that $H$ is centred on $v_{1}$ or $v_{2}$ --- assume $v_{1}$ without loss of generality. Since this means that $v_{1}$ has degree $\ge 7$, some component $G_{x} \in \{G_{1}, G_{2}\}$ must contain at most two neighbours of $v_{1}$, and thus contains at most two spoke-meets-rim vertices of $H$. By Lemma \ref{lemma1}, this structure can be replaced by $X$ if $G_{x} = G_{2}$, or $Y$ if $G_{x} = G_{1}$.

Thus, whenever $G$ contains a $W_{7}$-subdivision, at least one of $G'_{1}$ or $G'_{2}$ does also.

$(\Leftarrow )$ Suppose now that either $G'_{1}$ or $G'_{2}$ contains a $W_{7}$-subdivision.

\vspace{0.1in}
\noindent \textbf{Case 1}

Suppose firstly that $G'_{1}$ contains a $W_{7}$-subdivision. If $H$ is entirely contained in $G'_{1} - X$, then $H$ is also contained in $G$. Suppose then that $H$ contains parts of $X$. One of the following must hold:

\begin{itemize}
\item[(a)] $X$ contains a single path of $H$;
\item[(b)] one of $v_{1}$, $v_{2}$, or $u$ is contained in $H$, and two separate paths of $H$ leave $G_{1}$ at this vertex and return to $G_{1}$ via other members of $S$, such that these paths are vertex-disjoint within $X$;
\item[(c)] $X$ contains a single spoke-meets-rim vertex of $H$; or
\item[(d)] $H$ is centred on one of $v_{1}$, $v_{2}$, or $u$, and $G_{2}$ contains two spoke-meets-rim vertices of $H$.
\end{itemize}

By the same arguments used in Theorem \ref{edgevertex4}, the required paths can be formed in $G_{2}$ for any of these cases.

\vspace{0.1in}
\noindent \textbf{Case 2}

Suppose now that $G'_{2}$ contains a $W_{7}$-subdivision. If $H$ is entirely contained in $G'_{2} - X$, then $H$ is also contained in $G$. Suppose then that $H$ contains parts of $Y$. One of the following must hold:

\begin{itemize}
\item[(a)] $Y$ contains a single path of $H$;
\item[(b)] $Y$ contains two disjoint paths of $H$;
\item[(c)] $Y$ contains a single spoke-meets-rim vertex of $H$; or
\item[(d)] $Y$ contains two spoke-meets-rim vertices of $H$.
\end{itemize}

By the same arguments used in Theorem \ref{edgevertex4}, the required paths can be formed in $G_{1}$ for any of these cases.                                                                                                                                                                                                                                                                                                                                                                                                                                                                                                                                                                                                                                                                                                                                                                                                                                                                                                                                                                                                                                                                                                                                                                                                                                                                                                                                                                                                                                                                                                                                                                                                                                                                                                                                                                                                                                                                                                                                                                                                                                                                                                                                                                                                                                                                                                                                                                                                                                                                                                               

Thus, whenever one of $G'_{1}$ or $G'_{2}$ contains a $W_{7}$-subdivision, $G$ does also.
\end{proof}

\vspace{0.1in}
\noindent \textbf{Definition}

An \emph{internal $(1, 1, 1, 1)$-cutset} in a graph $G$ is a set $E'$ of four disjoint edges of $G$ such that $G - E'$ is disconnected, with each component having at least five vertices.

\begin{thm}
\label{bigfourcutset}

Let $G$ be a 3-connected graph with no internal 4-edge-cutsets and no type 1 or 1a edge-vertex-cutsets, which contains an internal $(1,1,1,1)$-cutset $E' = \{e_{1}, e_{2}, e_{3}, e_{4}\}$. Let $G_{1}, G_{2}$ be the components of $G-E'$. Let the endpoints of $e_{1}, \ldots , e_{4}$ in $G_{1}$ be the four distinct vertices $u_{1}, u_{2}, u_{3}, u_{4}$, and let the endpoints of $e_{1}, \ldots , e_{4}$ in $G_{2}$ be the four distinct vertices $v_{1}, v_{2}, v_{3}, v_{4}$. Form $G'_{1}$ from $G$ by replacing $G_{2}$ with the subgraph $X$, where $X$ contains only the four vertices $v_{1}, v_{2}, v_{3}, v_{4}$, all of which are made adjacent to one another if they were not already.

Then $G$ contains a $W_{k}$-subdivision centred in $G_{1}$ if and only if $G'_{1}$ contains a $W_{k}$-subdivision, where $k \ge 5$.
\end{thm}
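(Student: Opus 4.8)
The plan is to prove the two implications separately, each time reducing to a short list of configurations that a $W_{k}$-subdivision can induce on the $G_2$-side (or $X$-side) of the cutset.

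\textbf{The forward implication.} Suppose $H$ is a $W_{k}$-subdivision of $G$ with centre in $G_1$. Since every vertex of $W_{k}$ has degree $\ge 3$, and the only edges of $G$ between $V(G_1)$ and $V(G_2)$ are $e_1,e_2,e_3,e_4$, each end of the graph $H\cap\langle V(G_2)\rangle$ lies at one of $v_1,v_2,v_3,v_4$. A short counting argument, using that the centre of $H$ is not in $G_2$, shows that $H$ has at most two spoke-meets-rim vertices inside $G_2$ (otherwise more than four cross-edges would be forced: three spokes alone use three, and the rim at least two more). Hence $H\cap\langle V(G_2)\rangle$ is one of: (I) a single path joining two of the $v_i$; (II) two vertex-disjoint paths joining the $v_i$ in pairs; (III) one branch vertex joined by three internally disjoint paths to three of the $v_i$; or (IV) two branch vertices joined by a path in $G_2$, each joined by a further path to a distinct $v_i$ so that all four $v_i$ are used. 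Each of (I)--(IV) is realised inside $X=K_4$: for (III) route the three paths through the unused fourth vertex of $X$; for (IV) route the rim arc as a Hamilton path $v_av_bv_dv_c$ of $X$, so that the two interior vertices $v_b,v_d$, each keeping its incident cross-edge as a spoke-end, become the two branch vertices. Replacing $H\cap\langle V(G_2)\rangle$ by the corresponding configuration in $X$ yields a $W_{k}$-subdivision of $G'_1$ with the same centre, which lies in $G_1$.

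\textbf{The backward implication.} Suppose $H'$ is a $W_{k}$-subdivision of $G'_1$. Each $v_i$ has degree at most $3$ within $X$ and at most one incident cross-edge, so $\deg_{H'}(v_i)\le 4<k$; hence the centre of $H'$ is not a $v_i$ and lies in $G_1$. As before, $H'\cap X$ is one of the configurations (I)--(IV), with all of its branch vertices among $v_1,\dots,v_4$. It remains to re-embed each such configuration inside $\langle V(G_2)\rangle$, keeping the attachment vertices $v_i$ fixed and otherwise disjoint from the rest of $H'$. For (I) and (II) this amounts to finding, inside $G_2$, a $v_i$--$v_j$ path, respectively two disjoint paths joining the $v_i$ in the prescribed pairs; these exist because $G_2$ is connected and $G$ is $3$-connected (a $2$-element cut of $G_2$ separating the terminals would, together with at most two cross-edges, give a $2$-cut of $G$). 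For (III) one must produce a vertex of $G_2$ joined by three internally disjoint paths inside $G_2$ to the three prescribed $v_i$, and for (IV) the analogous ``double broom'' on all four $v_i$; this step parallels the use of Lemmas \ref{lemma1} and \ref{lemma2} in the edge-vertex-cutset theorems above. The hypotheses that $G$ has no internal $4$-edge-cutset and no type $1$ or type $1a$ edge-vertex-cutset, together with $|V(G_2)|\ge 5$, are precisely what prevent $G_2$ from having a bottleneck obstructing these paths.

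\textbf{Expected main obstacle.} The pattern enumeration and the embeddings into $K_4$ are routine. The real content is the last step of the backward implication: showing that configurations (III) and (IV) can always be re-embedded in $G_2$. I expect to argue by contradiction --- take a minimum-size vertex/edge separator of $G_2$ that isolates the prescribed terminals, and show that, augmented by the appropriate cross-edges, it is one of the cutsets forbidden by the hypothesis (an internal $4$-edge-cutset, or a type $1$ or type $1a$ edge-vertex-cutset). Getting the accounting of which cross-edges are consumed exactly right, and verifying that the resulting separator meets the precise size and ``each component large'' conditions in those definitions, is the delicate part.
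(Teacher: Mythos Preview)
Your overall strategy matches the paper's: enumerate the configurations that $H\cap G_2$ (or $H'\cap X$) can take, and translate each one across the cutset. The forward direction is fine. The gap is in your backward treatment of configuration (II).

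You claim that two disjoint paths in $G_2$ with the prescribed pairing exist by $3$-connectivity alone, and that the hard work lies only in (III) and (IV). This is not right on two counts. First, the pairing problem: Menger-type arguments give you two disjoint paths in $G_2$ joining $\{v_1,v_3\}$ to $\{v_2,v_4\}$, but they may pair the endpoints as $v_1\text{--}v_4$ and $v_3\text{--}v_2$ rather than as in $H'$. The paper handles this by splitting into three subcases according to whether the two $X$-paths are both spokes, both rim arcs, or one of each, and showing that in each subcase the ``wrong'' pairing can still be massaged into a $W_k$-subdivision (by swapping which spoke goes to which rim endpoint, or by letting a path hit the rim at a new point). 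You need this argument; it is not automatic. Second, even getting those two disjoint paths uses the forbidden-cutset hypotheses, not just $3$-connectivity: the paper argues that if a single vertex $q$ of $G_2$ separated $\{v_1,v_3\}$ from $\{v_2,v_4\}$, then $q$ together with two of the cross-edges would be a type~1 or type~1a edge-vertex-cutset (or, when the side is very small, $q$'s incident edges plus two cross-edges would be an internal $4$-edge-cutset). Your sentence about ``a $2$-element cut of $G_2$ \ldots\ together with at most two cross-edges'' is not the right bookkeeping; the relevant obstruction is a \emph{single} cut-vertex of $G_2$ plus two cross-edges.

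Conversely, you have overestimated (III): a single spoke-meets-rim vertex inside $X$ uses three of the $v_i$, and in $G_2$ any two paths from one of them to the other two (guaranteed by $3$-connectivity) suffice, with the branch vertex placed where they first diverge. No cutset hypothesis is needed there. So the cases that genuinely consume the ``no internal $4$-edge-cutset / no type 1 or 1a'' assumptions are (II) and (IV), not (III) and (IV). Redirect your ``expected main obstacle'' analysis to (II) accordingly, and add the pairing subcase analysis.
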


\begin{proof}
$(\Rightarrow )$ Suppose firstly that $G$ contains a $W_{k}$-subdivision $H$ centred in $G_{1}$. If $H$ is entirely contained in $G_{1}$ then $H$ is also contained in $G'_{1}$.

If however $H$ is not entirely contained in $G_{1}$, but rather contains edges of $E'$, then one of the following statements must be true:

\begin{itemize}
\item[(A)] a single path of $H$ leaves and returns to $G_{1}$ via $E'$; 
\item[(B)] two disjoint paths of $H$ leave and return to $G_{1}$ via $E'$; 
\item[(C)] there is a single spoke-meets-rim vertex of $H$ in $G_{2}$; or
\item[(D)] $G_{2}$ contains two spoke-meets-rim vertices belonging to $H$ such that the portion of rim between these vertices is entirely contained in $G_{2}$.
\end{itemize}

If (A) holds, the portion of $H$ in $G_{2}$ can be replaced by a single edge in $X$.

If (B) holds, the portion of $H$ in $G_{2}$ can be replaced by two edges in $X$.

Suppose (C) holds. Any vertex in $X$ can be used as the spoke-meets-rim vertex of $H$ that was previously contained in $G_{2}$.

Suppose (D) holds. In $G$, two of the vertices in $u_{1}, \ldots , u_{4}$ must lie on spokes of $H$, while the other two lie on the rim of $H$. Without loss of generality, assume that $u_{1}$ and $u_{2}$ lie on spokes, and $u_{3}$ and $u_{4}$ lie on the rim. Then in $G'_{1}$, use $v_{1}$ and $v_{2}$ to form the two spoke-meets-rim vertices that were previously in $G_{2}$.

Thus whenever $G$ contains a $W_{k}$-subdivision centred in $G_{1}$, $G'_{1}$ must also.

$(\Leftarrow )$ Assume now that $G'_{1}$ contains a $W_{k}$-subdivision. If none of the edges in $E'$ are used to form $H$ in $G'_{1}$, then $H$ is also contained in $G$. 

Suppose then that $H$ contains edges of $E'$. One of the following must hold:

\begin{itemize}
\item[(a)] a single path of $H$ leaves and returns to $G_{1}$ via $E'$;
\item[(b)] two paths of $H$, say $P_{1}$ and $P_{2}$, leave and return to $G_{1}$ via $E'$;
\item[(c)] one of the vertices in $X$ serves as a spoke-meets-rim vertex belonging to $H$; or
\item[(d)] two of the vertices in $X$ serve as two spoke-meets-rim vertices belonging to $H$, such that the portion of rim between these vertices is entirely contained in $X$.
\end{itemize}

By the 3-connectivity of $G$ we know that there must exist paths in $G_{2} \cup E'$ between each pair of the vertices $u_{1}, u_{2}, u_{3}, u_{4}$. Thus if either (a) or (c) is true, we can use one or parts of two of these paths to form the required paths in $H$.

Suppose (b) is true. Without loss of generality, suppose that $P_{1}$ enters $X$ at $v_{1}$ and leaves at $v_{2}$; and that $P_{2}$ enters $X$ at $v_{3}$ and leaves at $v_{4}$. There are three possibilities:

\begin{itemize}
\item[(i)] $P_{1}$ and $P_{2}$ are both parts of spokes of $H$;
\item[(ii)] $P_{1}$ and $P_{2}$ are both parts of the rim of $H$; or
\item[(iii)] one of the paths forms part of a spoke of $H$, and the other forms part of the rim of $H$.
\end{itemize}

Suppose that (i) holds. Let $S_{1}$ be the spoke of $H$ containing $P_{1}$, and let $S_{2}$ be the spoke of $H$ containing $P_{2}$. Let $v$ be the centre of $H$, and let $s_{1}$ and $s_{2}$ be the points at which $S_{1}$ and $S_{2}$ respectively meet the rim of $H$. Without loss of generality, suppose that $v_{1}$ and $v_{3}$ are closer to $v$ along $S_{1}$ and $S_{2}$ respectively than $v_{2}$ and $v_{4}$. If $P_{1}$ and $P_{2}$ are both part of the same spoke of $H$ (that is, $S_{1} = S_{2}$), then a single path in $G_{2}$ from $v_{1}$ to $v_{4}$ can be used in $G$ to form that part of the spoke between $v_{1}$ and $v_{4}$. (By 3-connectivity, such a path must exist.) Assume then that $S_{1}$ and $S_{2}$ are separate spokes of $H$.

Suppose that $\{v_{1}, v_{3}\}$ and $\{v_{2}, v_{4}\}$ can be separated in $G_{2}$ by a single vertex, $q$. Let $W$ be the component of $G_{2} - q$ containing $v_{1}$ and $v_{3}$, and let $Z$ be the component of $G_{2}-q$ containing $v_{2}$ and $v_{4}$. If $W$ contains more than three vertices, then a type 1 edge-vertex-cutset can be formed by the vertex $q$ and the edges $e_{1}$ and $e_{3}$. If $W$ contains fewer than three vertices (which implies $V(W) = \{v_{1}, v_{3}\}$ and that $q$ is adjacent to $v_{1}$ and $v_{3}$), an internal 4-edge-cutset can be formed by the edges $e_{2}$, $e_{4}$, $qv_{1}$, and $qv_{3}$. Assume then that $W$ contains exactly three vertices. If $Z$ contains more than three vertices, then a type 1 edge-vertex-cutset can be formed by the vertex $q$ and the edges $e_{2}$ and $e_{4}$. If $Z$ contains fewer than three vertices, an internal 4-edge-cutset can be formed by the edges $e_{1}$, $e_{3}$, $qv_{2}$, and $qv_{4}$. Assume then that $Z$ contains exactly three vertices. Then, since $q$ can have degree at most six, a type 1a edge-vertex-cutset can be formed by the vertex $q$ and either the edges $e_{1}$ and $e_{3}$ or the edges $e_{2}$ and $e_{4}$. Therefore, such a vertex $q$ cannot exist in $G_{2}$. Thus, there must be at least two disjoint paths in $G_{2}$ joining $\{v_{1}, v_{3}\}$ to $\{v_{2}, v_{4}\}$. Call these paths $P'_{1}$ and $P'_{2}$.

If $P'_{1}$ and $P'_{2}$ run from $v_{1}$ to $v_{2}$ and from $v_{3}$ to $v_{4}$, then they can be used to replace $P_{1}$ and $P_{2}$ in $G$. Suppose then that $P'_{1}$ is a path from $v_{1}$ to $v_{4}$, and $P'_{2}$ is a path from $v_{3}$ to $v_{2}$. Then in $G$, the two spokes $S_{1}$ and $S_{2}$ can be replaced by the paths $vS_{2}v_{3}P'_{2}v_{2}S_{1}s_{1}$ and $vS_{1}v_{1}P'_{1}v_{4}S_{2}s_{2}$.

Suppose (ii) holds. Without loss of generality, suppose that the portion of the rim of $H$ in $G'_{1} - X$ consists of a path from $v_{1}$ to $v_{3}$ and a path from $v_{2}$ to $v_{4}$. By the same argument used in (i), there must exist two disjoint paths in $G_{2}$, $P'_{1}$ and $P'_{2}$, joining $\{v_{1}, v_{3}\}$ to $\{v_{2}, v_{4}\}$. In $G$, $P'_{1}$ and $P'_{2}$ can be used to replace the parts of the rim not in $G_{1} \cup E'$, regardless of whether they run from $v_{1}$ to $v_{2}$ and $v_{3}$ to $v_{4}$, or from $v_{1}$ to $v_{4}$ and $v_{3}$ to $v_{2}$.

Suppose (iii) holds. Without loss of generality, suppose that $P_{1}$ forms part of the rim of $H$, and $P_{2}$ forms part of a spoke $S_{1}$ of $H$ such that $v_{3}$ is closer to the centre of $H$ along $P_{2}$ than $v_{4}$. Let $s_{1}$ be the vertex at which $S_{1}$ meets the rim of $H$. By the 3-connectivity of $G$, there must be some path $P'_{1}$ from $v_{1}$ to $v_{2}$ in $G_{2}$. Use this to form that part of the rim formed by $P_{1}$ in $G'_{1}$. There must also be some path $P'_{2}$ in $G_{2}$ that runs from $v_{3}$ to some vertex $q$ on $P'_{1}$, such that $P'_{2}$ meets $P'_{1}$ only at $q$. Let $P'_{2}$ replace that part of $H$ formed by $v_{3}S_{1}s_{1}$ in $G'_{1}$, so that $q$ becomes a spoke-meets-rim vertex in $G$ instead of $s_{1}$, and a new spoke is formed by the path $vS_{1}v_{3}P'_{2}q$.

Suppose (d) is true. Without loss of generality, suppose $v_{1}$ and $v_{2}$ are the two spoke-meets-rim vertices in $X$. By the 3-connectivity of $G$ we know that there must exist a path $P$ in $G_{2}$ from $v_{1}$ to $v_{2}$, otherwise the removal of either $u_{1}$ or $u_{2}$ and some other vertex in $G_{2}$ will disconnect the graph, placing $v_{1}$ and $v_{2}$ in separate components. Suppose that $P$ can be separated from $v_{3}, v_{4}$ in $G_{2}$ by the removal of a single vertex, $q$. Let $V_{1}$ be the component of $G_{2} - q$ containing $v_{1}$ and $v_{2}$, and let $V_{3}$ be the component of $G_{2} - q$ containing $v_{3}$ and $v_{4}$. If $V_{3}$ contains more than three vertices, then a type 1 edge-vertex-cutset can be formed by the vertex $q$ and the edges $e_{3}$ and $e_{4}$. If $V_{3}$ contains fewer than three vertices, then an internal 4-edge-cutset can be formed by the edges $e_{1}$, $e_{2}$, $qv_{3}$, and $qv_{4}$. Assume then that $V_{3}$ contains exactly three vertices. By the same argument, $V_{1}$ must also contain exactly three vertices. Then, since $q$ can have degree at most six, a type 1a edge-vertex-cutset can be formed by the vertex $q$ and either the edges $e_{1}$ and $e_{2}$ or the edges $e_{3}$ and $e_{4}$. Therefore, such a vertex $q$ cannot exist in $G_{2}$. Thus, there must be at least two disjoint paths in $G_{2}$ joining $P$ to $v_{3}$ and $v_{4}$. These two paths and the path $P$ form the required structure in $H$.

Thus whenever $G'_{1}$ contains a $W_{k}$-subdivision, $G$ must also.
\end{proof}

\section{Results on graphs with no 6-wheel subdivisions}
\label{6wheelresults}

The following two theorems build directly on the results of \cite{Robinson08}, and relate specifically to $W_{6}$-subdivisions, rather than $W_{7}$-subdivisions. However, they are key to proving Theorem \ref{theorem} in this paper, particularly Theorem \ref{w6cor}, which strengthens the main result of \cite{Robinson08}.

Theorem \ref{w6thm}, given below, follows from the main theorem of \cite{Robinson08} (titled Theorem 4 in that paper).

\begin{thm}
\label{w6thm}

Let $G$ be a 3-connected graph with at least 12 vertices. Suppose $G$ has no internal 3-edge-cutsets, no internal 4-edge-cutsets, and is a graph on which neither Reduction \ref{r1}A nor Reduction \ref{r2}A can be performed.

Then $G$ has a $W_{6}$-subdivision if and only if $G$ contains some vertex $v_{0}$ of degree at least 6.

\end{thm}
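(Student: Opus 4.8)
The plan is to prove Theorem \ref{w6thm} by deriving it from Theorem 4 of \cite{Robinson08} (the main characterization of graphs with no $W_6$-subdivision). The ``only if'' direction is trivial: a $W_6$-subdivision contains a vertex of degree $\ge 6$ (its centre), so if $G$ has a $W_6$-subdivision it certainly has such a vertex. The substantive direction is the ``if'': assuming $G$ is 3-connected, has at least 12 vertices, admits no internal 3-edge-cutset, no internal 4-edge-cutset, and is not reducible by Reduction \ref{r1}A or Reduction \ref{r2}A, and assuming $G$ has a vertex $v_0$ of degree $\ge 6$, we must produce a $W_6$-subdivision.

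First I would recall the statement of Theorem 4 of \cite{Robinson08}: it describes the structure of a 3-connected graph $G$ that has \emph{no} $W_6$-subdivision, presumably stating that (possibly after excluding the forbidden cutsets and reductions) such a graph either has bounded size, or has maximum degree $\le 5$, or decomposes in a prescribed way across the listed cutsets. The argument then proceeds by contradiction: suppose $G$ satisfies all the hypotheses of Theorem \ref{w6thm}, has a vertex $v_0$ of degree $\ge 6$, but has no $W_6$-subdivision. Then $G$ must fall into one of the structural categories of that characterization. Since $G$ has no internal 3-edge-cutset and no internal 4-edge-cutset, and is not reducible by Reductions \ref{r1}A or \ref{r2}A, every decomposition-based category is ruled out by hypothesis; the only remaining possibility is the ``small base case'' of the characterization --- that is, $|V(G)|$ is at most some explicit constant. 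The key numerical point is that this constant is less than $12$, so the hypothesis $|V(G)| \ge 12$ contradicts it. (If the characterization also allows a ``maximum degree $\le 5$'' category, that too is excluded immediately by the existence of $v_0$.) Hence no such $G$ exists, and every $G$ meeting the hypotheses with a degree-$\ge 6$ vertex has a $W_6$-subdivision.

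I would then add whatever bookkeeping is needed to line up the hypotheses: the cutsets and reductions in \cite{Robinson08} may be phrased slightly differently (e.g.\ ``Reduction 1'' and ``Reduction 2'' there versus Reductions \ref{r1}A, \ref{r2}A here), so I would note explicitly that Reductions \ref{r1}A and \ref{r2}A are the special cases relevant to the $W_6$ setting (as remarked in Section \ref{reductions}, ``Reductions \ref{r1} and \ref{r2} are generalizations of Reductions 1 and 2 in \cite{Robinson08}''), and that forbidding the general reductions is implied here by forbidding these special cases together with the absence of the cutsets. A short check that the 3-connectivity and cutset hypotheses here imply the corresponding hypotheses of Theorem 4 of \cite{Robinson08} completes the reduction.

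The main obstacle I anticipate is purely one of \emph{translation and verification against the cited paper}: making sure that the exact form of the $W_6$-characterization in \cite{Robinson08} really does have a bounded-size exceptional case with bound below $12$, and that each of its structural cases other than the small one is genuinely eliminated by one of the hypotheses listed here (no internal 3- or 4-edge-cutset, non-reducibility by \ref{r1}A, \ref{r2}A, existence of a high-degree vertex). There is no deep new idea required --- the whole content is that Theorem \ref{w6thm} is a clean corollary of the earlier classification once the exceptional configurations are hypothesised away --- but pinning down the constant and matching the cutset/reduction definitions precisely is where care is needed.
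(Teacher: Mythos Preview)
Your proposal is correct and follows essentially the same approach as the paper. The paper's proof is even shorter than you anticipate: Theorem~4 of \cite{Robinson08} already has built into its hypotheses the same 3-connectivity, cutset, and reduction conditions, and its sole exceptional case is that $G$ be topologically contained in a single explicit graph~$A$ on exactly $11$ vertices; the bound $|V(G)|\ge 12$ rules this out directly, so no further ``decomposition-based categories'' need to be eliminated.
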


\begin{proof}
Suppose that $G$ is not topologically contained in Graph $A$, shown in Figure \ref{graphA}. Then Theorem 4 of \cite{Robinson08} applies to $G$, thus proving the hypothesis. If however $G$ is topologically contained in Graph $A$, then $G$ can contain at most 11 vertices (since $|V(A)| = 11$), which contradicts the original assumption that $|V(G)| \ge 12$.
\end{proof}

\begin{figure}[!ht]
\begin{center}
\includegraphics[width=0.35\textwidth]{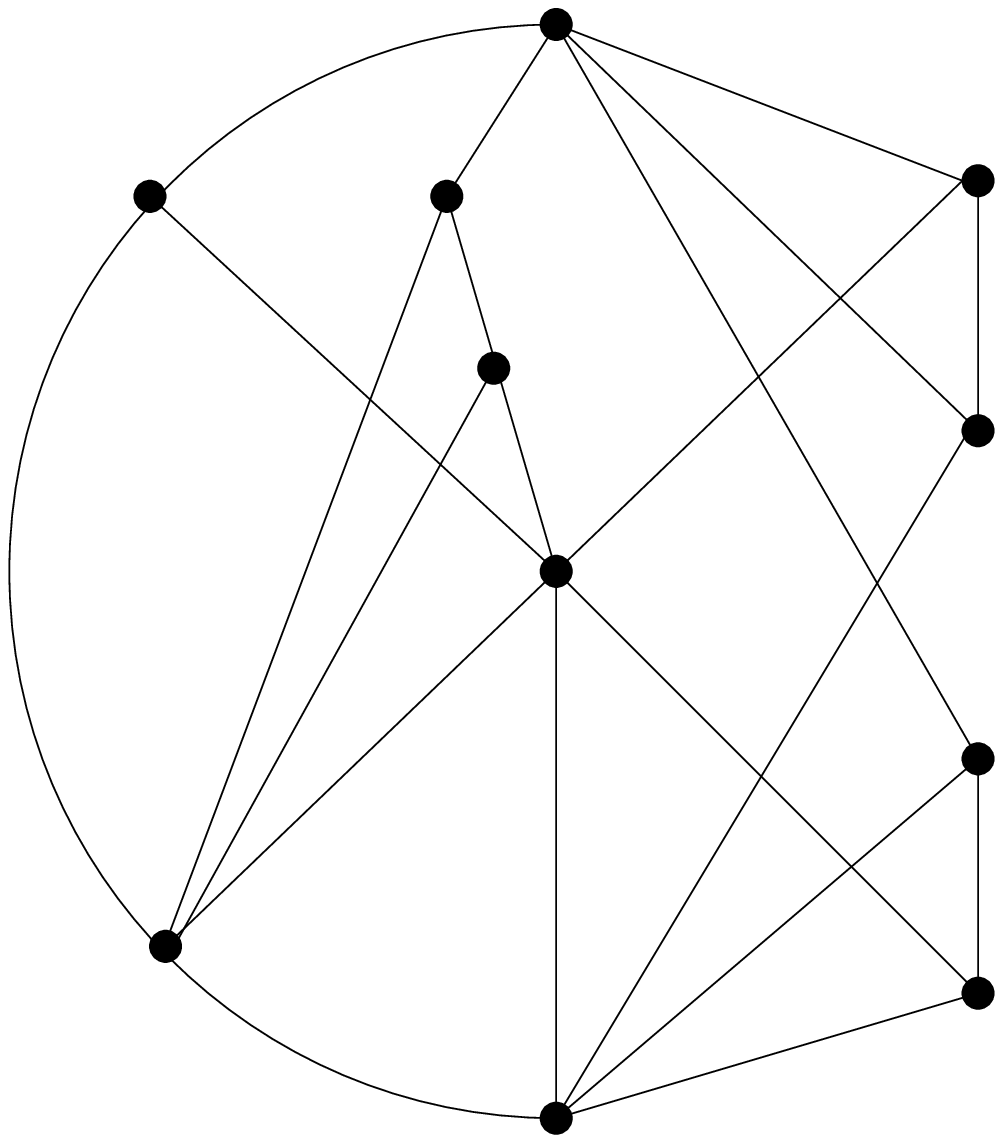}
\caption{Graph $A$}
\label{graphA}
\end{center}
\end{figure}

Theorem \ref{w6cor}, which follows, builds on the previous theorem characterizing graphs that do not contain a $W_{6}$-subdivision. The extra conditions forbidding certain types of edge-vertex-cutsets and additional types of reductions allow for a strengthened result in the case where the $W_{6}$-subdivision found is not centred on the given vertex of degree $\ge 6$.

\begin{thm}
\label{w6cor}

Let $G$ be a 3-connected graph with at least 14 vertices. Suppose $G$ has no type 1, 2, 3, or 4 edge-vertex-cutsets, and is a graph on which neither Reduction \ref{r1}A, Reduction \ref{r1}B, Reduction \ref{r2}A, nor Reduction \ref{r2}B can be performed, for $k = 7$. Let $v_{0}$ be a vertex of degree $\ge 6$ in $G$. Then either $G$ has a $W_{6}$-subdivision centred on $v_{0}$, or $G$ has a $W_{6}$-subdivision centred on some vertex $v_{1}$ of degree $\ge 7$.

\end{thm}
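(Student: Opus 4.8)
The plan is to first obtain \emph{some} $W_6$-subdivision of $G$, essentially from Theorem~\ref{w6thm}, and then, if its centre is neither $v_0$ nor a vertex of degree $\ge 7$, to move its centre onto $v_0$ by exploiting the $\ge 6$ edges at $v_0$, the $3$-connectivity of $G$, and the hypotheses forbidding the type~1--4 edge-vertex-cutsets and Reductions \ref{r1}A, \ref{r1}B, \ref{r2}A, \ref{r2}B (with $k=7$).

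\textbf{Preliminary reduction.} Theorem~\ref{w6thm} also asks for the absence of internal $3$-edge-cutsets and internal $4$-edge-cutsets, which are not assumed here, so I would dispose of these first. If $G$ has such a cutset, I would split $G$ along it exactly as is done for internal $3$- and $4$-edge-cutsets in \cite{Farr88, Robinson08} (in the same spirit as Theorem~\ref{bigfourcutset}); the component that inherits $v_0$ is a strictly smaller graph that still satisfies every hypothesis of this theorem, still has at least $14$ vertices, and still contains $v_0$ with degree $\ge 6$, so an induction on $|V(G)|$ applies, and the resulting $W_6$-subdivision transfers back to $G$ by the defining property of that split. Hence we may assume $G$ has no internal $3$-edge-cutset and no internal $4$-edge-cutset, so Theorem~\ref{w6thm} yields a $W_6$-subdivision $H$ of $G$ with centre $c$. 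If $c=v_0$, or if $\deg(c)\ge 7$, we are done; so assume $\deg(c)=6$ and $c\ne v_0$.

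\textbf{Moving the centre to $v_0$.} Write $R$ for the rim of $H$, $r_1,\dots,r_6$ for its spoke-meets-rim vertices in cyclic order, and $S_1,\dots,S_6$ for the spokes. The argument now splits according to how $v_0$ meets $H$: (i) $v_0\notin V(H)$; (ii) $v_0$ lies on $R$ (possibly $v_0=r_i$ for some $i$); (iii) $v_0$ lies in the interior of a spoke. In every case $v_0$ has at least $6-3=3$ edges not used by $H$, and by $3$-connectivity these, together with the edges of $H$ at $v_0$, give $v_0$ several internally-disjoint connections to the rest of $H$; in case~(i) in particular $3$-connectivity provides three paths from $v_0$, internally disjoint from $H$ and pairwise disjoint, ending at three distinct vertices of $H$. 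I would then perform a case analysis on where these connections land (on $R$, on the $S_i$, or at $c$) and in each configuration try to assemble a $W_6$ centred at $v_0$: six internally-disjoint $v_0$-paths serving as spokes, plus a new rim cycle built from arcs of $R$, pieces of the $S_i$, and the connecting paths. When such an assembly is blocked --- typically because too few of $v_0$'s connections can be routed disjointly to where they are needed --- the obstruction is a small mixed cut, which I would identify as a type~$1$, $2$, $3$, or $4$ edge-vertex-cutset of $G$, or as a configuration on which one of Reductions \ref{r1}A, \ref{r1}B, \ref{r2}A, \ref{r2}B can be performed, contradicting the hypotheses (here $\deg(c)=6$ is repeatedly used, since it forbids $c$ from acquiring a seventh spoke). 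The second alternative in the conclusion covers those subcases in which the natural rerouting produces a $W_6$ centred not at $v_0$ but at some other vertex of $G$ of degree $\ge 7$.

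\textbf{Main obstacle.} Getting a $W_6$-subdivision to start with is routine; the substantive work is the case analysis above --- keeping the rerouted spokes and the new rim genuinely disjoint, and, in each blocked configuration, exhibiting precisely the right forbidden cutset or reduction. Managing the interaction between $v_0$'s position relative to $H$ and the placement of its connections (especially when several of them cluster on a single arc of $R$ or a single spoke) is where essentially all of the effort lies, just as in the analogous steps of \cite{Farr88, Robinson08}; making the preliminary reduction for internal $3$- and $4$-edge-cutsets fully rigorous is a second, smaller source of bookkeeping.
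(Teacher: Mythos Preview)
Your plan departs from the paper's in a way that creates real difficulties.

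The paper does \emph{not} take an arbitrary $W_6$-subdivision and try to relocate its centre. Instead it re-enters the \emph{proof} of Theorem~4 of \cite{Robinson08} (which underlies Theorem~\ref{w6thm}). That proof is constructive: starting from the given vertex $v_0$, it builds a $W_6$-subdivision case by case, and the paper observes that in every case except the one labelled (b)(ii) there, the subdivision produced is already centred at $v_0$. The entire argument then collapses to re-examining that single configuration (b)(ii) under the extra hypotheses. In that configuration one has a specific separating set $W=\{v_0,v_1,v_3\}$ with named bridges $U_2,U_4,U(u)$; the analysis counts neighbours of $v_0,v_1,v_3$ in each bridge and either finds enough neighbours of $v_0$ in some bridge to invoke Lemma~\ref{lemma1} or~\ref{lemma2} and get a $W_6$ centred at $v_0$, or else shows that $v_1$ (or $v_3$) has so many neighbours across several bridges that its degree is forced to be $\ge 7$, and a $W_6$ centred there is produced instead. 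The forbidden edge-vertex-cutsets and Reductions~\ref{r1}B,~\ref{r2}B are used precisely to rule out the residual subcases of (b)(ii). Your ``move the centre'' plan, by contrast, must handle every possible position of $v_0$ relative to a $W_6$-subdivision with an \emph{arbitrary} centre of degree~$6$; that is a far larger case analysis, and nothing in your sketch indicates how it would be organised or why it terminates in the forbidden structures.

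Your preliminary reduction also has a genuine gap. After splitting $G$ along an internal $3$- or $4$-edge-cutset in the manner of \cite{Farr88,Robinson08}, you assert that the piece inheriting $v_0$ ``still satisfies every hypothesis of this theorem, still has at least $14$ vertices, and still contains $v_0$ with degree $\ge 6$''. None of these is justified: the side containing $v_0$ may well have fewer than $14$ vertices; the gadget added in the split can create new type~1--4 edge-vertex-cutsets or enable one of the forbidden reductions; and if $v_0$ happens to be an endpoint of one of the cut edges its degree in the piece may drop. Without these properties your induction does not apply. The paper avoids this issue altogether by working inside the case structure of \cite{Robinson08}, where the absence of internal $3$- and $4$-edge-cutsets is used directly as a local counting constraint on the bridges of $G|W$ rather than being eliminated up front.
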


\begin{proof}

Suppose $G$ and $v_{0}$ satisfy the hypotheses of the theorem. From Theorem \ref{w6thm}, $G$ contains a $W_{6}$-subdivision, $H$, and from the proof of this theorem, Theorem 4 of \cite{Robinson08} also applies to $G$. Referring to the proof of Theorem 4 in \cite{Robinson08}, we know that in all cases other than (b)(ii), $H$ must be centred on $v_{0}$.

Looking more closely at the proof of (b)(ii), where $G$ contains the structure illustrated in Figure \ref{b2}, there are a number of ways in which $W_{6}$-subdivisions are formed. Firstly, all possible ways of adding a single path $Q$ to the structure in Figure \ref{b2} are tested, excluding those cases where the path added meets internally either the path from $v_{0}$ to $v_{1}$, or the path from $v_{0}$ to $v_{3}$. Each of the resulting graphs is found to contain a $W_{6}$-subdivision. Since the original structure contains no vertices of degree $\ge 5$ other than $v_{0}$, there can be no vertices with degree $\ge 6$ other than $v_{0}$ in the resulting graph once $Q$ is added. Thus, the $W_{6}$-subdivision created in each case must be centred on $v_{0}$.

\begin{figure}[!ht]
\begin{center}
\includegraphics[width=0.35\textwidth]{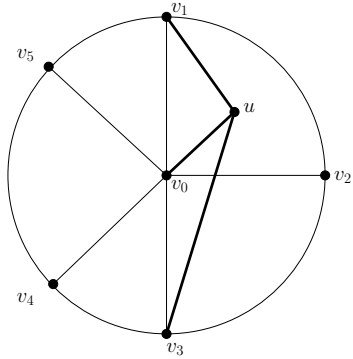}
\caption[\textbf{(b)(ii)}: Starting graph]{\textbf{(b)(ii)}: Starting graph}
\label{b2}
\end{center}
\end{figure}

\vspace{0.2in}
\noindent \textbf{Case 1: Existence of path $R$}

The next part of the proof looks at the addition of some new path $R$ to the graph of Figure \ref{b2}, such that $R$ runs from $v_{4}$ to some point on the path from $v_{0}$ to $v_{1}$, as in Figure \ref{pathr}. All ways of adding $R$ to this structure are tested for a $W_{6}$-subdivision. Again, since $v_{0}$ is the only vertex in the graph of Figure \ref{pathr} of degree $\ge 5$, any $W_{6}$-subdivision created by adding an edge to this graph must be centred on $v_{0}$. The same argument follows in the next part of the proof, where single edges are added to the graph of Figure \ref{pathr2} and the resulting graph tested for the presence of a $W_{6}$-subdivision.

\begin{figure}[!ht]
\begin{center}
\includegraphics[width=0.4\textwidth]{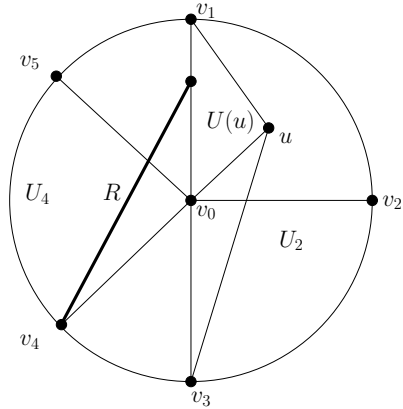}
\caption[Configuration for path $R$]{Configuration for path $R$}
\label{pathr}
\end{center}
\end{figure}

\begin{figure}[!ht]
\begin{center}
\includegraphics[width=0.4\textwidth]{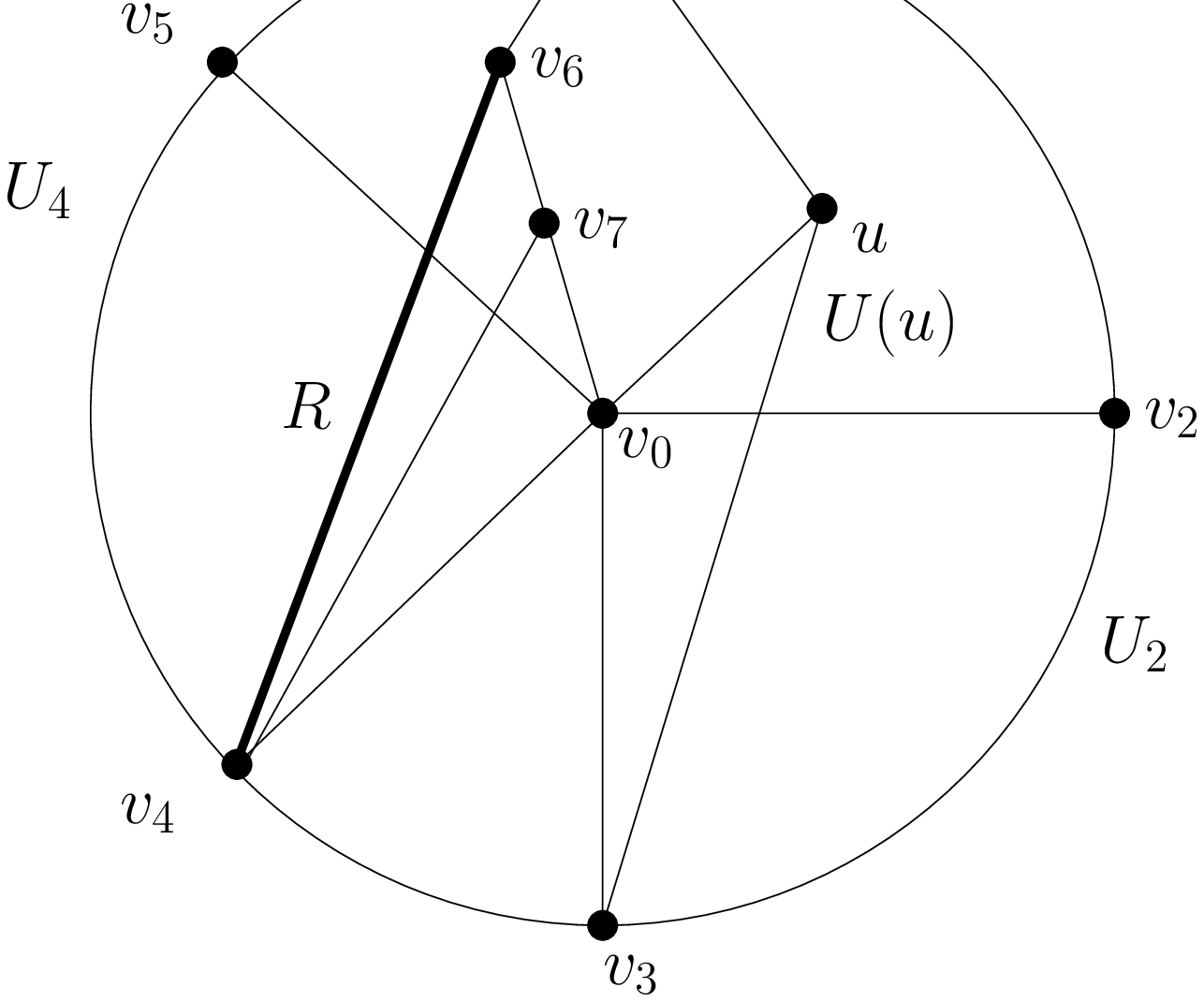}
\caption[More complex configuration involving path $R$]{More complex configuration involving path $R$}
\label{pathr2}
\end{center}
\end{figure}

The next part of Case 1 assumes that $\langle U_{4} \rangle$ in $G$ is isomorphic to that part of the graph illustrated in either Figure \ref{pathr} or Figure \ref{pathr2}, and looks further at the structure of the bridges $U_{2}$ and $U(u)$. As with the original proof, suppose firstly that $U_{2}$ and $U(u)$ each contain at most two vertices not in $W$. There must then exist some fourth bridge of $G|W$, $U^{*}$, since $G$ has at least 14 vertices. Each bridge of $G|W$ must contain at least 2 vertices not in $W$, otherwise Reduction \ref{r1}A can be performed on $G$, and thus to avoid internal 3-edge-cutsets, there must be at least 4 edges joining $W$ to $U_{2} \setminus W$, likewise joining $W$ to $U(u) \setminus W$ and to $U^{*} \setminus W$. Thus some vertex $x \in W$ has two neighbours in $U_{2}$, some vertex $y \in W$ has two neighbours in $U(u)$, and some vertex $z \in W$ has at least two neighbours in $U^{*}$. Note that $U_{2}$ and $U(u)$ each contain exactly two vertices not in $W$. To avoid the possibility of Reduction \ref{r1}B being performed on $G$, then, vertices $x$, $y$, and $z$ must all be distinct. If this were not the case, then either $U_{2}$ or $U(u)$ would be contained as a subdivision in at least two other bridges of $G|W$. Thus $v_{0} \in \{x, y, z\}$, so $v_{0}$ has two neighbours in some bridge other than $U_{4}$. By Lemma \ref{lemma1}, this bridge along with $U_{4}$ can be used to form a $W_{6}$-subdivision centred on $v_{0}$.

Following the original proof again, we now suppose at least one of $U_{2}$ and $U(u)$ (assume $U_{2}$) has more than two vertices not in $W$. To avoid an internal 4-edge-cutset, there must be at least five edges connecting $U_{2} \setminus W$ to $W$. There are two cases:

\begin{itemize}
\item[(i)] there will be two vertices $x$ and $y$ in $W$ each with two neighbours in $U_{2} \setminus W$; or
\item[(ii)] one vertex in $W$ will have three neighbours in $U_{2} \setminus W$.
\end{itemize}

\vspace{0.1in}
\noindent \textbf{Case 1.1}

Suppose firstly that (i) is true. If $v_{0} \in \{x, y\}$, then a $W_{6}$-subdivision can be formed centred on $v_{0}$. Suppose then that $\{x,y\} = \{v_{1}, v_{3}\}$, and that $v_{0}$ has exactly one neighbour in $U_{2}\setminus W$, say $v'_{0}$. If $|U_{2}\setminus W| > 3$, then, a type 2 edge-vertex-cutset can be formed from $v_{1}$, $v_{3}$, and the edge $v_{0}v'_{0}$. Assume then that $|U_{2}\setminus W| = 3$.

If $U(u)\setminus W$ contains more than one neighbour of $v_{0}$, then a $W_{6}$-subdivision can be formed centred on $v_{0}$. Assume then that $U(u)\setminus W$ contains exactly one neighbour of $v_{0}$, say $v''_{0}$. Then if $|U(u)\setminus W| > 3$, a type 2 edge-vertex-cutset can be formed from $v_{1}$, $v_{3}$, and the edge $v_{0}v''_{0}$. Assume then that $|U(u)\setminus W| \le 3$.

$U(u)$ must have at least two vertices not in $W$, otherwise Reduction \ref{r1}A or Reduction \ref{r2}A can be performed. Thus, $U(u)\setminus W$ contains either two or three vertices.

Suppose firstly that $U(u)$ has exactly two vertices not in $W$. There must then exist some fourth bridge of $G|W$, $U^{*}$, since $G$ has at least 14 vertices. $U^{*}$ must have at least two vertices not in $W$ to avoid Reduction \ref{r1}A. Thus, some vertex $z_{1} \in W$ must have two neighbours in $U(u) \setminus W$, and some vertex $z_{2} \in W$ must have two neighbours in $U^{*}$. One of $z_{1}$ or $z_{2}$ must be $v_{0}$, otherwise $U(u)$ will be contained as a subdivision in at least two other bridges of $G|W$, thus allowing Reduction \ref{r1}B to be performed. Thus, using Lemma \ref{lemma1}, a $W_{6}$-subdivision can be formed centred on $v_{0}$ using parts of either $U^{*}$ or $U(u)$, and $U_{4}$.

Suppose now that $U(u)$ has exactly three vertices not in $W$. To avoid internal 4-edge-cutsets, there must be five edges joining $W$ to $U(u) \setminus W$. If $v_{0}$ has two or more neighbours in $U(u) \setminus W$, then a $W_{6}$-subdivision can be formed centred on $v_{0}$. If either $v_{1}$ or $v_{3}$ has three neighbours in $U(u) \setminus W$, then a $W_{6}$-subdivision can be formed centred on that vertex, which is of degree $\ge 7$. Assume then that each of $v_{1}$ and $v_{3}$ has exactly two neighbours in $U(u) \setminus W$, while $v_{0}$ has exactly one neighbour in $U(u) \setminus W$. Since $v_{1}$ and $v_{3}$ each have two neighbours in $U_{2} \setminus W$ as well, it is apparent that $W_{6}$-subdivisions can still be formed centred on these two vertices. Also, since $|U_{4}\cup U_{2}\cup U(u)| \le 13$, but $|V(G)| \ge 14$, there must exist some fourth bridge of $G|W$ other than $U_{4}$, $U_{2}$, and $U(u)$. Thus, $v_{1}$ and $v_{3}$ each have degree $\ge 7$.

\vspace{0.1in}
\noindent \textbf{Case 1.2}

Suppose now that (ii) is true. By Lemma \ref{lemma2}, a $W_{6}$-subdivision can be formed centred on the vertex with three neighbours in $U_{2} \setminus W$, and by the argument used in the previous case, if that vertex is either $v_{1}$ or $v_{3}$, then it must be of degree $\ge 7$.

\vspace{0.2in}
\noindent \textbf{Case 2: No path $R$}

In the second part of the proof, it is assumed that none of the bridges $U_{2}$, $U_{4}$, or $U(u)$ meet internally either $P_{1}$ or $P_{3}$. 

By the arguments in the original proof, we know that there must exist a $W_{6}$-subdivision $H$ centred on some vertex $x \in W$. If $x = v_{0}$, we have nothing more to prove. Assume instead then that $x$ is either $v_{1}$ or $v_{3}$, and we must show $x$ to have degree $\ge 7$.

\vspace{0.1in}
\noindent \textbf{Case 2.1}

Firstly, suppose that there are at least four bridges of $G|W$. The vertices in $N_{H}(x) \setminus W$, of which there must be at least four, are contained in at most two bridges of $W$. These four neighbours of $x$, plus the neighbours in the remaining two bridges and the edge $xv_{0}$, mean that $x$ has degree $\ge 7$.

\vspace{0.1in}
\noindent \textbf{Case 2.2}

Suppose now that there are only three bridges of $G|W$. Since $|V(G)| \ge 14$, then, one of the following must be true:

\begin{itemize}
\item[(i)] two of these bridges must each contain at least four vertices not in $W$, while the third must contain at least three vertices not in $W$; or
\item[(ii)] one bridge contains at least five vertices in $W$.
\end{itemize}

Suppose firstly that (i) is true. Let $A$ and $B$ be the two bridges with at least four vertices not in $W$, and let $C$ be the bridge with at least three vertices not in $W$.

If any two bridges have two neighbours of $v_{0}$ not in $W$, then a $W_{6}$-subdivision can be formed centred on $v_{0}$. Assume then that two of the bridges, call them $U^{*}$ and $U^{**}$, have one neighbour of $v_{0}$ not in $W$. At least one of these two bridges, say $U^{*}$, must be $A$ or $B$. Thus, a type 2 edge-vertex-cutset can be formed from $v_{1}$, $v_{3}$, and the edge joining $v_{0}$ to $U^{*}\setminus W$.

Suppose now that (ii) is true. Let $A$ be the bridge with at least five vertices in $W$. If $v_{0}$ has more than two neighbours in $A \setminus W$, then by Lemma \ref{lemma2} there exists a $W_{6}$-subdivision centred on $v_{0}$. Assume then that $v_{0}$ has at most two neighbours in $A \setminus W$. Then a type 4 edge-vertex-cutset can be formed from $v_{1}$, $v_{3}$, and the two edges joining $v_{0}$ to $U^{*}\setminus W$.
\end{proof}

\section{Supporting lemmas}
\label{supportinglemmas}

The following lemmas are used in support of the main theorem of this paper, Theorem \ref{theorem}. The first, Lemma \ref{lemmaW7}, follows easily from Lemmas \ref{lemma1} and \ref{lemma2}, and is used often throughout the paper in showing the existence of a $W_{7}$-subdivision.

Lemmas \ref{twonbrs} to \ref{twotwotwonbrs} and Lemma \ref{fournbrs} are all similar in nature: each requires a graph $G$ containing some separating set $S$ such that $|S| = 4$ and $S$ contains some vertex $v$ of degree $\ge 7$. By then imposing certain conditions on the neighbours of $v$, it is shown that a $W_{7}$-subdivision must exist centred on $v$. These lemmas are all used in the main theorem, Theorem \ref{theorem}.

Lemmas \ref{threebridges} and \ref{lemma3} both handle situations which often arise in the main theorem.

\begin{lem}
\label{lemmaW7}
Let $G$ be a 3-connected graph containing a separating set $S = \{u, v, w\}$. Let $X$ and $Y$ be two distinct bridges of $G|W$. Suppose that $v$ and $w$ are either adjacent or joined by a path $P_{w}$ in some third bridge $A$ of $G|S$. Suppose also that $v$ and $u$ are either adjacent or joined by a path $P_{u}$ in some bridge of $G|S$ other than $X$, $Y$, or $A$ (if $A$ exists). If $v$ has at least three neighbours in $X \setminus S$, and at least two neighbours in $Y \setminus S$, then $v$ has a $W_{7}$-subdivision centred on it.
\end{lem}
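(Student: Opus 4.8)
The plan is to build a $W_7$-subdivision with centre $v$ explicitly: three spokes run into the bridge $X$ and end on a rim-arc through $X$, two spokes run into the bridge $Y$ and end on a rim-arc through $Y$, and the last two spokes are the hypothesised connections $P_u$ (to $u$) and $P_w$ (to $w$); the rim is the cycle formed by the two rim-arcs glued together at $u$ and $w$.

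First I would apply Lemma \ref{lemma2} to the bridge $X$ and the vertex $v$, which has at least three neighbours in $X\setminus S$. This gives a path $P^X$ in $\langle X\rangle$ whose only vertices of $S$ are its two endpoints --- and, since $P^X$ avoids $v$, those endpoints are $u$ and $w$ --- together with three $v$-paths, pairwise disjoint apart from $v$, meeting $P^X$ only at three distinct vertices $q_1,q_2,q_3$. Next I would apply Lemma \ref{lemma1} to the bridge $Y$ and the vertex $v$. To do so I must first check its hypotheses: $Y\setminus S$ has at least two vertices (two neighbours of $v$), and each of $u$ and $w$ has a neighbour in $Y\setminus S$ --- otherwise $G$ would have a $2$-cut contained in $\{u,v,w\}$, contradicting $3$-connectivity --- so there are at least four edges between $S$ and $Y\setminus S$, and $v$ has two neighbours in $Y\setminus S$. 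Lemma \ref{lemma1} then supplies a path $P^Y$ in $\langle Y\rangle$ from $u$ to $w$ avoiding $v$, together with two $v$-paths meeting $P^Y$ only at two distinct vertices $q'_1,q'_2$ and disjoint apart from $v$. Since $X$ and $Y$ are distinct bridges of $G|S$, the two $u$--$w$ paths $P^X$ and $P^Y$ meet only at $u$ and $w$, so $C := P^X\cup P^Y$ is a cycle; this is the rim. The seven spokes are the three $v$--$P^X$ paths (ending at $q_1,q_2,q_3$), the two $v$--$P^Y$ paths (ending at $q'_1,q'_2$), $P_u$ or the edge $vu$ (ending at $u$), and $P_w$ or the edge $vw$ (ending at $w$).

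It then remains to check that this really is a $W_7$-subdivision centred on $v$. The seven spokes all issue from $v$ and are pairwise internally disjoint: the $X$-fan lies in $\langle X\rangle$, the $Y$-fan in $\langle Y\rangle$, $P_w$ in a bridge $A\notin\{X,Y\}$ and $P_u$ in a bridge $B\notin\{X,Y,A\}$, so any two of these four blocks meet only within $S$, and within each block the fan meets its rim-arc only at the prescribed $q$'s while $P_u,P_w$ meet $C$ only at $u,w$ respectively. The main point requiring care is that the seven attachment points $q_1,q_2,q_3,q'_1,q'_2,u,w$ are genuinely distinct: distinctness within $\{q_1,q_2,q_3\}$, within $\{q'_1,q'_2\}$, across the two bridges, and of $u$ from $w$ is immediate, so the only thing to rule out is that some $q_i$ or $q'_j$ coincides with $u$ or $w$, which would clash with the spoke $P_u$ or $P_w$. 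This is exactly the situation in which Lemmas \ref{lemma1} and \ref{lemma2} are invoked elsewhere in the paper, with the fan vertices interior to the rim-arc; it follows from the constructions in those lemmas, and in any case such a coincidence could be removed by re-choosing the rim-arc. Once distinctness is confirmed, $C$ together with the seven spokes is a subdivision of $W_7$ centred on $v$, as required.
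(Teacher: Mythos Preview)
Your proposal is correct and follows essentially the same approach as the paper: apply Lemma~\ref{lemma2} to $X$ and Lemma~\ref{lemma1} to $Y$ to get five spokes ending on a rim through $u$ and $w$, then add $P_u$ and $P_w$ as the sixth and seventh spokes. The paper's own proof is a two-sentence sketch of precisely this construction; you have simply spelled out the details (verifying the hypotheses of Lemma~\ref{lemma1}, checking disjointness of the pieces across bridges, and noting the distinctness of the seven rim attachment points) that the paper leaves implicit.
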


\begin{proof}
By Lemma \ref{lemma1} and Lemma \ref{lemma2}, a $W_{5}$-subdivision can easily be formed centred on $v$ using parts of $X$ and $Y$, using the other two vertices in $S$ as parts of the rim, but not as spoke-meets-rim vertices. $P_{w}$ and $P_{u}$ form the two extra spokes required to make a $W_{7}$-subdivision.
\end{proof}

\begin{lem}
\label{twonbrs}
Let $G$ be a 3-connected graph containing a separating set $S = \{t, u, v, w\}$. Suppose $v$ has degree $\ge 7$, and suppose there are at least three bridges of $G|S$, $X$, $Y$, and $Z$, such that $Y$ contains all four vertices in $S$, while $X$ and $Z$ each contain $v$. Suppose also that one of the following holds:

\begin{itemize}
\item either one of $X$, $Z$ contains all four vertices in $S$; or
\item there exists some fourth bridge $W$ of $G|S$ such that $W$ also contains $v$.
\end{itemize}

Suppose for each $U_{i}, U_{j}$, where $U_{i}, U_{j} \in \{W, X, Z\}$ and $U_{i} \neq U_{j}$, that one of the following holds:

\begin{itemize}
\item either $U_{i}\cap S \neq U_{j}\cap S$; or
\item $U_{i}\cap S = U_{j}\cap S = S$.
\end{itemize}

Suppose that $v$ and $u$ are either adjacent or joined by a path in some bridge $A$ of $G|S$ other than $X$, $Y$, $Z$, or $W$ (if $W$ exists). Call this path (or edge) $P_{u}$. Suppose also that $v$ and $w$ are either adjacent or joined by a path in some bridge $B$ of $G|S$ other than $X$, $Y$, $Z$, $W$, or $A$ (if $W$ and $A$ exist). Call this path (or edge) $P_{w}$. Suppose also that $v$ and $t$ are either adjacent or joined by a path in some bridge $C$ of $G|S$ other than $X$, $Y$, $Z$, $W$, $A$, or $B$ (if $W$, $A$, and $B$ exist). Call this path (or edge) $P_{t}$.

Then if any bridge of $G|S$ contains more than one neighbour of $v$ not in $S$, $G$ contains a $W_{7}$-subdivision centred on $v$.
\end{lem}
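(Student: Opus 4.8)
The plan is to produce a $W_{7}$-subdivision centred on $v$ by first building a $W_{4}$-subdivision centred on $v$, using only the bridges of $G|S$ that contain $v$ (among them $Y$, which contains all of $S$), whose rim is a cycle meeting each of $t,u,w$ in a degree-two vertex, and then adjoining $P_{t},P_{u},P_{w}$ as three further spokes. This reduction is legitimate: $P_{t},P_{u},P_{w}$ lie in three distinct bridges $C,A,B$, none of which is used in the $W_{4}$, so (distinct bridges meeting only in $S$) the three paths are internally disjoint from one another and from the $W_{4}$-subdivision, and each meets it only at $v$ and at its endpoint in $\{t,u,w\}$. Hence the $W_{4}$ together with these spokes is a $W_{7}$-subdivision centred on $v$, and it remains only to construct the $W_{4}$.

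Fix a bridge $D$ of $G|S$ with at least two neighbours of $v$ outside $S$. A neighbour of $v$ outside $S$ is joined to $v$ by a one-edge $S$-avoiding path, so it lies in the bridge containing $v$; thus $v\in D\cap S$, and $3$-connectivity forces $|D\cap S|\ge 3$. I would then apply Lemma \ref{lemma2} to $D$ (with $x=v$, after isolating $D$ as a bridge of a $3$-connected auxiliary graph) when $v$ has three or more neighbours in $D\setminus S$, obtaining a path $P\subseteq\langle D\rangle$ between two members of $D\cap S - v$ that avoids $v$, together with three pairwise-$v$-disjoint spokes from $v$ to interior vertices of $P$; and Lemma \ref{lemma1} similarly when $v$ has exactly two such neighbours (the ``at least four edges'' into $D\setminus S$ coming from $3$-connectivity once the small cutsets are excluded), obtaining $P$ and two such spokes. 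In either case $P$ becomes one arc of the rim, joining two of $t,u,w$, and two or three of the four required spokes are in hand.

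To finish, I would close the rim and supply the remaining one or two spokes using $Y$. Since $v$ has a neighbour in $Y\setminus S$, a routing argument inside $\langle Y\rangle$ — invoking Lemma \ref{lemma1} or \ref{lemma2} once more if a second extra spoke is needed there — yields the complementary rim arc(s): a path, or two internally disjoint paths, joining the endpoints of $P$, passing through the member(s) of $\{t,u,w\}$ not on $P$, avoiding $v$, and running through the needed further neighbour(s) of $v$ in $Y\setminus S$. If two extra spokes are wanted but $Y$ furnishes only one, the hypothesis that the bridges among $\{W,X,Z\}$ have pairwise distinct traces on $S$ (unless a trace equals $S$) supplies a second $v$-containing bridge in which to find the last spoke and its rim arc. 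Concatenating $P$ with these arcs in the cyclic order dictated by which of $t,u,w$ they join produces the rim of the $W_{4}$-subdivision; adjoining $P_{t},P_{u},P_{w}$ then completes the $W_{7}$-subdivision. The case $D=Y$ (and, more generally, $D$ containing all of $S$) is symmetric: extract a rim arc and two or three spokes from $D$ directly, then obtain the remaining spokes and arcs from $X$ and $Z$.

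The main obstacle is the bookkeeping in the last two steps: over all positions of $D\cap S$ and all distributions of $v$'s remaining neighbours, one must verify that the hypotheses of Lemmas \ref{lemma1} and \ref{lemma2} actually hold in each bridge invoked (the ``four edges'' and ``two neighbours'' conditions follow from $3$-connectivity and the absence of small cutsets, but need checking), and — most delicately — that the rim arcs coming from different bridges can be chosen internally disjoint and glued into one cycle threading $t$, $u$, and $w$ in a consistent cyclic order. The trace hypothesis on $\{W,X,Z\}$, together with the fact that $Y$ contains all of $S$ and so can host an arc between any two of $t,u,w$, is precisely what rules out the degenerate gluings in which two arcs would be forced onto the same pair of members of $S$, or would omit a needed one.
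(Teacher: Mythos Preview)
Your overall architecture --- build a $W_4$ centred on $v$ whose rim threads $t,u,w$, then append $P_t,P_u,P_w$ --- is a legitimate alternative to the paper's, and when $|D\cap S|=3$ your invocation of Lemma~\ref{lemma1} is sound (the four-edge count follows automatically: two from $v$, one each from the other two attachment vertices). But there are two genuine gaps. First, Lemmas~\ref{lemma1} and~\ref{lemma2} are stated for \emph{three}-vertex separating sets; when $D$ meets all of $S$ (in particular when $D=Y$, which is the main case) you cannot apply them as stated, and your ``auxiliary graph'' is never specified. You would need to prove a four-vertex analogue, and the conclusion is more delicate: the path $P$ you extract may be forced to pass through the third vertex of $\{t,u,w\}$ as an internal point, which then blocks the gluing you describe (you can no longer close the rim with two further arcs through that vertex). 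Second, your gluing step --- which you correctly flag as the main obstacle --- is exactly where the work lies, and you have not done it; asserting that the trace hypothesis ``rules out the degenerate gluings'' is the statement to be proved, not an argument.

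The paper sidesteps both issues by reversing your order of assembly. It first builds a $W_6$: the trace hypothesis on $\{W,X,Z\}$ plus $Y\cap S=S$ guarantees a cycle $C$ disjoint from $v$ passing through $X\setminus S$, $Y\setminus S$, $Z\setminus S$, and one spoke from $v$ lands in each of $X,Y,Z$; together with $P_t,P_u,P_w$ this is already a $W_6$. Only then does it bring in the extra neighbour $y$ of $v$ in $M$, and the delicate routing is confined to a single question --- where does a path from $y$ first meet the existing $H$? --- handled by a short case analysis (when that path lands on the existing $M$-spoke rather than the rim, a minimality choice and one further path $Q_3$, possibly rerouting the rim through $X$, $Z$, or $W$ via the trace hypothesis, finish the job). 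This organisation means Lemma~\ref{lemma1} is never applied to a bridge with four-vertex trace, and the ``gluing'' of arcs from different bridges is done once, up front, where it is easy.
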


\begin{proof}

Suppose there exists some bridge $M$ of $G|S$ such that $M\setminus S$ contains at least two neighbours of $v$.

(a) Suppose firstly that $M = Y$.

It cannot be the case that $X\cap S = Z\cap S$ unless $X\cap S = Z\cap S = S$. Therefore, since $v$ is contained in $X\cap S$, $Y\cap S$ and $Z\cap S$, there exists some cycle $C$ disjoint from $v$ that passes through $\langle X\setminus S\rangle$, $\langle Y\setminus S\rangle$, and $\langle Z\setminus S\rangle$. Since $X$, $Y$, and $Z$ each contain $v$, they must each contain at least one neighbour not in $S$ of $v$. Therefore, there exists a $W_{6}$ subdivision $H$ centred on $v$ such that the rim is formed from $C$, three of the spokes are formed from $P_{u}$, $P_{w}$, and $P_{t}$, one spoke lies in $\langle (X\setminus S)\cup \{v_{0}\}\rangle$, one spoke lies in $\langle (Y\setminus S)\cup \{v_{0}\}\rangle$, and one spoke lies in $\langle (Z\setminus S)\cup \{v_{0}\}\rangle$. 

We know that $Y\setminus S$ contains some vertex $y$ adjacent to $v$, such that $y\notin N_{H}(v)$. Let $v_{1}$ be the spoke-meets-rim vertex of $H$ in $Y\setminus S$. Let $P_{1}$ be the spoke of $H$ from $v$ to $v_{1}$. Since $y$ is contained in the bridge $Y$, there must be some path $Q_{1}$ in $\langle Y\setminus S\rangle$ joining $y$ to $H\cap \langle Y\setminus S\rangle$ that first meets $H\cap \langle Y\setminus S\rangle$ at some vertex $p$. 

\textbf{1.} Suppose firstly that $p$ does not lie on $P_{1}$. Then form a $W_{7}$-subdivision in $G$ from $H\cup vyQ_{1}p$.

\textbf{2.} Suppose now that $p$ lies on $P_{1}$. Without loss of generality, let $H$, $y$, and $Q_{1}$ be chosen to minimise the distance between $p$ and $v_{1}$ along $P_{1}$.

Suppose there exists some path $Q_{2}$ from $vP_{1}p - p$ to $(H\cap \langle Y\rangle) - vP_{1}p$ that meets $H$ only at its endpoints. Such a path cannot meet $v_{1}P_{1}p$, or the distance between $p$ and $v_{1}$ is no longer minimal with regards to $H$, $y$, and $Q_{1}$. Suppose then that $Q_{2}$ runs from $vP_{1}p - p$ to $(H\cap \langle Y\rangle) - P_{1}$. Then a $W_{7}$-subdivision can be formed using parts of $H$ as well as $Q_{1}$ and $Q_{2}$. Suppose then that no such path exists. If $Q_{1}$ is trivial, that is, $y = p$, then the 3-connectivity of $G$ is violated, since the removal of $y$ and $v$ will disconnect the graph. Assume then that $y \neq p$. Then by 3-connectivity, there must be some path $Q_{3}$ in $\langle Y\rangle$ joining $Q_{1} - p$ to $(H\cap \langle Y\rangle) - P_{1}$, such that $Q_{3}$ meets $Q_{1}$ only at one endpoint, say $p_{1}$, and meets $H$ only at the other endpoint, say $q$. If such a path does not exist, the removal of $p$ and $v$ will disconnect the graph. If $q \in \langle Y\setminus S\rangle$, form a $W_{7}$-subdivision from $H\cup yv \cup yQ_{1}p_{1}Q_{3}q$. Suppose then that $q\in S$. Without loss of generality, suppose $q = w$.

The portion of the rim of $H$ that is contained in $\langle Y\rangle$ consists of a path, say $R$, and a single vertex in $S$ disjoint from this path.

\textbf{2.1.} Suppose that $w$ forms one of the endpoints of $R$. Then form a $W_{7}$-subdivision from $H$ by replacing the portion of rim formed by $v_{1}Rw$ with the path $v_{1}P_{1}pQ_{1}p_{1}Q_{3}w$, so that $vyQ_{1}p_{1}$ becomes a spoke, and $y$ and $p$ both become spoke-meets-rim vertices.

\textbf{2.2.} Suppose then that $u$ and $t$ form the two endpoints of $R$. Thus, one of $\langle X\rangle$, $\langle Z\rangle$ contains a path in $H$ from $w$ to $t$, while the other contains a path in $H$ from $w$ to $u$.

\textbf{2.2.1.} Suppose one of $X$, $Z$ contains all four vertices in $S$. Without loss of generality, suppose $X\cap S = S$. Since the rim of $H$ in $Y$ runs from $u$ to $t$, $\langle X\rangle$ must contain a path $R_{1}$ in $H$ either from $w$ to $t$ or from $w$ to $u$, such that $R_{1}$ meets $S$ only at its endpoints. Assume without loss of generality that $R_{1}$ is a path from $w$ to $t$. Since $X$ contains all four vertices in $S$, there exists some neighbour $u_{1}$ of $u$ in $X\setminus S$, and since $u_{1}$ is contained in the bridge $X$, there exists a path $R_{2}$ in $\langle X\setminus S\rangle$ from $u_{1}$ to $R_{1}$ that meets $R_{1}$ only at some vertex $r$. Let $R_{x} = uu_{1}R_{2}rR_{1}t$. Since $v \in X$, there exists some path $P_{2}$ in $X$ from $v$ to $R_{x}$ that meets $S$ only at $v$ and meets $R_{x}$ only at some point $r_{x}$. Form a $W_{7}$-subdivison from $H$ by replacing the parts of $H$ in $X$ with the paths $R_{x}$ and $P_{2}$, so that $r_{x}$ becomes a spoke-meets-rim vertex, and by replacing the path $v_{1}Ru$ in $\langle Y\rangle$ with the path $v_{1}P_{1}pQ_{1}p_{1}Q_{3}w$, so that $vyQ_{1}p_{1}$ again forms a new spoke.

\textbf{2.2.2.} Suppose now that $|X\cap S| = 3$ and $|Z\cap S| = 3$. Then, by the hypothesis of the Lemma, there exists some fourth bridge $W$ of $G|S$ which also contains $v$. If $|W\cap S| = 4$, then the arguments used in case 2.2.1 can be applied to show that a $W_{7}$-subdivision can be formed in $G$. Assume then that $|W\cap S| = 3$.

By the hypothesis of the Lemma, no two of $W\cap S$, $X\cap S$, and $Z\cap S$ can be the same. Since one of $\langle X\rangle$, $\langle Z\rangle$ contains a path in $H$ from $w$ to $t$, and the other contains a path in $H$ from $w$ to $u$, we know that $\{(X\cap S), (Z\cap S)\} = \{\{t, v, w\}, \{u, v, w\}\}$. Thus, $W\cap S = \{t, u, v\}$. Therefore, there exists a path $R_{2}$ in $\langle W\rangle$ from $t$ to $u$ that meets $S$ only at its endpoints, and a path $P_{2}$ in $\langle W\rangle$ from $v$ to $R_{2}$ that meets $S$ only at $v$ and meets $R_{2}$ only at some vertex $r$. Suppose without loss of generality that the part of the rim of $H$ in $\langle X\rangle$ consists of a path from $t$ to $w$. Then form a $W_{7}$-subdivison from $H$ by removing the portion of $H$ contained in $\langle X\setminus S\rangle$ and instead using the paths $P_{2}$ and $R_{2}$ in $\langle W\rangle$, so that $r$ becomes a spoke-meets-rim vertex, and by replacing the path $v_{1}Ru$ in $\langle Y\rangle$ with the paths $v_{1}P_{1}pQ_{1}p_{1}Q_{3}w$, so that $vyQ_{1}p_{1}$ again forms a new spoke.

(b) Suppose now that $M \neq Y$.

\textbf{1.} Suppose that $|M\cap S| = S$. Then the same arguments used in (a) can be used to show that a $W_{7}$-subdivision exists in $G$ centred on $v$. (If $M$ is some bridge $U'$ where $U' \in \{X, Z, A, B, C\}$, then at the points in the proof where $U'$ is required, instead use the structure contained in $Y$.)

\textbf{2.} Suppose now that $|M\cap S| = 3$. Let $x_{1}$, $x_{2}$ be the two neighbours of $v$ in $M\setminus S$.

Since $v\in M\cap S$, suppose without loss of generality that $M\cap S = \{u, v, w\}$. We know that $G - \{u, v, w\}$ can have at most two components, $\langle M\setminus \{u, v, w\}\rangle$ and $G - M$. Since $\langle M\setminus \{u, v, w\}\rangle$ contains at least two neighbours of $v$ ($x_{1}$ and $x_{2}$), and $G - M$ contains at least three neighbours of $v$ (one neighbour lies along the path $P_{t}$, and two neighbours lie in two other bridges of $G|S$, since at least three bridges of $G|S$ contain $v$), by Lemma \ref{lemmaW7}, a $W_{7}$-subdivision exists centred on $v$.

Thus, whenever any bridge of $G|S$ contains more than one neighbour of $v$ not in $S$, $G$ contains a $W_{7}$-subdivision centred on $v$.
\end{proof}

\begin{lem}
\label{threenbrs}
Let $G$ be a 3-connected graph containing a separating set $S = \{t, u, v, w\}$. Suppose $v$ has degree $\ge 7$, and suppose there are at least three bridges of $G|S$, $X$, $Y$, and $Z$, such that $Y$ contains all four vertices in $S$, while $X$ and $Z$ each contain $v$. Suppose also that one of the following holds:

\begin{itemize}
\item either one of $X$, $Z$ contains all four vertices in $S$; or
\item there exists some fourth bridge $W$ of $G|S$ such that $W$ also contains $v$.
\end{itemize}

Suppose for each $U_{i}, U_{j}$, where $U_{i}, U_{j} \in \{W, X, Z\}$ and $U_{i} \neq U_{j}$, that one of the following holds:

\begin{itemize}
\item either $U_{i}\cap S \neq U_{j}\cap S$; or
\item $U_{i}\cap S = U_{j}\cap S = S$.
\end{itemize}

Suppose that $v$ and $u$ are either adjacent or joined by a path in some bridge $A$ of $G|S$ other than $X$, $Y$, $Z$, or $W$ (if $W$ exists). Call this path (or edge) $P_{u}$. Suppose also that $v$ and $w$ are either adjacent or joined by a path in some bridge $B$ of $G|S$ other than $X$, $Y$, $Z$, $W$, or $A$ (if $W$ and $A$ exist). Call this path (or edge) $P_{w}$.

Then if any bridge of $G|S$ contains more than two neighbours of $v$ not in $S$, $G$ contains a $W_{7}$-subdivision centred on $v$.
\end{lem}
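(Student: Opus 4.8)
The plan is to follow the proof of Lemma \ref{twonbrs} almost verbatim, exploiting the one structural change: the bridge $M$ now carries \emph{three} neighbours of $v$ outside $S$ rather than two. Wherever the proof of Lemma \ref{twonbrs} had to work to coax a second spoke out of $M$ (and relied on the extra path $P_{t}$ to reach a total of seven spokes), here Lemma \ref{lemma2} applied to $M$ delivers three spokes of a prospective wheel in one stroke, so that the two paths $P_{u},P_{w}$ are enough to make up the remaining count. Accordingly I would split into the cases $M = Y$ and $M \neq Y$ exactly as before.

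In the case $M = Y$ — and, more generally, whenever $M$ is a bridge with $M\cap S = S$ — I argue as in Lemma \ref{twonbrs}(a): since $X\cap S$ and $Z\cap S$ are distinct unless both equal $S$, and $v\in X\cap S\cap Y\cap S\cap Z\cap S$, one can route a cycle through $\langle X\setminus S\rangle$, $\langle Y\setminus S\rangle$, $\langle Z\setminus S\rangle$ avoiding $v$ and meeting $S\setminus\{v\}$ in exactly the transition vertices required. I then apply Lemma \ref{lemma2} to $Y$ with $x = v$, obtaining a path $P$ in $\langle Y\rangle$ with both endpoints in $S$ and avoiding $v$, together with three internally disjoint $v$–$P$ paths $Q_{1},Q_{2},Q_{3}$ meeting $P$ at distinct vertices $q_{1},q_{2},q_{3}$. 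Using $P$ as the arc of the rim inside $Y$, I close it up through $\langle X\rangle$ and $\langle Z\rangle$ so that $u$ and $w$ both lie on the resulting rim cycle; the distinctness hypothesis guarantees that at least one of $X\cap S$, $Z\cap S$ contains whichever vertex of $\{t,u,w\}$ fails to be an endpoint of $P$, which is what makes the routing possible. The rim together with $Q_{1},Q_{2},Q_{3}$, one spoke drawn from $\langle X\setminus S\rangle$, one from $\langle Z\setminus S\rangle$ (each bridge contains a neighbour of $v$ not in $S$), and the paths $P_{u},P_{w}$ then constitutes a $W_{7}$-subdivision centred on $v$.

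For $M\neq Y$: if $M\cap S = S$ the argument above applies verbatim with $M$ in the role of $Y$ and $Y$ supplying one of the two single-spoke-plus-transit bridges, since at least two bridges other than $M$ still contain $v$ (this also covers the cases $M\in\{A,B\}$, where one simply picks the auxiliary $v$–$u$ or $v$–$w$ path from $\langle Y\rangle$ instead). If $|M\cap S| = 3$, say $M\cap S = \{v,a,b\}$, then $\{v,a,b\}$ is a separating set of order three with sides $\langle M\setminus S\rangle$ and $\langle G-M\rangle$; the former contains three neighbours of $v$ and the latter contains at least two (one each from two of the remaining bridges among $X,Y,Z$ that contain $v$). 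When $\{a,b\}=\{u,w\}$ this is precisely the configuration of Lemma \ref{lemmaW7}, with $P_{u}$ and $P_{w}$ as its two auxiliary paths; when $\{a,b\}$ contains $t$, the auxiliary $v$–$t$ path can be taken inside $\langle Y\rangle$ (which contains both $v$ and $t$), and Lemma \ref{lemmaW7} again applies. In every case one obtains a $W_{7}$-subdivision centred on $v$, completing the proof.

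The step I expect to be the main obstacle is the bookkeeping in the $M = Y$ case: Lemma \ref{lemma2} does not let one choose which two vertices of $\{t,u,w\}$ are the endpoints of $P$, so one must verify, using only the hypothesis that $X\cap S$ and $Z\cap S$ are distinct or both equal $S$, that the rest of the rim can always be routed through $\langle X\rangle$ and $\langle Z\rangle$ keeping $u$ and $w$ on the rim while still leaving an interior vertex of each of the $X$- and $Z$-arcs free for the corresponding spoke to attach to. The degenerate sub-configurations — for instance when $X\cap S = \{v,u,w\}$ — are where care is needed, and as elsewhere in this paper it is probably cleanest to enumerate the few possibilities for the pair $(X\cap S, Z\cap S)$ explicitly.
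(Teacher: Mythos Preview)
Your plan diverges from the paper's and has two real gaps.

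First, in the $M=Y$ case you invoke Lemma~\ref{lemma2} on the bridge $Y$ of $G|S$, but Lemma~\ref{lemma2} is stated (and proved) only for separating sets of size three; here $|S|=4$. No $4$-separator analogue is available, and proving one --- a path $P$ in $\langle Y\rangle$ between two vertices of $S\setminus\{v\}$ carrying three pairwise internally disjoint $v$--$P$ paths landing at distinct points --- is essentially the work you are trying to shortcut. Even granting such a statement, the endpoints of $P$ would be two of $\{t,u,w\}$ with no control over which; there are configurations (for instance $X\cap S=S$, $Z\cap S=\{v,a,b\}$ where $\{a,b\}$ happens to coincide with the endpoints of $P$) in which the rim cannot be closed through both $X$ and $Z$ while still leaving room for a spoke in each, so the enumeration you defer to the end is not merely bookkeeping. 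The paper sidesteps all of this: it first builds a $W_{6}$-subdivision with \emph{two} spokes already in $Y$ (by re-running the Lemma~\ref{twonbrs} argument without $P_t$), and then pushes the third neighbour $y$ onto that $W_6$ via the same path-chasing case analysis as in Lemma~\ref{twonbrs}. The ``coaxing'' you hoped to skip is exactly what stands in for the unavailable Lemma~\ref{lemma2}.

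Second, for $|M\cap S|=3$ with $t\in M\cap S$, say $M\cap S=\{v,t,u\}$, your appeal to Lemma~\ref{lemmaW7} does not go through. That lemma requires the auxiliary $v$--$t$ and $v$--$u$ paths to be edges or to lie in bridges of $G|\{v,t,u\}$ \emph{other} than the two carrying the neighbour counts, but your proposed $v$--$t$ path inside $\langle Y\rangle$ sits in the large bridge of $G|\{v,t,u\}$ containing everything except $M$. The paper handles this differently: it notes that the first edge of $P_w$ furnishes a \emph{third} neighbour of $v$ in $G-M$, so that Lemma~\ref{lemma2} (now legitimate, the separator having size three) applies to both $M$ and $G-M$; together with $P_u$ this yields seven spokes directly.
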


\begin{proof}
Suppose there exists some bridge $M$ of $G|S$ such that $M\setminus S$ contains at least three neighbours of $v$.

(a) Suppose firstly that $M = Y$.

By the same arguments used in Lemma \ref{twonbrs}, there exists some $W_{6}$ subdivision $H$ centred on $v$ such that two of the spokes are formed from $P_{u}$ and $P_{w}$, two spokes lie in $\langle (Y\setminus S)\cup \{v_{0}\}\rangle$, and two spokes lie in two other bridges of $G|S$. Let $v_{1}$ and $v_{2}$ be the spoke-meets-rim vertices of $H$ in $Y\setminus S$. Let $P_{1}$ and $P_{2}$ be the spokes of $H$ from $v$ to $v_{1}$ and from $v$ to $v_{2}$ respectively.

We know that $Y\setminus S$ contains some vertex $y$ adjacent to $v$, such that $y\notin N_{H}(v)$. Since $y$ is contained in the bridge $Y$, there must exist some path $Q_{1}$ in $\langle Y\setminus S\rangle$ from $y$ to $H\cap \langle Y\setminus S\rangle$, such that $Q_{1}$ first meets $H\cap \langle Y\setminus S\rangle$ at some vertex $p$.

\textbf{1.} Suppose firstly that $p$ does not lie on $P_{1}$ or $P_{2}$. Then form a $W_{7}$-subdivision in $G$ from $H\cup vyQ_{1}p$.

\textbf{2.} Suppose now that $p$ lies on $P_{1}$ or $P_{2}$. Without loss of generality, let $p$ lie on $P_{1}$, and let $H$, $y$, and $Q_{1}$ be chosen to minimise the distance between $p$ and $v_{1}$ along $P_{1}$.

Suppose there exists some path $Q_{2}$ from $vP_{1}p - p$ to $(H\cap \langle Y\rangle) - vP_{1}p$ that meets $H$ only at its endpoints. Such a path cannot meet $v_{1}P_{1}p$, or the distance between $p$ and $v_{1}$ is no longer minimal with respect to $H$, $y$, and $Q_{1}$. Suppose then that $Q_{2}$ runs from $vP_{1}p - p$ to $(H\cap \langle Y\rangle) - P_{1}$. Then a $W_{7}$-subdivision can be formed using parts of $H$, $Q_{1}$ and $Q_{2}$. Suppose then that no such path exists. If $Q_{1}$ is trivial, that is, $y = p$, then the 3-connectivity of $G$ is violated, since the removal of $y$ and $v$ will disconnect the graph. Assume then that $y \neq p$. Then by 3-connectivity, there must be some path $Q_{3}$ in $\langle Y\rangle$ joining $Q_{1} - p$ to $(H\cap \langle Y\rangle) - P_{1}$, such that $Q_{3}$ meets $Q_{1}$ only at one endpoint, say $p_{1}$, and meets $H$ only at the other endpoint, say $q$. If such a path does not exist, the removal of $p$ and $v$ will disconnect the graph. If $q \in \langle Y\setminus S\rangle$, form a $W_{7}$-subdivision from parts of $H$, $Q_{1}$, and $Q_{3}$. Suppose then that $q\in S$. Without loss of generality, suppose $q = w$.

The portion of the rim of $H$ that is contained in $\langle Y\rangle$ consists of a path, say $R$, and a single vertex in $S$ disjoint from this path.

\textbf{2.1.} Suppose that $w$ forms one of the endpoints of $R$. Then form a $W_{7}$-subdivision from $H$ by replacing the portion of rim formed by $v_{1}Rw$ with the path $v_{1}P_{1}pQ_{1}p_{1}Q_{3}w$, so that $vyQ_{1}p_{1}$ becomes a spoke, and $p$ becomes a spoke-meets-rim vertex instead of $v_{1}$. (If $v_{2}$ is on the path $R$, then extend the spoke formed by $P_{2}$ along $R$ so that $v_{1}$ also becomes a spoke-meets-rim vertex.)

\textbf{2.2.} Suppose then that $u$ and $t$ form the two endpoints of $R$. Thus, one of $\langle X\rangle$, $\langle Z\rangle$ contains a path in $H$ from $w$ to $t$, while the other contains a path in $H$ from $w$ to $u$.

By the same arguments used in case 2.2 of Lemma \ref{twonbrs}, a $W_{7}$-subdivision can be formed from $H$ by replacing parts of the rim so that the rim in $\langle Y\rangle$ runs from $u$ to $w$, and the three spokes in $Y$ are formed from the paths $vP_{1}p$, $vyQ_{1}p_{1}$, and either $P_{2}$ or $P_{2}\cup v_{2}Rv_{1}$ (depending on whether or not $v_{2}$ still lies on the new rim).

(b) Suppose now that $M \neq Y$.

\textbf{1.} Suppose that $|M\cap S| = S$. Then the same arguments used in (a) can be used to show that a $W_{7}$-subdivision exists in $G$ centred on $v$. (If $M$ is some bridge $U'$ where $U' \in \{X, Z, A, B\}$, then at the points in the proof where $U'$ is required, instead use the structure contained in $Y$.)

\textbf{2.} Suppose now that $|M\cap S| = 3$. Let $x_{1}$, $x_{2}, x_{3}$ be the three neighbours of $v$ in $M\setminus S$.

Let $S_{1} = M\cap S$. We know that $G - S_{1}$ can have at most two components, $\langle M\setminus S_{1}\rangle$ and $G - M$. Since $\langle M\setminus S_{1}\rangle$ contains at least three neighbours of $v$ ($x_{1}$, $x_{2}$, and $x_{3}$), and $G - M$ contains at least two neighbours of $v$ (in two other bridges of $G|S$, since at least three bridges of $G|S$ contain $v$), then if $S_{1} = \{u, v, w\}$, by Lemma \ref{lemmaW7}, a $W_{7}$-subdivision exists centred on $v$. Suppose then that one of $u, w$ is not in $S_{1}$. Without loss of generality, suppose $S_{1} = \{t, u, v\}$. Then, since $P_{w}$ is in $G - M$ (except for $v$), there are now at least three neighbours of $v$ in $G - M$ (in two of $X\setminus S$, $Y\setminus S$, and $Z\setminus S$, and along the path $P_{w}$). Thus, by applying Lemma \ref{lemma2} to both $\langle M\setminus S_{1}\rangle$ and $G - M$, and using the path $P_{u}$, a $W_{7}$-subdivision can still be formed centred on $v$.

Thus, whenever any bridge of $G|S$ contains more than two neighbours of $v$ not in $S$, $G$ contains a $W_{7}$-subdivision centred on $v$.
\end{proof}

\begin{lem}
\label{twotwonbrs}
Let $G$ be a 3-connected graph containing a separating set $S = \{t, u, v, w\}$ such that the removal of any smaller subset of $S$ will separate $G$ into at most three components. Suppose $v$ has degree $\ge 7$, and suppose there are at least three bridges of $G|S$, $X$, $Y$, and $Z$, such that $Y$ contains all four vertices in $S$, while $X$ and $Z$ each contain $v$. Suppose also that one of the following holds:

\begin{itemize}
\item either one of $X$, $Z$ contains all four vertices in $S$; or
\item there exists some fourth bridge $W$ of $G|S$ such that $W$ also contains $v$.
\end{itemize}

Suppose for each $U_{i}, U_{j}$, where $U_{i}, U_{j} \in \{W, X, Z\}$ and $U_{i} \neq U_{j}$, that one of the following holds:

\begin{itemize}
\item either $U_{i}\cap S \neq U_{j}\cap S$; or
\item $U_{i}\cap S = U_{j}\cap S = S$.
\end{itemize}

Suppose that $v$ and $u$ are either adjacent or joined by a path in some bridge $A$ of $G|S$ other than $X$, $Y$, $Z$, or $W$ (if $W$ exists). Call this path (or edge) $P_{u}$. Suppose also that $v$ and $w$ are either adjacent or joined by a path in some bridge $B$ of $G|S$ other than $X$, $Y$, $Z$, $W$, or $A$ (if $W$ and $A$ exist). Call this path (or edge) $P_{w}$.

Then if any two bridges of $G|S$ each contain more than one neighbour of $v$ not in $S$, $G$ contains a $W_{7}$-subdivision centred on $v$.
\end{lem}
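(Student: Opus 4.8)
The plan is to follow the template of Lemma \ref{twonbrs}, with the role previously played by the guaranteed third spoke $P_t$ now taken over by the second ``heavy'' bridge. Write $M_1$ and $M_2$ for two distinct bridges of $G|S$ each containing at least two neighbours of $v$ outside $S$; in particular $v\in M_1\cap S$ and $v\in M_2\cap S$. The argument splits according to whether one of $M_1,M_2$ is $Y$ (the bridge with $Y\cap S=S$) and according to the traces $M_i\cap S$.

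First I would dispose of the case in which neither $M_1$ nor $M_2$ equals $Y$ and at least one of them, say $M_1$, has $|M_1\cap S|=3$. Say $M_1\cap S=\{a,v,b\}$ with $a,b\in\{t,u,w\}$. By the hypothesis that the removal of any $3$-subset of $S$ leaves at most three components, $G-\{a,v,b\}$ has at most three components; one is $\langle M_1\setminus S\rangle$, which already contains $\ge 2$ neighbours of $v$, and the rest lie in $G-M_1$. Since $M_2\neq M_1$ and $M_1,M_2$ meet only in $S$, all $\ge 2$ neighbours of $v$ in $M_2\setminus S$ lie in $G-M_1$; together with a further neighbour of $v$ in one of the remaining bridges $X,Y,Z$ (or $W$, or along $P_u$/$P_w$) this gives $\ge 3$ neighbours of $v$ on the $G-M_1$ side. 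Then Lemma \ref{lemmaW7}, applied with the separating triple $\{a,v,b\}$ (taking the $P_w$-type and $P_u$-type paths from among $P_u$, $P_w$ and the available bridges), yields a $W_7$-subdivision centred on $v$. A small amount of care is needed to confirm that the two required ``rim'' paths through $\{a,v,b\}$ can be realised disjointly; this is where the distinct-traces hypothesis on $\{W,X,Z\}$ and the presence of both $P_u$ and $P_w$ are used.

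Next I would treat the case in which one of $M_1,M_2$ is $Y$. Exactly as in case (a) of Lemma \ref{twonbrs}, I would first build a $W_6$-subdivision $H$ centred on $v$ whose six spokes are $P_u$, $P_w$, and one spoke in each of $X$, $Y$, $Z$ (using the fourth bridge $W$, or the all-of-$S$ bridge among $X,Z$, to close the rim off $v$), and whose rim is a cycle through $\langle X\setminus S\rangle$, $\langle Y\setminus S\rangle$, $\langle Z\setminus S\rangle$ avoiding $v$; the distinct-traces hypothesis guarantees such a rim exists. Let $y$ be a neighbour of $v$ in $Y\setminus S$ with $y\notin N_H(v)$, and let $Q_1$ be a path in $\langle Y\setminus S\rangle$ from $y$ first meeting $H$ at a vertex $p$. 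If $p$ avoids the (at most two) spokes of $H$ lying in $Y$, then $H\cup vyQ_1p$ is the desired $W_7$-subdivision; otherwise I would run the same minimal-distance / auxiliary-path ($Q_2$, $Q_3$) analysis as in Lemma \ref{twonbrs}, culminating in the rim-rerouting sub-cases 2.1 and 2.2. The case where the remaining heavy bridge is not $Y$ but has trace $S$ reduces to this one by symmetry (using $Y$'s structure wherever the argument calls for the $X,Z,A,B$ structure), and the case where both heavy bridges have $3$-element traces is covered by the first paragraph.

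The main obstacle I anticipate is the rim-rerouting step (the analogue of sub-case 2.2 of Lemma \ref{twonbrs}): when the portion of $H$'s rim inside $Y$ runs between $u$ and $t$ rather than through $w$, the extra path forces $w$ onto the rim, and one must recover the lost spokes elsewhere by finding, in some bridge whose trace is exactly $\{t,u,v\}$ (or the all-of-$S$ bridge among $X,Z$, or $W$), a path between $t$ and $u$ together with a spoke from $v$ to it. Establishing that such a bridge exists is exactly where the distinct-traces hypothesis on $\{W,X,Z\}$ and the case assumption ``$X$ or $Z$ has trace $S$, or $W$ exists'' are indispensable; and the ``at most three components'' condition is what prevents a fourth component from appearing and breaking the neighbour count when a $3$-trace heavy bridge is split off.
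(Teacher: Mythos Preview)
Your overall plan---let the second heavy bridge replace the missing $P_t$---is the right instinct, but the execution in your second paragraph has a genuine gap. You write that $H$ is a $W_6$-subdivision ``whose six spokes are $P_u$, $P_w$, and one spoke in each of $X$, $Y$, $Z$''. That is five spokes, not six: in case~(a) of Lemma~\ref{twonbrs} the sixth spoke was $P_t$, which you no longer have. Adding the extra neighbour $y\in Y\setminus S$ then upgrades a $W_5$ to a $W_6$, not to a $W_7$, and the whole $Q_1,Q_2,Q_3$ analysis you propose to rerun would at best recover Lemma~\ref{twonbrs}'s conclusion under its own hypotheses, not the present ones. Crucially, in this paragraph you never use $M_2$ at all, so the compensation you announced in your plan never actually happens.

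The paper fixes this by a different and cleaner mechanism: rather than one $W_6$ plus an augmentation, it uses Lemma~\ref{lemma1} to place \emph{two} spokes inside each heavy bridge. This gives two $W_6$-subdivisions $H_1$ and $H_2$ (each with spokes $P_u,P_w$, two spokes in $M_i$, and one spoke in each of two further bridges). If the rim arcs of $H_1$ in $M_1$ and of $H_2$ in $M_2$ have different endpoint-pairs in $\{t,u,w\}$, one simply splices $H_2\cap\langle M_2\rangle$ into $H_1$, gaining an eighth spoke and hence a $W_7$. If the endpoint-pairs coincide, the distinct-traces hypothesis (case~2.1) or a direct Lemma~\ref{lemma1}/\ref{lemma2} count on the component containing $X,Y,Z$ (case~2.2) finishes the job. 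This combine-two-$W_6$'s argument sidesteps the entire minimal-distance/auxiliary-path machinery you proposed to replay, and it makes essential, explicit use of both $M_1$ and $M_2$.

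Your first paragraph (the $3$-trace case via Lemma~\ref{lemmaW7}) is closer in spirit to the paper's case~2.2, but you should be more careful: Lemma~\ref{lemmaW7} needs the $\ge 3$ neighbours on the $G-M_1$ side to lie in a \emph{single} bridge of $G|\{a,v,b\}$, and your ``at most three components'' observation only limits the number of bridges, not which one carries those neighbours. The paper handles this by noting that $Y$ (with full trace) glues $X,Y,Z$ into one component of $G-\{x_1,v,x_2\}$.
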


\begin{proof}
Suppose there exist bridges $M_{1}$ and $M_{2}$ of $G|S$ such that $M_{1}\setminus S$ and $M_{2}\setminus S$ each contain at least two neighbours of $v$.

By the same kind of arguments used early in the proof of Lemma \ref{twonbrs}, there exists some $W_{6}$ subdivision $H_{1}$ centred on $v$ such that two of the spokes are formed from $P_{u}$ and $P_{w}$, two spokes lie in $\langle (M_{1}\setminus S)\cup \{v_{0}\}\rangle$, and two spokes lie in two other bridges of $G|S$. There also exists some other $W_{6}$ subdivision $H_{2}$ centred on $v$ such that two of the spokes are formed from $P_{u}$ and $P_{w}$, two spokes lie in $\langle (M_{2}\setminus S)\cup \{v_{0}\}\rangle$, and two spokes lie in two other bridges of $G|S$.

Suppose the rim of $H_{1}$ in $\langle M_{1}\rangle$ runs from $x_{1}$ to $x_{2}$, where $x_{1} \neq x_{2}$ and $\{x_{1}, x_{2}\} \subseteq \{t, u, w\}$. Let $x_{3}$ refer to the third vertex in $\{t, u, w\}$, such that $x_{3} \notin \{x_{1}, x_{2}\}$.

\textbf{1.} Suppose the rim of $H_{2}$ in $\langle M_{2}\rangle$ does not run from $x_{1}$ to $x_{2}$.

Without loss of generality, suppose the rim of $H_{2}$ in $M_{2}$ runs from $x_{2}$ to $x_{3}$. Let $U$ be the bridge of $G|S$ such that $\langle U\rangle$ contains a path from $x_{2}$ to $x_{3}$ that forms part of the rim of $H_{1}$. Then form a $W_{7}$-subdivision from $H_{1}$ by replacing $H_{1}\cap \langle U\rangle$ with $H_{2}\cap \langle M_{2}\rangle$.

\textbf{2.} Suppose then that the rim of $H_{2}$ in $\langle M_{2}\rangle$ runs from $x_{1}$ to $x_{2}$.

\textbf{2.1.} Suppose firstly that $M_{1}, M_{2} \in \{W, X, Y, Z\}$.

Since $M_{1}$ and $M_{2}$ each contain $x_{1}$, $v$, and $x_{2}$, and $M_{1}\cap S \neq M_{2}\cap S$ unless $M_{1}\cap S = M_{2}\cap S = S$, one of $M_{1}$, $M_{2}$ must contain all of $S$. Suppose without loss of generality that $M_{2}\cap S = S$. 

There exists a path $R$ in $H_{2}\cap \langle M_{2}\rangle$ from $x_{1}$ to $x_{2}$ that meets $S$ only at its endpoints, such that $R$ forms part of the rim of $H_{2}$. Since $M_{2}$ contains all four vertices in $S$, there exists some neighbour $x'_{3}$ of $x_{3}$ in $M_{2}\setminus S$, and since $x'_{3}$ is contained in the bridge $M_{2}$, there exists a path $R'$ in $\langle M_{2}\setminus S\rangle$ from $x'_{3}$ to $R$ that meets $R$ only at some vertex $r$. Let $R_{1} = x_{3}x'_{3}R'rRx_{1}$, and let $R_{2} = x_{3}x'_{3}R'rRx_{2}$. Let $U_{1}$ be the bridge of $G|S$ such that $\langle U_{1}\rangle$ contains a path from $x_{1}$ to $x_{3}$ that forms part of the rim of $H_{1}$. Let $U_{2}$ be the bridge of $G|S$ such that $\langle U_{2}\rangle$ contains a path from $x_{2}$ to $x_{3}$ that forms part of the rim of $H_{1}$. If both of the spoke-meets-rim vertices in $H_{2}\cap \langle M_{2}\rangle$ lie on $R_{1}$, form a $W_{7}$-subdivision from $H_{1}$ by replacing $H_{1}\cap \langle U_{1}\rangle$ with $R_{1}$ and the two paths that form spokes in $H_{2}\cap \langle M_{2}\rangle$. If both of the spoke-meets-rim vertices in $H_{2}\cap \langle M_{2}\rangle$ lie on $R_{2}$, form a $W_{7}$-subdivision from $H_{1}$ by replacing $H_{1}\cap \langle U_{2}\rangle$ with $R_{2}$ and the two paths that form spokes in $H_{2}\cap \langle M_{2}\rangle$. Suppose then that one spoke-meets-rim vertex in $H_{2}\cap\langle M_{2}\rangle$, say $v_{1}$, lies on $R_{1} - r$, and the other, say $v_{2}$, lies on $R_{2} - r$. Then form a $W_{7}$-subdivision from $H_{1}$ by removing $H_{1}\cap \langle U_{1}\rangle$, and adding the path $R_{1}$, the two paths that form spokes in $H_{2}\cap \langle M_{2}\rangle$, and the path $v_{2}R_{2}r$.

\textbf{2.2.} Suppose now that one of $M_{1}$, $M_{2}$ is not in $\{W, X, Y, Z\}$.

If either of $M_{1}$ or $M_{2}$ contain all of $S$, then the same argument used in case 2.1 can be applied. Assume then that $M_{1}\cap S = M_{2}\cap S = \{x_{1}, v, x_{2}\}$.

By 3-connectivity, there are two disjoint paths $P_{1}$ and $P_{2}$ in $\langle M_{2}\rangle$, such that these paths run from $v$ to $x_{1}$ and $v$ to $x_{2}$ respectively, and meet $S$ only at their endpoints. By using these two paths, and by applying Lemma \ref{lemma1} to $M_{1}$ and Lemma \ref{lemma2} to the component of $G - \{x_{1}, v, x_{2}\}$ which contains $X$, $Y$, and $Z$, a $W_{7}$-subdivision can be formed centred on $v$.

Thus, whenever any two bridges of $G|S$ each contain more than one neighbour of $v$ not in $S$, $G$ contains a $W_{7}$-subdivision centred on $v$.
\end{proof}

\begin{lem}
\label{threetwonbrs}
Let $G$ be a 3-connected graph containing a separating set $S = \{t, u, v, w\}$ such that the removal of any smaller subset of $S$ will separate $G$ into at most three components. Suppose $v$ has degree $\ge 7$, and suppose there are at least three bridges of $G|S$, $X$, $Y$, and $Z$, such that $Y$ contains all four vertices in $S$, while $X$ and $Z$ each contain $v$. Suppose also that one of the following holds:

\begin{itemize}
\item either one of $X$, $Z$ contains all four vertices in $S$; or
\item there exists some fourth bridge $W$ of $G|S$ such that $W$ also contains $v$.
\end{itemize}

Suppose for each $U_{i}, U_{j}$, where $U_{i}, U_{j} \in \{W, X, Z\}$ and $U_{i} \neq U_{j}$, that one of the following holds:

\begin{itemize}
\item either $U_{i}\cap S \neq U_{j}\cap S$; or
\item $U_{i}\cap S = U_{j}\cap S = S$.
\end{itemize}

Suppose that $v$ and $u$ are either adjacent or joined by a path in some bridge $A$ of $G|S$ other than $X$, $Y$, $Z$, or $W$ (if $W$ exists). Call this path (or edge) $P_{u}$.

Then if some bridge of $G|S$ contains more than one neighbour of $v$ not in $S$, and some other bridge of $G|S$ contains more than two neighbours of $v$ not in $S$, $G$ contains a $W_{7}$-subdivision centred on $v$.
\end{lem}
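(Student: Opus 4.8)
The plan is to follow the template of Lemmas \ref{twonbrs}, \ref{threenbrs} and \ref{twotwonbrs}: first extract a $W_6$-subdivision $H$ centred on $v$ that leaves one neighbour of $v$ unused inside the bridge with the most neighbours of $v$, then promote $H$ to a $W_7$-subdivision by tracing a path from that spare neighbour back into $H$. Write $M_2$ for the bridge of $G|S$ whose interior contains at least three neighbours of $v$, and $M_1\neq M_2$ for the bridge whose interior contains at least two.

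To build $H$ I would draw two $v$-spokes from $M_2$ via Lemma \ref{lemma1} (which also supplies a path $R_2$ with both endpoints in $\{t,u,w\}$, and, since $M_2$ has a third neighbour of $v$, leaves such a neighbour $y$ off $R_2$ and off those two spokes), two $v$-spokes from $M_1$ via Lemma \ref{lemma1} (supplying a path $R_1$ with both endpoints in $\{t,u,w\}$), the spoke $P_u$, and one further $v$-spoke taken from whichever of $X,Y,Z$ is distinct from $M_1$ and $M_2$ --- at least one always is, since all three of $X,Y,Z$ contain $v$. Because $|\{t,u,w\}|=3$ the endpoint pairs of $R_1$ and $R_2$ overlap, so $R_1$ and $R_2$ glue into a cycle missing $v$ (closing it with an arc through $Y$, which meets every vertex of $S$, when the endpoint pairs share only one vertex); re-route this rim through $u$ by an arc of $Y$ if $u$ is not already on it, so that $P_u$ is a genuine spoke. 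When $M_2\cap S = S$, Lemma \ref{lemma1} does not literally apply, and $R_2$, its two spokes, and the spare neighbour $y$ are produced from $3$-connectivity instead, exactly as in case 2.2.1 of the proof of Lemma \ref{twonbrs}.

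It then remains to promote $H$ to a $W_7$-subdivision using the spare neighbour $y$ of $v$ in $M_2\setminus S$. Here $H$ restricted to $\langle M_2\rangle$ consists of two spokes, an arc of the rim, and two vertices of $S$ (the endpoints of $R_2$) --- precisely the situation of case (a) of the proof of Lemma \ref{twonbrs} --- so I would rerun that argument essentially verbatim: trace a path $Q_1$ in $\langle M_2\setminus S\rangle$ from $y$ back to $H$; if $Q_1$ first meets $H$ away from the two $M_2$-spokes, adjoin it as a seventh spoke; otherwise, invoking $3$-connectivity to produce auxiliary paths $Q_2$ and $Q_3$, splice these in so that some internal vertex of $Q_1$ becomes a new spoke-meets-rim vertex while the rim is reconfigured to still meet only $\{t,u,w\}$. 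The main obstacle is the profusion of sub-configurations at this last step --- in particular the subcase where $Q_1$ returns to a vertex of $S$, which forces a reassignment of which bridges of $G|S$ carry which pieces of $H$; here the hypothesis that every proper subset of $S$ leaves at most three components, together with the $U_i\cap S$-distinctness hypotheses, is what bounds the possibilities, and each is then dispatched by the moves already used in cases 2.1 and 2.2 of the proof of Lemma \ref{twonbrs} and in Lemma \ref{twotwonbrs}. The one genuinely new feature compared with those lemmas is the absence of the spokes $P_w$ and $P_t$, so the ``$w$-side'' and ``$t$-side'' spokes of $H$ must always be obtained from $X$, $Y$, and $Z$ themselves; this is exactly what the assumption that $Y$ contains all of $S$ --- hence contains internal $v$--$w$ and $v$--$t$ paths --- is used for.
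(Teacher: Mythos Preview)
Your approach is essentially the same as the paper's: build a $W_6$-subdivision $H$ centred on $v$ with exactly two spokes in the bridge that has three neighbours of $v$ (the paper calls this $M_1$, you call it $M_2$), then use the third, unused neighbour in that bridge to promote $H$ to a $W_7$-subdivision. The paper's proof is in fact only three sentences long --- it simply says that the $W_6$ exists ``by the same arguments used in Lemma~\ref{twotwonbrs}'' and that the promotion succeeds ``by the same arguments used in Lemma~\ref{threenbrs}'' --- so your more explicit reconstruction of the gluing and path-tracing is filling in exactly the details the paper delegates. One small correction: for the promotion step you should be invoking the case analysis of Lemma~\ref{threenbrs} rather than Lemma~\ref{twonbrs}, since $H\cap\langle M_2\rangle$ carries \emph{two} spokes, not one; the arguments are close cousins, but the two-spoke version is the one that actually matches your configuration.
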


\begin{proof}
Suppose there exist bridges $M_{1}$ and $M_{2}$ of $G|S$ such that $M_{1}\setminus S$ contains at least three neighbours of $v$, and $M_{2}\setminus S$ contains at least two neighbours of $v$.

By the same arguments used in Lemma \ref{twotwonbrs}, there exists some $W_{6}$ subdivision $H$ centred on $v$ such that exactly two spokes of $H$ lie in $\langle M_{1}\rangle$.

We know that $M_{1}\setminus S$ contains some vertex $x$ adjacent to $v$, such that $x\notin N_{H}(v)$. By the same arguments used in Lemma \ref{threenbrs}, there exists some path or paths from $x$ to $H\cap \langle M_{1}\rangle$ which can be used to construct a $W_{7}$-subdivision centred on $v$.
\end{proof}

\begin{lem}
\label{twotwotwonbrs}
Let $G$ be a 3-connected graph containing a separating set $S = \{t, u, v, w\}$ such that the removal of any smaller subset of $S$ will separate $G$ into at most three components. Suppose $v$ has degree $\ge 7$, and suppose there are at least three bridges of $G|S$, $X$, $Y$, and $Z$, such that $Y$ contains all four vertices in $S$, while $X$ and $Z$ each contain $v$. Suppose also that one of the following holds:

\begin{itemize}
\item either one of $X$, $Z$ contains all four vertices in $S$; or
\item there exists some fourth bridge $W$ of $G|S$ such that $W$ also contains $v$.
\end{itemize}

Suppose for each $U_{i}, U_{j}$, where $U_{i}, U_{j} \in \{W, X, Z\}$ and $U_{i} \neq U_{j}$, that one of the following holds:

\begin{itemize}
\item either $U_{i}\cap S \neq U_{j}\cap S$; or
\item $U_{i}\cap S = U_{j}\cap S = S$.
\end{itemize}

Suppose that $v$ and $u$ are either adjacent or joined by a path in some bridge $A$ of $G|S$ other than $X$, $Y$, $Z$, or $W$ (if $W$ exists). Call this path (or edge) $P_{u}$.

Then if there exist three bridges of $G|S$ that each contain more than one neighbour of $v$ not in $S$, $G$ contains a $W_{7}$-subdivision centred on $v$.
\end{lem}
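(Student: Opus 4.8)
The plan is to follow the strategy of Lemmas~\ref{twotwonbrs} and \ref{threetwonbrs} and assemble a $W_7$-subdivision centred on $v$ directly from the three neighbour-rich bridges together with $P_u$. Write $M_1,M_2,M_3$ for the three bridges of $G|S$ that each contain at least two neighbours of $v$ not in $S$. I would begin with two structural observations: (i) each $M_i\setminus S$ is nonempty, so by $3$-connectivity $M_i\cap S$ consists of $v$ together with at least two of $t,u,w$, and there are at least four edges between $S$ and $M_i\setminus S$, so Lemma~\ref{lemma1} (with distinguished vertex $v$) is available in each $M_i$, producing a path in $\langle M_i\rangle$ with both endpoints in $S$ missing $v$ together with two internally disjoint $v$-paths to distinct interior vertices of it --- a ``rim-arc with two spokes'' gadget; and (ii) the case in which some $M_i$ equals $Y$ or satisfies $M_i\cap S=S$ should be disposed of first, since such a bridge only gives more freedom in choosing where the rim crosses $S$, exactly as in the opening case-splits of Lemmas~\ref{twonbrs}, \ref{threenbrs}, \ref{twotwonbrs}.

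In the main case each $M_i$ meets $\{t,u,w\}$ in exactly two of its three vertices. If the three two-element sets so obtained are pairwise distinct --- equivalently, if $\{t,u\}$, $\{u,w\}$, $\{w,t\}$ can be assigned one to each $M_i$ with the assigned pair contained in $M_i\cap S$ --- then I would route the rim so that it crosses $S$ precisely at $t$, $u$, $w$ with one arc inside each $M_i$, taking two spoke-meets-rim vertices from each arc by (i), and use $P_u$ as a seventh spoke meeting the rim at the so-far interior vertex $u$; this gives $2+2+2+1=7$ spokes. Otherwise two of the bridges, say $M_1$ and $M_2$, meet $\{t,u,w\}$ in the same pair $\{a,b\}$; here the hypothesis that removing any proper subset of $S$ leaves at most three components is used, since it forces $M_3$ to contain the third vertex $c$ of $\{t,u,w\}$ (otherwise $M_1\setminus S$, $M_2\setminus S$, $M_3\setminus S$ and the part of $G$ containing $c$ would all be separated by $\{v,a,b\}$). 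In that situation I would route the rim through $M_1$ (an $a$--$b$ arc), through $M_3$ (an arc from $c$ to $a$ or $b$), and through $Y$, which contains all of $S$ and supplies the remaining arc, taking two spoke-meets-rim vertices from each of the $M_1$- and $M_3$-arcs and one from the $Y$-arc; a spoke through $M_2$ attaching at a still-interior vertex of $\{a,b\}$, together with $P_u$ attaching at its endpoint in $\{t,u,w\}$, then supplies the sixth and seventh spokes.

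The main obstacle, as throughout this section, is the case management around the rim routing: one must check the possible configurations of the sets $M_i\cap\{t,u,w\}$, reorder or relabel the bridges, and handle the sub-cases where the bridge $A$ carrying $P_u$ coincides with one of the $M_i$ --- in which case one recovers the lost spoke by using the \emph{second} neighbour of $v$ in that bridge and the augmentation argument of Lemma~\ref{threenbrs} (trace a path from the spare neighbour into the relevant part of the constructed subdivision, pass to the configuration minimising distance to the nearest spoke-meets-rim vertex, and invoke $3$-connectivity for the auxiliary paths). None of this requires an idea beyond Lemmas~\ref{lemma1}, \ref{lemma2}, \ref{lemmaW7}, \ref{twonbrs}, \ref{threenbrs} and \ref{twotwonbrs}; in particular, whenever the analysis produces a three-element separating set across which $v$ is still neighbour-rich on every side, Lemma~\ref{lemmaW7} applies directly. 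Thus the content of the proof lies entirely in organising the cases, and I expect that to be the bulk of the work.
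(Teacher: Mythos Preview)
Your approach is correct and workable, but it takes a different route from the paper. The paper's proof is three sentences: it invokes the \emph{intermediate} $W_6$-constructions inside the proof of Lemma~\ref{twotwonbrs} twice --- once with the pair $(M_1,M_2)$ to obtain a $W_6$-subdivision $H_1$ (one spoke from $P_u$, two in $M_1$, two in $M_2$, one in some other bridge), and once with $(M_2,M_3)$ to obtain $H_2$ --- and then appeals again to the rim-rerouting/merging argument of Lemma~\ref{twotwonbrs} to splice $H_1$ and $H_2$ into a $W_7$. In other words, the paper treats Lemma~\ref{twotwonbrs} as a black box that upgrades ``two bridges with two neighbours each'' to ``combinable $W_6$'s'', and the present lemma just iterates that.

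Your plan is more constructive: you classify the three sets $M_i\cap\{t,u,w\}$, route the rim explicitly through the appropriate bridges, and assemble seven spokes directly from the Lemma~\ref{lemma1} gadgets and $P_u$. This is a genuinely different organisation --- you never pass through a pair of $W_6$'s --- and it buys you a self-contained argument at the cost of the case management you already flag (the coincidence $A=M_i$, and the sub-case where two of the $M_i$ share the same pair in $\{t,u,w\}$). The paper's route is shorter precisely because all that case work has already been absorbed into Lemma~\ref{twotwonbrs}; your route would reproduce much of it. Either is fine, but if you want brevity, note that you can collapse most of your argument to: build $H_1$ and $H_2$ as above and quote the merging step of Lemma~\ref{twotwonbrs} verbatim.
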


\begin{proof}
Suppose there exist bridges $M_{1}$, $M_{2}$, and $M_{3}$ of $G|S$ such that each contain at least two neighbours of $v$ not in $S$.

By the same arguments used in Lemma \ref{twotwonbrs}, there exists some $W_{6}$ subdivision $H_{1}$ centred on $v$ such that one spoke is formed from $P_{u}$, two spokes lie in $\langle (M_{1}\setminus S)\cup \{v_{0}\}\rangle$, two spokes lie in $\langle (M_{2}\setminus S)\cup \{v_{0}\}\rangle$, and one spoke lies in some other bridge of $G|S$. There also exists some other $W_{6}$ subdivision $H_{2}$ centred on $v$ such that one spoke is formed from $P_{u}$, two spokes lie in $\langle (M_{2}\setminus S)\cup \{v_{0}\}\rangle$, two spokes lie in $\langle (M_{3}\setminus S)\cup \{v_{0}\}\rangle$, and one spoke lies in some other bridge of $G|S$. Again by using the arguments in Lemma \ref{twotwonbrs}, a $W_{7}$-subdivision can be formed centred on $v$ using parts of $H_{1}$ and $H_{2}$.
\end{proof}

\begin{lem}
\label{threebridges}
Let $G$ be a 3-connected graph with at least 19 vertices, containing no internal 3-edge-cutsets, no type 1, 2, 2a, or 3 edge-vertex-cutsets, and on which Reductions \ref{r1}A, \ref{r1}B, and \ref{r1}C cannot be performed. Suppose $G$ contains a separating set $S = \{u, v, w\}$, such that there exist at least three bridges, $X$, $Y$, $Z$, of $G|S$. Suppose that $Z\setminus S$ contains at least three neighbours of $v$. Suppose that $v$ is either adjacent to $u$ or joined by a path in some fourth bridge $A$ of $G|S$ other than $X$, $Y$, or $Z$. Call this path (or edge) $P_{u}$. Suppose also that $v$ is either adjacent to $w$ or joined by a path in some bridge $B$ of $G|S$ other than $X$, $Y$, $Z$, or $A$ (if $A$ exists). Call this path (or edge) $P_{w}$. Then there exists a $W_{7}$-subdivision in $G$.
\end{lem}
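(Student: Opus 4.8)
The plan is to extract a large partial wheel from inside the bridge $Z$ and then graft on further spokes coming from the other bridges and from the paths $P_u$ and $P_w$.

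\textbf{Step 1: a partial wheel in $Z$.} Since $Z\setminus S$ contains at least three neighbours of $v$, apply Lemma~\ref{lemma2} to the $3$-element separating set $S$, the bridge $Z$, and the vertex $v$ (in the role of $x$). This gives a path $P$ in $\langle Z\rangle$ with only its two endpoints in $S$ and avoiding $v$ --- necessarily a $u$--$w$ path --- together with three pairwise internally disjoint spokes $Q_1,Q_2,Q_3$ from $v$ to distinct vertices $q_1,q_2,q_3$ of $P$. The paths (or edges) $P_u$ and $P_w$ supply two more spokes, ending at $u$ and at $w$; since $P_u,P_w$ lie in bridges different from $Z$ and from each other, they are internally disjoint from $P,Q_1,Q_2,Q_3$. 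Completing the rim by a $w$--$u$ path in $G-(Z\setminus S)-v$ (one exists by $3$-connectivity) yields a $W_5$-subdivision centred on $v$, and it remains only to produce two further spokes from $v$ meeting the rim-completion.

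\textbf{Step 2: the easy case.} If some bridge other than $Z$ contains at least two neighbours of $v$ in its interior, then Lemma~\ref{lemmaW7} (taking $Z$ and that bridge as its two bridges, and $P_u,P_w$ as the two extra spokes; a direct variant handles the case where that bridge is the one carrying $P_u$ or $P_w$) produces a $W_7$-subdivision centred on $v$. So assume henceforth that every bridge of $G|S$ other than $Z$ contains exactly one neighbour of $v$ not in $S$ --- at least one because such a bridge, having nonempty interior, must by $3$-connectivity attach to all of $\{u,v,w\}$, and at most one by the case assumption.

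\textbf{Step 3: structural constraints.} If $X\setminus S$ were a single vertex then Reduction~\ref{r1}A could be performed on $G$ (its hypotheses on $P_u,P_w$ match ours with the roles of $u$ and $w$ interchanged), a contradiction; so every nontrivial bridge other than $Z$ has at least two interior vertices. To avoid an internal $3$-edge-cutset, at least four edges then join $X\setminus S$ to $S$, of which $v$ contributes only one, so one of $u,w$ has at least two neighbours in $X\setminus S$, and likewise in $Y\setminus S$. Moreover, if $|X\setminus S|\ge 4$ then $\{u,w\}$ together with the edge from $v$ to its unique neighbour in $X\setminus S$ forms a type~2 edge-vertex-cutset, a contradiction; hence $|X\setminus S|\in\{2,3\}$, and the same holds for every bridge other than $Z$. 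Counting the neighbours of $v$ --- at least three in $Z\setminus S$, one in $X\setminus S$, one in $Y\setminus S$, one towards $u$ along $P_u$, one towards $w$ along $P_w$, all distinct --- gives $\deg(v)\ge 7$; and, since $|V(G)|\ge 19$ while every bridge but $Z$ has at most three interior vertices, either $Z$ is large or $G|S$ has several further bridges.

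\textbf{Step 4: finishing, and the main obstacle.} It remains to construct a $W_7$-subdivision under the constraints of Step~3. The tool is to apply Lemma~\ref{lemma1} inside a bridge in which $u$ or $w$ has two neighbours, obtaining a two-spoke fan rooted there, and to combine this with the rim-path $P$, the spokes $P_u,P_w$, and --- according to the configuration --- a neighbour of $v$ in a further bridge or in $Z$, assembling a $W_7$-subdivision centred on whichever of $v,u,w$ then has degree at least seven. The genuinely hard part is that this is a finite but fiddly case analysis, organised by the values $|X\setminus S|,|Y\setminus S|\in\{2,3\}$, by how the neighbours of $u,v,w$ are distributed among the bridges, and by the number of bridges, everything pinned down by the $19$-vertex hypothesis; the excluded edge-vertex-cutsets (types 1, 2, 2a, 3) and the unperformability of Reductions~\ref{r1}A--\ref{r1}C are exactly what kill the residual awkward configurations, notably the one in which the neighbours of $v$ sit one to a bridge so that no single rim can collect enough spokes. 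As with the comparable arguments elsewhere in the paper, the bookkeeping, rather than any single idea, is the real work here.
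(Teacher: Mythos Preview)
Your Steps 1--3 are essentially the paper's opening moves: reduce to the case where every bridge other than $Z$ has exactly one neighbour of $v$ in its interior, observe each such bridge has interior size in $\{2,3\}$ (else Reduction~\ref{r1}A or a type~2 edge-vertex-cutset), and note that one of $u,w$ has two neighbours in $X\setminus S$ and one in $Y\setminus S$.

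Step 4, however, is not a proof; it is a description of the shape a proof might have, and it misidentifies the main mechanism. The paper's argument does \emph{not} primarily ``assemble a $W_7$-subdivision centred on whichever of $v,u,w$ then has degree at least seven''; in most terminal branches the conclusion is a \emph{contradiction} with one of the forbidden structures, not a direct wheel. The case analysis is organised by a dichotomy you do not mention: having fixed (say) $u$ with two neighbours in $X\setminus S$, one splits on whether it is $w$ or again $u$ that has two neighbours in $Y\setminus S$.

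In the first case ($u$ in $X$, $w$ in $Y$), Lemma~\ref{lemmaW7} centred on $u$ or $w$ disposes of the sub-case where either has $\ge 3$ neighbours in $Z\setminus S$; otherwise a type~3 edge-vertex-cutset forces $|Z\setminus S|\le 3$, so the $19$-vertex bound manufactures at least three further bridges, and one finishes with type~2a cutsets or Reduction~\ref{r1}B on isomorphic $2$-vertex bridges. In the second case ($u$ in both $X$ and $Y$), one first gets $|Y\setminus S|=2$ (else internal $4$-edge-cutset or Lemma~\ref{lemmaW7}), then uses that $\langle Y\rangle$ sits as a subdivision in two other bridges to fire Reduction~\ref{r1}B; when that is blocked, a type~2 cutset bounds $|Z\setminus S|\le 3$, the vertex count again forces many small bridges, and Reduction~\ref{r1}B or~\ref{r1}C applies because three of them must be comparable under subdivision.

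None of this bookkeeping is exotic, but it has to be done: the specific interplay between the bridge-size bounds, the neighbour-distribution constraints, and the \emph{particular} reductions (\ref{r1}B versus \ref{r1}C) and cutset types (2 versus 2a versus 3) is what closes each branch. Your Step~4 gestures at the right toolbox but does not open it.
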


\begin{proof}
If any bridge of $G|W$ other than $Z$ contains more than one neighbour of $v$ not in $S$, then by Lemma \ref{lemmaW7} a $W_{7}$-subdivision can be formed centred on $v$. Assume then that this is not the case. Thus, if any bridge other than $Z$ contains more than three vertices not in $S$, a type 2 edge-vertex-cutset can be formed from $u$, $w$, and an edge incident with $v$ in that bridge. Assume then that every bridge of $G|S$ other than $Z$ contains at most three vertices not in $S$.

$X\setminus S$ must contain more than one vertex, otherwise Reduction \ref{r1}A can be performed on $G$. Thus, to prevent an internal 3-edge-cutset, there must exist some fourth edge joining $X\setminus S$ to $S$. Some member $x$ of $S$ then must have at least two neighbours in $X\setminus S$. Since $v$ can have at most one neighbour in $X\setminus S$, we can suppose that $x \in \{u, w\}$. Assume $x = u$ without loss of generality.

$Y\setminus S$ must also contain more than one vertex, again to avoid enabling Reduction \ref{r1}A. By the same argument, some member $y \in \{u, w\}$ must have at least two neighbours in $Y\setminus S$. Consider the two options:

(a) Suppose $y = w$.

If $u$ has more than two neighbours in $Z\setminus S$ then by Lemma \ref{lemmaW7} a $W_{7}$-subdivision can be formed centred on $u$. Similarly, if $w$ has more than two neighbours in $Z\setminus S$ then a $W_{7}$-subdivision can be formed centred on $w$. Assume then that $u$ and $w$ each have at most two neighbours in $Z\setminus S$. Then if $Z\setminus S$ contains more than three vertices, there exists a type 3 edge-vertex-cutset in $G$. Assume then that $|Z\setminus S| \le 3$.

Since each bridge of $G|S$ contains at most three vertices not in $S$, and $|V(G)| \ge 19$, there must exist at least three more bridges of $G|S$ other than $X$, $Y$, and $Z$. By the same arguments used for $X$ and $Y$, these bridges must each contain at least two but no more than three vertices not in $S$, and can contain no more than one neighbour of $v$ not in $S$.

Suppose every bridge of $G|S$ other than $Z$ contains only two vertices not in $S$. Then, since there are at least five such bridges of $G|S$, there must exist some bridge that is contained as a subdivision in at least two other bridges, thus enabling Reduction \ref{r1}B. Suppose then that this is not the case; thus, there exists some bridge $U'$ of $G|S$ other than $Z$ such that $|U'\setminus S| = 3$.

If any bridge of $G|S$ other than $Y$ has three neighbours of $w$ not in $S$, then by Lemma \ref{lemmaW7} a $W_{7}$-subdivision can be formed. Suppose then that every bridge of $G|S$ other than $Y$ has at most two neighbours of $w$ not in $S$. Similarly, if any bridge of $G|S$ other than $X$ has three neighbours of $u$ not in $S$, then by Lemma \ref{lemmaW7} a $W_{7}$-subdivision can be formed. Suppose then that every bridge of $G|S$ other than $X$ has at most two neighbours of $u$ not in $S$. 

Suppose $U' \notin \{X, Y\}$. Then $u$ and $w$ each have at most two neighbours in $U'\setminus S$. Since $U' \neq Z$, $U'\setminus S$ contains only one neighbour of $v$, say $v'$. Thus, a type 2a edge-vertex-cutset can be formed from $u$, $w$, and $vv'$.

Suppose then that $U' \in \{X, Y\}$. Without loss of generality, suppose that $U' = X$, thus, $|X\setminus S| = 3$. If $u$ has at most two neighbours in $X\setminus S$, then a type 2a edge-vertex-cutset can be formed, as in the previous paragraph. Suppose then that $u$ has three neighbours in $X\setminus S$. If any bridge of $G|S$ other than $X$ has more than one neighbour of $u$ not in $S$, then by Lemma \ref{lemmaW7} a $W_{7}$-subdivision can be formed centred on $u$. Suppose then that each bridge of $G|S$ other than $X$ contains at most one neighbour of $u$ not in $S$. Let $\mathcal{B}$ be the set of bridges of $G|S$ other than $X, Y, Z$. For each bridge $B_{i} \in \mathcal{B}$, $B_{i}\setminus S$ contains exactly one neighbour of $u$, one neighbour of $v$, and two neighbours of $w$ (since otherwise an internal 3-edge-cutset would exist). Since there are only four edges joining $S$ to each $B_{i}\setminus S$, there can be at most two vertices in each $B_{i}\setminus S$, otherwise an internal 4-edge-cutset exists. Thus, each of the bridges in $\mathcal{B}$ are isomorphic. Since $|\mathcal{B}| \ge 3$, Reduction 1B can be performed on $G$.

(b) Suppose now that $y = u$, and that $w$ has only one neighbour in $Y\setminus S$.

If $u$ has more than two neighbours in $Y\setminus S$, then by Lemma \ref{lemmaW7} a $W_{7}$-subdivision can be formed from bridges $X$ and $Y$. Assume then that $u$ has exactly two neighbours in $Y\setminus S$. Thus, there are exactly four edges joining $S$ to $Y\setminus S$. To avoid an internal 4-edge-cutset, then, $|Y\setminus S| = 2$.

Suppose $u$ has at least two neighbours in $Z\setminus S$. Then Reduction \ref{r1}B can be performed, since $\langle Y\rangle$ is contained as a subdivision in both $\langle X\rangle$ and $\langle Z\rangle$. Suppose then that $u$ has only one neighbour, say $u'$, in $Z\setminus S$. Then, to avoid creating a type 2 edge-vertex-cutset from $v$, $w$, and the edge $uu'$, $|Z\setminus S| \le 3$. Thus, since $|V(G)| \ge 19$, there must be at least three more bridges of $G|S$ other than $X$, $Y$, and $Z$.

If any bridge of $G|S$ other than $X$ and $Y$ contains more than one neighbour of $u$ not in $S$, then $\langle Y\rangle$ is contained as a subdivision in the induced subgraph formed by that bridge, as well as in $\langle X\rangle$, and so Reduction \ref{r1}B can be performed. Thus, there exist at least three bridges other than $Z$ which contain only one neighbour of $u$ not in $S$. Since each of these bridges also contains only one neighbour of $v$ not in $S$, and has at most three vertices not in $S$, at least one of the bridges must be contained as a subdivision in the other two. Thus, either Reduction \ref{r1}B or Reduction \ref{r1}C can be performed on $G$.
\end{proof}

\begin{lem}
\label{fournbrs}
Let $G$ be a 3-connected graph with at least 19 vertices, containing no internal 3-edge-cutsets, no type 1, 2, 2a, or 3 edge-vertex-cutsets, and on which Reductions \ref{r1}A, \ref{r1}B, and \ref{r1}C cannot be performed. Suppose $G$ contains a separating set $S = \{t, u, v, w\}$. Suppose $v$ has degree $\ge 7$, and suppose there are at least three bridges of $G|S$, $X$, $Y$, and $Z$, such that $Y$ contains all four vertices in $S$, $X$ and $Z$ each contain $v$, and $Y\setminus S$ contains at most one neighbour of $u$. Suppose also that one of the following holds:

\begin{itemize}
\item either one of $X$, $Z$ contains all four vertices in $S$; or
\item there exists some fourth bridge $W$ of $G|S$ such that $W$ also contains $v$.
\end{itemize}

Suppose that one of the following holds:

\begin{itemize}
\item either $X\cap S \neq Z\cap S$; or
\item $X\cap S = Z\cap S = S$.
\end{itemize}

Suppose that $v$ and $u$ are either adjacent or joined by a path in some bridge $A$ of $G|S$ other than $X$, $Y$, $Z$, or $W$ (if $W$ exists). Call this path (or edge) $P_{u}$.

Then if $Y\setminus S$ contains more than three neighbours of $v$, $G$ contains a $W_{7}$-subdivision.
\end{lem}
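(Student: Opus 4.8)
The plan is to build a $W_{6}$-subdivision $H$ centred on $v$ whose rim passes through $\langle Y\rangle$, and then to enlarge it to a $W_{7}$-subdivision using one further neighbour of $v$. This mirrors the two-stage strategy of Lemmas \ref{twonbrs}--\ref{twotwotwonbrs}; the new feature is that here the bridge $Y$ (which meets all of $S$ and contains at least four neighbours of $v$) must simultaneously carry three spokes and an arc of the rim, making up for the fact that among the ``external'' spoke-paths only $P_{u}$ --- not also $P_{w}$ or $P_{t}$ --- is available.

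First I would record the elementary facts: since $v$ has at least four neighbours in $Y\setminus S$, the bridge $Y$ is nontrivial, $\langle Y\setminus S\rangle$ is connected, each of $t$ and $w$ has a neighbour in $Y\setminus S$, and by hypothesis $u$ has exactly one, say $u'$. I would then treat the two cases of the hypothesis separately: (A) one of $X$, $Z$ --- say $X$ --- meets all of $S$; and (B) there is a fourth bridge $W$ with $v\in W$. In either case, applying Lemma \ref{lemma2} to the bridge $Y$ with respect to a suitable three-element subset of $S$ containing $v$ (so that $v$ has three neighbours in $Y\setminus S$) yields a path $P$ in $\langle Y\rangle$ avoiding $v$, with its two ends among $\{t,u,w\}$, together with a fan of three spokes from $v$ to $P$; the ends of $P$ are chosen so that $u$, if it lies on $P$ at all, lies at an end --- which is possible precisely because $u$ has only the single neighbour $u'$ in $Y\setminus S$. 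The remaining arcs of the rim are routed through the other bridges: the condition ``$X\cap S\neq Z\cap S$ unless both $=S$'' together with $Y\cap S=S$ guarantees that the at most three vertices of $\{t,u,w\}$ appearing on the rim can be joined into a cycle using $\langle Z\rangle$, $\langle X\rangle$ (or $\langle W\rangle$) and the path $P_{u}$, with $u$ placed at the junction of two arcs neither of which enters $Y\setminus S$. This produces $H$: a $W_{6}$-subdivision centred on $v$ with three spokes inside $\langle Y\rangle$ ending on $P$, the arc $P$ of the rim inside $\langle Y\rangle$, and the other three spokes being $P_{u}$ and one spoke into each of two of $X\setminus S$, $Z\setminus S$, $W\setminus S$.

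For the enlargement I would use a ``spare'' neighbour of $v$. Since $Y\setminus S$ contains at least four neighbours of $v$ but only three are used by $H$, there is a neighbour $y$ of $v$ in $Y\setminus S$ with $y\notin V(H)$; I would take a path $Q_{1}$ in $\langle Y\setminus S\rangle$ from $y$ first meeting $H$ at a vertex $p$. If $p$ does not lie on one of the three $Y$-spokes, then $H\cup vyQ_{1}p$ is already a $W_{7}$-subdivision. If $p$ lies on a $Y$-spoke, I would run the same path-chasing and rim-rerouting argument as in cases 2, 2.1 and 2.2 of Lemma \ref{twonbrs}: 3-connectivity supplies further paths $Q_{2}$, $Q_{3}$, and the hypothesis $Y\cap S=S$ allows an arc of the rim to be rerouted through $Y$ when needed. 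If $Y\setminus S$ is so tight that the available $Q_{1}$ can only reach vertices already used as spoke endpoints, I would instead re-select which neighbours of $v$ in $Y\setminus S$ serve as the three $Y$-spoke endpoints --- the degree conditions on $G$ together with connectivity of $\langle Y\setminus S\rangle$ force a workable choice --- or, in case (B), simply take the spoke into $W\setminus S$ as the seventh spoke.

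The main obstacle is precisely this last step: guaranteeing that a seventh spoke can always be produced. Showing that the $W_{6}$-subdivision $H$ can be set up so that the extension never deadlocks --- the spare neighbour of $v$ always reaches a part of $H$ other than an already-used spoke endpoint, after possibly rerouting rim arcs and re-choosing the three $Y$-spoke endpoints --- is the delicate part, and it is where the strengthening of the hypothesis from ``more than two'' to ``more than three'' neighbours of $v$ in $Y\setminus S$, and the restriction that $u$ have at most one neighbour in $Y\setminus S$, are both used. Everything else closely parallels Lemmas \ref{twonbrs} and \ref{threenbrs}.
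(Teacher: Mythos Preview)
Your setup is essentially the paper's: build a $W_{6}$-subdivision $H$ centred on $v$ with $P_{u}$ as one spoke, two spokes in two of the other bridges, and three spokes inside $\langle Y\rangle$; then use a spare neighbour $y$ of $v$ in $Y\setminus S$ to try to produce a seventh spoke. You also correctly note that case (B) (a fourth bridge $W$ containing $v$) is easy, since $W$ supplies the seventh spoke directly.

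The gap is in case (A), precisely at the point you flag as the ``main obstacle''. Your proposed fix --- re-selecting which neighbours of $v$ serve as the three $Y$-spoke endpoints, or path-chasing as in Lemma~\ref{twonbrs} --- does not close the argument. The genuinely bad configuration is when the two paths from $y$ to $H$ land at the two outer spoke-meets-rim vertices $v_{1}$ and $v_{3}$ in $Y$, and no rerouting inside $Y$ escapes this. What the paper does instead is to observe that then $S_{1}=\{v_{1},v,v_{3}\}$ is a \emph{three}-vertex separating set of $G$ placing $y$, the middle rim vertex $v_{2}$, and $u$ in three distinct bridges of $G|S_{1}$; moreover $v$ has at least three neighbours in the bridge of $G|S_{1}$ containing $u$ (and the rest of the wheel). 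One then invokes Lemma~\ref{threebridges} (or Lemma~\ref{lemmaW7} if the internal vertices of $P_{1}$ and $P_{3}$ fall in a common new bridge) applied to $S_{1}$, not to $S$. This is why the hypotheses of Lemma~\ref{fournbrs} include ``at least $19$ vertices, no internal 3-edge-cutsets, no type 1, 2, 2a, or 3 edge-vertex-cutsets, and Reductions~\ref{r1}A, \ref{r1}B, \ref{r1}C cannot be performed'': these are exactly the hypotheses of Lemma~\ref{threebridges}, and your outline never uses them. Without that appeal, the deadlock case remains open.
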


\begin{proof}
Suppose $Y\setminus S$ contains at least four neighbours of $v$.

By the same arguments used in Lemma \ref{threenbrs}, there exists some $W_{6}$ subdivision $H$ centred on $v$ such that one of the spokes is formed from $P_{u}$, three spokes lie in $\langle (Y\setminus S)\cup \{v_{0}\}\rangle$, and two spokes lie in two other bridges of $G|S$.

If there exists some bridge of $G|S$ that contains $v$ other than $X$, $Y$, $Z$, or $A$, then this bridge can be used to form a path from $v$ to $w$ or from $v$ to $t$, thus forming a $W_{7}$-subdivision. Assume then that no other bridges of $G|S$ exist.

Let $v_{1}$, $v_{2}$, $v_{3}$ be the spoke-meets-rim vertices of $H$ in $Y\setminus S$, in order around the rim of $H$. Let $P_{1}$, $P_{2}$, $P_{3}$ be the spokes of $H$ from $v$ to $v_{1}$, $v$ to $v_{2}$, and $v$ to $v_{3}$ respectively. Let $H$ be chosen to minimise the sum of the lengths of the paths $P_{1}$, $P_{2}$, $P_{3}$.

We know that $Y\setminus S$ contains some vertex $y$ adjacent to $v$, such that $y\notin N_{H}(v)$. 

Suppose firstly that $y$ is some vertex in $H$. If $y \in H - (P_{1}\cup P_{2}\cup P_{3})$, then $H\cup vy$ is a $W_{7}$-subdivision. Suppose then that $y \in P_{i}$, where $i \in \{1, 2, 3\}$. Then the path $vyP_{i}v_{i}$ forms a shorter path than $P_{i}$ from $v$ to $v_{i}$, such that this path is still vertex-disjoint from $H - P_{i}$, so the sum of the lengths of $P_{1}$, $P_{2}$, $P_{3}$ is no longer minimal.

Suppose then that $y \notin H$. By the 3-connectivity of $G$, there must be two distinct vertices $q_{1}$ and $q_{2}$ in $H$, and two paths $Q_{1}$ and $Q_{2}$ in $\langle Y\rangle$, from $y$ to $q_{1}$ and $y$ to $q_{2}$ respectively, such that $Q_{1}$ and $Q_{2}$ are vertex-disjoint except at $y$ and are disjoint from $H$ except at their endpoints. (Note that if the rim of $H$ in $\langle Y\rangle$ meets $u$, then neither $Q_{1}$ nor $Q_{2}$ can meet $u$, since by the conditions of the hypothesis, $u$ has only one neighbour in $Y\setminus S$, and this neighbour lies in $H$.) For most placements of $q_{1}$ and $q_{2}$, it is straightforward to check that a $W_{7}$-subdivision can be formed centred on $v$. The only situation where this is not the case is for $\{q_{1}, q_{2}\} = \{v_{1}, v_{3}\}$. Suppose then that $q_{1} = v_{1}$ and $q_{2} = v_{3}$.

Let $S_{1} = \{q_{1}, v, q_{2}\}$. Suppose there exists some path from $y$ to $H - S_{1}$ such that $y$, $v_{2}$, and $u$ are not each in three separate bridges of $G|S_{1}$. It is straightforward to check that a $W_{7}$-subdivision exists in the resulting graph. Suppose then that $S_{1}$ forms a separating set, the removal of which places $y$, $v_{2}$, and $u$ in three separate components.

Call $B$ the bridge of $G|S_{1}$ containing $y$. Call $C$ the bridge of $G|S_{1}$ containing $v_{2}$. Call $D$ the bridge of $G|S_{1}$ containing $u$ and $w$.

Suppose there exists some internal vertex on one of the paths $P_{1}$ or $P_{3}$ such that this vertex is contained in one of the bridges $B$, $C$, or $D$. It is straightforward to check that the existence of such a vertex will result in a $W_{7}$-subdivision. Assume then that if such a vertex exists, it is contained in some fourth bridge $E$ of $G|S_{1}$.

Suppose firstly that such a bridge $E$ exists, and contains internal vertices of both the paths $P_{1}$ and $P_{3}$. Then by Lemma \ref{lemmaW7}, a $W_{7}$-subdivision exists in $G$.

Assume then that if the paths $P_{1}$ and $P_{3}$ both contain internal vertices, that the internal vertices of $P_{1}$ are contained in some bridge $E$ such that $E\notin \{B,C,D\}$, while the internal vertices of $P_{2}$ are contained in some other bridge $F$ such that $F\notin \{B,C,D,E\}$. By Lemma \ref{threebridges}, then, a $W_{7}$-subdivision exists in $G$.

Thus, whenever $Y\setminus S$ contains more than three neighbours of $v$, $G$ contains a $W_{7}$-subdivision.
\end{proof}

\begin{lem}
\label{lemma3}

Let $G$ be a 3-connected graph with at least 19 vertices. Suppose $G$ has no internal 3 or 4-edge-cutsets, no type 1, 2, 2a, 3, 3a, or 4 edge-vertex-cutsets, and is a graph on which none of Reductions \ref{r1}A, \ref{r1}B, \ref{r1}C, \ref{r2}A, and \ref{r6} can be performed. Let $S = \{u, v, w\}$ be a separating set of vertices in $G$ such that $v$ is adjacent to both $u$ and $w$, and such that there are exactly two bridges, $X$ and $Y$, of $G|S$. Suppose that $v$ has at least four neighbours in $X\setminus S$. Then $G$ contains a $W_{7}$-subdivision.
\end{lem}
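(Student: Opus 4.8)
The plan is to produce a $W_7$-subdivision centred on $v$. First I would fix the coarse structure. Since $v$ is adjacent to both $u$ and $w$ and has neighbours in $X\setminus S$, the edges $uv$ and $vw$ lie in the bridge $X$, so $X\cap S=S$; and since $Y\setminus S\neq\emptyset$ (otherwise $S$ would not separate $G$) and $G$ is $3$-connected, $Y\cap S$ cannot be a proper subset of $S$, so $Y\cap S=S$ and $v$ has at least one neighbour in $Y\setminus S$. Counting the four neighbours of $v$ in $X\setminus S$, the vertices $u$ and $w$, and at least one neighbour of $v$ in $Y\setminus S$, the vertex $v$ has degree at least $7$.

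I would then split on the number of neighbours of $v$ in $Y\setminus S$. \textbf{Case A: $v$ has at least two neighbours in $Y\setminus S$.} Then $v$ has at least three neighbours in $X\setminus S$, at least two in $Y\setminus S$, and is adjacent to both $u$ and $w$, so Lemma \ref{lemmaW7} (taking the edges $vw$ and $vu$ as $P_w$ and $P_u$, with no auxiliary bridge $A$) gives a $W_7$-subdivision centred on $v$.

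\textbf{Case B: $v$ has exactly one neighbour $b$ in $Y\setminus S$.} Here I would assemble the rim of the desired $W_7$-subdivision as a cycle $C$ avoiding $v$ that runs through both bridges. Applying the fan lemma in the $2$-connected graph $G-v$ to $b$ and $\{u,w\}$ yields internally disjoint $b$--$u$ and $b$--$w$ paths; since $b\in Y\setminus S$ both lie in $\langle Y\rangle - v$, so their union is a $u$--$w$ path $P_Y\subseteq\langle Y\rangle - v$ with $b$ an interior vertex. On the other side, $3$-connectivity provides a $u$--$w$ path $P_X\subseteq\langle X\rangle - v$ (an obstruction to this would give a cut of size at most $2$ or one of the forbidden cutsets, handled as below). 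Put $C=P_X\cup P_Y$. Then $vu$ and $vw$ are spokes attaching at the two ends of $P_X$, and $vb$ is a spoke attaching at the interior vertex $b$ of $P_Y$; it remains to route the four neighbours $a_1,\dots,a_4$ of $v$ in $X\setminus S$ by paths that are pairwise disjoint except at $v$, meet $C$ only at their far ends, and end at four distinct \emph{interior} vertices of $P_X$. This yields the required seven spokes.

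The main obstacle is precisely this last routing step inside $\langle X\rangle$ (and, in degenerate situations, the existence of $P_X$ itself). If the four neighbours of $v$ cannot be routed to four distinct interior vertices of any $u$--$w$ path of $\langle X\rangle - v$, then $a_1,\dots,a_4$ are confined behind a small vertex separation of $\langle X\rangle$. Combined with the edges $vu$, $vw$ and $vb$ --- the last one available because in Case B the vertex $b$ is the \emph{only} link from $v$ to $Y\setminus S$ --- such a separation translates in $G$ into either a vertex cut of size at most $2$, contradicting $3$-connectivity, or one of the configurations excluded by hypothesis: a type 1, 2, 2a, 3, 3a or 4 edge-vertex-cutset, an internal $3$- or $4$-edge-cutset, or an instance of Reduction \ref{r1}A, \ref{r1}B, \ref{r1}C, \ref{r2}A or \ref{r6}. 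The bound $|V(G)|\ge 19$ is what makes the relevant components large enough for the edge-vertex-cutset definitions to apply. The finitely many small configurations of $\langle X\rangle$ surviving this analysis are checked directly, as elsewhere in the paper, by the C program, and Lemmas \ref{lemma1} and \ref{lemma2} supply the spoke-paths once the separation structure is understood.
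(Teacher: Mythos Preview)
Your Case A coincides with the paper's opening move: if $v$ has at least two neighbours in $Y\setminus S$, Lemma~\ref{lemmaW7} finishes immediately. The substantive content of the lemma is your Case B, and there your proposal has a genuine gap.

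You want to route the four neighbours $a_1,\dots,a_4$ of $v$ in $X\setminus S$ by internally disjoint paths to four distinct interior vertices of a $u$--$w$ path $P_X\subseteq\langle X\rangle - v$. No available lemma delivers this; Lemma~\ref{lemma2} gives only three such paths, and even that is delicate. Your fallback---``failure to route translates into a forbidden cutset or reduction''---is exactly where the work lies, and it is not true in the direct form you suggest. The paper's proof shows why: using Lemma~\ref{lemma2} to obtain a minimal structure $H$ with three spokes $Q_{H1},Q_{H2},Q_{H3}$ to a $u$--$w$ path $P_H$, and then tracing two paths from the fourth neighbour $a$ back to $H$, one finds \emph{five} exceptional landing configurations (e.g.\ $\{p_1,p_2\}=\{u,q_{H2}\}$ or $\{q_{H1},q_{H3}\}$) that do \emph{not} immediately yield a forbidden cutset. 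Each requires introducing new separating sets of size four such as $\{u,v,w,q_{H2}\}$ or $\{q_{H1},v,q_{H3}\}$, analysing their bridges, and invoking Lemmas~\ref{twonbrs}--\ref{fournbrs} and~\ref{threebridges} in a lengthy case tree; the program is used only for enumerating path placements within these structural cases, not as a catch-all.

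A second issue: you fix the centre of the $W_7$-subdivision at $v$ throughout. In several of the paper's subcases the subdivision produced is centred on $u$, $w$, or even an interior vertex such as $q_{H1}$; insisting on centre $v$ would block those exits. So both the routing claim and the single-centre assumption need to be abandoned, and what replaces them is precisely the extended case analysis the paper carries out.
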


\begin{proof}

By Lemma \ref{lemma2}, $\langle X\rangle$ contains a structure $H$, which consists of the following:

\begin{itemize}
\item a path $P_{H}$ from $u$ to $w$, that meets $S$ only at its endpoints; and
\item three paths $Q_{H1}, Q_{H2}, Q_{H3}$ and three vertices $q_{H1}$, $q_{H2}$, $q_{H3}$, such that $q_{H1}$, $q_{H2}$, $q_{H3}$ are distinct vertices in order on the path $P_{H}$, and $Q_{H1}, Q_{H2}, Q_{H3}$ are paths from $v$ to $q_{H1}$, $v$ to $q_{H2}$, and $v$ to $q_{H3}$ respectively, that are pairwise vertex-disjoint except at $v$ and meet $S$ only at $v$, such that each $Q_{Hi}$ is disjoint from $P_{H}$ except at $q_{Hi}$, $1 \le i \le 3$.
\end{itemize}

Let $H$ be chosen to minimise $|E(H)|$.

If $v$ contains more than one neighbour in $Y\setminus S$, then by Lemma \ref{lemmaW7}, a $W_{7}$-subdivision exists centred on $v$. Assume then that $v$ has exactly one neighbour $v'$ in $Y\setminus S$. Then if $|Y\setminus S| > 3$, a type 2 edge-vertex-cutset can be formed from $u$, $w$, and the edge $vv'$. Assume then that $|Y\setminus S| \le 3$.

There exists some vertex $a \in N_{\langle X\setminus \{u, w\}\rangle}(v) \setminus N_{H}(v)$, since $v$ has at least four neighbours in $X\setminus S$.

Suppose firstly that $a \in H$. If $a \in H - (Q_{H1}\cup Q_{H2}\cup Q_{H3})$, then a $W_{7}$-subdivision exists in $G$ centred on $v$, namely $H\cup \{v_{0}\}$. Suppose then that $a \in Q_{Hi}$, where $1 \le i \le 3$. Then the path $vaQ_{Hi}q_{Hi}$ forms a shorter path than $Q_{Hi}$ from $v$ to $q_{Hi}$, such that this path is still vertex-disjoint from $H - v - q_{Hi}$, so $|E(H)|$ is no longer minimal.

Suppose then that $a \notin H$. By the 3-connectivity of $G$, there must be two distinct vertices in $H$, $p_{1}$ and $p_{2}$, and two paths in $\langle X\rangle$, $P_{1}$ and $P_{2}$, from $a$ to $p_{1}$ and $a$ to $p_{2}$ respectively, such that $P_{1}$ and $P_{2}$ are vertex-disjoint except at $a$ and are disjoint from $H$ except at their endpoints. For most placements of $p_{1}$ and $p_{2}$, it is straightforward to check that a $W_{7}$-subdivision can be formed centred on $v$. The situations where this is not the case are:

\begin{itemize}
\item[1.] $\{p_{1}, p_{2}\} = \{u, q_{H2}\}$
\item[2.] $\{p_{1}, p_{2}\} = \{w, q_{H2}\}$
\item[3.] $\{p_{1}, p_{2}\} = \{q_{H1}, q_{H3}\}$
\item[4.] $\{p_{1}, p_{2}\} = \{u, q_{H3}\}$
\item[5.] $\{p_{1}, p_{2}\} = \{w, q_{H1}\}$
\end{itemize}

Each of these cases are addressed below.

\vspace{0.2in}
\noindent \textbf{Cases 1 and 2: $\{p_{1}, p_{2}\} = \{u, q_{H2}\}$ or $\{p_{1}, p_{2}\} = \{w, q_{H2}\}$}
 
Without loss of generality, let $p_{1} = u$ and $p_{2} = q_{H2}$.

Let $S_{1} = \{u, v, q_{H2}\}$.

(a) Suppose firstly there exists some path $P_{A}$ in $\langle X\rangle$ from $P_{1}\cup P_{2}$ to $H - S_{1}$, such that the removal of $S_{1}$ does not separate $a$ from $H$. Then either the graph is equivalent to one of those with the placement of $p_{1}$ and $p_{2}$ mentioned above, where a $W_{7}$-subdivision is formed, or $P_{A}$ meets $H$ only at $w$. Suppose the latter holds. Let $p_{a}$ be the vertex along $P_{A}$ closest to $w$ where $P_{A}$ meets $P_{1}\cup P_{2}$. There are three possibilites: $p_{a}$ is on $P_{1} - a$, $p_{a}$ is on $P_{2} - a$, or $p_{a} = a$.

Let $W = \{u, v, w, q_{H2}\}$. Assume that the removal of $W$ separates $a$ from $H - W$, since otherwise a $W_{7}$-subdivision exists in $G$.

Let $A$ be the bridge of $G|W$ containing $a$. Let $B' = uP_{H}q_{H2} \cup Q_{H1}$.

\textbf{1.} Suppose that $S_{1}$ is not a separating set of $G$, that is, there exists some path $P_{B'}$ disjoint from $S_{1}$ joining $B'$ to $(H\cup \langle A\rangle) - B'$. Such a path either results in a $W_{7}$-subdivision, or meets $(H\cup \langle A\rangle) - B'$ only at $w$. Suppose the latter holds. Let $p_{b}$ be the vertex along $P_{B'}$ closest to $w$ where $P_{B'}$ meets $B'$. There are four possibilites: $p_{b}$ is on $Q_{H1} - q_{H1}$, $p_{b}$ is on $uP_{H}q_{H1} - q_{H1}$, $p_{b}$ is on $q_{H1}P_{H}q_{H2} - q_{H1}$, or $p_{b} = q_{H1}$. For each of the four placements of $p_{b}$, there are three possible placements of $p_{a}$, so $G$ contains one of twelve possible structures.

Let $C' = wP_{H}q_{H2} \cup Q_{H3}$.

\textbf{1.1.} Suppose $S_{2} = \{w, v, q_{H2}\}$ is not a separating set, but rather there exists some path $P_{C'}$ disjoint from $S_{2}$ joining $C'$ to $(H\cup \langle A\rangle) - C'$. Such a path either results in a $W_{7}$-subdivision, or meets $(H\cup \langle A\rangle) - C'$ only at $u$. Suppose the latter holds. Let $p_{c}$ be the vertex along $P_{C'}$ closest to $u$ where $P_{C'}$ meets $C'$. There are four possibilites: $p_{c}$ is on $Q_{H3} - q_{H3}$, $p_{c}$ is on $q_{H2}P_{H}q_{H3} - q_{H3}$, $p_{c}$ is on $q_{H3}P_{H}w - q_{H3}$, or $p_{c} = q_{H3}$. For each of the four placements of $p_{c}$, there are twelve possible placements of $p_{a}$ and $p_{b}$, so $G$ contains one of forty-eight possible structures.

Let $B = V(B'\cup P_{B'})$, and let $C = V(C'\cup P_{C'})$.

\textbf{1.1.1.} Suppose $A$, $B$, $C$, and $Y$ are not all separate bridges of $G|W$, but rather, some path $Q$ exists that prevents the removal of $W$ from placing each of $A\setminus W$, $B\setminus W$, $C\setminus W$ and $Y\setminus W$ in separate components. The program was used to generate each of the forty-eight possible structures that $G$ contains, and new graphs were generated from each of these by adding such a path $Q$. Each possible placement of $Q$ was then tested for the presence of a $W_{7}$-subdivision. In every case, a $W_{7}$-subdivision was found to exist.

\textbf{1.1.2.} Suppose then that $A$, $B$, $C$, and $Y$ all form separate bridges of $G|W$. 

Suppose there exists some vertex $v_{0} \in \{u, v, w\}$ with degree $\ge 7$ such that some bridge of $G|W$ contains at least three neighbours of $v_{0}$ not in $W$. Then by Lemma \ref{threenbrs}, there exists a $W_{7}$-subdivision centred on that vertex. Suppose then that no such vertex exists in $W$.

Suppose there exists some vertex $v_{0} \in \{u, v, w\}$ with degree $\ge 7$ such that two bridges of $G|W$ each contain two neighbours of $v_{0}$ not in $W$. Then by Lemma \ref{twotwonbrs}, there exists a $W_{7}$-subdivision centred on that vertex. Suppose then that no such vertex exists in $W$.

Suppose then there exists some vertex $v_{0}$ in $\{u, v, w\}$ with degree $\ge 7$ such that some bridge of $G|W$ contains two neighbours of $v_{0}$ not in $W$. If $v_{0} = v$, then by Lemma \ref{twonbrs}, there exists a $W_{7}$-subdivision in $G$. If $v_{0} \in \{u, w\}$, then since $v_{0}$ has only six neighbours in $A\cup B\cup C\cup Y$, there must be some fifth bridge $D$ of $G|W$ that contains $v_{0}$. Since there are only two bridges of $G|\{u,v,w\}$, $D$ must also contain $q_{H2}$, and therefore must contain a path from $v_{0}$ to $q_{H2}$ that meets $W$ only at its endpoints. Thus, Lemma \ref{twonbrs} can again be applied to show there exists a $W_{7}$-subdivision in $G$. Suppose then that no such vertex exists in $W$.

Since each vertex in $\{u, v, w\}$ with degree $\ge 7$ has no more than one neighbour not in $W$ in each bridge of $G|W$, and since $|Y\setminus S| < 7$, Reduction \ref{r6} can be performed on $G$.

\textbf{1.2.} Suppose then that $S_{2}$ is a separating set, that is, no such path $P_{C'}$ exists. Thus, $C'$ now forms a bridge of $G|S_{2}$ and of $G|W$.

Suppose there exists some bridge $U$ of $G|W$ such that $U\setminus W$ contains more than one neighbour of $v$. Then by Lemma \ref{twonbrs}, there exists a $W_{7}$-subdivision centred on $v$. Suppose then that no such bridge exists, that is, each bridge of $G|W$ contains at most one neighbour of $v$ not in $W$.

Suppose there exists some bridge $U$ of $G|W$ such that $U\cap W = W$, and $U\setminus W$ contains at least four neighbours of $u$. Then by Lemma \ref{fournbrs}, there exists a $W_{7}$-subdivision centred on $u$. Suppose then that no such bridge $U$ exists, that is, any bridge containing all vertices in $W$ contains at most three neighbours of $u$ not in $W$.

Suppose there exists some vertex $v_{0} \in \{u, w\}$ with degree $\ge 7$ such that some bridge $U$ of $G|W$ contains at least three neighbours of $v_{0}$ not in $W$. If $v_{0} = w$, then by Lemma \ref{threenbrs}, there exists a $W_{7}$-subdivision centred on $v_{0}$. Suppose then that $v_{0} = u$. If some bridge of $G|W$ other than $U$ contains at least two neighbours of $u$ not in $W$, then by Lemma \ref{threetwonbrs} a $W_{7}$-subdivision can be formed centred on $u$. Suppose then that all bridges of $G|W$ other than $U$ contain at most one neighbour of $u$ not in $W$. Then, since $u$ has degree $\ge 7$, there must either exist some fourth bridge other than $A$, $B$, and $Y$ which contains $u$, or $u$ must be adjacent to at least one of $w$, $q_{H2}$. Thus, Lemma \ref{threenbrs} can be applied to show that a $W_{7}$-subdivision exists centred on $u$. Suppose then that no such vertex $v_{0}$ exists in $W$.

Suppose then there exists some vertex $v_{0} \in \{u, w\}$ with degree $\ge 7$ such that two bridges of $G|W$, say $U_{1}$ and $U_{2}$, each contain two neighbours of $v_{0}$ not in $W$. Again, if $v_{0} = w$, then by Lemma \ref{twotwonbrs}, there exists a $W_{7}$-subdivision centred on that vertex. Suppose then that $v_{0} = u$. If some bridge of $G|W$ other than $U_{1}$ and $U_{2}$ contains at least two neighbours of $u$ not in $W$, then by Lemma \ref{twotwotwonbrs} a $W_{7}$-subdivision can be formed centred on $u$. Suppose then that all bridges of $G|W$ other than $U_{1}$ and $U_{2}$ contain at most one neighbour of $u$ not in $W$. Then, since $u$ has degree $\ge 7$, there must either exist some fourth bridge other than $A$, $B$, and $Y$ which contains $u$, or $u$ must be adjacent to at least one of $w$, $q_{H2}$. Thus, Lemma \ref{twotwonbrs} can again be applied to show that a $W_{7}$-subdivision exists centred on $u$. Suppose then that no such vertex $v_{0}$ exists in $W$.

Suppose then there exists some vertex $v_{0} \in \{u, w\}$ with degree $\ge 7$ such that some bridge $U$ of $G|W$ contains two neighbours of $v_{0}$ not in $W$. Since $v_{0}$ has either five (if $v_{0} = u$) or six (if $v_{0} = w$) neighbours in $A\cup B\cup C\cup Y$, there must be some fifth bridge $D$ of $G|W$ that contains $v_{0}$, and if $v_{0} = u$, some sixth bridge $E$ of $G|W$ that also contains $v_{0}$. Since there are only two bridges of $G|\{u,v,w\}$, $D$ must also contain $q_{H2}$, and therefore must contain a path from $v_{0}$ to $q_{H2}$ that meets $W$ only at its endpoints. Thus, Lemma \ref{twonbrs} can be applied to show there exists a $W_{7}$-subdivision in $G$. Suppose then that no such vertex exists in $W$.

Thus, since each vertex in $\{u, v, w\}$ with degree $\ge 7$ has no more than one neighbour not in $W$ in each bridge of $G|W$, and since $|Y\setminus S| < 7$, Reduction \ref{r6} can be performed on $G$.

\textbf{2.} Suppose now that $S_{1}$ is a separating set, that is, no such path $P_{B'}$ exists joining $B'$ to $(H\cup \langle A\rangle) - B'$. Call $B$ the bridge of $G|S_{1}$ and of $G|W$ that contains $B'$.

Let $C' = wP_{H}q_{H2}\cup Q_{H3}$. Suppose that $S_{2} = \{w, v, q_{H2}\}$ is not a separating set, but rather there exists some path $P_{C'}$ disjoint from $S_{2}$ joining $C'$ to $(H\cup \langle A\rangle) - C'$. Such a path either results in a $W_{7}$-subdivision, or meets $(H\cup \langle A\rangle) - C'$ only at $u$. If the latter holds, then by symmetry of the graph, the same arguments used in case 1.2 above can be applied to show that $G$ contains a $W_{7}$-subdivision. Assume then that no such path $P_{C'}$ exists. Call $C$ the bridge of $G|S_{2}$ and $G|W$ that contains $C'$.

Suppose there exists some bridge $U$ of $G|W$ such that $U\setminus W$ contains more than one neighbour of $v$. Then by Lemma \ref{twonbrs}, there exists a $W_{7}$-subdivision centred on $v$. Suppose then that no such bridge exists, that is, each bridge of $G|W$ contains at most one neighbour of $v$ not in $W$.

Suppose there exists some vertex $v_{0} \in \{u, w\}$, and some bridge $U$ of $G|W$ such that $U\cap W = W$, and $U\setminus W$ contains at least four neighbours of $v_{0}$. Then by Lemma \ref{fournbrs}, there exists a $W_{7}$-subdivision centred on $v_{0}$. Suppose then that no such bridge $U$ exists, that is, any bridge containing all vertices in $W$ contains at most three neighbours not in $W$ of $u$ and $w$.

Suppose now that $B\setminus S_{1}$ contains at least four neighbours of $u$. Then, since $|B\setminus S_{1}| > 3$ and there is only one neighbour of $v$ in $B\setminus S_{1}$, a type 2 edge-vertex-cutset can be formed from $u$, $q_{H2}$, and the edge incident with $v$ in $\langle B\rangle$. Suppose then that $B\setminus S_{1}$ contains at most three neighbours of $u$. By the same argument, $C\setminus S_{2}$ can contain at most three neighbours of $w$, or a type 2 edge-vertex-cutset is created.

Suppose there exist at least four bridges of $G|S_{1}$, that is, at least two bridges of $G|S_{1}$ exist other than $B$ and the bridge containing $w$. Then by Lemma \ref{threebridges}, a $W_{7}$-subdivision exists in $G$ (since at least three bridges of $G|S_{1}$ exist that do not contain internal vertices on the path $Q_{H2}$). Suppose then there are at most three bridges of $G|S_{1}$.

Suppose there exist at least four bridges of $G|S_{2}$, that is, at least two bridges of $G|S_{2}$ exist other than $C$ and the bridge containing $u$. Then by Lemma \ref{threebridges}, a $W_{7}$-subdivision exists in $G$. Suppose then there are at most three bridges of $G|S_{2}$.

Suppose then there exists some vertex $v_{0}$ in $\{u, w\}$ with degree $\ge 7$ such that some bridge $U$ of $G|W$ contains at least three neighbours of $v_{0}$ not in $W$. If some bridge of $G|W$ other than $U$ contains at least two neighbours of $v_{0}$ not in $W$, then by Lemma \ref{threetwonbrs} a $W_{7}$-subdivision can be formed centred on $v_{0}$. Suppose then that all bridges of $G|W$ other than $U$ contain at most one neighbour of $v_{0}$ not in $W$. Then, since $v_{0}$ has degree $\ge 7$, there must either exist some fourth bridge other than $A$, $B$, and $Y$ which contains $v_{0}$, or $v_{0}$ must be adjacent to some vertex in $W$ other than $v$. Thus, Lemma \ref{threenbrs} can be applied to show that a $W_{7}$-subdivision exists centred on $v_{0}$. Suppose then that no such vertex $v_{0}$ exists in $W$.

Suppose then there exists some vertex $v_{0}$ in $\{u, w\}$ with degree $\ge 7$ such that two bridges of $G|W$, $U_{1}$ and $U_{2}$, each contain two neighbours of $v_{0}$ not in $W$. If some bridge of $G|W$ other than $U_{1}$ and $U_{2}$ contains at least two neighbours of $v_{0}$ not in $W$, then by Lemma \ref{twotwotwonbrs} a $W_{7}$-subdivision can be formed centred on $v_{0}$. Suppose then that all bridges of $G|W$ other than $U_{1}$ and $U_{2}$ contain at most one neighbour of $v_{0}$ not in $W$. Then, since $v_{0}$ has degree $\ge 7$, there must either exist some fourth bridge other than $A$, $B$, and $Y$ which contains $v_{0}$, or $u$ must be adjacent to some vertex in $W$ other than $v$. Thus, Lemma \ref{twotwonbrs} can again be applied to show that a $W_{7}$-subdivision exists centred on $v_{0}$. Suppose then that no such vertex $v_{0}$ exists in $W$.

Suppose then there exists some vertex $v_{0}$ in $\{u, w\}$ with degree $\ge 7$ such that some bridge $U$ of $G|W$ contains two neighbours of $v_{0}$ not in $W$. Since $v_{0}$ has at most five neighbours in $A\cup B\cup C\cup Y$, there must exist fifth and sixth bridges $D$ and $E$ of $G|W$, each of which contain $v_{0}$. Since there are only two bridges of $G|S$, $D$ must also contain $q_{H2}$, and therefore must contain a path from $v_{0}$ to $q_{H2}$ that meets $W$ only at its endpoints. Thus, Lemma \ref{twonbrs} can be applied to show there exists a $W_{7}$-subdivision in $G$. Suppose then that no such vertex exists in $W$.

Thus, since each vertex in $\{u, v, w\}$ with degree $\ge 7$ has no more than one neighbour not in $W$ in each bridge of $G|W$, and since $|Y\setminus S| < 7$, Reduction \ref{r6} can be performed on $G$.

(b) Suppose now that the removal of $S_{1}$ separates $a$ from $H$, that is, no such path $P_{A}$ exists. Let $A$ be the bridge of $G|S_{1}$ containing $a$. Let $B$ be the bridge of $G|S_{1}$ containing $q_{H1}$. Let $C$ be the bridge of $G|S_{1}$ containing $w$ and $q_{H3}$.

\textbf{1.} Suppose there exists some internal vertex $q$ on the path $Q_{H2}$.

By 3-connectivity, there must be some path $Q$ contained in $\langle X\rangle$ that joins $q$ to $H - Q_{H2}$. It is straightforward to check that the existence of such a path will result in a $W_{7}$-subdivision, unless $Q$ first meets $H - Q_{H2}$ at either $u$ or $w$. Suppose then that this is the case. If every path $Q$ from $Q_{H2} - \{v, q_{H2}\}$ to $H - Q_{H2}$ first meets $H - Q_{H2}$ at $u$, then $Q$ is contained in a separate bridge from $A$, $B$, or $C$, and thus Lemma \ref{threebridges} can be applied to show that a $W_{7}$-subdivision exists in $G$. Suppose then that some such path $Q$ first meets $H - Q_{H2}$ at $w$. Thus, $q$ is contained in the bridge $C$.

Suppose there exists some fourth bridge of $G|S_{1}$. Then by Lemma \ref{threebridges}, a $W_{7}$-subdivision exists in $G$. Suppose then that $A$, $B$, and $C$ are the only three bridges of $G|S_{1}$.

Suppose $u$ has at most two neighbours in $C\setminus S_{1}$. Then, unless $|(A\cup B)\setminus \{v, q_{H2}\}| = 3$, a type 2 or 4 edge-vertex-cutset can be formed from $v$, $q_{H2}$, and the edge or edges joining $u$ to $C\setminus S_{1}$ (since $C\setminus S_{1}$ contains at least four vertices: $q$, $q_{H3}$, $w$, and at least one vertex in $Y\setminus S$).

Assume then that $|(A\cup B)\setminus \{v, q_{H2}\}| = 3$. Thus, there must be only one vertex in $A\setminus S_{1}$, and one vertex in $B\setminus S_{1}$. Then, since $|Y\cup A\cup B| \le 9$, and $|V(G)| \ge 19$, there must be at least 10 vertices in $C\setminus (Y \cup \{q_{H2}\})$. Since $|(A\cup B)\setminus S_{1}| = 2$, there must be only two edges joining $q_{H2}$ to $(A\cup B)\setminus S_{1}$. These two edges and the vertices $w$ and $v$ form a type 4 edge-vertex-cutset. 

Suppose then that $u$ has three neighbours in $Y\setminus S$. If $u$ also has at least three neighbours in $X\setminus S$, then by applying Lemma \ref{lemma2} to $X$ and $Y$ and using the edge $vu$, a $W_{7}$-subdivision can  be formed centred on $u$. Assume then that $u$ has only two neighbours in $X\setminus S$, say $u_{1}$ and $u_{2}$. Then a type 4 edge-vertex-cutset is formed from $w$, $v$, $uu_{1}$ and $uu_{2}$. (Since $u$ has three neighbours in $Y\setminus S$, we know that $|Y \setminus \{w, v\}| = 4$).

\textbf{2.} Assume then that no such vertex $q$ exists --- that is, $v$ is adjacent to $q_{H2}$, and $Q_{H2}$ is a single edge. Then by Lemma \ref{threebridges}, a $W_{7}$-subdivision exists in $G$.

\vspace{0.2in}
\noindent \textbf{Case 3: $\{p_{1}, p_{2}\} = \{q_{H1}, q_{H3}\}$}

Without loss of generality, suppose $p_{1} = q_{H1}$ and $p_{2} = q_{H3}$. Let $W = \{q_{H1}, v, q_{H3}\}$.

Suppose that $a$, $q_{H2}$, and $u$ are not each in three separate bridges of $G|W$. Therefore, there must exist some path either from $a$ to $H - W$, or from $wP_{H}q_{H3}\cup uP_{H}q_{1} - q_{1} - q_{3}$ to $Q_{H2}\cup q_{H1}P_{H}q_{H3} - q_{1} - q_{3}$. It is straightforward to check that if such a path exists in $G$, then a $W_{7}$-subdivision also exists in $G$. Suppose then that $W$ forms a separating set, the removal of which places $a$, $q_{H2}$, and $u$ in three separate components.

Let $A$ be the bridge of $G|W$ containing $a$. Let $B$ be the bridge of $G|W$ containing $q_{H2}$. Let $C$ be the bridge of $G|W$ containing $u$ and $w$.

\textbf{1.} Suppose there exists some internal vertex $q$ on one of the paths $Q_{H1}$ or $Q_{H3}$. Without loss of generality, suppose $q$ lies on $Q_{H1}$. By 3-connectivity, there must be some path $Q$ contained in $\langle X\rangle$ that joins $q$ to $H - Q_{H1}$. It is straightforward to check that the existence of such a path will result in a $W_{7}$-subdivision, unless $Q$ first meets $H - Q_{H1}$ at $w$ or at $q_{H3}$. Suppose then that this is the case. Let $q'$ be the point at which $Q$ first meets $H - Q_{H1}$.

\textbf{1.1.} Suppose firstly that $q' = q_{H3}$. Suppose there exists some path $Q'$ joining $(Q\cup Q_{H1})\setminus W$ to $(H\cup \langle A\rangle) - Q_{H1}$. Such a path will result in the existence of a $W_{7}$-subdivision. Suppose then that no such path $Q'$ exists. Then $q$ is contained in some fourth bridge of $U_{q}$ of $G|W$ such that $U_{q} \notin \{A, B, C\}$.

\textbf{1.1.1.} Suppose there also exists some internal vertex $r$ on $Q_{H3}$. By 3-connectivity, there must be some path $R$ contained in $\langle X\rangle$ that joins $r$ to $H - Q_{H3}$. Let $r'$ be the point at which $R$ first meets $H - Q_{H3}$. It is straightforward to check that the existence of such a path will result in a $W_{7}$-subdivision, unless $r' \in \{q_{H1}, u\}$. Suppose then that this is the case.

\textbf{1.1.1.1.} Suppose $r' = q_{H1}$.

Suppose there exists some path $R'$ joining $(R\cup Q_{H3})\setminus W$ to $(H\cup \langle A\rangle \cup \langle U_{q}\rangle) - Q_{H3}$. Such a path will result in the existence of a $W_{7}$-subdivision. Suppose then that no such path $R'$ exists. Then $r$ is contained in some fifth bridge of $U_{r}$ of $G|W$ such that $U_{r} \notin \{A, B, C, U_{q}\}$. Thus, Lemma \ref{threebridges} can be applied to show that a $W_{7}$-subdivision exists in $G$.

\textbf{1.1.1.2.} Suppose $r' = u$.

Suppose that $q_{H1}$ has at least two neighbours in $C\setminus W$. Thus, some neighbour $p'$ of $q_{H1}$ exists in $C\setminus W$ such that $p' \notin N_{H}(q_{H1})$.

Suppose firstly that $p' \in H$. Thus, either $p' \in q_{H3}P_{H}w - q_{H3}$, or $p' \in q_{H1}P_{H}u - q_{H1}$. If the former holds, then a $W_{7}$-subdivision can be found in $G$. Suppose then that the latter holds. Since $p' \notin N_{H}(q_{H1})$, the path $q_{H1}p'P_{H}u$ is a shorter path from $q_{H1}$ to $u$ than the path $q_{H1}P_{H}u$. Thus, $|E(H)|$ is no longer minimal.

Suppose then that $p' \notin H$. Then by 3-connectivity, there must be some path in $\langle C\cap X\rangle$ joining $p'$ to $q_{H3}P_{H}w$. Such a path will create a $W_{7}$-subdivision in $G$.

Suppose then that no such vertex $p'$ exists, that is, $C\setminus W$ contains at most one neighbour, say $p'_{1}$, of $q_{H1}$. Then a type 2 edge-vertex-cutset can be formed from $v$, $q_{H3}$, and the edge $q_{H1}p'_{1}$ (since $C\setminus W$ contains at least four vertices, and the other side of the cutset contains at least the vertices $q_{H1}, q, a, q_{H2}$). 

\textbf{1.1.2.} Suppose then that no such vertex $r$ exists, that is, $Q_{H3}$ is a single edge. Then Lemma \ref{threebridges} can be applied to show that a $W_{7}$-subdivision exists in $G$.

\textbf{1.2.} Suppose now that $q' = w$. Thus, $q \in C$.

\textbf{1.2.1.} Suppose there also exists some internal vertex $r$ on $Q_{H3}$. By 3-connectivity, there must be some path $R$ contained in $X$ that joins $r$ to $H - Q_{H3}$. Let $r'$ be the point at which $R$ first meets $H - Q_{H3}$. It is straightforward to check that the existence of such a path will result in a $W_{7}$-subdivision, unless $r' \in \{q_{H1}, u\}$.

Suppose firstly that $r' = q_{H1}$.

Suppose there exists some path $R'$ joining $(R\cup Q_{H3})\setminus W$ to $(H\cup\langle A\rangle) - Q_{H3}$. Such a path will result in the existence of a $W_{7}$-subdivision. Suppose then that no such path exists. Then $r$ is contained in some fourth bridge $U_{r}$ of $G|W$ such that $U_{r} \notin \{A, B, C\}$. By symmetry, then, the same arguments applied in case 1.1.1.2 can be applied here to show that $G$ must contain a $W_{7}$-subdivision.

Suppose then that $r' = u$. A $W_{7}$-subdivision can then be found in $G$.

\textbf{1.2.2.} Suppose then that no such vertex $r$ exists, that is, $Q_{H3}$ is a single edge.

Suppose that $q_{H3}$ has at least two neighbours in $C\setminus W$. Thus, some neighbour $p'$ of $q_{H3}$ exists in $C\setminus W$ such that $p' \notin N_{H}(q_{H3})$. By symmetry, the same arguments used in case 1.1.1.2 can be applied to show that $G$ must contain a $W_{7}$-subdivision.

Suppose then that no such vertex $p'$ exists, that is, $C\setminus W$ contains at most one neighbour, say $p'_{2}$, of $q_{H3}$. Then unless $|V(G - (C - q_{H3}))| \le 3$, a type 2 edge-vertex-cutset can be formed from $v$, $q_{H1}$, and the edge $q_{H3}p'_{2}$ (since $C\setminus W$ contains at least four vertices). Suppose then that $|V(G - (C - q_{H3}))| \le 3$. Thus, $|A\setminus W| = 1$ and $|B\setminus W| = 1$, and there are no bridges of $G|W$ other than $A$, $B$, and $C$.

\textbf{1.2.2.1.} Suppose that $q_{H1}$ has degree $\ge 7$.

Since $q_{H1}$ has only two neighbours in $A\cup B$, $q_{H1}$ must have at least five neighbours in $(C\cap X)\setminus W$. Thus, at least three neighbours, say $x_{1}$, $x_{2}$, $x_{3}$, of $q_{H1}$ exist in $(C\cap X)\setminus W$ such that $x_{1}$, $x_{2}$, $x_{3} \notin N_{H}(q_{H1})$.

By the 3-connectivity of $G$, there must be at least two disjoint paths in $\langle C\cap X\rangle$, say $P_{x1}$ and $P_{x2}$, joining $\{x_{1}, x_{2}, x_{3}\}$ to $(H\cap \langle C\rangle)\setminus \{q_{H1}\}$. Let $y_{1}$ and $y_{2}$ be the two vertices in $(H\cap \langle C\rangle)\setminus \{q_{H1}\}$ where $P_{x1}$ and $P_{x2}$ first meet $(H\cap \langle C\rangle)\setminus \{q_{H1}\}$ respectively. Without loss of generality, suppose that $x_{1}$ is an endpoint of $P_{x1}$, and $x_{2}$ is an endpoint of $P_{x2}$.

If there are two vertices $y'_{1}$ and $y'_{2}$ such that $\{y'_{1}, y'_{2}, q_{H1}\}$ separates $\{x_{1}, x_{2}, x_{3}\}$ from $H\cap \langle C\rangle$, then Lemma \ref{lemma2} can be applied to the bridge of $G|Z$ containing $x_{1}$, $x_{2}$, and $x_{3}$, and thus a $W_{7}$-subdivision can be formed centred on $q_{H1}$. Suppose then that $\{x_{1}, x_{2}\} = \{y_{1}, y_{2}\}$. By 3-connectivity, there must exist two paths from $x_{3}$ to $\{y_{1}, y_{2}\}$ such that these paths are disjoint except at $x_{3}$, and meet $H$ only at $y_{1}$ and $y_{2}$. These paths allow a $W_{7}$-subdivision to be formed centred on $q_{H1}$.

Suppose then that there exists some path $P_{x3}$ from $x_{3}$ to $(H\cap \langle C\rangle)\setminus \{q_{H1}\}$, such that $P_{x3}$ is disjoint from $P_{x1}$ and $P_{x2}$. Let $y_{3}$ be the vertex closest to $x_{3}$ along $P_{x3}$ where $P_{x3}$ meets $(H\cap \langle C\rangle)\setminus \{q_{H1}\}$.

Suppose that each of $y_{1}$, $y_{2}$, and $y_{3}$ lie on one of the paths $q_{H3}P_{H}w$, $qQ_{H1}v$, or $qQw$. Then a $W_{7}$-subdivision exists in $G$.

Suppose then that one of $y_{1}$, $y_{2}$, $y_{3}$ --- assume $y_{1}$ without loss of generality --- does not lie on $q_{H3}P_{H}w$, $qQ_{H1}v$, or $qQw$. Thus, $y_{1}$ lies on either $q_{H1}P_{H}u$, or $q_{H1}Q_{H1}q$.

Suppose $y_{1} = x_{1}$, that is, the path $P_{x1}$ is only a single vertex. If $x_{1}$ lies on $q_{H1}P_{H}u$, then the path $q_{H1}x_{1}u$ is a shorter path from $q_{H1}$ to $u$ than the path $q_{H1}P_{H}u$. If $x_{1}$ lies on $q_{H1}Q_{H1}q$, then the path $q_{H1}x_{1}Q_{H1}q$ is a shorter path from $q_{H1}$ to $q$ than $q_{H1}Q_{H1}q$. Thus, if $y_{1} = x_{1}$, $|E(H)|$ is no longer minimal. Assume then that $y_{1} \neq x_{1}$, that is, the path $P_{x1}$ is not trivial.

By 3-connectivity, then, there must be some path in $\langle C\cap X\rangle$ disjoint from $P_{x1}$ that joins $x_{1}$ to $(H\cap \langle C\rangle)\setminus \{q_{H1}\}$. Call this path $Q_{x1}$. Let $z_{1}$ be the point closest to $x_{1}$ along $Q_{x1}$ where $Q_{x1}$ meets $(H\cap \langle C\rangle)\setminus \{q_{H1}\}$. If $z_{1}$ and $y_{1}$ both lie on $q_{H1}P_{H}u$, or if $z_{1}$ and $y_{1}$ both lie on $q_{H1}Q_{H1}q$, then by 3-connectivity there must be some other path in $\langle C\cap X\rangle$ disjoint from $P_{x1}$ that joins $x_{1}$ to $(H\cap \langle C\rangle)\setminus \{q_{H1}\}$. Assume then that $z_{1}$ and $y_{1}$ do not both lie on $q_{H1}P_{H}u$ and do not both lie on $q_{H1}Q_{H1}q$.

Suppose $y_{2}$ and $y_{3}$ each lie on one of the paths $q_{H3}P_{H}w$, $qQ_{H1}v$, or $qQw$. Then it is straightforward to check that a $W_{7}$-subdivision can be formed in $G$, regardless of the position of $y_{1}$ and $z_{1}$.

Suppose then that one of $y_{2}$, $y_{3}$ --- assume $y_{2}$ without loss of generality --- lies on either $q_{H1}P_{H}u$ or $q_{H1}Q_{H1}q$. By the same argument used above for $y_{1}$, assume that $P_{x2}$ is not trivial, that is, $y_{2} \neq x_{2}$. Thus, by 3-connectivity, there must be some path $Q_{x2}$ in $\langle C\cap X\rangle$ from $x_{2}$ to $(H\cap \langle C\rangle)\setminus \{q_{H1}\}$ such that this path is disjoint from $P_{x2}$. Let $z_{2}$ be the point closest to $x_{2}$ along $Q_{x2}$ where $Q_{x2}$ meets $(H\cap \langle C\rangle)\setminus \{q_{H1}\}$. By the same argument used above for $y_{1}$ and $z_{1}$, assume that $y_{2}$ and $z_{2}$ do not both lie on $q_{H1}P_{H}u$ and do not both lie on $q_{H1}Q_{H1}q$.

Suppose $\{y_{1}, z_{1}\} = \{y_{2}, z_{2}\}$, that $\{y_{1}, z_{1}, q_{H1}\}$ forms a separating set in $G$, and that $x_{1}$ and $x_{2}$ are in separate bridges of $G|\{y_{1}, z_{1}, q_{H1}\}$. Since $q_{H1}$ has at least three neighbours in a third bridge of $G|\{y_{1}, z_{1}, q_{H1}\}$, by Lemma \ref{threebridges}, a $W_{7}$-subdivision exists in $G$. Assume then that this is not the case.

Suppose $y_{3}$ lies on one of the paths $q_{H3}P_{H}w$, $qQ_{H1}v$, or $qQw$. Then it is straightforward to check that a $W_{7}$-subdivision can be formed in $G$, regardless of the positions of $y_{1}$, $z_{1}$, $y_{2}$, and $z_{2}$.

Suppose then that $y_{3}$ lies on either $q_{H1}P_{H}u$ or $q_{H1}Q_{H1}q$. By the same argument used above for $y_{1}$, assume that $P_{x3}$ is not trivial, that is, $y_{3} \neq x_{3}$. Thus, by 3-connectivity, there must be some path $Q_{x3}$ in $\langle C\cap X\rangle$ from $x_{3}$ to $(H\cap \langle C\rangle)\setminus \{q_{H1}\}$ such that this path is disjoint from $P_{x3}$. Let $z_{3}$ be the point closest to $x_{3}$ along $Q_{x3}$ where $Q_{x3}$ meets $(H\cap \langle C\rangle)\setminus \{q_{H1}\}$. By the same argument used above for $y_{1}$ and $z_{1}$, assume that $y_{3}$ and $z_{3}$ do not both lie on $q_{H1}P_{H}u$ or on $q_{H1}Q_{H1}q$.

By the same argument used above, if $\{y_{3}, z_{3}\} = \{y_{1}, z_{1}\}$, or if $\{y_{3}, z_{3}\} = \{y_{2}, z_{2}\}$, and if $\{y_{3}, z_{3}, q_{H1}\}$ forms a separating set in $G$ such that $x_{3}$ is in a separate bridge of  $G|\{y_{3}, z_{3}, q_{H1}\}$ from either $x_{1}$ or $x_{2}$, then by Lemma \ref{threebridges} a $W_{7}$-subdivision can be formed in $G$. Assume then that this is not the case.

It can be seen then that any possible placement of $y_{1}$, $z_{1}$, $y_{2}$, $z_{2}$, $y_{3}$, and $z_{3}$ results in the existence of a $W_{7}$-subdivision in $G$.

\textbf{1.2.2.2.} Assume then that $q_{H1}$ has degree $< 7$. Reduction \ref{r2}A can be performed on $G$.

\textbf{2.} Assume then that no such vertex $q$ exists --- that is, $v$ is adjacent to both $q_{H1}$ and $q_{H3}$, and both $Q_{H1}$ and $Q_{H3}$ are single edges. Then by Lemma \ref{threebridges}, a $W_{7}$-subdivision exists in $G$.

\vspace{0.2in}
\noindent \textbf{Cases 4 and 5: $\{p_{1}, p_{2}\} = \{u, q_{H3}\}$ or $\{p_{1}, p_{2}\} = \{w, q_{H1}\}$}

Without loss of generality, let $p_{1} = u$ and $p_{2} = q_{H3}$. Let $W = \{u, v, q_{H3}\}$.

Suppose that $a$, $q_{H2}$, and $u$ are not each in three separate bridges of $G|W$. Therefore, there must exist some path either from $a$ to $H - W$, or from $wP_{H}q_{H3} - q_{H3}$ to $Q_{H2} \cup Q_{H1}\cup uP_{H}q_{H3} - q_{H3}$. It is straightforward to check that such a path results in the existence of a $W_{7}$-subdivision in $G$. Suppose then that $W$ forms a separating set, the removal of which places $a$, $q_{H2}$, and $w$ in three separate components.

Let $A$ be the bridge of $G|W$ containing $a$. Let $B$ be the bridge of $G|W$ containing $q_{H1}$ and $q_{H2}$. Let $C$ be the bridge of $G|W$ containing $w$ and $Y$.

If any bridge of $G|W$ contains more than two neighbours of $v$ not in $W$, then by Lemma \ref{lemmaW7}, a $W_{7}$-subdivision can be formed. Suppose then that each bridge of $G|W$ contains at most two neighbours of $v$ not in $W$. Then, if any bridge of $G|W$ contains more than three vertices not in $W$, a type 2 or 4 edge-vertex-cutset can be formed from $u$, $q_{H3}$, and one or two of the edges incident with $v$. Assume then that each bridge of $G|W$ contains at most three vertices not in $W$. Thus, since $|V(G)| \ge 19$, there must be at least six bridges of $G|W$.

If any bridge of $G|W$ contains only one vertex not in $W$, then Reduction \ref{r1}A can be performed on $G$. Assume then that each bridge has at least two but no more than three vertices not in $W$.

Suppose firstly that each bridge of $G|W$ contains only two vertices not in $W$. Then, since $|V(G)| \ge 19$, there must exist at least eight bridges of $G|W$. Thus, there must be at least one bridge of $G|W$ which is contained as a subdivision in two others, and so Reduction \ref{r1}B can be performed on $G$.

Suppose then there exists some bridge $U$ of $G|W$ such that $|U\setminus W| = 3$. There must be at least five edges joining $U\setminus W$ to $W$, to avoid an internal 4-edge-cutset. Suppose that each vertex in $W$ has at most two neighbours in $U\setminus W$. Then either a type 2a or type 3a edge-vertex-cutset exists in $G$. Assume then that at least one vertex in $W$ has three neighbours in $U\setminus W$. We know this vertex is not $v$ (since this will result in a $W_{7}$-subdivision, using Lemma \ref{lemmaW7}) --- assume then without loss of generality that $u$ has three neighbours in $U\setminus W$.

If any bridge of $G|W$ other than $U$ has more than one neighbour of $u$ not in $W$, then by Lemma \ref{lemmaW7}, a $W_{7}$-subdivision exists centred on $u$. Assume then that each bridge of $G|W$ other than $u$ contains at most one neighbour of $u$ not in $W$.

If each bridge of $G|W$ other than $U$ has only two vertices not in $W$, then Reduction \ref{r1}B can be performed on $G$. Assume then that there exists some bridge other than $U$, say $U'$, which contains three vertices not in $W$. By the same argument used above for $U$, we can assume that $U'\setminus W$ contains either three neighbours of $u$ or three neighbours of $q_{H2}$. Since we have already assumed that $U'\setminus W$ contains at most one neighbour of $u$, we can assume now that $U'\setminus W$ contains three neighbours of $q_{H2}$. Thus, if any bridge other than $U'$ contains more than one neighbour of $q_{H2}$ not in $W$, a $W_{7}$-subdivision can be formed centred on $q_{H2}$. Assume then that this is not the case.

Each bridge of $G|W$ other than $U$ and $U'$, then, contains at most one neighbour of $u$ and of $q_{H2}$, and at most two neighbours of $v$. Thus, each such bridge can contain at most two vertices not in $W$. Therefore, since $|V(G)| \ge 19$ and $|U\cup U'| = 9$, there must be at least five bridges of $G|W$ other than $U$ and $U'$, each of which are identical. Thus, Reduction \ref{r1}B can be performed on $G$.
\end{proof}

\section{Main result}
\label{mainresult}

The following theorem is the main result of this paper. It allows graphs with no $W_{7}$-subdivisions to be characterized up to bounded size pieces.          

\begin{thm}
\label{theorem}

Let $G$ be a 3-connected graph with at least 38 vertices. Suppose $G$ has no internal 3 or 4-edge-cutsets, no internal $(1,1,1,1)$-cutsets, no type 1, 1a, 2, 2a, 3, 3a, 4, or 4a edge-vertex-cutsets, and is a graph on which Reductions \ref{r1}A, \ref{r1}B, \ref{r1}C, \ref{r2}A, \ref{r2}B, \ref{r6}, \ref{r7}, \ref{r8}, and \ref{r1_big} cannot be performed, for $k = 7$. 

Then $G$ has a $W_{7}$-subdivision if and only if $G$ contains some vertex $v_{0}$ of degree at least 7.

\end{thm}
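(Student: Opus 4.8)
The plan is to dispose of the easy implication trivially and to obtain the substantive implication by growing a $W_7$-subdivision around a vertex of degree at least $7$, starting from the $W_6$-subdivision delivered by Theorem \ref{w6cor}.

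\textbf{Easy direction.} If $G$ contains a subdivision of $W_7$, then the centre of that subdivision is a vertex of $G$ of degree at least $7$, so there is nothing to prove.

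\textbf{Hard direction.} Suppose $G$ has a vertex $v_0$ of degree $\ge 7$. First I would verify that $G$ meets the hypotheses of Theorem \ref{w6cor}: it is $3$-connected with $|V(G)| \ge 38 \ge 14$, it has none of the type $1$, $2$, $3$, $4$ edge-vertex-cutsets (these are among the cutset types forbidden here), and none of Reductions \ref{r1}A, \ref{r1}B, \ref{r2}A, \ref{r2}B can be performed for $k=7$. Hence Theorem \ref{w6cor}, applied to the degree-$\ge 6$ vertex $v_0$, produces a $W_6$-subdivision $H$ of $G$ centred either on $v_0$ or on some vertex of degree $\ge 7$; after renaming we may assume $H$ is centred on a vertex $v_0$ of degree $\ge 7$. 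Since $\deg_G(v_0)\ge 7$ while $v_0$ is the endpoint of only the six spokes of $H$, some neighbour $u$ of $v_0$ is not one of the six neighbours of $v_0$ on $H$. As $G$ is $3$-connected, $G - v_0$ is $2$-connected, so by the fan form of Menger's theorem there are two paths $Q_1, Q_2$ from $u$ to $V(H)\setminus\{v_0\}$, meeting each other only at $u$, meeting $H$ only at their other endpoints $q_1, q_2$, and avoiding $v_0$ (the degenerate sub-case $u\in V(H)$ with $u$ on the rim gives $W_7$ at once, $v_0u$ being a seventh spoke to a new spoke-meets-rim vertex). Together with the edge $v_0u$, these paths form a ``partial seventh spoke'' attached to $H$ at $q_1$ and $q_2$.

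The core of the argument is then a case analysis on where $q_1$ and $q_2$ lie on $H$: each of them is on the rim or in the interior of one of the six spokes, and their relative (cyclic) positions matter. In the favourable configurations one reroutes the spokes and rim of $H$ through $v_0u$, $Q_1$ and $Q_2$ to exhibit a $W_7$-subdivision directly, using Lemma \ref{lemmaW7} and Lemmas \ref{twonbrs}--\ref{lemma3} whenever a vertex turns out to have several neighbours inside one bridge of a $3$- or $4$-element separating set. In the remaining, near-miss configurations the graph is forced, by the forbidden internal cutsets, the forbidden edge-vertex-cutsets, the forbidden internal $(1,1,1,1)$-cutsets, and the non-applicability of Reductions \ref{r1}A through \ref{r1_big}, to contain extra bridges of a separating set, or extra edges joining a separating set to a bridge; these either supply exactly the structure required by one of the supporting lemmas, or reduce the situation to one of finitely many small graphs, which are generated and tested for a $W_7$-subdivision by the C program of Section \ref{intro}. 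Assembling all the cases yields a $W_7$-subdivision in every instance, completing the proof.

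The main obstacle, exactly as in \cite{Farr88, Robinson08}, is the size and delicacy of this case analysis: one must organise the possible placements of $Q_1$ and $Q_2$ (and of the further paths produced by repeated appeals to $3$-connectivity) into manageably many families, and in each near-miss family show that the cutset and reduction hypotheses, together with $|V(G)|\ge 38$, are just strong enough either to force a $W_7$ outright or to push the surviving possibilities below the size at which the computer search applies. The role of the $38$-vertex bound is precisely to guarantee that every bounded ``piece'' that can arise as an exceptional case is strictly smaller than $G$, so that such pieces are handled by finite checking rather than by further structural argument.
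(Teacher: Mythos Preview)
Your proposal is correct and follows essentially the same approach as the paper: invoke Theorem~\ref{w6cor} to obtain a $W_6$-subdivision $H$ centred on a vertex $v_0$ of degree $\ge 7$, locate an extra neighbour $u\in N_G(v_0)\setminus N_H(v_0)$, use $3$-connectivity to find two internally disjoint paths from $u$ to $H-v_0$, and then carry out a large case analysis on the landing points, supported by Lemmas~\ref{lemmaW7}--\ref{lemma3}, the forbidden cutsets and reductions, and computer verification of the small residual configurations. The paper organises this analysis into three top-level cases (a), (b), (c), with the real work occurring in three subcases (b)(i)--(b)(iii) depending on which spokes the landing points hit; your sketch accurately anticipates both this structure and the role of the $38$-vertex bound.
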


\begin{proof}
\label{proof}

Let $G$ be a graph that meets the conditions of the Theorem. The forward implication is trivial.

Suppose then that $G$ has a vertex $v_{0}$ of degree at least 7. By Theorem \ref{w6cor}, $G$ must contain a $W_{6}$-subdivision. Furthermore, we can assume by Theorem \ref{w6cor} that either some $W_{6}$-subdivision is centred on $v_{0}$, or that some $W_{6}$-subdivision in $G$ is centred on some other vertex of degree $\ge 7$ in $G$. If the latter is true, take this new vertex as $v_{0}$. Let $H$ be this $W_{6}$-subdivision. Let $v_{1}, v_{2}, v_{3}, v_{4}, v_{5}, v_{6}$ be the six spoke-meets-rim vertices of $H$, in order around the rim of $H$. For each $i$, $1 \le i \le 6$, let $P_{i}$ be the spoke from $v_{0}$ to $v_{i}$ in $H$. Let $C$ be the rim of $H$.

There are three possibilities.

\textbf{(a)} There is a vertex $u_{1}$ on the rim of $H$ such that $u_{1} \notin \{v_{1}, \ldots , v_{6}\}$, and there is a path from $v_{0}$ to $u_{1}$ that does not meet $H$ except at its endpoints. This path together with $H$ gives a $W_{7}$-subdivision.

\textbf{(b)} $G$ has a vertex $u \in N_{G}(v_{0})\setminus N_{H}(v_{0})$ such that the bridge of $G|V(H)$ containing $u$ also contains two vertices, $u_{1}$ and $u_{2}$, on two separate spokes of $H$. Assume without loss of generality that $u_{1}$ is on $P_{1}$, and $u_{2}$ is on either $P_{2}$, $P_{3}$, or $P_{4}$. It is routine to verify that all instances in this case except for three result in the presence of a $W_{6}$-subdivision. The three specific instances are as follows:

\textbf{(b)(i)} $u_{1} = v_{1}, u_{2} = v_{3}$.

\textbf{(b)(ii)} $u_{1} = v_{1}, u_{2} = v_{4}$.

\textbf{(b)(iii)} $u_{1} = v_{1}$, $u_{2} \in P_{4} \setminus \{v_{0}, v_{4}\}$.

Dealing with these three cases takes up most of the proof, and we return to them shortly.

\textbf{(c)} There is a vertex $u_{1}$ on one of the spokes of $H$, and there is a path from $v_{0}$ to $u_{1}$ that does not meet $H$ except at its end points. This case is dealt with in the same way as in \cite{Farr88}, where it is shown that in order to preserve 3-connectivity, the graph must fall into one of the two previous cases.

We return now to the three subcases in \textbf{(b)}.

For each of these three subcases, let $U(u)$ be the bridge of $G|V(H)$ containing $u$. Recall that this bridge also contains the vertices $u_{1}$ and $u_{2}$. Thus, there exists some path $P'_{u_{1}}$ joining $u_{1}$ to $u$ such that $P'_{u_{1}}$ is contained in $\langle U(u)\rangle$, and some path $P'_{u_{2}}$ joining $u_{2}$ to $u$ such that $P'_{u_{2}}$ is contained in $\langle U(u)\rangle$. Let $u'$ be the vertex closest to $u_{1}$ along $P'_{u_{1}}$ where $P'_{u_{1}}$ and $P'_{u_{2}}$ meet. (Note that it is possible that $u = u'$.)

Denote by $P_{u_{1}}$ the path $u_{1}P'_{u_{1}}u'$. Denote by $P_{u_{2}}$ the path $u_{2}P'_{u_{2}}u'$. Denote by $P_{u'}$ the path $v_{0}u\cup uP'_{u_{2}}u'$. Note that the three paths $P_{u_{1}}$, $P_{u_{2}}$, $P_{u'}$ meet only at $u'$ (see Figure \ref{w7b}).

\begin{figure}[htbp]
\begin{center}
\includegraphics[width=0.5\textwidth]{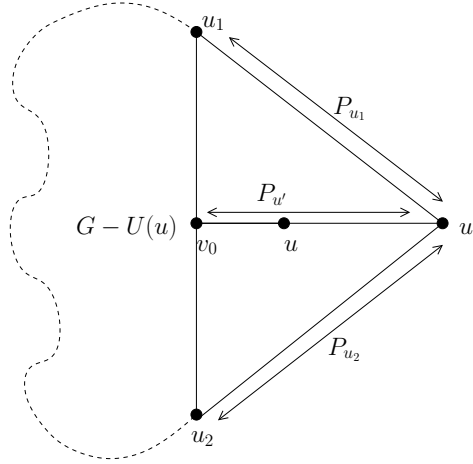}
\caption{Case (b): $P_{u_{1}}$, $P_{u_{2}}$, and $P_{u'}$.}
\label{w7b}
\end{center}
\end{figure}

\vspace{0.2in}
\noindent \textbf{Case (b)(i): $u_{1} = v_{1}, u_{2} = v_{3}$} (Figure \ref{w7case1})

\begin{figure}[htbp]
\begin{center}
\includegraphics[width=0.5\textwidth]{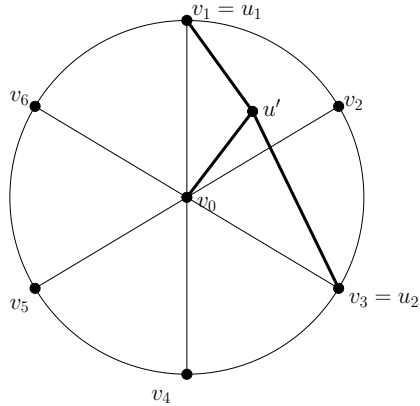}
\caption{Case (b)(i): $u_{1} = v_{1}, u_{2} = v_{3}$}
\label{w7case1}
\end{center}
\end{figure}

Let $W = \{v_{0}, v_{1}, v_{3}\}$. Let $H_{2}$ be the subgraph consisting of the path from $v_{1}$ to $v_{3}$ that passes through $v_{2}$, not including endpoints, and all of $P_{2}$ except for $v_{0}$. Let $H_{4}$ be the subgraph consisting of the path from $v_{1}$ to $v_{3}$ that passes through $v_{4}, v_{5}, v_{6}$, not including endpoints, and all of $P_{4}$, $P_{5}$, and $P_{6}$ except for $v_{0}$.

\textbf{1.} Suppose there exists some path $Q$ from some point in $H_{2}$ to some point in $H_{4}$.

Using the program, all possible configurations of such a path were tested for the presence of a $W_{7}$-subdivision. All but two were found to contain a $W_{7}$-subdivision: the two exceptions are shown in Figure \ref{w7case1_Q1}. Suppose $G$ contains the structure shown in one of these two graphs.

\begin{figure}[htbp]
\begin{center}
\includegraphics[width=0.8\textwidth]{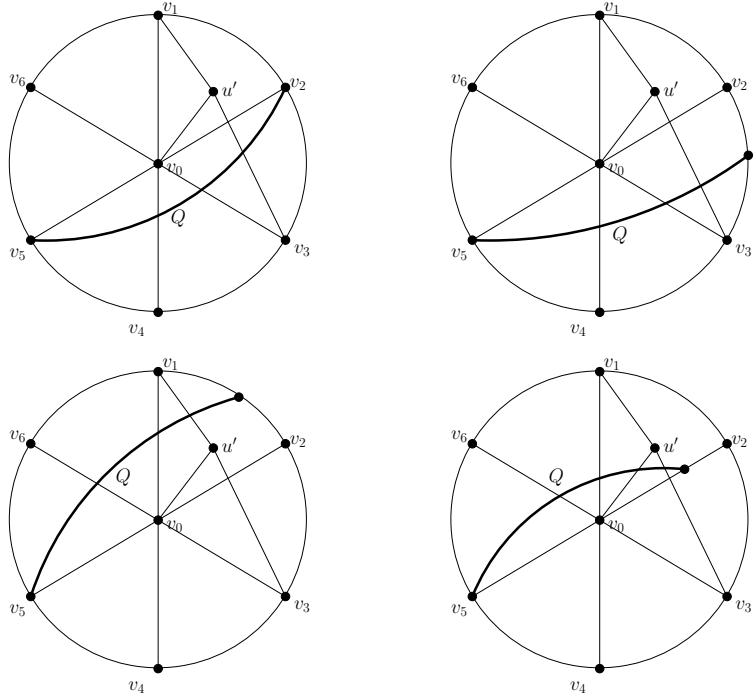}
\caption{Case (b)(i), path $Q$ from $H_{2}$ to $H_{4}$.}
\label{w7case1_Q1}
\end{center}
\end{figure}

\textbf{1.1.} Suppose $W$ is not a separating set of $G$. Then, there exists a path $R$ in $G$ such that $V(H_{2}\cup H_{4})$ is contained in the bridge $U(u)$. Testing all possible configurations of $R$ results in eight graphs that do not contain a $W_{7}$-subdivision, all of which are shown in Figure \ref{w7case1_Q1E1}. Suppose $G$ contains the structure shown in one of these graphs.

\begin{figure}[htbp]
\begin{center}
\includegraphics[width=0.9\textwidth]{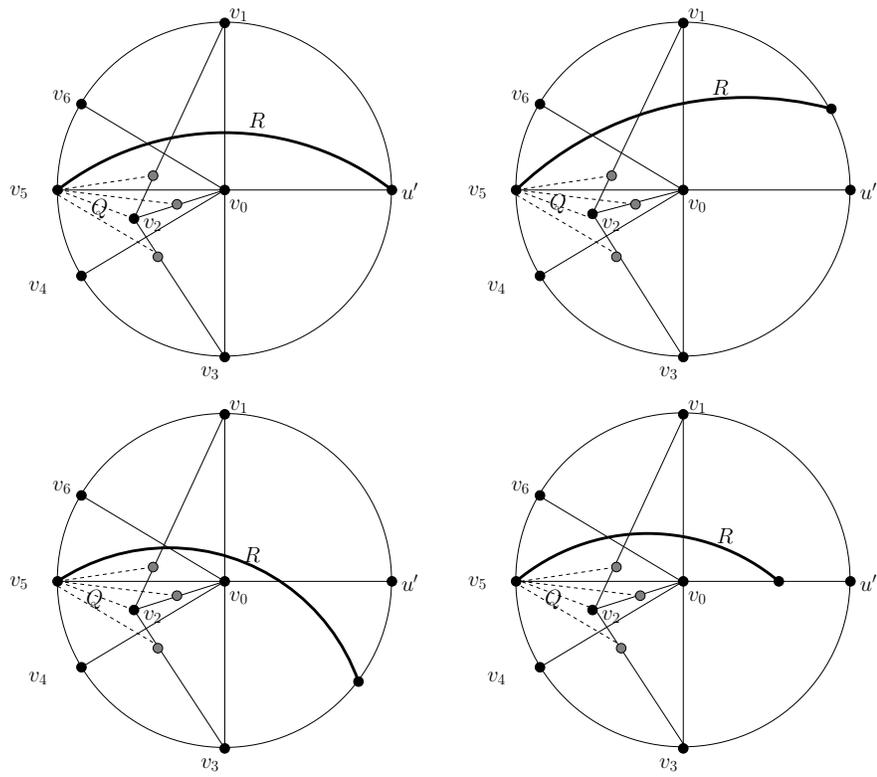}
\caption{Case (b)(i), Path $R$, Case 1.1.}
\label{w7case1_Q1E1}
\end{center}
\end{figure}

\textbf{1.1.1.} Consider the set $S_{1} = \{v_{0}, v_{1}, v_{5}\}$ in each of the graphs of Figure \ref{w7case1_Q1E1}. Suppose this is not a separating set, but rather there exists some path $R_{1}$ that prevents the removal of $S_{1}$ from separating the graph. Figure \ref{w7case1_Q1E1_1} shows the graphs not containing a $W_{7}$-subdivision that can result from such a path.  Suppose $G$ contains the structure shown in one of these graphs.

\begin{figure}[htbp]
\begin{center}
\includegraphics[width=0.9\textwidth]{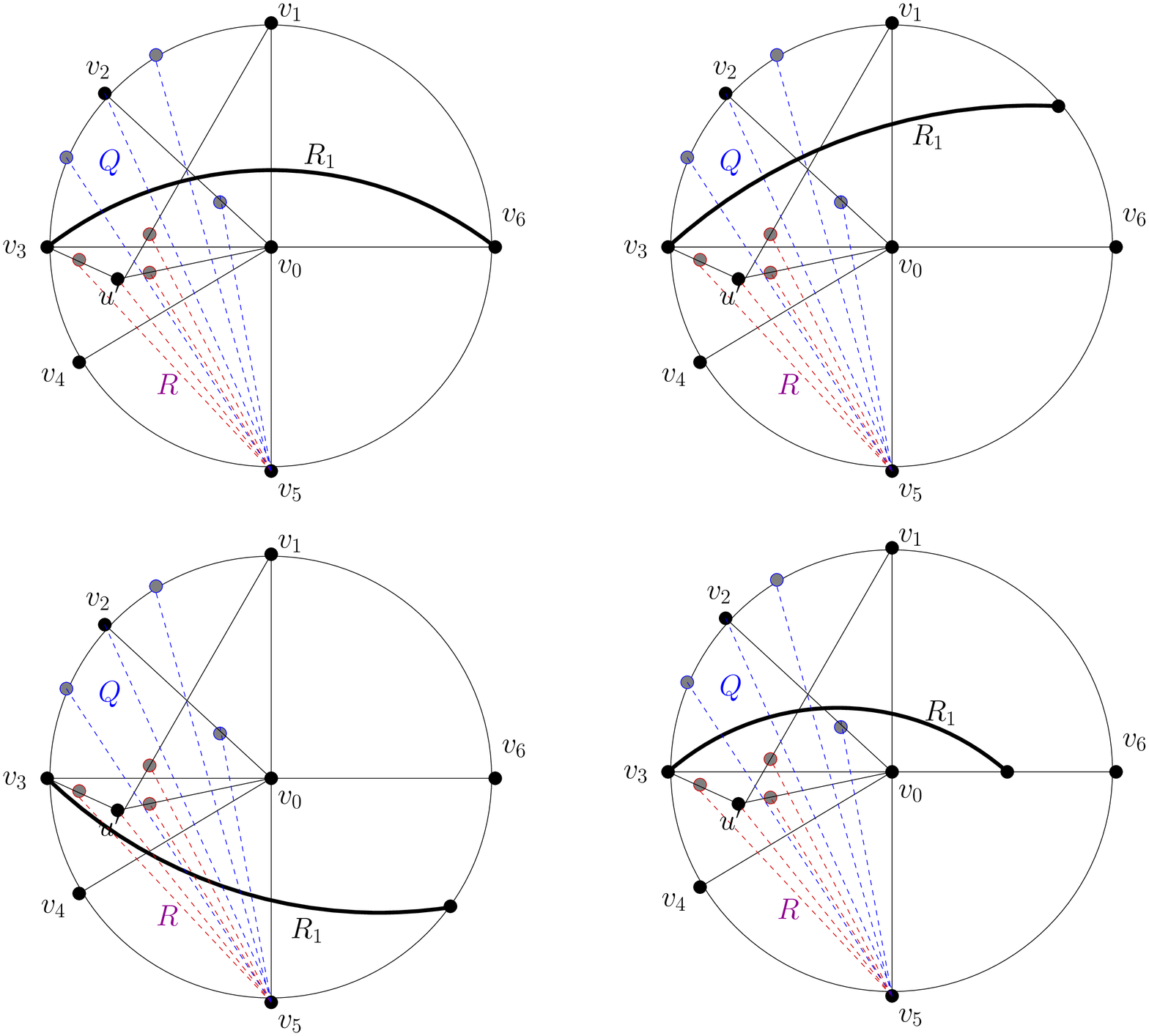}
\caption{Case (b)(i), Path $R_{1}$, Case 1.1.1.}
\label{w7case1_Q1E1_1}
\end{center}
\end{figure}

\textbf{1.1.1.1.} Consider now the set $S_{2} = \{v_{0}, v_{3}, v_{5}\}$ in the graph of Figure \ref{w7case1_Q1E1_1}. Suppose this is not a separating set, but rather there exists some path $R_{2}$ that prevents the removal of $S_{2}$ from separating the graph. Figure \ref{w7case1_Q1E1_2} shows the graphs not containing a $W_{7}$-subdivision that can result from such a path.  Suppose $G$ contains the structure shown in one of these graphs.

\begin{figure}[htbp]
\begin{center}
\includegraphics[width=0.9\textwidth]{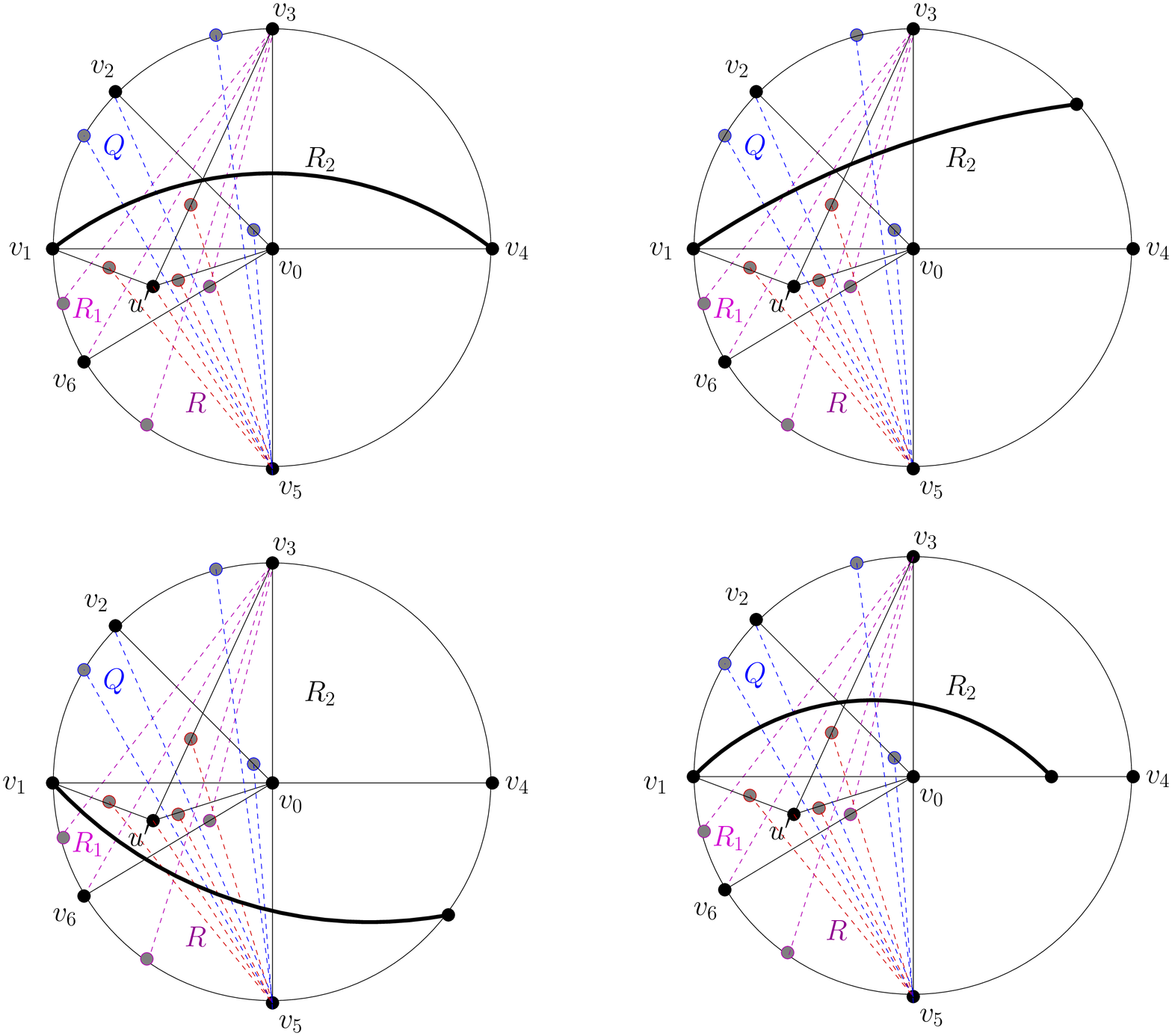}
\caption{Case (b)(i), Path $R_{2}$, Exception 1.1.1.1.}
\label{w7case1_Q1E1_2}
\end{center}
\end{figure}

\textbf{1.1.1.1.1.} Consider now the set $S_{3} = \{v_{0}, v_{1}, v_{3}, v_{5}\}$ in the graph of Figure \ref{w7case1_Q1E1_2}. Suppose this is not a separating set, but rather there exists some path $R_{3}$ that prevents the removal of $S_{3}$ from separating the graph. Searching and checking using the program shows that such a path will always result in the creation of a $W_{7}$-subdivision in $G$. Figure \ref{w7case1_Q1E1_3} shows an example.

\begin{figure}[htbp]
\begin{center}
\includegraphics[width=0.8\textwidth]{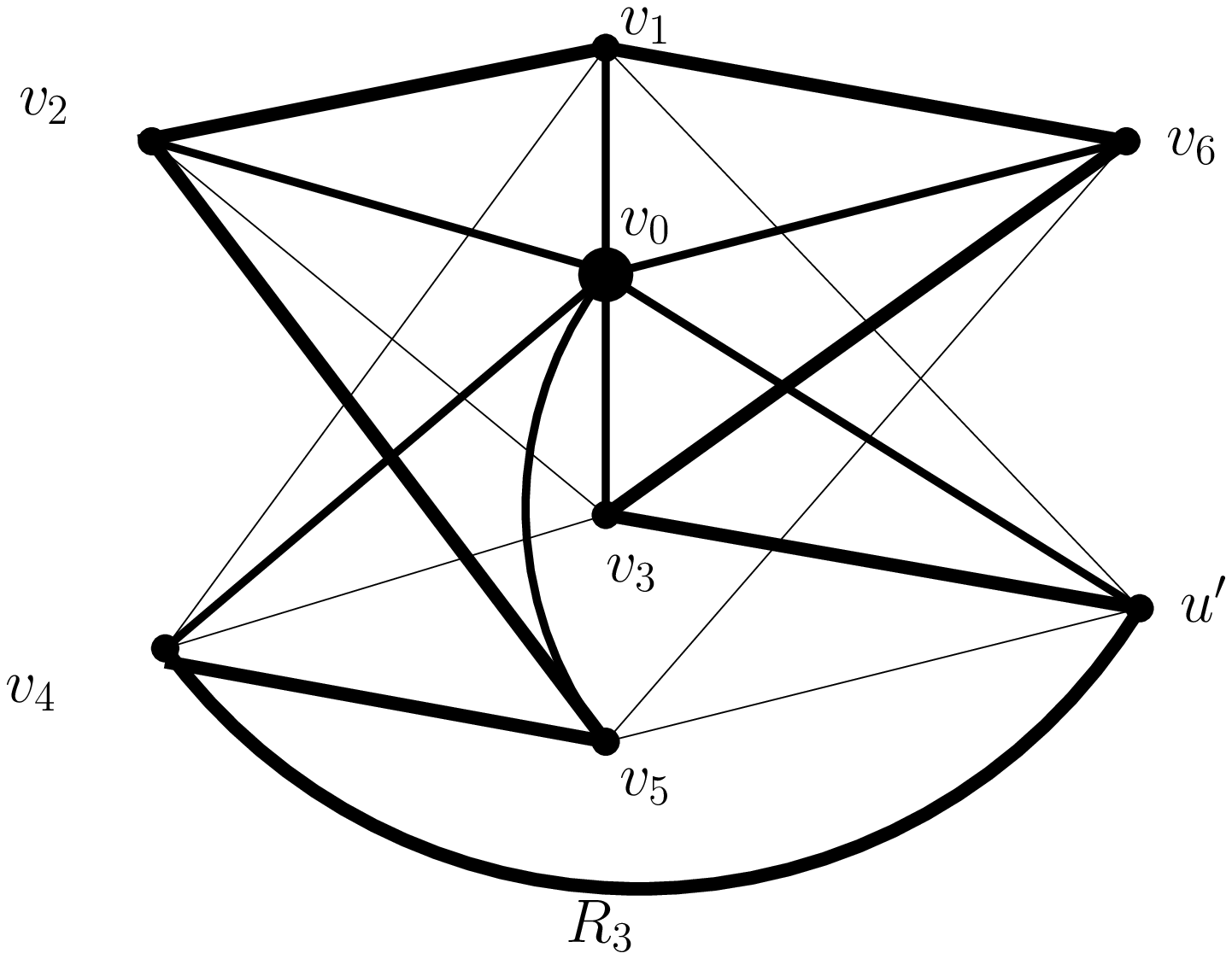}
\caption{Case (b)(i), $W_{7}$-subdivision created by path $R_{3}$}
\label{w7case1_Q1E1_3}
\end{center}
\end{figure}

\textbf{1.1.1.1.2.} Suppose then there is no such path $R_{3}$. Thus, $S_{3}$ forms a separating set of size 4 in $G$. Let $U_{2}$ be the bridge of $G|S_{3}$ containing $v_{2}$; $U_{4}$ be the bridge of $G|S_{3}$ containing $v_{4}$; $U_{6}$ be the bridge of $G|S_{3}$ containing $v_{6}$; and $U_{u'}$ be the bridge of $G|S_{3}$ containing $u'$. (See Figure \ref{w7case1_11112}.)

\begin{figure}[htbp]
\begin{center}
\includegraphics[width=0.8\textwidth]{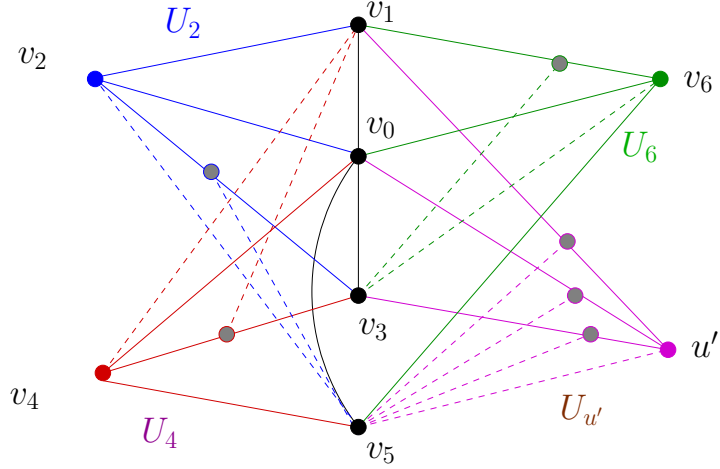}
\caption{Case (b)(i), 1.1.1.1.2.}
\label{w7case1_11112}
\end{center}
\end{figure}

Let $S_{4} = \{v_{1}, v_{3}, v_{5}\}$.

(A) Suppose $S_{4}$ forms a separating set, such that $G|S_{4}$ has at least four bridges, say, $T_{1}$, $T_{2}$, $T_{3}$, and $T_{4}$.

Without loss of generality, suppose that $T_{1}$ contains the bridges $U_{2}$, $U_{4}$, $U_{6}$, and $U_{u'}$ (since while these are separate bridges of $G|S_{3}$, they must all be contained in the one bridge of $G|S_{4}$). Thus, each of $T_{2}$, $T_{3}$, and $T_{4}$ forms a bridge of $G|S_{3}$ as well as $G|S_{4}$. Therefore, there exist three disjoint paths, $P_{t1}$, $P_{t2}$, and $P_{t3}$, from $v_{1}$ to $v_{3}$, $v_{1}$ to $v_{5}$, and $v_{3}$ to $v_{5}$ respectively, such that each of these paths are in a separate bridge of $G|S_{3}$, and none of these paths are in the bridges $U_{2}$, $U_{4}$, $U_{6}$, or $U_{u'}$.

Suppose there exists some vertex $i\in S_{3}$ with degree $\ge 7$ such that some bridge of $G|S_{3}$ contains at least two neighbours of $i$ not in $S_{3}$. By Lemma \ref{twonbrs}, then, there exists a $W_{7}$-subdivision centred on $i$. Table \ref{t_w7i11112a} shows how Lemma \ref{twonbrs} can be applied.

\begin{table}[!h]
\begin{tabular}{p{0.35\textwidth}|p{0.3\textwidth}|p{0.3\textwidth}}
\hline
\textbf{Required in Lemma \ref{twonbrs}} & $S = \{t, u, v, w\}$ & Bridges $X$, $Y$, $Z$ of $G|S$ \\
\hline
\textbf{Equivalent construct in $G$} & $S_{3} = \{v_{0}, v_{1}, v_{3}, v_{5}\}$ & Any three of the bridges $U_{2}$, $U_{4}$, $U_{6}$, $U_{u'}$ \\
\hline \hline
\textbf{Required in Lemma \ref{twonbrs}} & $v$ & $P_{u}$, $P_{w}$, $P_{t}$ \\
\hline
\textbf{Equivalent construct in $G$} & $i$ & The three paths in $\{P_{1}, P_{3}, P_{5}, P_{t1}, P_{t2}, P_{t3}\}$ with $i$ as an endpoint \\
\hline

\end{tabular}
\caption{Case (b)(i), 1.1.1.1.2.(A): Applying Lemma \ref{twonbrs} to $G$.}
\label{t_w7i11112a}
\end{table}

Assume then that no such vertex $i$ exists in $S_{3}$. Then, since each vertex in $S_{3}$ is known to have degree $\ge 7$, because of bridges $T_{2}$, $T_{3}$, $T_{4}$, as well as the structures shown in Figure \ref{w7case1_Q1E1_2}, each bridge of $G|S_{3}$ must contain at most one neighbour not in $S_{3}$ of each vertex in $S_{3}$. Therefore, for each bridge $U$ of $G|S_{3}$, there are at most four edges joining $S_{3}$ to $U\setminus S_{3}$. Thus, each of the bridges of $G|S_{3}$ that contains all of $S_{3}$ can contain at most four vertices not in $S_{3}$, otherwise an internal $(1,1,1,1)$-cutset exists. Reduction \ref{r7} can therefore be performed on $G$.

(B) Assume then that if $\{v_{1}, v_{3}, v_{5}\}$ forms a separating set, its removal separates $G$ into at most three components.

Suppose there exists some vertex $i\in S_{3}$ with degree $\ge 7$ such that some bridge of $G|S_{3}$ contains at least three neighbours of $i$ not in $S_{3}$. By Lemma \ref{threenbrs}, then, there exists a $W_{7}$-subdivision centred on $i$. Table \ref{t_w7i11112b_1} shows how Lemma \ref{threenbrs} can be applied to $G$.

\begin{table}[!h]
\begin{tabular}{p{0.35\textwidth}|p{0.3\textwidth}|p{0.3\textwidth}}
\hline
\textbf{Required in Lemma \ref{threenbrs}} & $S = \{t, u, v, w\}$ & Bridges $X$, $Y$, $Z$ of $G|S$ \\
\hline
\textbf{Equivalent construct in $G$} & $S_{3} = \{v_{0}, v_{1}, v_{3}, v_{5}\}$ & Any three of the bridges $U_{2}$, $U_{4}$, $U_{6}$, $U_{u'}$ \\
\hline \hline
\textbf{Required in Lemma \ref{threenbrs}} & $v$ & $P_{u}$, $P_{w}$ \\
\hline
\textbf{Equivalent construct in $G$} & $i$ & Path or paths in $\{P_{1}, P_{3}, P_{5}\}$ with $i$ as an endpoint, and if required, a second path contained in whichever of the four bridges $U_{2}$, $U_{4}$, $U_{6}$, $U_{u'}$ has not been used to form $X$, $Y$, or $Z$. \\
\hline

\end{tabular}
\caption{Case (b)(i), 1.1.1.1.2.(B): Applying Lemma \ref{threenbrs} to $G$.}
\label{t_w7i11112b_1}
\end{table}

Assume then that no such vertex $i$ exists in $S_{3}$.

Suppose there exists some vertex $i\in S_{3}$ with degree $\ge 7$ such that two bridges of $G|S_{3}$ each contain two neighbours of $i$ not in $S_{3}$. By Lemma \ref{twotwonbrs}, then, there exists a $W_{7}$-subdivision centred on $i$. Assume then that no such vertex $i$ exists in $S_{3}$.

Suppose there exists some vertex $i\in S_{3}$ with degree $\ge 7$ such that some bridge $X$ of $G|S_{3}$ contains two neighbours of $i$ not in $S_{3}$. If $i=v_{0}$, then by Lemma \ref{twonbrs} there exists a $W_{7}$-subdivision in $G$. (Table \ref{t_w7i11112b_2} shows how Lemma \ref{twonbrs} can be applied.) Suppose then that $i\in \{v_{1}, v_{3}, v_{5}\}$. By the assumptions already made in this case (B), $X\setminus S_{3}$ can contain no more than two neighbours of $i$ (or Lemma \ref{threenbrs} would apply), and each bridge of $G|S_{1}$ other than $X$ contains at most one neighbour of $i$ not in $S_{3}$ (or Lemma \ref{twotwonbrs} would apply). Thus, $i$ can have no more than six neighbours in $U_{2}\cup U_{4}\cup U_{6}\cup U_{u'}$. Since $i$ has degree $\ge 7$, there must be some fifth bridge $A$ of $G|S_{3}$ that contains $i$. Thus, Lemma \ref{twonbrs} can again be applied to show that there exists a $W_{7}$-subdivision in $G$.

\begin{table}[!h]
\begin{tabular}{p{0.35\textwidth}|p{0.3\textwidth}|p{0.3\textwidth}}
\hline
\textbf{Required in Lemma \ref{twonbrs}} & $S = \{t, u, v, w\}$ & Bridges $X$, $Y$, $Z$ of $G|S$ \\
\hline
\textbf{Equivalent construct in $G$} & $S_{3} = \{v_{0}, v_{1}, v_{3}, v_{5}\}$ & Any three of the bridges $U_{2}$, $U_{4}$, $U_{6}$, $U_{u'}$ \\
\hline \hline
\textbf{Required in Lemma \ref{twonbrs}} & $v$ & $P_{u}$, $P_{w}$, $P_{t}$ \\
\hline
\textbf{Equivalent construct in $G$} & $v_{0}$ & $P_{1}$, $P_{3}$, $P_{5}$ \\
\hline

\end{tabular}
\caption{Case (b)(i), 1.1.1.1.2.(B): Applying Lemma \ref{twonbrs} to $G$, where $i = v_{0}$.}
\label{t_w7i11112b_2}
\end{table}

\begin{table}[!h]
\begin{tabular}{p{0.35\textwidth}|p{0.3\textwidth}|p{0.3\textwidth}}
\hline
\textbf{Required in Lemma \ref{twonbrs}} & $S = \{t, u, v, w\}$ & Bridges $X$, $Y$, $Z$ of $G|S$ \\
\hline
\textbf{Equivalent construct in $G$} & $S_{3} = \{v_{0}, v_{1}, v_{3}, v_{5}\}$ & Any three of the bridges $U_{2}$, $U_{4}$, $U_{6}$, $U_{u'}$ \\
\hline \hline
\textbf{Required in Lemma \ref{twonbrs}} & $v$ & $P_{u}$, $P_{w}$, $P_{t}$ \\
\hline
\textbf{Equivalent construct in $G$} & $i$ & One path in $\{P_{1}, P_{3}, P_{5}\}$ with $i$ as an endpoint; one path in bridge $A$; and one path in whichever of the four bridges $U_{2}$, $U_{4}$, $U_{6}$, $U_{u'}$ has not been used to form $X$, $Y$, or $Z$. \\
\hline

\end{tabular}
\caption{Case (b)(i), 1.1.1.1.2.(B): Applying Lemma \ref{twonbrs} to $G$, where $i\in \{v_{1}, v_{3}, v_{5}\}$.} \label{t_w7i11112b_3}
\end{table}

Assume then that there is no such vertex $i$ in $S_{3}$. Thus, each bridge has at most one neighbour of any vertex in $S_{3}$ with degree $\ge 7$. The following points then hold:

\begin{itemize}
\item Since we know that $v_{0}$ has degree $\ge 7$, $v_{0}$ must have at most one neighbour not in $S_{3}$ in each bridge of $G|S_{3}$.
\item For each vertex $i\in \{v_{1}, v_{3}, v_{5}\}$, we know that $i$ has degree $\ge 5$. Thus, if there exist at least two bridges that each contain more than one neighbour of $i$, then $i$ has degree $\ge 7$. However, we have already assumed that such a vertex cannot exist. Thus, for each $i$, there can be at most one bridge of $G|S_{3}$ that contains more than one neighbour of $i$ not in $S_{3}$.
\end{itemize}

Let $\mathcal{B}$ be the set of bridges of $G|S_{3}$ such that for each bridge $U \in \mathcal{B}$, there exists some vertex $i\in S_{3}$ which has at least two neighbours in $U\setminus S_{3}$. Given the two points above, we know that $|\mathcal{B}| \le 3$. Thus, there exists some bridge $X \in \{U_{2}, U_{4}, U_{6}, U_{u'}\}$ such that $X \notin \mathcal{B}$. In other words, there are only four edges joining $S_{3}$ to $X\setminus S_{3}$. If $|X\setminus S_{3}| \ge 5$, then, an internal $(1,1,1,1)$-cutset exists in $G$. Assume then that $|X\setminus S_{3}| < 5$.  Then Reduction \ref{r7} can be performed on $G$.

\textbf{1.1.1.2.} Suppose now that no such path $R_{2}$ exists, that is, $S_{2}$ forms a separating set in $G$. Denote by $U_{4}$ the bridge of $G|S_{2}$ containing $v_{4}$. Denote by $U_{2}$ the bridge of $G|S_{2}$ containing $v_{2}$. (See Figure \ref{w7case1_1112}.)

\begin{figure}[htbp]
\begin{center}
\includegraphics[width=0.8\textwidth]{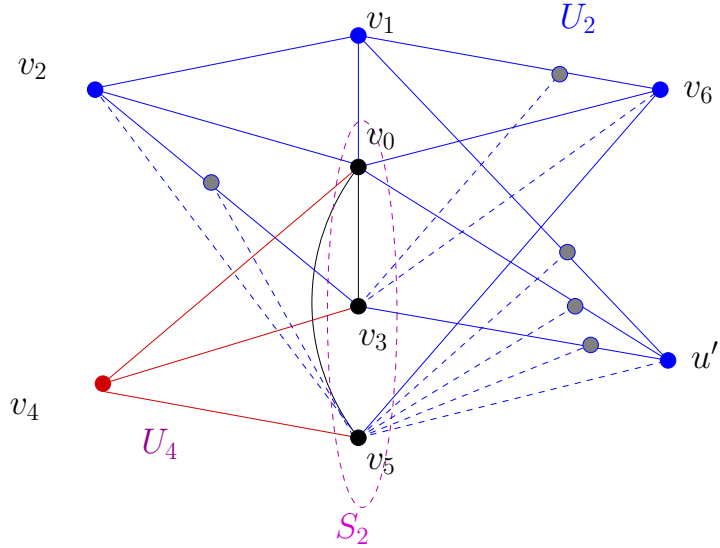}
\caption{Case (b)(i), 1.1.1.2.}
\label{w7case1_1112}
\end{center}
\end{figure}

Suppose there exists some internal vertex on one of the paths $P_{3}$ or $P_{5}$ such that this vertex is contained in either $U_{2}$ or $U_{4}$. It is routine to check that, given 3-connectivity, the existence of such a vertex will result in a $W_{7}$-subdivision, regardless of which other vertices it is adjacent to in its containing bridge. (See Figure \ref{w7case1_1112_1} for an example of such a situation.)

\begin{figure}[htbp]
\begin{center}
\includegraphics[width=0.8\textwidth]{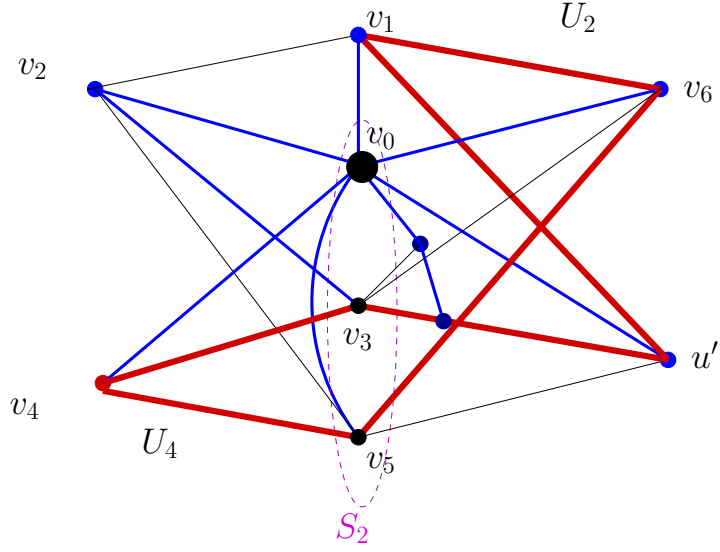}
\caption{Case (b)(i), 1.1.1.2: Example of an internal vertex on $P_{3}$ contained in $U_{2}$ resulting in a $W_{7}$-subdivision.}
\label{w7case1_1112_1}
\end{center}
\end{figure}

Suppose then that no internal vertices of $P_{3}$ or $P_{5}$ are contained in either $U_{2}$ or $U_{4}$.

(A) Suppose there exists some third bridge $A$ of $G|S_{2}$.

Suppose that $A$ contains internal vertices on both of the paths $P_{3}$ and $P_{5}$. Then it is straightforward to check that a $W_{7}$-subdivision exists in $G$. Assume then that $A$ contains internal vertices of at most one of these paths.

(A)(i) Suppose firstly that such vertices are on $P_{3}$, if they exist.

If there are any bridges of $G|S_{2}$ other than $A$, $U_{2}$, or $U_{4}$, then by Lemma \ref{threebridges}, a $W_{7}$-subdivision exists in $G$. Table \ref{t_w7i1112_1} shows how Lemma \ref{threebridges} can be applied.

\begin{table}[!h]
\begin{tabular}{p{0.35\textwidth}|p{0.3\textwidth}|p{0.3\textwidth}}
\hline
\textbf{Required in Lemma \ref{threebridges}} & $S = \{u, v, w\}$ & Bridges $X$, $Y$ of $G|S$ \\
\hline
\textbf{Equivalent construct in $G$} & $S_{2} = \{v_{0}, v_{3}, v_{5}\}$  & $U_{4}$ and $A$ \\
\hline \hline
\textbf{Required in Lemma \ref{threebridges}}  & Bridge $Z$ of $G|S$ containing $\ge$ 3 neighbours of $v$ not in $S$ & $P_{u}$, $P_{w}$ \\
\hline
\textbf{Equivalent construct in $G$}  & $U_{2}$ contains $\ge 3$ neighbours of $v_{0}$ not in $S_{2}$ & $P_{5}$; path from $v_{0}$ to $v_{3}$ in some fourth bridge of $G|S_{2}$ other than $A$, $U_{2}$, or $U_{4}$. \\
\hline

\end{tabular}
\caption{Case (b)(i), 1.1.1.2: Applying Lemma \ref{threebridges} to $G$, where there are at least four bridges of $G|S_{2}$.}
\label{t_w7i1112_1}
\end{table}

Assume then that only three bridges of $G|S_{2}$ exist: $A$, $U_{2}$, and $U_{4}$. Since no internal vertices of $P_{3}$ or $P_{5}$ are contained in either $U_{2}$ or $U_{4}$, and $A$ may contain internal vertices of $P_{3}$ but not $P_{5}$, it can be assumed that $P_{5}$ is a single edge.

If either $A\setminus S_{2}$ or $U_{4}\setminus S_{2}$ contains more than one neighbour of either $v_{0}$ or $v_{5}$, then by Lemma \ref{lemmaW7}, a $W_{7}$-subdivision exists in $G$ (see Table \ref{t_w7i1112_2}). Assume then that $A\setminus S_{2}$ and $U_{4}\setminus S_{2}$ each contain at most one neighbour of $v_{0}$ and at most one neighbour of $v_{5}$.

\begin{table}[!h]
\begin{tabular}{p{0.35\textwidth}|p{0.3\textwidth}|p{0.3\textwidth}}
\hline
\textbf{Required in Lemma \ref{lemmaW7}} & Set $S = \{u, v, w\}$ & Bridge $X$ of $G|S$ such that $X\setminus S$ has $\ge 3$ neighbours of $v$\\
\hline
\textbf{Equivalent construct in $G$} & Set $S_{2} = \{v_{0}, v_{3}, v_{5}\}$ & $U_{2}\setminus S_{2}$ contains $\ge 3$ neighbours of both $v_{0}$ and $v_{5}$\\
\hline \hline
\textbf{Required in Lemma \ref{lemmaW7}}  & Bridge $Y$ of $G|S$ such that $Y\setminus S$ has $\ge 2$ neighbours of $v$  & Paths $P_{w}$ and $P_{u}$ \\
\hline
\textbf{Equivalent construct in $G$}  & Some bridge $Y' \in \{U_{4}, A\}$ contains $\ge 2$ neighbours of either $v_{0}$ or $v_{5}$ & $P_{5}$ forms one of the required paths; the other is contained in the bridge other than $Y'$ in $\{U_{4}, A\}$. \\
\hline

\end{tabular}
\caption{Case (b)(i), 1.1.1.2: Applying Lemma \ref{lemmaW7} to $G$, where $A\setminus S_{2}$ or $U_{4}\setminus S_{2}$ contains more than one neighbour of $v_{0}$ or $v_{5}$.}
\label{t_w7i1112_2}
\end{table}

If either $A\setminus S_{2}$ or $U_{4}\setminus S_{2}$ contains more than two neighbours of $v_{3}$, then by applying Lemma \ref{lemma2} to that bridge and to $U_{2}$, a $W_{7}$-subdivision can be formed. Assume then that each of these bridges contain at most two neighbours of $v_{3}$ not in $S_{2}$. Thus, since there are at most four edges joining $S_{2}$ to $A\setminus S_{2}$, and at most four edges joining $S_{2}$ to $U_{4}\setminus S_{2}$, there can be at most two vertices in each of $A\setminus S_{2}$ and $U_{4}\setminus S_{2}$, otherwise an internal 4-edge-cutset exists in $G$. Therefore, if $A$ contains no internal vertices on the path $P_{3}$, either Reduction \ref{r1}A or \ref{r1}B can be performed (see Figure \ref{w7case1_1112_2} for an example). Assume then that $A$ contains some such vertex.

\begin{figure}[htbp]
\begin{center}
\includegraphics[width=\textwidth]{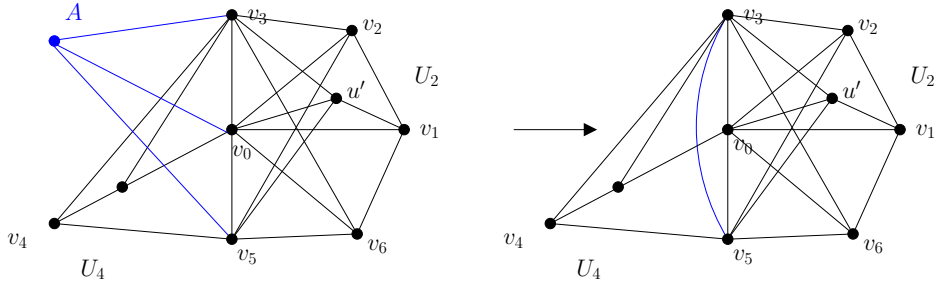}
\caption{Case (b)(i), 1.1.1.2: Example of Reduction \ref{r1}A being performed when $P_{3}$ has no internal vertices.}
\label{w7case1_1112_2}
\end{center}
\end{figure}

If $|(A\cup U_{4})\setminus S_{2}| = 4$, then a type 3 edge-vertex-cutset can be formed from $v_{3}$ and the four edges joining $\{v_{0}, v_{5}\}$ to $(A\cup U_{4})\setminus S_{2}$. Assume then that $|(A\cup U_{4})\setminus S_{2}| \le 3$.

Let $S_{3} = \{v_{0}, v_{1}, v_{3}, v_{5}\}$. Suppose there exists some path in $U_{2}$ such that $u'$, $v_{2}$, and $v_{6}$ are not each in separate bridges of $G|S_{3}$. It is straightforward to check that a $W_{7}$-subdivision then exists in $G$ (see Figure \ref{w7case1_1112_3} for an example). Suppose then that no such path exists, that is, each of $u'$, $v_{2}$, and $v_{6}$ are in separate bridges of $G|S_{3}$. Call these bridges $T_{u'}$, $T_{2}$, and $T_{6}$ respectively. Note that $U_{4}$ and $A$ also form bridges of $G|S_{3}$.

\begin{figure}[htbp]
\begin{center}
\includegraphics[width=0.8\textwidth]{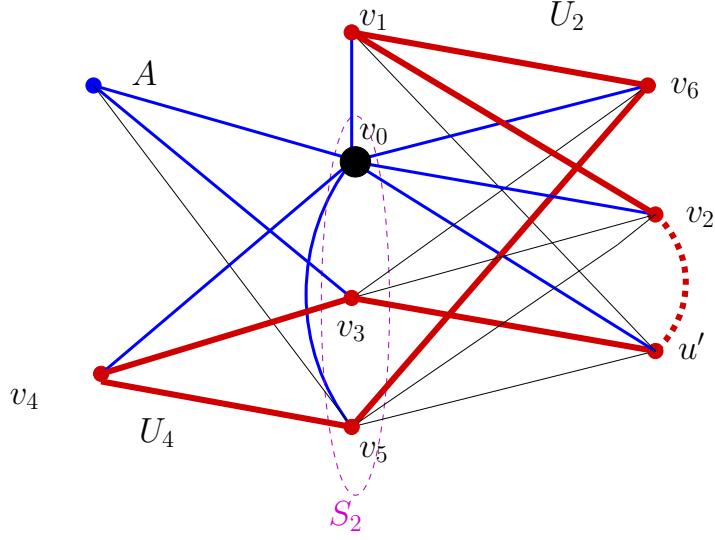}
\caption{Case (b)(i), 1.1.1.2: Example of path disjoint from $S_{3}$ joining $u'$ and $v_{2}$ creating a $W_{7}$-subdivision.}
\label{w7case1_1112_3}
\end{center}
\end{figure}

Let $i\in \{v_{0}, v_{5}\}$. Suppose there exists some bridge $U$ of $G|S_{3}$ such that $U\setminus S_{3}$ contains more than one neighbour of $i$. Then by Lemma \ref{twonbrs}, there exists a $W_{7}$-subdivision centred on $v_{0}$ (see Table \ref{t_w7i1112_3}).

\begin{table}[!h]
\begin{tabular}{p{0.35\textwidth}|p{0.3\textwidth}|p{0.3\textwidth}}
\hline
\textbf{Required in Lemma \ref{twonbrs}} & $S = \{t, u, v, w\}$ & Bridges $X$, $Y$, $Z$ of $G|S$ \\
\hline
\textbf{Equivalent construct in $G$} &  $S_{3} = \{v_{0}, v_{1}, v_{3}, v_{5}\}$ & $T_{u'}$, $T_{2}$, $U_{4}$  \\
\hline \hline
\textbf{Required in Lemma \ref{twonbrs}}  & $v$ & $P_{u}$, $P_{w}$, $P_{t}$ \\
\hline
\textbf{Equivalent construct in $G$}  & $i\in\{v_{0}, v_{5}\}$ & $P_{5}$, path from $i$ to $v_{3}$ in $\langle A\rangle$, path from $i$ to $v_{1}$ in $\langle T_{6}\rangle $. \\
\hline

\end{tabular}
\caption{Case (b)(i), 1.1.1.2: Applying Lemma \ref{twonbrs} to $G$, where some bridge $U\setminus S_{3}$ contains more than one neighbour of $v_{0}$.}
\label{t_w7i1112_3}
\end{table}

Assume then that no such bridge $U$ exists, that is, each bridge of $G|S_{3}$ contains at most one neighbour of $v_{0}$ not in $S_{3}$, and at most one neighbour of $v_{5}$ not in $S_{3}$.

Suppose then there exists some bridge $U$ of $G|S_{3}$ such that $U\cap S_{3} = S_{3}$, and $U\setminus S_{3}$ contains at least four neighbours of some vertex $i$, where $i\in \{v_{1}, v_{3}\}$. Then by Lemma \ref{fournbrs}, there exists a $W_{7}$-subdivision centred on $i$ (see Table \ref{t_w7i1112_4}).

\begin{table}[!h]
\begin{tabular}{p{0.35\textwidth}|p{0.3\textwidth}|p{0.3\textwidth}}
\hline
\textbf{Required in Lemma \ref{fournbrs}} & $S = \{t, u, v, w\}$ & Bridges $X$, $Y$, $Z$ of $G|S$ \\
\hline
\textbf{Equivalent construct in $G$} & $S_{3} = \{v_{0}, v_{1}, v_{3}, v_{5}\}$ & $T_{u'}$, $T_{2}$, $U_{4}$ \\
\hline \hline
\textbf{Required in Lemma \ref{fournbrs}}  & $v$, $u$  & $P_{u}$ \\
\hline
\textbf{Equivalent construct in $G$}  & $i\in \{v_{1}, v_{3}\}$, $v_{0}$  & path from $v_{0}$ to $i$ in $\langle T_{6}\rangle $\\
\hline

\end{tabular}
\caption{Case (b)(i), 1.1.1.2: Applying Lemma \ref{fournbrs} to $G$, where some bridge $U\setminus S_{3}$ contains at least four neighbours of $i\in \{v_{1}, v_{3}\}$.}
\label{t_w7i1112_4}
\end{table}

Assume then that no such bridge $U$ exists, that is, any bridge containing all vertices in $S_{3}$ contains at most three neighbours not in $S_{3}$ of each of $v_{1}$ and $v_{3}$. 

Suppose there exists some vertex $i\in \{v_{1}, v_{3}\}$ with degree $\ge 7$ such that some bridge $U$ of $G|S_{3}$ contains at least three neighbours of $i$ not in $S_{3}$. If there exists some bridge $U'$ of $G|S_{3}$ such that $U'\cap S_{3} = \{v_{1}, v_{3}, v_{5}\}$, then by Lemma \ref{threenbrs} a $W_{7}$-subdivision exists centred on $i$ (see Table \ref{t_w7i1112_6}). Assume then that no such bridge $U'$ exists, that is, $\{v_{1}, v_{3}, v_{5}\}$ is not a separating set of $G$. If some bridge of $G|S_{3}$ other than $U$ contains at least two neighbours of $i$ not in $S_{3}$, then by Lemma \ref{threetwonbrs} a $W_{7}$-subdivision can be formed centred on $i$ (see Table \ref{t_w7i1112_7}). Suppose then that all bridges of $G|S_{3}$ other than $U$ contain at most one neighbour of $i$ not in $S_{3}$. Then, since $i$ has degree $\ge 7$, there must either exist some fifth bridge $X$ of $G|S_{3}$ which contains $i$, or $i$ must be adjacent to some other vertex $j$ in $S_{3}$. Thus, Lemma \ref{threenbrs} can be applied to show that a $W_{7}$-subdivision exists centred on $i$ (see Table \ref{t_w7i1112_8}). Assume then that no such vertex $i$ exists in $\{v_{1}, v_{3}\}$.

\begin{table}[!h]
\begin{tabular}{p{0.35\textwidth}|p{0.3\textwidth}|p{0.3\textwidth}}
\hline
\textbf{Required in Lemma \ref{threenbrs}} & $S = \{t, u, v, w\}$ & Bridges $X$, $Y$, $Z$ of $G|S$ \\
\hline
\textbf{Equivalent construct in $G$} & $S_{3} = \{v_{0}, v_{1}, v_{3}, v_{5}\}$ & $T_{u'}$, $T_{2}$, $U_{4}$ \\
\hline \hline
\textbf{Required in Lemma \ref{threenbrs}}  & $v$ & $P_{u}$, $P_{w}$ \\
\hline
\textbf{Equivalent construct in $G$}  & $i\in \{v_{1}, v_{3}\}$ & path in $\langle T_{6}\rangle$, path in $\langle U'\rangle$\\
\hline

\end{tabular}
\caption{Case (b)(i), 1.1.1.2: Applying Lemma \ref{threenbrs} to $G$, where some bridge $U\setminus S_{3}$ contains at least three neighbours of $i\in \{v_{1}, v_{3}\}$ and some bridge $U'$ exists where $U'\cap S_{3} = \{v_{1}, v_{3}, v_{5}\}$.}
\label{t_w7i1112_6}
\end{table}

\begin{table}[!h]
\begin{tabular}{p{0.35\textwidth}|p{0.3\textwidth}|p{0.3\textwidth}}
\hline
\textbf{Required in Lemma \ref{threetwonbrs}} & $S = \{t, u, v, w\}$ & Bridges $X$, $Y$, $Z$ of $G|S$ \\
\hline
\textbf{Equivalent construct in $G$} & $S_{3} = \{v_{0}, v_{1}, v_{3}, v_{5}\}$ & $T_{u'}$, $T_{2}$, $U_{4}$ \\
\hline \hline
\textbf{Required in Lemma \ref{threetwonbrs}}  & $v$ & $P_{u}$ \\
\hline
\textbf{Equivalent construct in $G$}  & $i\in \{v_{1}, v_{3}\}$ & path in $\langle T_{6}\rangle $\\
\hline

\end{tabular}
\caption{Case (b)(i), 1.1.1.2: Applying Lemma \ref{threetwonbrs} to $G$, where some bridge $U\setminus S_{3}$ contains at least three neighbours of $i\in \{v_{1}, v_{3}\}$, and some other bridge contains at least two neighbours of $i$ not in $S_{3}$.}
\label{t_w7i1112_7}
\end{table}

\begin{table}[!h]
\begin{tabular}{p{0.35\textwidth}|p{0.3\textwidth}|p{0.3\textwidth}}
\hline
\textbf{Required in Lemma \ref{threenbrs}} & $S = \{t, u, v, w\}$ & Bridges $X$, $Y$, $Z$ of $G|S$ \\
\hline
\textbf{Equivalent construct in $G$} & $S_{3} = \{v_{0}, v_{1}, v_{3}, v_{5}\}$ & $T_{u'}$, $T_{2}$, $U_{4}$ \\
\hline \hline
\textbf{Required in Lemma \ref{threenbrs}}  & $v$ & $P_{u}$, $P_{w}$ \\
\hline
\textbf{Equivalent construct in $G$}  & $i\in \{v_{0}, v_{3}\}$ & path in $\langle X\rangle$ or $ij$ edge; path in $\langle T_{6}\rangle $\\
\hline

\end{tabular}
\caption{Case (b)(i), 1.1.1.2: Applying Lemma \ref{threenbrs} to $G$, where some bridge $U\setminus S_{3}$ contains at least three neighbours of $i\in \{v_{1}, v_{3}\}$.}
\label{t_w7i1112_8}
\end{table}

Suppose then there exists some vertex $i\in \{v_{1}, v_{3}\}$ with degree $\ge 7$ such that two bridges of $G|S_{3}$, $U'$ and $U''$, each contain two neighbours of $i$ not in $S_{3}$. If some bridge of $G|S_{3}$ other than $U'$ and $U''$ contains at least two neighbours of $i$ not in $S_{3}$, then by Lemma \ref{twotwotwonbrs} a $W_{7}$-subdivision can be formed centred on $i$ (see Table \ref{t_w7i1112_10}). Suppose then that all bridges of $G|S_{3}$ other than $U'$ and $U''$ contain at most one neighbour of $i$ not in $S_{3}$. Then, since $i$ has degree $\ge 7$, there must either exist some fifth bridge $X$ of $G|S_{3}$ which contains $i$, or $i$ must be adjacent to some other vertex $j$ in $S_{3}$. Thus, Lemma \ref{twotwonbrs} can be applied again to show that a $W_{7}$-subdivision exists centred on $i$ (see Table \ref{t_w7i1112_11}). Assume then that no such vertex $i$ exists in $\{v_{1}, v_{3}\}$.

\begin{table}[!h]
\begin{tabular}{p{0.35\textwidth}|p{0.3\textwidth}|p{0.3\textwidth}}
\hline
\textbf{Required in Lemma \ref{twotwotwonbrs}} & $S = \{t, u, v, w\}$ & Bridges $X$, $Y$, $Z$ of $G|S$ \\
\hline
\textbf{Equivalent construct in $G$} & $S_{3} = \{v_{0}, v_{1}, v_{3}, v_{5}\}$ & $T_{u'}$, $T_{2}$, $U_{4}$ \\
\hline \hline
\textbf{Required in Lemma \ref{twotwotwonbrs}}  & $v$ & $P_{u}$ \\
\hline
\textbf{Equivalent construct in $G$}  & $i\in \{v_{1}, v_{3}\}$ & path in $T_{6}$ \\
\hline

\end{tabular}
\caption{Case (b)(i), 1.1.1.2: Applying Lemma \ref{twotwotwonbrs} to $G$, where three bridges each contain at least two neighbours of $i\in \{v_{1}, v_{3}\}$ not in $S_{3}$.}
\label{t_w7i1112_10}
\end{table}

\begin{table}[!h]
\begin{tabular}{p{0.35\textwidth}|p{0.3\textwidth}|p{0.3\textwidth}}
\hline
\textbf{Required in Lemma \ref{twotwonbrs}} & $S = \{t, u, v, w\}$ & Bridges $X$, $Y$, $Z$ of $G|S$ \\
\hline
\textbf{Equivalent construct in $G$} & $S_{3} = \{v_{0}, v_{1}, v_{3}, v_{5}\}$ & $T_{u'}$, $T_{2}$, $U_{4}$ \\
\hline \hline
\textbf{Required in Lemma \ref{twotwonbrs}}  & $v$ & $P_{u}$, $P_{w}$ \\
\hline
\textbf{Equivalent construct in $G$}  & $i\in \{v_{1}, v_{3}\}$ & path in $\langle X\rangle$ or $ij$ edge; path in $T_{6}$ \\
\hline

\end{tabular}
\caption{Case (b)(i), 1.1.1.2: Applying Lemma \ref{twotwonbrs} to $G$, where two bridges each contain at least two neighbours of $i\in \{v_{1}, v_{3}\}$ not in $S_{3}$.}
\label{t_w7i1112_11}
\end{table}

Suppose then there exists some vertex $i\in \{v_{1}, v_{3}\}$ with degree $\ge 7$ such that some bridge $U$ of $G|S_{3}$ contains two neighbours of $i$ not in $S_{3}$. Since $i$ has at most five neighbours in $T_{u'}\cup T_{2}\cup T_{6}\cup U_{4}$, there must be some fifth bridge $X$ of $G|S_{3}$ that contains $i$, and some sixth bridge $Y$ of $G'|S_{3}$ that also contains $i$. Thus, Lemma \ref{twonbrs} can be applied to show there exists a $W_{7}$-subdivision in $G'$ (see Table \ref{t_w7i1112_12}). Suppose then that no such vertex exists in $S_{3}$.

\begin{table}[!h]
\begin{tabular}{p{0.35\textwidth}|p{0.3\textwidth}|p{0.3\textwidth}}
\hline
\textbf{Required in Lemma \ref{twonbrs}} & $S = \{t, u, v, w\}$ & Bridges $X$, $Y$, $Z$ of $G|S$ \\
\hline
\textbf{Equivalent construct in $G$} & $S_{3} = \{v_{0}, v_{1}, v_{3}, v_{5}\}$ & $T_{u'}$, $T_{2}$, $U_{4}$ \\
\hline \hline
\textbf{Required in Lemma \ref{twonbrs}}  & $v$ & $P_{u}$, $P_{w}$, $P_{t}$ \\
\hline
\textbf{Equivalent construct in $G$}  & $i\in \{v_{1}, v_{3}\}$ & paths in $\langle T_{6}\rangle$, $\langle X\rangle$, and $\langle Y\rangle$ \\
\hline

\end{tabular}
\caption{Case (b)(i), 1.1.1.2: Applying Lemma \ref{twonbrs} to $G$, where some bridge $U\setminus S_{3}$ contains at least two neighbours of $i\in \{v_{1}, v_{3}\}$.}
\label{t_w7i1112_12}
\end{table}

Thus, each vertex in $S_{3}$ in $G$ with degree $\ge 7$ has no more than one neighbour not in $S_{3}$ in each bridge of $G|S_{3}$. Reduction \ref{r8} can then be applied to $G$.

(A)(ii) Suppose now that if $A$ contains internal vertices on one of the paths $P_{3}$ or $P_{5}$, such vertices are on $P_{5}$. The same arguments used in (A)(i) can be applied to show that $G$ contains a $W_{7}$-subdivision.

(B) Suppose then there is no such bridge $A$, that is, there exist only two bridges of $G|S_{2}$: $U_{2}$ and $U_{4}$. Thus, the paths $P_{3}$ and $P_{5}$ are single edges. By Lemma \ref{lemma3}, then, a $W_{7}$-subdivision exists in $G$.

\textbf{1.1.2.} Suppose now that no such path $R_{1}$ exists, that is, $S_{1}$ forms a separating set in $G$. Let $U_{6}$ be the bridge of $G|S_{1}$ containing $v_{6}$. Let $U_{2}$ be the bridge of $G|S_{1}$ containing $v_{2}$. Observe that $v_{0}$ has at least four neighbours in $U_{2}$. (See Figure \ref{w7case1_112}.)

\begin{figure}[htbp]
\begin{center}
\includegraphics[width=0.8\textwidth]{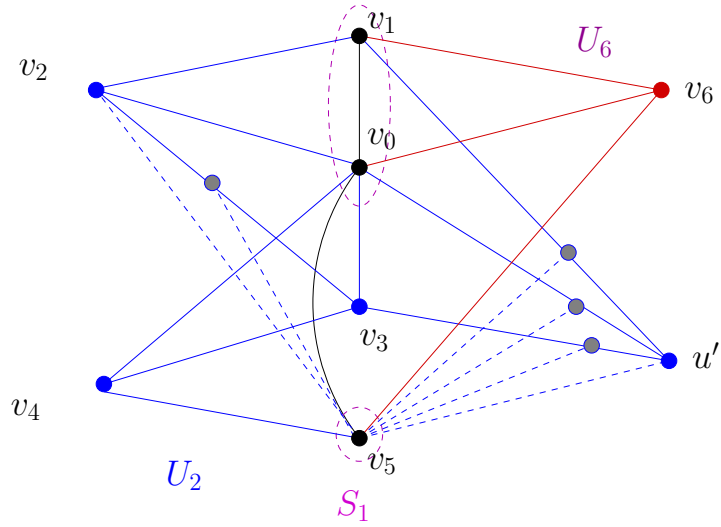}
\caption{Case (b)(i), 1.1.2.}
\label{w7case1_112}
\end{center}
\end{figure}

Suppose there exists some internal vertex on one of the paths $P_{1}$ or $P_{5}$ such that this vertex is contained in either $U_{2}$ or $U_{6}$. It is routine to check that the existence of such a vertex will result in a $W_{7}$-subdivision, regardless of which other vertices it is adjacent to in its containing bridge. (See Figure \ref{w7case1_112_1} for an example of such a situation.)

\begin{figure}[htbp]
\begin{center}
\includegraphics[width=0.8\textwidth]{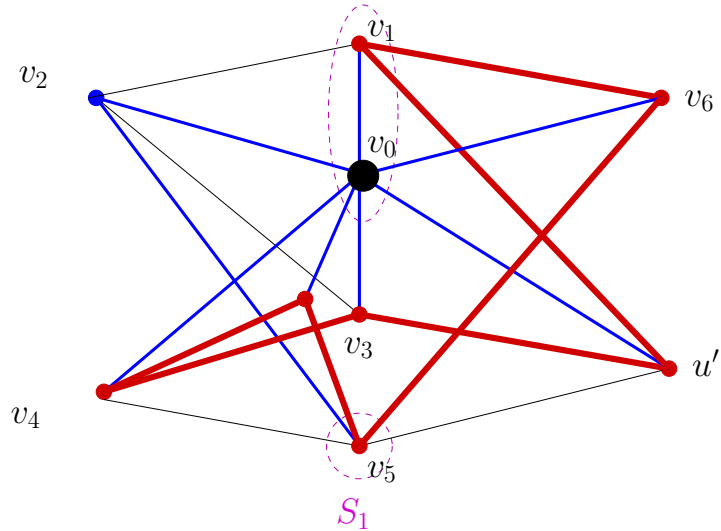}
\caption{Case (b)(i), 1.1.2: Example of an internal vertex on $P_{5}$ contained in $U_{2}$ resulting in a $W_{7}$-subdivision.}
\label{w7case1_112_1}
\end{center}
\end{figure}

Suppose then that no internal vertices of $P_{1}$ or $P_{5}$ are contained in either $U_{2}$ or $U_{6}$.

(A) Suppose there exists some third bridge $A$ of $G|S_{1}$. Suppose that $A$ contains internal vertices on both the paths $P_{1}$ and $P_{5}$. Then it is straightforward to check that a $W_{7}$-subdivision exists in $G$. Assume then that $A$ contains internal vertices of at most one of these paths.

(A)(i) Suppose firstly that such vertices are on $P_{1}$, if they exist.

If there are any bridges of $G|S_{1}$ other than $A$, $U_{2}$, and $U_{6}$, then by Lemma \ref{threebridges}, a $W_{7}$-subdivision exists in $G$. Table \ref{t_w7i112_1} shows how Lemma \ref{threebridges} can be applied in this situation.

\begin{table}[!h]
\begin{tabular}{p{0.35\textwidth}|p{0.3\textwidth}|p{0.3\textwidth}}
\hline
\textbf{Required in Lemma \ref{threebridges}} & $S = \{u, v, w\}$ & Bridges $X$, $Y$ of $G|S$  \\
\hline
\textbf{Equivalent construct in $G$} & $S_{1} = \{v_{0}, v_{1}, v_{5}\}$ & $U_{6}$, $A$ \\
\hline \hline
\textbf{Required in Lemma \ref{threebridges}}  & Bridge $Z$ of $G|S$ containing $\ge 3$ neighbours of $v$ not in $S$ & $P_{u}$, $P_{w}$ \\
\hline
\textbf{Equivalent construct in $G$}  & $U_{2}$ contains $\ge 3$ neighbours of $v_{0}$ not in $S_{1}$ & $P_{5}$; path from $v_{0}$ to $v_{1}$ in some fourth bridge of $G|S$ other than $U_{2}$, $U_{6}$, or $A$ \\
\hline

\end{tabular}
\caption{Case (b)(i), 1.1.2: Applying Lemma \ref{threebridges} to $G$, where there are at least four bridges of $G|S_{1}$.}
\label{t_w7i112_1}
\end{table}

Assume then that only three bridges of $G|S_{1}$ exist: $U_{2}$, $U_{6}$, and $A$. Since none of these bridges contain internal vertices of $P_{5}$, it can be assumed that $P_{5}$ is a single edge.

If either $A\setminus S_{1}$ or $U_{6}\setminus S_{1}$ contains more than one neighbour of either $v_{0}$ or $v_{5}$, then by Lemma \ref{lemmaW7}, a $W_{7}$-subdivision exists in $G$ (see Table \ref{t_w7i112_2}).

\begin{table}[!h]
\begin{tabular}{p{0.35\textwidth}|p{0.3\textwidth}|p{0.3\textwidth}}
\hline
\textbf{Required in Lemma \ref{lemmaW7}} & $S = \{u, v, w\}$ & Bridge $X$ of $G|S$ with $\ge 3$ neighbours of $v$ not in $S$ \\
\hline
\textbf{Equivalent construct in $G$} & $S_{1} = \{v_{0}, v_{1}, v_{5}\}$ & $U_{2}\setminus S_{1}$ contains $\ge 3$ neighbours of $v_{0}$ \\
\hline \hline
\textbf{Required in Lemma \ref{lemmaW7}} & Bridge $Y$ of $G|S$ with $\ge 2$ neighbours of $v$ not in $S$ & $P_{u}$, $P_{w}$ \\
\hline
\textbf{Equivalent construct in $G$}  & Either $A$ or $U_{6}$ & $P_{5}$; path from $v_{0}$ to $v_{1}$ in either $\langle U_{6}\rangle$ or $\langle A\rangle$ \\
\hline

\end{tabular}
\caption{Case (b)(i), 1.1.2: Applying Lemma \ref{lemmaW7} to $G$, where $A\setminus S_{1}$ or $U_{6}\setminus S_{1}$ contains more than one neighbour of either $v_{0}$ or $v_{5}$.}
\label{t_w7i112_2}
\end{table}

Assume then that $A\setminus S_{1}$ and $U_{6}\setminus S_{1}$ each contain at most one neighbour of $v_{0}$ and at most one neighbour of $v_{5}$.

Suppose $|(A\cup U_{6})\setminus S_{1}| > 3$. Then a type 3 edge-vertex-cutset can be formed from vertex $v_{3}$ and the four edges joining $S_{1}$ to $(A\cup U_{6})\setminus S_{1}$. Assume then that $|(A\cup U_{6})\setminus S_{1}| \le 3$.

Let $S_{3} = \{v_{0}, v_{1}, v_{3}, v_{5}\}$. Suppose there exists some path in $U_{2}$ such that $u'$, $v_{2}$, and $v_{4}$ are not each in separate bridges of $G|S_{3}$. It is straightforward to check that a $W_{7}$-subdivision then exists in $G$ (see Figure \ref{w7case1_112_2} for an example). Suppose then that no such path exists, that is, each of $u'$, $v_{2}$, and $v_{4}$ are in separate bridges of $G|S_{3}$. Call these bridges $T_{u'}$, $T_{2}$, and $T_{4}$ respectively. Note that $U_{6}$ and $A$ also form bridges of $G|S_{3}$.

\begin{figure}[htbp]
\begin{center}
\includegraphics[width=0.8\textwidth]{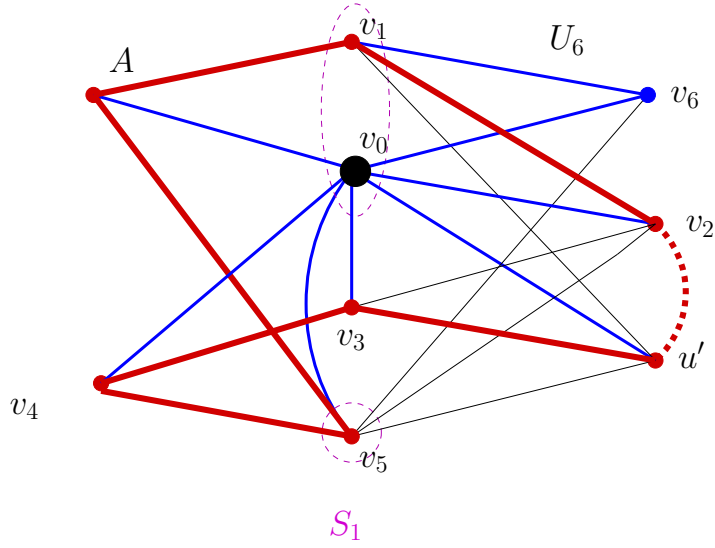}
\caption{Case (b)(i), 1.1.2: Example of a path disjoint from $S_{3}$ joining $u'$ and $v_{2}$ creating a $W_{7}$-subdivision.}
\label{w7case1_112_2}
\end{center}
\end{figure}

Suppose there exists some bridge $U$ of $G|S_{3}$ such that $U\setminus S_{3}$ contains more than one neighbour of either $v_{0}$ or $v_{5}$. Then by Lemma \ref{twonbrs}, there exists a $W_{7}$-subdivision in $G$ (see Table \ref{t_w7i112_3}).

\begin{table}[!h]
\begin{tabular}{p{0.35\textwidth}|p{0.3\textwidth}|p{0.3\textwidth}}
\hline
\textbf{Required in Lemma \ref{twonbrs}} & $S = \{t, u, v, w\}$ & Bridges $X$, $Y$, $Z$ of $G|S$ \\
\hline
\textbf{Equivalent construct in $G$} & $S_{3} = \{v_{0}, v_{1}, v_{3}, v_{5}\}$ & $T_{u'}$, $T_{2}$, $U_{6}$ \\
\hline \hline
\textbf{Required in Lemma \ref{twonbrs}}  & $v$ & $P_{u}$, $P_{w}$, $P_{t}$ \\
\hline
\textbf{Equivalent construct in $G$}  & $i\in \{v_{0}, v_{5}\}$ & $P_{5}$; path from $i$ to $v_{1}$ in $\langle A\rangle$, path from $i$ to $v_{3}$ in $\langle T_{4}\rangle$ \\
\hline

\end{tabular}
\caption{Case (b)(i), 1.1.2: Applying Lemma \ref{twonbrs} to $G$, where $U\setminus S_{3}$ contains more than one neighbour of either $v_{0}$ or $v_{5}$.} 
\label{t_w7i112_3}
\end{table}

Assume then that no such bridge $U$ exists, that is, each bridge of $G|S_{3}$ contains at most one neighbour of $v_{0}$ not in $S_{3}$ and at most one neighbour of $v_{5}$ not in $S_{3}$.

Suppose then there exists some bridge $U$ of $G|S_{3}$ such that $U\cap S_{3} = S_{3}$, and $U\setminus S_{3}$ contains at least four neighbours of some vertex $i$, where $i\in \{v_{1}, v_{3}\}$. Then by Lemma \ref{fournbrs}, there exists a $W_{7}$-subdivision centred on $i$ (see Table \ref{t_w7i112_4}).

\begin{table}[!h]
\begin{tabular}{p{0.35\textwidth}|p{0.3\textwidth}|p{0.3\textwidth}}
\hline
\textbf{Required in Lemma \ref{fournbrs}} & $S = \{t, u, v, w\}$ & Bridges $X$, $Y$, $Z$ of $G|S$ \\
\hline
\textbf{Equivalent construct in $G$} & $S_{3} = \{v_{0}, v_{1}, v_{3}, v_{5}\}$ & $T_{u'}$, $T_{2}$, $U_{6}$ or $T_{4}$ \\
\hline \hline
\textbf{Required in Lemma \ref{fournbrs}}  & $v$, $u$ & $P_{u}$ \\
\hline
\textbf{Equivalent construct in $G$}  & $i\in \{v_{1}, v_{3}\}$ & path in $\langle A\rangle$ (if $i = v_{1}$) or $P_{3}$ (if $i = v_{3}$) \\
\hline

\end{tabular}
\caption{Case (b)(i), 1.1.2: Applying Lemma \ref{fournbrs} to $G$, where $U\setminus S_{3}$ contains more than three neighbours of either $v_{1}$ or $v_{3}$.} 
\label{t_w7i112_4}
\end{table}

Assume then that no such bridge $U$ exists, that is, any bridge containing all vertices in $S_{3}$ contains at most three neighbours not in $S_{3}$ of each of $v_{1}$ and $v_{3}$. 

Suppose there exists some vertex $i\in \{v_{1}, v_{3}\}$ with degree $\ge 7$ such that some bridge $U$ of $G|S_{3}$ contains at least three neighbours of $i$ not in $S_{3}$. (Note that by the assumption of the previous paragraph, $U\setminus S_{3}$ must then contain \emph{only} three neighbours of $i$.) If some bridge of $G|S_{3}$ other than $U$ contains at least two neighbours of $i$ not in $S_{3}$, then by Lemma \ref{threetwonbrs} a $W_{7}$-subdivision can be formed centred on $i$ (see Table \ref{t_w7i112_5}). Suppose then that all bridges of $G|S_{3}$ other than $U$ contain at most one neighbour of $i$ not in $S_{3}$. Then, since $i$ has degree $\ge 7$, either there must exist some fifth bridge $X$ of $G|S_{3}$ which contains $i$, or $i$ must be adjacent to some other vertex $j$ in $S_{3}$. Thus, Lemma \ref{threenbrs} can be applied to show that a $W_{7}$-subdivision exists centred on $i$ (see Table \ref{t_w7i112_6}). Assume then that no such vertex $i$ exists in $\{v_{1}, v_{3}\}$.

\begin{table}[!h]
\begin{tabular}{p{0.35\textwidth}|p{0.3\textwidth}|p{0.3\textwidth}}
\hline
\textbf{Required in Lemma \ref{threetwonbrs}} & $S = \{t, u, v, w\}$ & Bridges $X$, $Y$, $Z$ of $G|S$ \\
\hline
\textbf{Equivalent construct in $G$} & $S_{3} = \{v_{0}, v_{1}, v_{3}, v_{5}\}$ & $T_{u'}$, $T_{2}$, $U_{6}$ or $T_{4}$ \\
\hline \hline
\textbf{Required in Lemma \ref{threetwonbrs}}  & $v$ & $P_{u}$ \\
\hline
\textbf{Equivalent construct in $G$}  & $i\in \{v_{1}, v_{3}\}$ & path in $\langle A\rangle$ (if $i = v_{1}$) or $P_{3}$ (if $i = v_{3}$)\\
\hline

\end{tabular}
\caption{Case (b)(i), 1.1.2: Applying Lemma \ref{threetwonbrs} to $G$, where $U\setminus S_{3}$ contains more than two neighbours of $i\in \{v_{1}, v_{3}\}$, and some other bridge contains more than one neighbour of $i$ not in $S_{3}$.} 
\label{t_w7i112_5}
\end{table}

\begin{table}[!h]
\begin{tabular}{p{0.35\textwidth}|p{0.3\textwidth}|p{0.3\textwidth}}
\hline
\textbf{Required in Lemma \ref{threenbrs}} & $S = \{t, u, v, w\}$ & Bridges $X$, $Y$, $Z$ of $G|S$ \\
\hline
\textbf{Equivalent construct in $G$} & $S_{3} = \{v_{0}, v_{1}, v_{3}, v_{5}\}$ & $T_{u'}$, $T_{2}$, $U_{6}$ or $T_{4}$ \\
\hline \hline
\textbf{Required in Lemma \ref{threenbrs}}  & $v$ & $P_{u}$, $P_{w}$ \\
\hline
\textbf{Equivalent construct in $G$}  & $i\in \{v_{1}, v_{3}\}$ & path in $\langle X\rangle$ or edge $ij$; path in $\langle A\rangle$ (if $i = v_{1}$) or $P_{3}$ (if $i = v_{3}$)\\
\hline

\end{tabular}
\caption{Case (b)(i), 1.1.2: Applying Lemma \ref{threenbrs} to $G$, where $U\setminus S_{3}$ contains more than two neighbours of either $v_{1}$ or $v_{3}$.} 
\label{t_w7i112_6}
\end{table}

Suppose then there exists some vertex $i\in \{v_{1}, v_{3}\}$ with degree $\ge 7$ such that two bridges of $G|S_{3}$, $U'$ and $U''$, each contain two neighbours of $i$ not in $S_{3}$. If some bridge of $G|S_{3}$ other than $U'$ and $U''$ contains at least two neighbours of $i$ not in $S_{3}$, then by Lemma \ref{twotwotwonbrs} a $W_{7}$-subdivision can be formed centred on $i$ (see Table \ref{t_w7i112_7}). Suppose then that all bridges of $G|S_{3}$ other than $U'$ and $U''$ contain at most one neighbour of $i$ not in $S_{3}$. Then, since $i$ has degree $\ge 7$, either there must exist some fifth bridge of $X$ $G|S_{3}$ which contains $i$, or $i$ must be adjacent to some other vertex $j$ in $S_{3}$. Thus, Lemma \ref{twotwonbrs} can be applied again to show that a $W_{7}$-subdivision exists centred on $i$ (see Table \ref{t_w7i112_8}). Assume then that no such vertex $i$ exists in $\{v_{1}, v_{3}\}$.

\begin{table}[!h]
\begin{tabular}{p{0.35\textwidth}|p{0.3\textwidth}|p{0.3\textwidth}}
\hline
\textbf{Required in Lemma \ref{twotwotwonbrs}} & $S = \{t, u, v, w\}$ & Bridges $X$, $Y$, $Z$ of $G|S$ \\
\hline
\textbf{Equivalent construct in $G$} & $S_{3} = \{v_{0}, v_{1}, v_{3}, v_{5}\}$ & $T_{u'}$, $T_{2}$, $U_{6}$ or $T_{4}$ \\
\hline \hline
\textbf{Required in Lemma \ref{twotwotwonbrs}}  & $v$ & $P_{u}$ \\
\hline
\textbf{Equivalent construct in $G$}  & $i\in \{v_{1}, v_{3}\}$ & path in $\langle A\rangle$ (if $i = v_{1}$) or $P_{3}$ (if $i = v_{3}$)\\
\hline

\end{tabular}
\caption{Case (b)(i), 1.1.2: Applying Lemma \ref{twotwotwonbrs} to $G$, where three bridges each contain two neighbours of $i\in\{v_{1}, v_{3}\}$ not in $S_{3}$.}
\label{t_w7i112_7}
\end{table}

\begin{table}[!h]
\begin{tabular}{p{0.35\textwidth}|p{0.3\textwidth}|p{0.3\textwidth}}
\hline
\textbf{Required in Lemma \ref{twotwonbrs}} & $S = \{t, u, v, w\}$ & Bridges $X$, $Y$, $Z$ of $G|S$ \\
\hline
\textbf{Equivalent construct in $G$} & $S_{3} = \{v_{0}, v_{1}, v_{3}, v_{5}\}$ & $T_{u'}$, $T_{2}$, $U_{6}$ or $T_{4}$ \\
\hline \hline
\textbf{Required in Lemma \ref{twotwonbrs}}  & $v$ & $P_{u}$, $P_{w}$ \\
\hline
\textbf{Equivalent construct in $G$}  & $i\in \{v_{1}, v_{3}\}$ & path in $\langle X\rangle$ or edge $ij$; path in $\langle A\rangle$ (if $i = v_{1}$) or $P_{3}$ (if $i = v_{3}$)\\
\hline

\end{tabular}
\caption{Case (b)(i), 1.1.2: Applying Lemma \ref{twotwonbrs} to $G$, where two bridges each contain two neighbours of $i\in\{v_{1}, v_{3}\}$ not in $S_{3}$.}

\label{t_w7i112_8}
\end{table}

Suppose then there exists some vertex $i\in \{v_{1}, v_{3}\}$ with degree $\ge 7$ such that some bridge $U$ of $G|S_{3}$ contains two neighbours of $i$ not in $S_{3}$. Since $i$ has only five neighbours in $T_{u'}\cup T_{2}\cup T_{4}\cup U_{6}$, there must exist two more bridges $X$ and $Y$ of $G|S_{3}$ that each contain $i$. Thus, Lemma \ref{twonbrs} can be applied to show there exists a $W_{7}$-subdivision in $G'$ (see Table \ref{t_w7i112_9}).

\begin{table}[!h]
\begin{tabular}{p{0.35\textwidth}|p{0.3\textwidth}|p{0.3\textwidth}}
\hline
\textbf{Required in Lemma \ref{twonbrs}} & $S = \{t, u, v, w\}$ & Bridges $X$, $Y$, $Z$ of $G|S$ \\
\hline
\textbf{Equivalent construct in $G$} & $S_{3} = \{v_{0}, v_{1}, v_{3}, v_{5}\}$ & $T_{u'}$, $T_{2}$, $U_{6}$ or $T_{4}$ \\
\hline \hline
\textbf{Required in Lemma \ref{twonbrs}}  & $v$ & $P_{u}$, $P_{w}$, $P_{t}$ \\
\hline
\textbf{Equivalent construct in $G$}  & $i\in \{v_{1}, v_{3}\}$ & path in $\langle X\rangle$; path in $\langle Y\rangle$; path in $\langle A\rangle$ (if $i = v_{1}$) or $P_{3}$ (if $i = v_{3}$)\\
\hline

\end{tabular}
\caption{Case (b)(i), 1.1.2: Applying Lemma \ref{twonbrs} to $G$, where $U\setminus S_{3}$ contains more than one neighbour of either $v_{1}$ or $v_{3}$.} 
\label{t_w7i112_9}
\end{table}

Suppose then that no such vertex $i$ exists in $S_{3}$.

Thus, each vertex in $S_{3}$ in $G$ with degree $\ge 7$ has no more than one neighbour not in $S_{3}$ in each bridge of $G|S_{3}$. Reduction \ref{r8} can then be applied to $G$.

(A)(ii) Suppose now that if $A$ contains internal vertices on one of the paths $P_{1}$ or $P_{5}$, such vertices are on $P_{5}$. The same arguments used in (A)(i) can be applied to show that $G$ contains a $W_{7}$-subdivision.

(B) Suppose then there are only two bridges of $G|S_{1}$: $U_{2}$ and $U_{6}$. Thus, the paths $P_{1}$ and $P_{5}$ are single edges. By Lemma \ref{lemma3}, then, a $W_{7}$-subdivision exists in $G$.

\textbf{1.2.} Suppose now that no such path $R$ exists, that is, $W$ forms a separating set in $G$ such that $u'$ and $v_{2}$ are in separate bridges of $G|W$. Let $U_{2}$ be the bridge of $G|W$ containing $v_{2}$. Recall that $U(u)$ is the bridge of $G|W$ containing $u'$. (See Figure \ref{w7case1_112}.)

\begin{figure}[htbp]
\begin{center}
\includegraphics[width=0.8\textwidth]{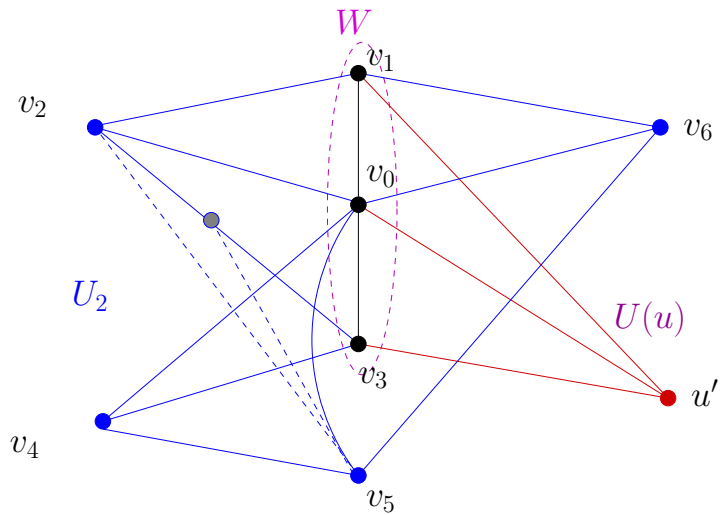}
\caption{Case (b)(i), 1.2.}
\label{w7case1_12}
\end{center}
\end{figure}

Suppose there exists some internal vertex on one of the paths $P_{1}$ or $P_{3}$ such that this vertex is contained in either $U_{2}$ or $U(u)$. The existence of such a vertex will result in a $W_{7}$-subdivision --- Figure \ref{w7case1_12_1} shows an example of such a situation. Assume then that no internal vertices of $P_{1}$ or $P_{3}$ are contained in either $U_{2}$ or $U(u)$.

\begin{figure}[htbp]
\begin{center}
\includegraphics[width=0.8\textwidth]{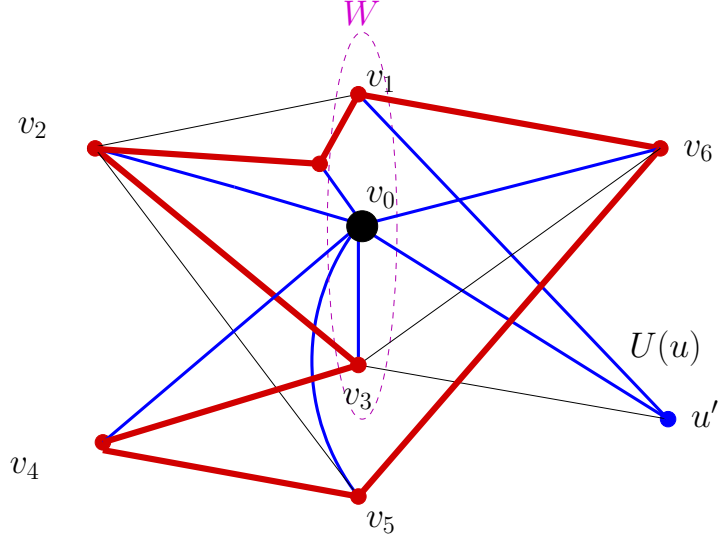}
\caption{Case (b)(i), 1.2: Example of an internal vertex on $P_{1}$ contained in $U_{2}$ resulting in a $W_{7}$-subdivision.}
\label{w7case1_12_1}
\end{center}
\end{figure}

(A) Suppose there exists some third bridge $A$ of $G|W$.

Suppose that $A$ contains internal vertices on both the paths $P_{1}$ and $P_{3}$. Then it is straightforward to check that a $W_{7}$-subdivision exists in $G$. Assume then that $A$ contains internal vertices of at most one of these paths --- without loss of generality, assume that such vertices are on $P_{1}$, if they exist.

If there are any bridges of $G|W$ other than $A$, $U_{2}$, or $U(u)$, then by Lemma \ref{threebridges}, a $W_{7}$-subdivision exists in $G$ (see Table \ref{t_w7i12_1}).

\begin{table}[!h]
\begin{tabular}{p{0.35\textwidth}|p{0.3\textwidth}|p{0.3\textwidth}}
\hline
\textbf{Required in Lemma \ref{threebridges}} & $S = \{u, v, w\}$ & Bridges $X$ and $Y$ of $G|S$ \\
\hline
\textbf{Equivalent construct in $G$} & $W = \{v_{0}, v_{1}, v_{3}\}$ & $U(u)$, $A$ \\
\hline \hline
\textbf{Required in Lemma \ref{threebridges}}  & Bridge $Z$ of $G|S$ containing $\ge 3$ neighbours of $v$ not in $S$  & $P_{u}$, $P_{w}$ \\
\hline
\textbf{Equivalent construct in $G$}  & $U_{2}$ contains $\ge 3$ neighbours of $v_{0}$ & $P_{3}$, path in some fourth bridge of $G|W$ other than $U_{2}$, $U(u)$, or $A$ \\
\hline

\end{tabular}
\caption{Case (b)(i), 1.1.2: Applying Lemma \ref{threebridges} to $G$, where there are at least four bridges of $G|W$.}
\label{t_w7i12_1}
\end{table}

Assume then that only three bridges of $G|W$ exist: $A$, $U_{2}$, and $U(u)$. Since $P_{3}$ contains no internal vertices in any of these three bridges, it can be assumed that $P_{3}$ is a single edge.

If either $A\setminus W$ or $U(u)\setminus W$ contains more than one neighbour of $v_{0}$, then by Lemma \ref{lemmaW7}, a $W_{7}$-subdivision exists in $G$. Assume then that $A\setminus W$ and $U(u)\setminus W$ each contain at most one neighbour of $v_{0}$.

Suppose $|(A\cup U(u))\setminus W| > 3$. Then a type 4 edge-vertex-cutset can be formed from $v_{1}$, $v_{3}$, and the two edges joining $v_{0}$ to $(A\cup U(u))\setminus W$. Assume then that $|(A\cup U(u))\setminus W| \le 3$. Thus, one of $A\setminus W$, $U(u)\setminus W$ contains at most one vertex, while the other contains at most two vertices. Without loss of generality, suppose that $|A\setminus W| = 1$, and $|U(u)\setminus W| \le 2$.

Suppose that $v_{1}$ has degree $\ge 7$. Since $v_{1}$ can have at most three neighbours in $(A\cup U(u))\setminus W$, there must be at least four neighbours of $v_{1}$ in $U_{2}\setminus W$. Thus, there exist two neighbours of $v_{1}$ in $U_{2}\setminus W$, say $x_{1}$ and $x_{2}$, such that $x_{1}, x_{2} \notin N_{H}(v_{1})$. By 3-connectivity, there must exist paths in $\langle U_{2}\setminus W\rangle$ joining $x_{1}$ and $x_{2}$ to $H\cap \langle U_{2}\rangle$. Such paths either result in a $W_{7}$-subdivision, or create a graph that is equivalent to one of those analysed previously in case 1.1.2.

Assume then that $v_{1}$ has degree $< 7$. By the same arguments, assume that $v_{3}$ also has degree $< 7$. Thus, Reduction \ref{r2}A can be performed on $G$.

(B) Suppose then there are only two bridges of $G|W$: $U_{2}$ and $U(u)$. Since neither of these bridges contain internal vertices on either of the paths $P_{1}$ or $P_{3}$, both of the paths $P_{1}$ and $P_{3}$ must be single edges. Thus, by Lemma \ref{lemma3}, a $W_{7}$-subdivision exists in $G$.

\textbf{2.} Assume then that there is no such path $Q$ from $H_{2}$ to $H_{4}$ as tested for in case 1. Thus, there exist at least three bridges of $G|W$: $U_{2}$, $U_{4}$, and $U(u)$, where $U_{2}$ and $U_{4}$ are the bridges containing $H_{2}$ and $H_{4}$ respectively. (See Figure \ref{w7case1_2}.)

\begin{figure}[htbp]
\begin{center}
\includegraphics[width=0.6\textwidth]{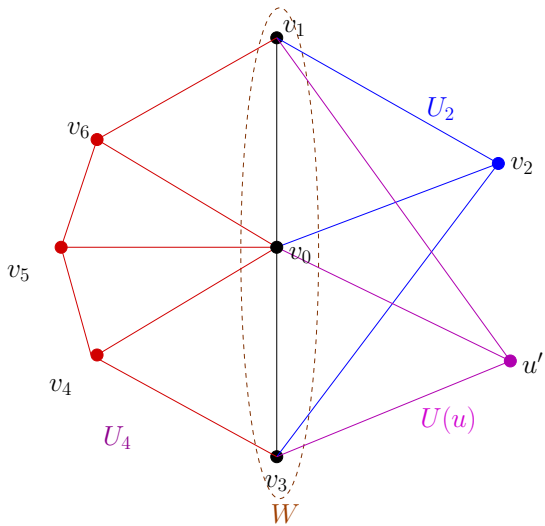}
\caption{Case (b)(i), 2.}
\label{w7case1_2}
\end{center}
\end{figure}

Suppose that $U_{2}$ contains some internal vertex $x$ of $P_{1}$, and some internal vertex $y$ of $P_{3}$. Then, since $U_{2}\setminus W$ contains at least three neighbours of $v_{0}$ (along $P_{1}$, $P_{2}$, and $P_{3}$), a $W_{7}$-subdivision can be formed centred on $v_{0}$ by applying Lemma \ref{lemma2} to $U_{2}$ and $U_{4}$, and by using $U(u)$ to form a seventh spoke from $v_{0}$ to either $v_{1}$ or $v_{3}$.

Assume then that $U_{2}$ does not contain internal vertices of both $P_{1}$ and $P_{3}$. By symmetry of the graph, assume also that $U(u)$ does not contain internal vertices of both $P_{1}$ and $P_{3}$.

Suppose then that $U_{2}$ contains some vertex $x$ such that $x$ is an internal vertex of either $P_{1}$ or $P_{3}$. Then by Lemma \ref{lemmaW7} a $W_{7}$-subdivision exists centred on $v_{0}$. Suppose then that no internal vertices of $P_{1}$ or $P_{3}$ are contained in $U_{2}$. By symmetry of the graph, assume also that no internal vertices of $P_{1}$ or $P_{3}$ are contained in $U(u)$.

If $U_{4}$ contains no internal vertices of $P_{1}$ or $P_{3}$, then by Lemma \ref{threebridges} a $W_{7}$-subdivision exists in $G$. Assume then without loss of generality that $U_{4}$ contains some internal vertex of $P_{1}$, say $x$. By 3-connectivity, there exists some path $P_{x}$ contained in $U_{4}\setminus W$ joining $x$ to $H\cap U_{4}$ that meets $H$ only at its endpoints. It is straightforward to check that the existence of such a path will result in a $W_{7}$-subdivision, except where $P_{x}$ meets $H$ at $v_{5}$, as shown in Figure \ref{w7case1_2_1}. Suppose then that $G$ contains such a configuration.

\begin{figure}[htbp]
\begin{center}
\includegraphics[width=0.6\textwidth]{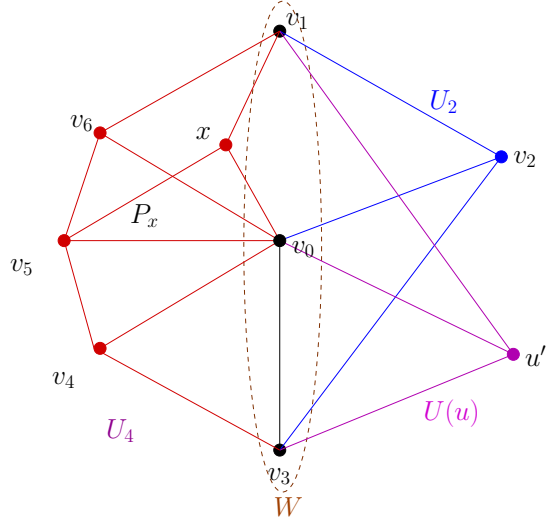}
\caption{Case (b)(i), 2: Internal vertex on $P_{1}$ contained in $U_{4}$}
\label{w7case1_2_1}
\end{center}
\end{figure}

Suppose there exist at least two more bridges of $G|W$ other than $U_{2}$, $U_{4}$, and $U(u)$. Then by Lemma \ref{threebridges}, a $W_{7}$-subdivision exists in $G$. Assume then that there are at most four bridges of $G|W$. Let $U'$ be the fourth bridge of $G|W$, if such a bridge exists.

If any bridge other than $U_{4}$ contains more than one neighbour of $v_{0}$ not in $W$, then by applying Lemma \ref{lemma1} to that bridge, a $W_{7}$-subdivision can be formed centred on $v_{0}$. Assume then that each of $U_{2}\setminus W$, $U(u)\setminus W$, and $U'\setminus W$ contain at most one neighbour of $v_{0}$.

Suppose that $|(U_{2}\cup U(u))\setminus W| > 3$. Then a type 4 edge-vertex-cutset can be formed from $v_{1}$, $v_{3}$, and the two edges joining $v_{0}$ to $(U_{2}\cup U(u))\setminus W$. Assume then that $|(U_{2}\cup U(u))\setminus W| \le 3$. By the same argument, assume that $|(U_{2}\cup U')\setminus W| \le 3$ and $|(U(u)\cup U')\setminus W| \le 3$. Therefore, $|(U_{2}\cup U(u)\cup U')\setminus W| \le 4$. Since $|V(G)| \ge 38$, then, $U_{4}$ must contain at least 34 vertices.

Let $S = \{v_{1}, v_{0}, v_{5}\}$.

\textbf{2.1.} Suppose that $x$, $v_{6}$, and $v_{4}$ are not in three separate bridges of $G|S$, but rather, there exists some path $Q$ disjoint from $S$ joining two of these vertices.

Searching and checking with the program shows that such a path results in a $W_{7}$-subdivision, except in the graphs of Figure \ref{w7case1_2_2}. Suppose that $G$ contains the configuration shown in one of these graphs. Then $G$ falls in to case 1. Figure \ref{w7case1_2_3} shows how a graph isomorphic to the type of graph analysed in case 1 (pictured in Figure \ref{w7case1_Q1}) is contained as a subdivision in $G$. The parts of the graph in bold are those parts also contained in the graph of case 1.

\begin{figure}[htbp]
\begin{center}
\includegraphics[width=\textwidth]{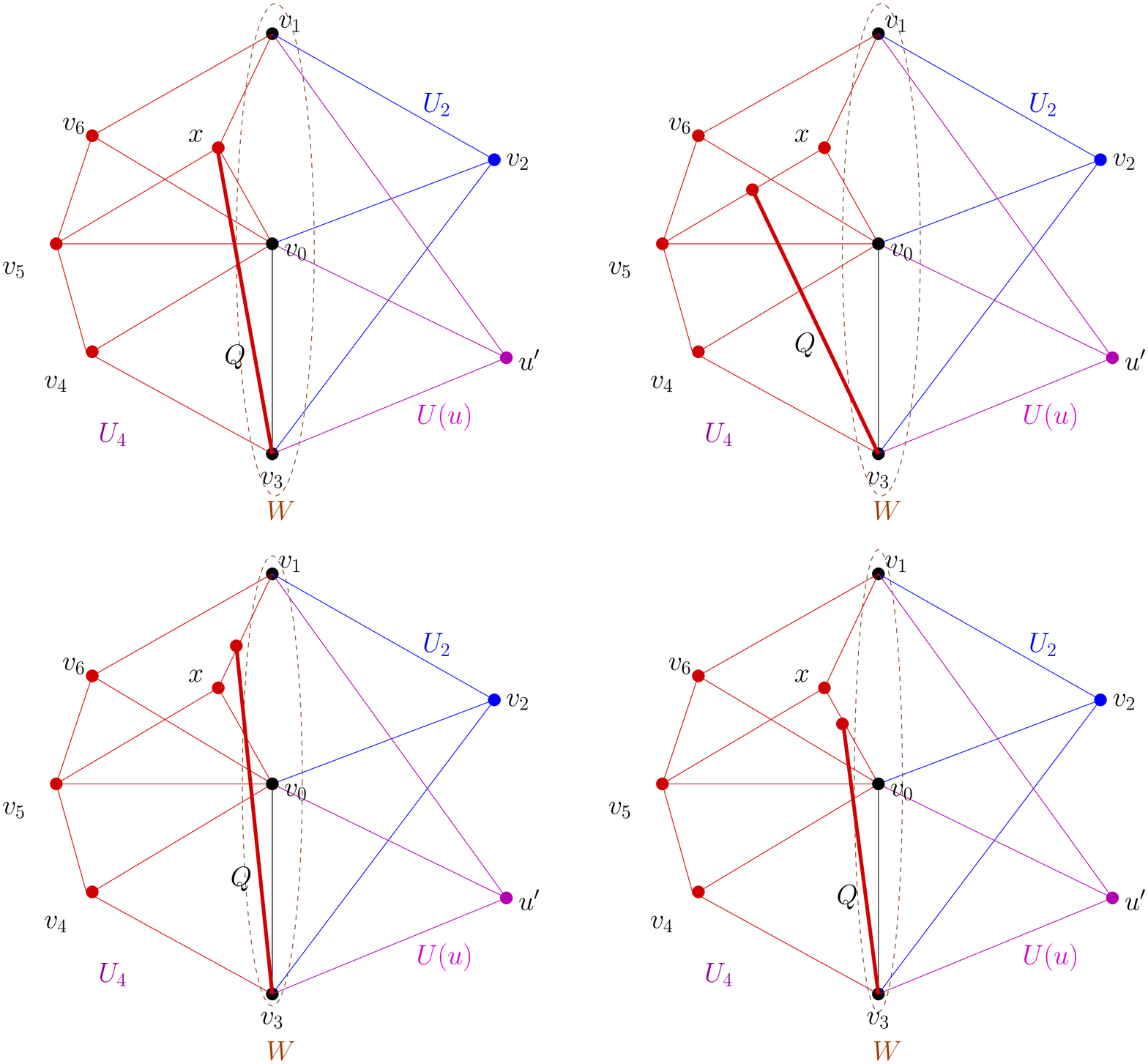}
\caption{Case (b)(i), 2: Path $Q$ exists such that $x$, $v_{6}$, and $v_{4}$ are not in three separate bridges of $G|S$.}
\label{w7case1_2_2}
\end{center}
\end{figure}

\begin{figure}[htbp]
\begin{center}
\includegraphics[width=\textwidth]{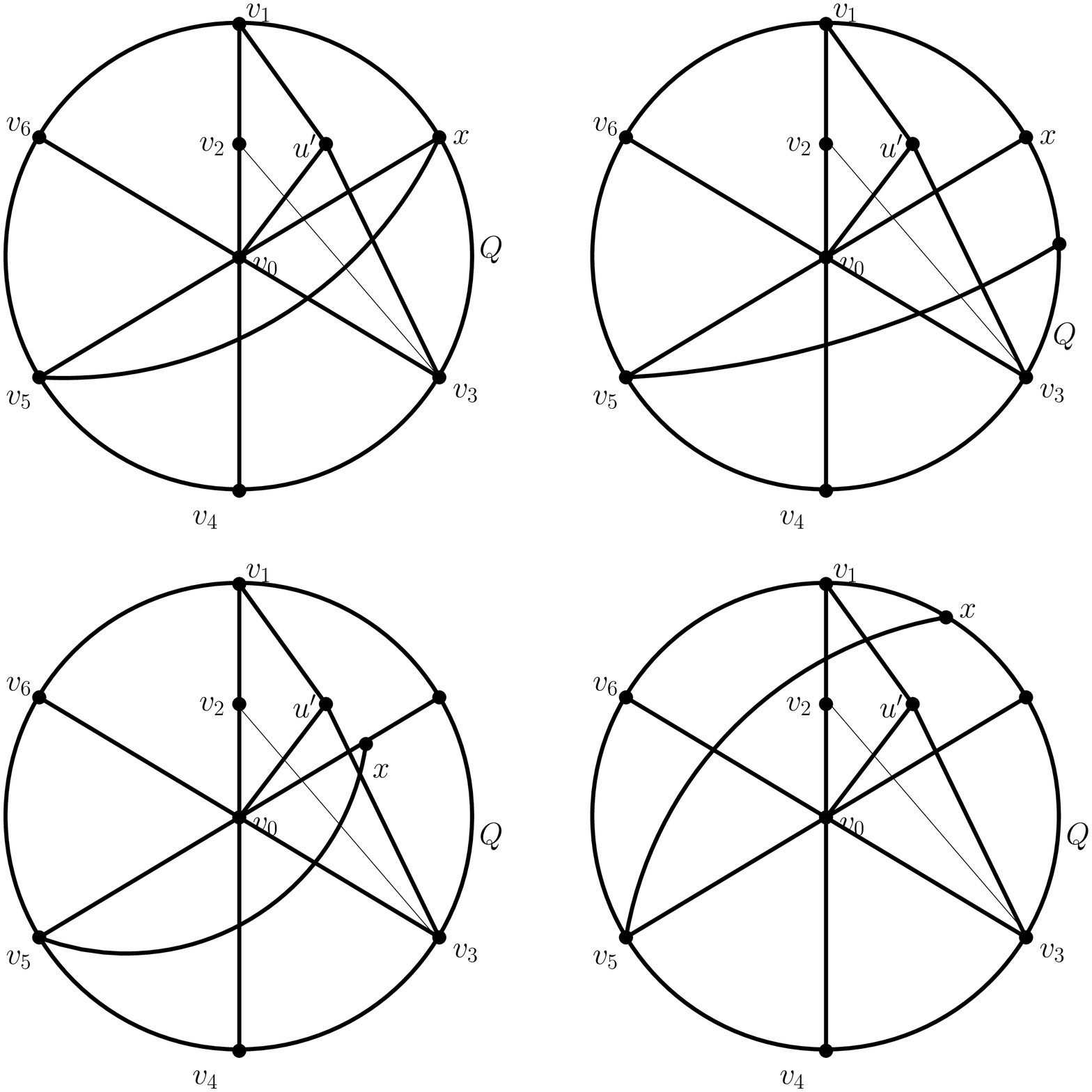}
\caption{Case (b)(i), 2: Graphs of Figure \ref{w7case1_2_2} equivalent to those of Figure \ref{w7case1_Q1} in Case 1.}
\label{w7case1_2_3}
\end{center}
\end{figure}

\textbf{2.2.} Suppose then that $x$, $v_{6}$, and $v_{4}$ are each in three separate bridges of $G|S$.

Let $T_{x}$ be the bridge of $G|S$ containing $x$. Let $T_{6}$ be the bridge of $G|S$ containing $v_{6}$. Let $T_{4}$ be the bridge of $G|S$ containing $v_{4}$ and $G\setminus U_{4}$. By the same argument used previously for bridges $U_{2}$, $U(u)$, and $U'$, there can be at most four vertices in $G\setminus T_{4}$. Thus, $|T_{4}| \ge 34$. Since there are at most four vertices in $T_{4}$ but not in $U_{4}$, $|T_{4}\cap U_{4}| \ge 30$.

Let $S_{1} = \{v_{3}, v_{0}, v_{5}\}$.

\textbf{2.2.1.} Suppose that $v_{6}$ and $v_{4}$ are not in separate bridges of $G|S_{1}$, but rather, there exists some path disjoint from $S_{1}$ joining these vertices.

Searching and checking with the program shows that such a path results in a $W_{7}$-subdivision, except in the graphs of Figure \ref{w7case1_2_4}. Suppose that $G$ contains the configuration shown in one of these graphs.

\begin{figure}[htbp]
\begin{center}
\includegraphics[width=\textwidth]{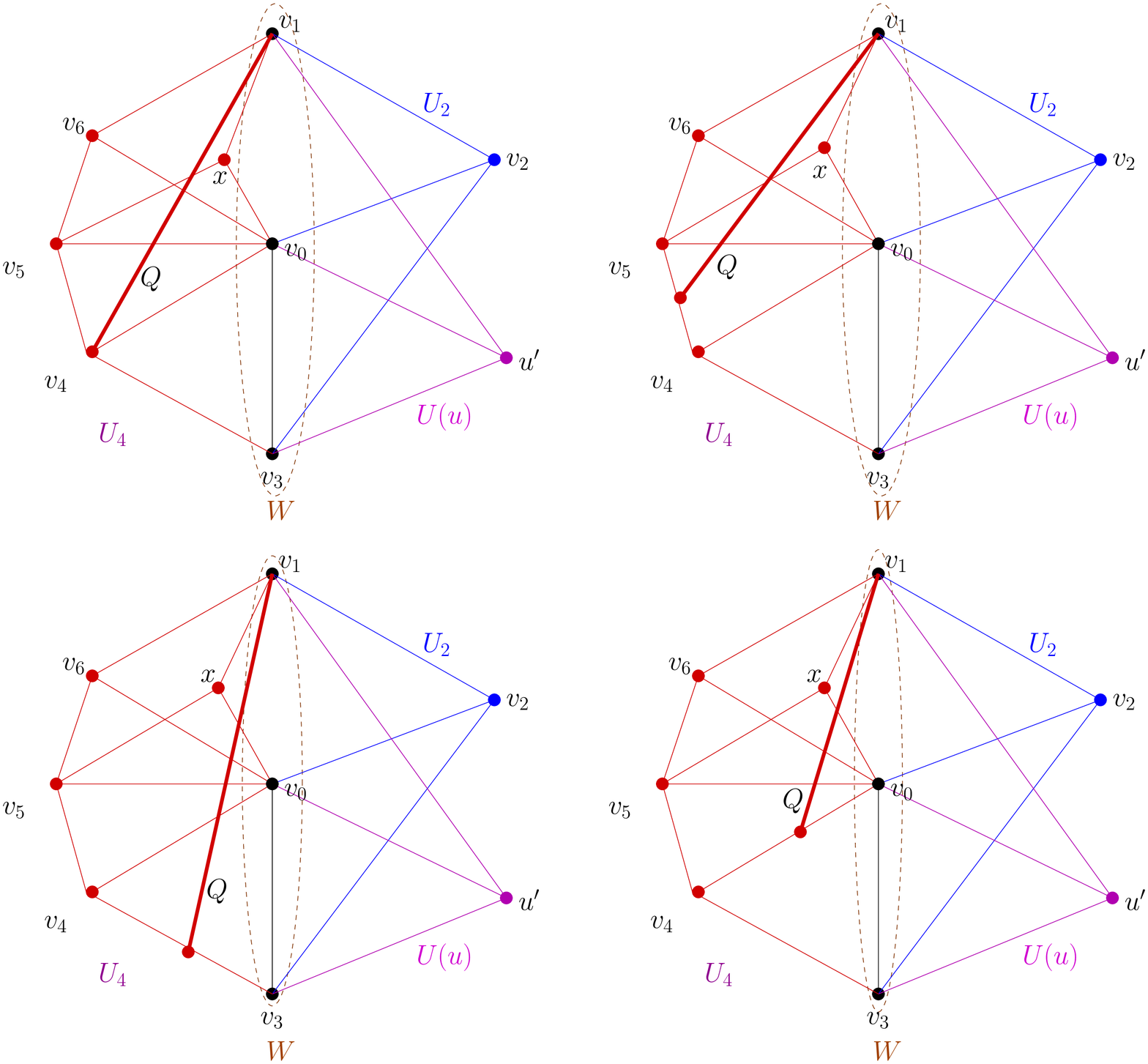}
\caption{Case (b)(i), 2: Path exists such that $v_{6}$ and $v_{4}$ are not in separate bridges of $G|S_{1}$.}
\label{w7case1_2_4}
\end{center}
\end{figure}

Suppose there exists some internal vertex $y$ on the path $P_{5}$. If $y$ is contained in some bridge of $G|S$ other than $T_{4}$, then by applying Lemma \ref{lemma1} to that bridge, a $W_{7}$-subdivision exists centred on $v_{0}$. Assume then that $y$ is contained in $T_{4}$. By 3-connectivity, there exists some path $P_{y}$ contained in $T_{4}\setminus S$ joining $y$ to $H\cap T_{4}$ that meets $H$ only at its endpoints. Searching and checking by the program shows that all possible placements of such a path result in the existence of a $W_{7}$-subdivision, except where $P_{y}$ meets $H$ at $v_{3}$. However, if $G$ contains such a path, then $G$ falls into case 1. Figure \ref{w7case1_2_5} shows how a graph isomorphic to the type of graph analysed in case 1 (pictured in Figure \ref{w7case1_Q1}) is contained as a subdivision in $G$, where $P_{y}$ meets $H$ at $v_{3}$.

\begin{figure}[htbp]
\begin{center}
\includegraphics[width=\textwidth]{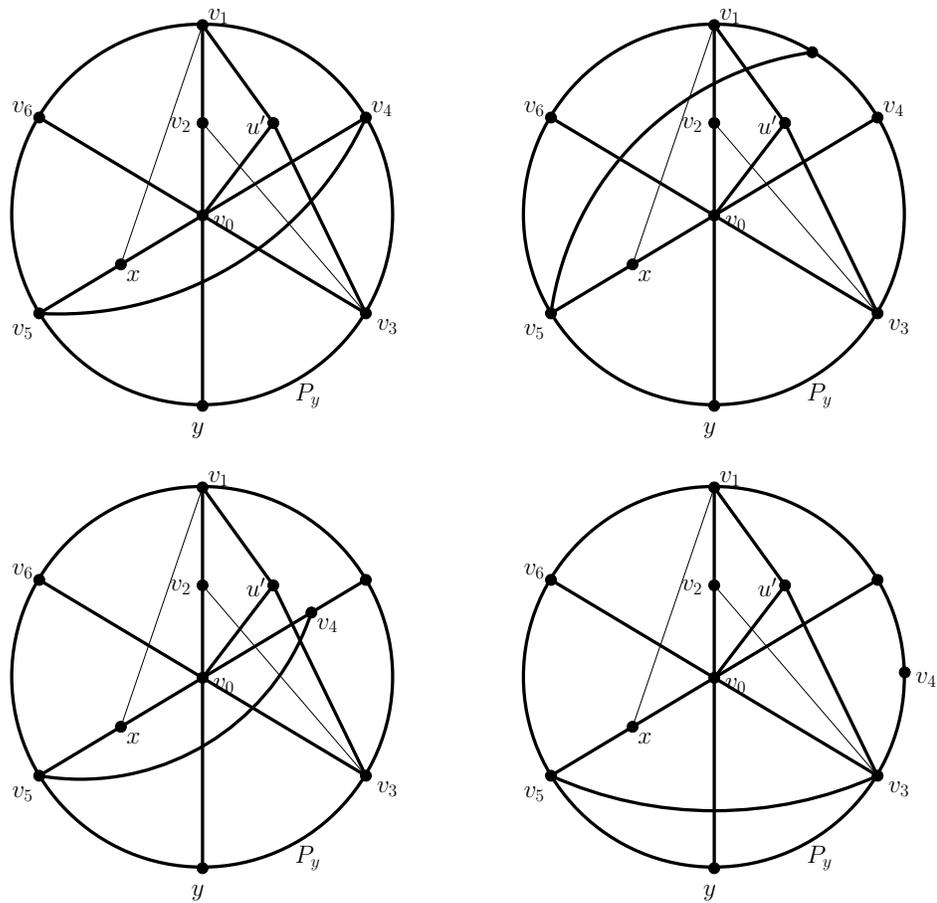}
\caption{Case (b)(i), 2: Graphs of Figure \ref{w7case1_2_4} plus path $P_{y}$ equivalent to graphs of Figure \ref{w7case1_Q1} in Case 1.}
\label{w7case1_2_5}
\end{center}
\end{figure}

Assume then that $P_{5}$ is a single edge.

Note that $|(T_{x}\cup T_{6})\setminus S| \le 3$, since otherwise a type 4 edge-vertex-cutset can be formed from $v_{1}$, $v_{5}$, and the two edges joining $v_{0}$ to $(T_{x}\cup T_{6})\setminus S$. Thus, one of the bridges $T_{x}$, $T_{6}$ contains only one vertex not in $S$. Assume without loss of generality that this bridge is $T_{x}$. Thus, each vertex in $S$ contains exactly one neighbour in $T_{x}\setminus S$. Reduction \ref{r1_big} can therefore be performed on $G$. Table \ref{t_w7i2_1} shows how Reduction \ref{r1_big} can be applied.

\begin{table}[!h]
\begin{tabular}{p{0.4\textwidth}|p{0.25\textwidth}|p{0.35\textwidth}}
\hline
\textbf{Required in Reduction \ref{r1_big}} & $S = \{t, u, v, w\}$ & Bridge $X$ of $G|S$ \\
\hline
\textbf{Equivalent construct in $G$} & $\{v_{0}, v_{1}, v_{3}, v_{5}\}$ & $T_{x}$ \\
\hline \hline
\textbf{Required in Reduction \ref{r1_big}} & Bridge $Y$ of $G|S$ & Bridge $Z$ of $G|S$ \\
\hline
\textbf{Equivalent construct in $G$}  & $T_{6}$ & The bridge of $G|\{v_{0}, v_{1}, v_{3}, v_{5}\}$ containing $v_{4}$ \\
\hline \hline
\textbf{Required in Reduction \ref{r1_big}} & Edge $vw$ & Bridges $A$ and $B$ \\
\hline
\textbf{Equivalent construct in $G$}  & $P_{5}$ & Bridges $U_{2}$ and $U(u)$ \\
\hline

\end{tabular}
\caption{Case (b)(i), 2.2.1: Applying Reduction \ref{r1_big} to $G$.}
\label{t_w7i2_1}
\end{table}

\textbf{2.2.2.} Suppose then that $v_{6}$ and $v_{4}$ are each in separate bridges of $G|S_{1}$.

Let $A$ be the bridge of $G|S_{1}$ containing $v_{6}$ and $G\setminus U_{4}$. Let $B$ be the bridge of $G|S_{1}$ containing $v_{4}$. If there are at least two neighbours of $v_{0}$ in $B\setminus S_{1}$, then by applying Lemma \ref{lemma1} to $B$, it is straightforward to check that a $W_{7}$-subdivision exists in $G$. Assume then that there is at most one neighbour of $v_{0}$ in $B\setminus S_{1}$. Thus, if $|B\setminus S_{1}| > 3$, a type 2 edge-vertex-cutset can be formed from $v_{5}$, $v_{3}$, and the edge joining $v_{0}$ to $B\setminus S_{1}$. Assume then that $|B\setminus S_{1}| \le 3$.

By the same argument used previously, if there exists some third bridge of $G|S_{1}$ other than $A$ and $B$, then there can be at most four vertices in $G\setminus A$. Regardless of the number of bridges of $G|S_{1}$, then, $|G\setminus A| \le 4$. However, since the set $G\setminus A = T_{4}\cap U_{4}$, this contradicts the conclusion drawn  at the start of case 2.2, where it is determined that $|T_{4}\cap U_{4}| \ge 30$.

\vspace{0.2in}
\noindent \textbf{Case (b)(ii): $u_{1} = v_{1}, u_{2} = v_{4}$} (Figure \ref{w7case2})

\begin{figure}[htbp]
\begin{center}
\includegraphics[width=0.4\textwidth]{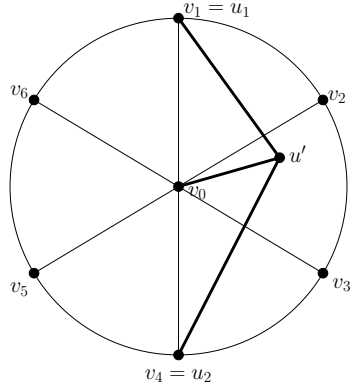}
\caption{Case (b)(ii): $u_{1} = v_{1}, u_{2} = v_{4}$}
\label{w7case2}
\end{center}
\end{figure}

Let $W = \{v_{0}, v_{1}, v_{4}\}$. Let $H_{2}$ be the subgraph consisting of the part of the rim from $v_{1}$ to $v_{4}$ that passes through $v_{2}$ and $v_{3}$, not including endpoints, and all of $P_{2}$ and $P_{3}$ except for $v_{0}$. Let $H_{5}$ be the subgraph consisting of the part of the rim from $v_{1}$ to $v_{4}$ that passes through $v_{5}$ and $v_{6}$, not including endpoints, and all of $P_{5}$ and $P_{6}$ except for $v_{0}$. Recall that $U(u)$ is the bridge of $G|V(H)$ which contains $u'$.

Suppose there exists some path $Q$ in $G$ such that $H_{2}$ is contained in $U(u)$. Testing with the program shows that all possible configurations of such a path result in the presence of a $W_{7}$-subdivision. Assume then that no such path exists. By symmetry of the graph, assume also that no path exists in $G$ such that $H_{5}$ is contained in $U(u)$.

\textbf{1.} Suppose there exists some path $Q$ from some point in $H_{2}$ to some point in $H_{5}$, such that $Q$ is disjoint from $W$. (By the previous paragraph, it can also be assumed that such a path must also be disjoint from all vertices in $U(u)$.) All but four of the possible configurations contain a $W_{7}$-subdivision. The four exceptions are shown in Figure \ref{w7case2_Q1}. Suppose that $G$ contains the configuration shown in one of these graphs.

\begin{figure}[htbp]
\begin{center}
\includegraphics[width=0.8\textwidth]{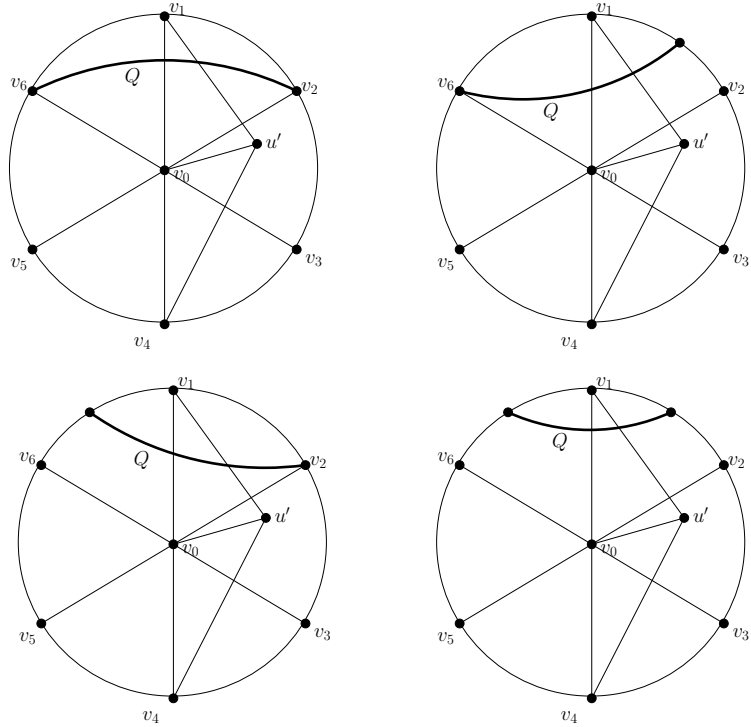}
\caption{Case (b)(ii), path $Q$ from $H_{2}$ to $H_{5}$.}
\label{w7case2_Q1}
\end{center}
\end{figure}

We know then that $U(u)$ forms a bridge of $G|W$, and that $H_{2}\cup H_{5}$ is in some bridge of $G|W$ other than $U(u)$. Call this bridge $U_{2}$.

\textbf{1.1.} Suppose there exists some internal vertex $x$ on one of the paths $P_{1}$ or $P_{4}$.

\textbf{1.1.1.} Suppose $x$ is contained in the bridge $U_{2}$. Thus, there exists some path $P_{x}$ from $x$ to $H_{2}\cup H_{5}\cup Q$ such that $P_{x}$ is contained in $U_{2}\setminus W$ and meets $H_{2}\cup H_{5}$ only at its endpoint, say, $x'$. Such a path results in a $W_{7}$-subdivision existing in $G$, unless $x$ is an internal vertex of the path $P_{4}$ and $x'\in \{v_{2}, v_{6}\}$, regardless of which of the configurations of Figure \ref{w7case2_Q1} is contained in $G$. Suppose then that $x' \in \{v_{2}, v_{6}\}$. Then $G$ falls into Case (b)(i). Figure \ref{w7case2_case1} shows how a graph isomorphic to the type of graph analysed in Case (b)(i) is contained as a subdivision in $G$ if the path $P_{x}$ exists as described. The parts of the graph in bold are those parts also contained in the graph of Case (b)(i). The dashed curves represent the four possible different placements of path $Q$.

\begin{figure}[!h]
\begin{center}
\includegraphics[width=0.6\textwidth]{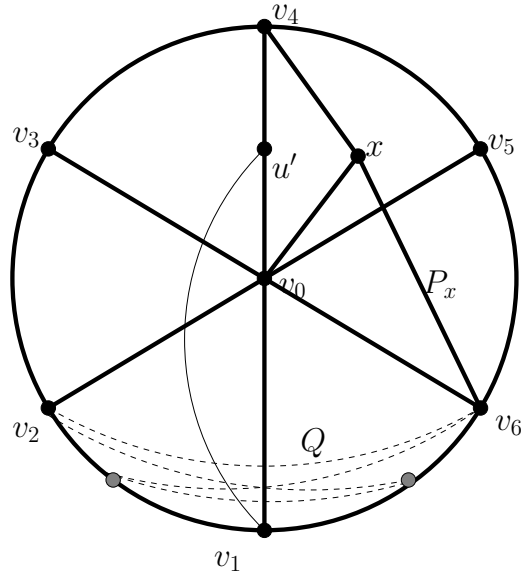}
\caption{Case (b)(ii), path $P_{x}$ places $G$ in Case (b)(i). Compare Figure \ref{w7case1}.}
\label{w7case2_case1}
\end{center}
\end{figure}

\textbf{1.1.2.} Suppose $x$ is contained in the bridge $U(u)$. 

Suppose $x$ lies on the path $P_{4}$. Since $x \in U(u)$, the neighbour of $v_{0}$ along $P_{4}$ is also in $U(u)$. Thus, $v_{0}$ has at least two neighbours in $U(u)\setminus W$. Lemma \ref{lemma1} can be applied to bridge $U(u)$, then, and a $W_{7}$-subdivision can thus be formed, as shown in Figure \ref{w7case2_1212_1}.

\begin{figure}[!h]
\begin{center}
\includegraphics[width=0.6\textwidth]{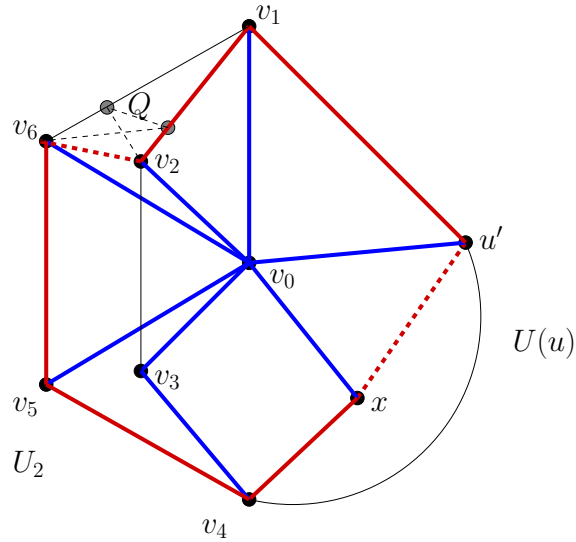}
\caption{Case (b)(ii), 1.1.2: internal vertex on $P_{4}$ contained in $U(u)$ results in $W_{7}$-subdivision}
\label{w7case2_1212_1}
\end{center}
\end{figure}

Assume then that $x$ is an internal vertex of the path $P_{1}$.

Suppose there exists some internal vertex $y$ on the path $P_{4}$. If $y \in U(u)$, then by Lemma \ref{lemma2}, a $W_{7}$-subdivision exists. If $y \in U_{2}$, then the graph falls into case 1.1.1 above. Assume then that no internal vertex of $P_{4}$ is contained in $U_{2}$ or $U(u)$.

Suppose there exists some bridge $A$ of $G|W$ other than $U_{2}$ and $U(u)$. Then a $W_{7}$-subdivision can be formed in $G$, with two spokes in $\langle U(u)\rangle$ (by Lemma \ref{lemma1}), four spokes in $\langle U_{2}\rangle$, and one spoke in $\langle A\rangle$ (from $v_{0}$ to $v_{1}$). Figure \ref{w7case2_1212_2} illustrates such a situation. 

\begin{figure}[!h]
\begin{center}
\includegraphics[width=0.6\textwidth]{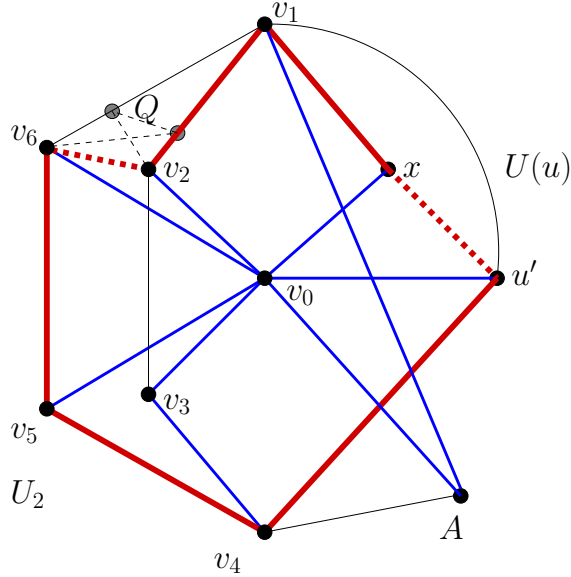}
\caption{Case (b)(ii), 1.1.2: third bridge $A$ of $G|W$ results in $W_{7}$-subdivision}
\label{w7case2_1212_2}
\end{center}
\end{figure}

Assume then that $U_{2}$ and $U(u)$ are the only bridges of $G|W$, and as such $P_{4}$ is a single edge.

Suppose $v_{0}$ has $\ge 3$ neighbours in $U(u)\setminus W$. Then by Lemma \ref{lemma2}, a $W_{7}$-subdivision exists centred on $v_{0}$. Assume then that $v_{0}$ has at most two neighbours in $U(u)\setminus W$.

Suppose $|U(u)\setminus W| > 3$. Then a type 4 edge-vertex-cutset can be formed from $v_{1}$, $v_{4}$, and the two edges joining $v_{0}$ to $U(u)\setminus W$. Assume then that $|U(u)\setminus W| \le 3$.

\textbf{1.1.2.1.} Suppose $v_{1}$ has at most two neighbours in $U_{2}\setminus W$, say, $x_{1}$ and $x_{2}$.

If $|U(u)\setminus W| = 3$, a type 4 edge-vertex-cutset can be formed from the edges $v_{1}x_{1}$ and $v_{1}x_{2}$, and the vertices $v_{0}$ and $v_{4}$. Assume then that $U(u)\setminus W$ contains only the two vertices $x$ and $u'$. This implies that $v_{0}$ is adjacent to $x$.

Suppose that $G$ contains the edges $v_{1}v_{4}$ and $v_{4}x$, and that $v_{4}$ has degree $\ge 7$. Thus, $v_{4}$ contains exactly four neighbours not in $U_{2}\setminus W$, and so must have at least three neighbours in $U_{2}\setminus W$. Therefore, by Lemma \ref{lemmaW7}, a $W_{7}$-subdivision exists centred on $v_{4}$.

Assume then that either $v_{4}$ has degree $< 7$, or that at least one of the edges $v_{1}v_{4}$, $v_{4}x$ does not exist in $G$.

Suppose the edge $v_{0}v_{1}$ exists in $G$. Then by Lemma \ref{lemmaW7}, a $W_{7}$-subdivision exists centred on $v_{0}$. Assume then that such an edge does not exist. Therefore, $v_{0}$ has exactly two neighbours in the set $\{v_{1}, x, u'\}$, and $v_{4}$ either has at most two neighbours in this set, or has degree $< 7$. Thus, a type 4a edge-vertex-cutset can be formed from the edges $v_{1}x_{1}$ and $v_{2}x_{2}$, and the vertices $v_{4}$ and $v_{0}$.

\textbf{1.1.2.2.} Suppose then that $v_{1}$ has some third neighbour $y$ in $U_{2}\setminus W$, such that $y\notin N_{H}(v_{1})$.

By 3-connectivity, there must exist some path $Y$ in $U_{2}\setminus W$ joining $y$ to $H\cup Q$, such that $Y$ meets $H\cup Q$ only at its endpoint, say $y'$. Using the program to generate and check all possible such paths $Y$, it is found that a $W_{7}$-subdivision exists in $G$ for each case, except where $y'$ is an internal vertex on the path $Q$. Assume then that this is the case for all such paths $Y$.

Consider the set $S = \{v_{1}, v_{2}, v_{6}\}$. Suppose $S$ does not form a separating set in $G$. Then there exists some path disjoint from $S$ joining the two components of $(H\cup U(u)\cup Q) - S$.  Using the program to generate and check all possible such paths, it is found that a $W_{7}$-subdivision exists in each case. Assume then that $S$ forms a separating set in $G$. Let $T'$ be the bridge of $G|S$ containing $y$. Let $T''$ be the bridge of $G|S$ containing $v_{3}$, $v_{5}$, and $U(u)$.

Suppose $v_{1}$ has degree $\ge 7$. Thus, there exist at least two neighbours of $v_{1}$, say $a_{1}$ and $a_{2}$, such that $a_{1}, a_{2} \neq y$ and $a_{1}, a_{2}\notin N_{H\cup P_{u1}}(v_{1})$. By 3-connectivity, there must exist paths $A_{1}$ and $A_{2}$ joining $H$ to $a_{1}$ and $a_{2}$ respectively. Using the program to generate and check all possible such paths $A_{1}$ and $A_{2}$, it is found that a $W_{7}$-subdivision exists in $G$ for each case. Assume then that $v_{1}$ has degree $< 7$. Thus, $v_{1}$ has at most four neighbours in $U_{2}\setminus W$.

Suppose $|U(u)\setminus W| = 3$. Suppose also that $v_{4}$ has at most two neighbours in $U(u)\setminus W$, say $b_{1}$ and $b_{2}$ (if a second neighbour exists). Since $v_{1}$ has degree $< 7$, and $v_{0}$ has only two neighbours in $U(u)\setminus W$, then a type 2a or 4a edge-vertex-cutset can be formed from $v_{0}$, $v_{1}$, $v_{4}b_{1}$, and $v_{4}b_{2}$ (if $b_{2}$ exists). Assume then that $v_{4}$ is adjacent to all three vertices in $U(u)\setminus W$. If $v_{4}$ has $\ge 3$ neighbours in $U_{2}\setminus W$, then, by Lemma \ref{lemma2}, a $W_{7}$-subdivision exists centred on $v_{4}$. Assume then that $v_{4}$ has at most two neighbours, say $c_{1}$ and $c_{2}$, in $U_{2}\setminus W$. Then a type 4 edge-vertex-cutset can be formed from $v_{1}$, $v_{0}$, $v_{4}c_{1}$, and $v_{4}c_{2}$.

Assume then that $|U(u)\setminus W| = 2$.

Since $|V(G)| \ge 38$, then, we know that $U_{2}$ contains at least 36 vertices. In the remainder of this case, the structure of $U_{2}$ is more closely examined. Various sets of size 3 contained in $U_{2}$ are identified to be separating sets. For each such separating set $U^{*}$, all but one of the bridges of $G|U^{*}$ are shown to be limited in size to some small number of vertices, otherwise some forbidden edge-vertex-cutset exists. It is then shown that the intersection of each of the `large' bridges can contain at most two vertices, which results in a contradiction.

\textbf{Step 1: Bounding $|V(G) \setminus T''|$.} Recall that $S = \{v_{1}, v_{2}, v_{6}\}$, that $T'$ is the bridge of $G|S$ containing $y$, and that $T''$ is the bridge of $G|S$ containing $v_{3}$, $v_{5}$, and $U(u)$.

Suppose $|V(G) \setminus T''| \ge 3$. Recall that for any neighbour $y$ of $v_{1}$ where $y\in U_{2}\setminus W$ but $y\notin N_{H}(v_{1})$, all paths in $U_{2}\setminus W$ joining $y$ to $H\cup Q$ must first meet $H\cup Q$ at an internal vertex of the path $Q$. Thus, any neighbour of $v_{1}$ in $U_{2}\setminus W$ is also in the bridge $T'$. Any neighbours of $v_{1}$ that are not in $T'$, then, must be in $U(u)\setminus W$. Since $|U(u)\setminus W| = 2$, there can be only two such neighbours of $v_{1}$. Thus, a type 4 edge-vertex-cutset can be formed from $v_{2}$, $v_{6}$, and the edges joining $v_{1}$ to $U(u)\setminus W$.

Assume then that $|V(G) \setminus T''| \le 2$.

Let $X_{1} = (T''\cap U_{2})\setminus W$. Since $|U_{2}\setminus W| \ge 33$, and $|V(G) \setminus T''| \le 2$, $X_{1}$ must contain at least 31 vertices.

Consider now the set $S_{1} = \{v_{2}, v_{0}, v_{4}\}$. Suppose that $S_{1}$ is not a separating set, but rather, there exists some path disjoint from $S_{1}$ joining $v_{3}$ to $v_{1}$. Using the program to check all possible placements of such a path shows that a $W_{7}$-subdivision exists in each case. Suppose then that $S_{1}$ forms a separating set in $G$, with at least two bridges: $T_{3}$, which contains the vertex $v_{3}$, and $T_{1}$, which contains the vertices $v_{1}$, $v_{5}$, $v_{6}$, $u'$, and $x$.

\textbf{Step 2: Bounding $|V(G)\setminus T_{1}|$.} Suppose $v_{0}$ has more than two neighbours in $T_{3}\setminus S_{1}$. Then a $W_{7}$-subdivision exists in $G$, by applying Lemma \ref{lemma2} to $T_{1}$ and $T_{3}$, and using the edge $v_{0}v_{4}$ as a seventh spoke. Assume then that $v_{0}$ has at most two neighbours in $T_{3}\setminus S_{1}$. Then, if $|T_{3}\setminus S_{1}| > 3$, a type 2 or 4 edge-vertex-cutset can be formed from $v_{2}$, $v_{4}$, and the edge or edges joining $S_{1}$ to $T_{3}\setminus S_{1}$. Assume then that $|T_{3}\setminus S_{1}| \le 3$.

Suppose there exists some bridge of $G|S_{1}$ other than $T_{1}$ and $T_{3}$. Then $G$ falls into Case (b)(i), as illustrated in Figure \ref{w7case2_case1_2}. Assume then that there are only two bridges of $G|S_{1}$. Thus, there are at most three vertices in $V(G)\setminus T_{1}$.

\begin{figure}[!h]
\begin{center}
\includegraphics[width=0.6\textwidth]{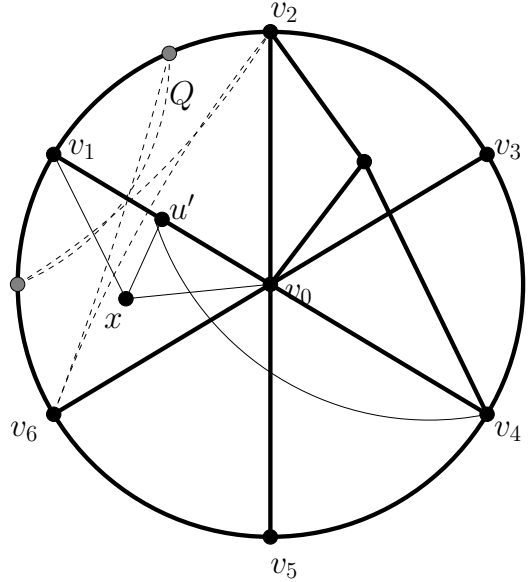}
\caption{Case (b)(ii), third bridge of $G|S_{1}$ places $G$ in Case (b)(i).}
\label{w7case2_case1_2}
\end{center}
\end{figure}

Recall $X_{1} = (T''\cap U_{2})\setminus W$, and $|X_{1}| \ge 31$.

Let $X_{2} = X_{1}\cap T_{1}$. Since $|X_{1}| \ge 31$ and $|V(G)\setminus T_{1}| \le 3$, $X_{2}$ must contain at least 28 vertices.

\textbf{Step 3: Excluding vertices on $P_{2}$.} Suppose there exists some internal vertex $p_{2}$ on the path $P_{2}$.

By 3-connectivity, there exists some path $Q_{2}$ from $p_{2}$ to $H - P_{2}$ such that $Q_{2}$ meets $H - P_{2}$ only at its endpoint, say, $q_{2}$. Using the program to generate and check all possible placements of $Q_{2}$, it is found that the existence of such a path results in a $W_{7}$-subdivision in $G$, unless $q_{2}$ is contained in the bridge $T_{3}$, or $q_{2} = v_{4}$. If the former is true for any such path $Q_{2}$, then all internal vertices on the path $P_{2}$ are contained in the bridge $T_{3}$, and thus are not in the set $X'$. Suppose then that $q_{2} = v_{4}$ for all such paths $Q_{2}$. This, however, would mean that all such vertices $q_{2}$ are contained in some third bridge of $G|S_{1}$ other than $T_{1}$ or $T_{3}$, and we have already deduced in Step 2 that no such bridge exists.

Assume then that $X_{2}$ does not contain any internal vertices on the path $P_{2}$.

Consider the set $S_{2} = \{v_{6}, v_{0}, v_{4}\}$. Suppose that $S_{2}$ is not a separating set, but rather, there exists some path disjoint from $S_{2}$ joining $v_{5}$ to $v_{1}$. Using the program to check all possible placements of such a path shows that a $W_{7}$-subdivision exists in each case. Suppose then that $S_{2}$ forms a separating set in $G$, with at least two bridges: $Y_{5}$, which contains the vertex $v_{5}$, and $Y_{1}$, which contains $v_{1}$, $v_{2}$, $v_{3}$, and $U(u)$.

\textbf{Step 4: Bounding $|V(G)\setminus Y_{1}|$.} Suppose $v_{0}$ has more than two neighbours in $Y_{5}\setminus S_{2}$. Then a $W_{7}$-subdivision exists in $G$, by applying Lemma \ref{lemma2} to $Y_{1}$ and $Y_{5}$, and using the edge $v_{0}v_{4}$ as a seventh spoke. Assume then that $v_{0}$ has at most two neighbours in $Y_{5}\setminus S_{2}$. Then, if $|Y_{5}\setminus S_{2}| > 3$, a type 2 or 4 edge-vertex-cutset can be formed from $v_{6}$, $v_{4}$, and the edge or edges joining $S_{2}$ to $Y_{5}\setminus S_{2}$. Assume then that $|Y_{5}\setminus S_{2}| \le 3$.

Suppose there exists some bridge of $G|S_{2}$ other than $Y_{5}$ and $Y_{1}$. Then $G$ falls into Case (b)(i), as illustrated in Figure \ref{w7case2_case1_3}. Assume then that there are only two bridges of $G|S_{2}$. Thus, there are at most three vertices in $V(G)\setminus Y_{1}$.

\begin{figure}[!h]
\begin{center}
\includegraphics[width=0.6\textwidth]{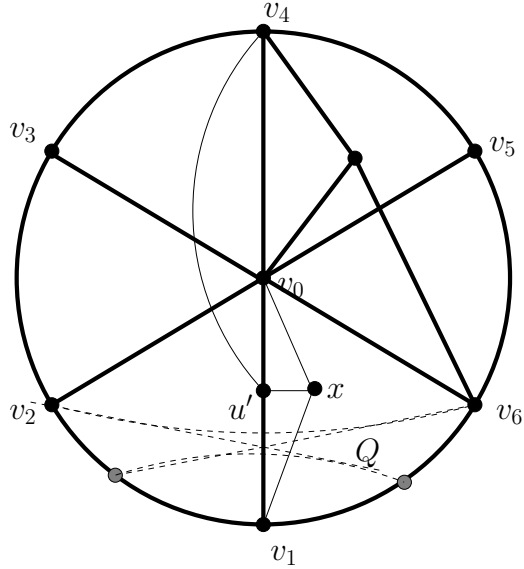}
\caption{Case (b)(ii), third bridge of $G|S_{2}$ places $G$ in Case (b)(i).}
\label{w7case2_case1_3}
\end{center}
\end{figure}

Recall $X_{2} = (U_{2}\cap T'' \cap T_{1})\setminus W$, and $|X_{2}| \ge 28$.

Let $X_{3} = X_{2}\cap Y_{1}$. Since $|X_{2}| \ge 28$, and $|V(G)\setminus Y_{1}| \le 3$, $X_{3}$ must contain at least 25 vertices.

\textbf{Step 5: Excluding vertices on $P_{6}$.} Suppose there exists some internal vertex $p_{6}$ on the path $P_{6}$.

By 3-connectivity, there exists some path $Q_{6}$ from $p_{6}$ to $H - P_{6}$ such that $Q_{6}$ meets $H - P_{6}$ only at its endpoint, say, $q_{6}$. Using the program to generate and check all possible placements of $Q_{6}$, it is found that the existence of such a path results in a $W_{7}$-subdivision in $G$, unless $q_{6}$ is contained in the bridge $Y_{5}$, or $q_{6} = v_{4}$. If the former is true for any such path $Q_{6}$, then all internal vertices on the path $P_{6}$ are contained in the bridge $Y_{5}$, and thus are not in the set $X''$. Suppose then that $q_{6} = v_{4}$ for all such paths $Q_{6}$. This, however, would mean that all such vertices $q_{6}$ are contained in some third bridge of $G|S_{2}$ other than $Y_{1}$ or $Y_{5}$, and we have already shown in Step 4 that no such bridge exists.

Assume then that $X_{3}$ does not contain any internal vertices on the path $P_{6}$.

\textbf{Step 6: Bounding vertices on $v_{1}Cv_{2}$.} Suppose there exists some internal vertex $p'_{2}$ on the path $v_{1}Cv_{2}$.

If $G$ contains the configuration shown in either the second or fourth graphs of Figure \ref{w7case2_Q1}, then one of the endpoints of $Q$ forms such a vertex, and so any internal vertices on $v_{1}Cv_{2}$ are contained in the bridge $T'$ (and thus are not contained in $X_{3}$).

Suppose then that $G$ contains the configuration shown in either the first or third graphs of Figure \ref{w7case2_Q1}. By 3-connectivity, there exists some path $Q'_{2}$ from $p'_{2}$ to $H - v_{1}Cv_{2}$ such that $Q'_{2}$ meets $H$ only at its endpoint, say, $q'_{2}$. Using the program to generate and check all possible placements of $Q'_{2}$, it is found that the existence of such a path results in a $W_{7}$-subdivision in $G$, unless $q'_{2}$ is contained in the bridge $T'$, or $q'_{2} = v_{4}$. If the former is true for any such path $Q'_{2}$, then all internal vertices on the path $v_{1}Cv_{2}$ are contained in the bridge $T'$, and thus are not in the set $X_{3}$. Suppose then that $q'_{2} = v_{4}$ for all such paths $Q'_{2}$.

Let $a_{1}$ be the vertex closest to $v_{1}$ along $v_{1}Cv_{2}$, and let $a_{n}$ be the vertex closest to $v_{2}$ along $v_{1}Cv_{2}$. Note that the removal of $v_{1}$, $v_{2}$, and $v_{4}$ disconnects the graph, placing $a_{1}$ and $a_{n}$ in a separate component from the other vertices in $H$. Let $A$ be the bridge of $G|\{v_{1}, v_{2}, v_{4}\}$ containing $a_{1}$, $a_{n}$, and the other internal vertices along $v_{1}Cv_{2}$. If $v_{4}$ contains at least three neighbours in $A\setminus \{v_{1}, v_{2}, v_{4}\}$, then by applying Lemma \ref{lemma2} to the bridge $A$, a $W_{7}$-subdivision can be formed in $G$. Suppose then that $v_{4}$ contains at most two neighbours in $A\setminus \{v_{1}, v_{2}, v_{4}\}$, say, $a'_{1}$ and $a'_{2}$ (if a second neighbour exists). Then, if $|A\setminus \{v_{1}, v_{2}, v_{4}\}| > 3$, a type 2 or 4 edge-vertex-cutset can be formed from $v_{1}$, $v_{2}$, and the edges $v_{4}a'_{1}$ and $v_{4}a'_{2}$ (if $a'_{2}$ exists).

Assume then that $|A\setminus \{v_{1}, v_{2}, v_{4}\}| \le 3$.

Recall $X_{3} = (U_{2}\cap T''\cap T_{1}\cap Y_{1})\setminus W$, and $|X_{3}| \ge 25$.

Let $X_{4} = X_{3} \setminus (A\setminus \{v_{1}, v_{2}, v_{4}\})$. Since $|X_{3}| \ge 25$ and $|A\setminus \{v_{1}, v_{2}, v_{4}\}| \le 3$, $X_{4}$ must contain at least 22 vertices.

\textbf{Step 7: Bounding vertices on $v_{6}Cv_{1}$.} Suppose there exists some internal vertex $p'_{6}$ on the path $v_{6}Cv_{1}$.

If $G$ contains the configuration shown in either the third or fourth graphs of Figure \ref{w7case2_Q1}, then one of the endpoints of $Q$ forms such a vertex, and so any internal vertices on $v_{6}Cv_{1}$ are contained in the bridge $T'$ (and thus are not contained in $X_{4}$).

Suppose then that $G$ contains the configuration shown in either the first or second graphs of Figure \ref{w7case2_Q1}. By 3-connectivity, there exists some path $Q'_{6}$ from $p'_{6}$ to $H - v_{6}Cv_{1}$ such that $Q'_{6}$ meets $H$ only at its endpoint, say, $q'_{6}$. Using the program to generate and check all possible placements of $Q'_{6}$, it is found that the existence of such a path results in a $W_{7}$-subdivision in $G$, unless $q'_{6}$ is contained in the bridge $T'$, or $q'_{6} = v_{4}$. If the former is true for any such path $Q'_{6}$, then all internal vertices on the path $v_{6}Cv_{1}$ are contained in the bridge $T'$, and thus are not in the set $X_{4}$. Suppose then that $q'_{6} = v_{4}$ for all such paths $Q'_{6}$.

Let $b_{1}$ be the vertex closest to $v_{1}$ along $v_{6}Cv_{1}$, and let $b_{n}$ be the vertex closest to $v_{6}$ along $v_{6}Cv_{1}$. Note that the removal of $v_{1}$, $v_{6}$, and $v_{4}$ disconnects the graph, placing $b_{1}$ and $b_{n}$ in a separate component from the other vertices in $H$. Let $B$ be the bridge of $G|\{v_{1}, v_{6}, v_{4}\}$ containing $b_{1}$, $b_{n}$, and the other internal vertices along $v_{6}Cv_{1}$.

By the same argument used in the previous paragraph for the bridge $A$, it can be assumed that $|B\setminus \{v_{1}, v_{6}, v_{4}\}| \le 3$.

Let $X_{5} = X_{4} \setminus (B\setminus \{v_{1}, v_{6}, v_{4}\})$. Since $|X_{4}| \ge 22$ and $|B\setminus \{v_{1}, v_{6}, v_{4}\}| \le 3$, $X_{5}$ must contain at least 19 vertices.

Consider now the set $S' = \{v_{4}, v_{2}, v_{6}\}$.  Suppose there exists a second bridge of $G|S'$ other than that containing $v_{1}$. Let $Z_{1}$ be the bridge of $G|S'$ that contains $v_{1}$.

\textbf{Step 8: Bounding $|V(G)\setminus Z_{1}|$.} Suppose $v_{4}$ has at least three neighbours not in $Z_{1}$, say, $a_{1}$, $a_{2}$, and $a_{3}$. If $a_{1}$, $a_{2}$, and $a_{3}$ are all contained in one bridge of $G|S'$, then by Lemma \ref{lemma2}, a $W_{7}$-subdivision exists centred on $v_{4}$ (see Figure \ref{w7case2_1212_3}). If two of these vertices, say $a_{1}$ and $a_{2}$, are in one bridge, while $a_{3}$ is in a separate bridge, then a $W_{7}$-subdivision can be formed by applying Lemma \ref{lemma1} to the bridge containing $a_{1}$ and $a_{2}$, and using the bridge containing $a_{3}$ to create a spoke-meets-rim vertex at $v_{2}$ (see Figure \ref{w7case2_1212_4}). Suppose then that $a_{1}$, $a_{2}$, and $a_{3}$ are each in separate bridges of $G|S'$, say, $A_{1}$, $A_{2}$, $A_{3}$. To avoid the possibility of Reduction \ref{r1}A, each of $A_{1}\setminus S'$, $A_{2}\setminus S'$, $A_{3}\setminus S'$ must contain at least two vertices. Thus, to avoid an internal 4-edge-cutset, there exists some vertex $v'$ in $S'$ such that $v'$ has at least two neighbours in at least one of $A_{1}\setminus S'$, $A_{2}\setminus S'$, $A_{3}\setminus S'$. Since each vertex in $S'$ also has at least three neighbours in $Z_{1}\setminus S'$, by Lemma \ref{lemmaW7}, a $W_{7}$-subdivision can be formed in $G$ centred on $v'$ (see Figure \ref{w7case2_1212_5} for an example).

\begin{figure}[!h]
\begin{center}
\includegraphics[width=0.9\textwidth]{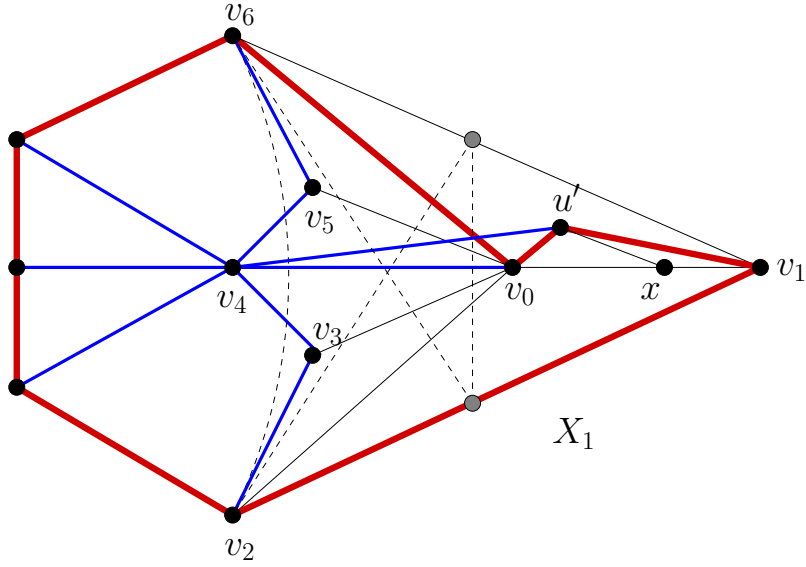}
\caption{Case (b)(ii), $W_{7}$-subdivision exists when $v_{4}$ has three neighbours not in $X_{1}$, all contained in the one bridge of $G|S'$}
\label{w7case2_1212_3}
\end{center}
\end{figure}

\begin{figure}[!h]
\begin{center}
\includegraphics[width=0.9\textwidth]{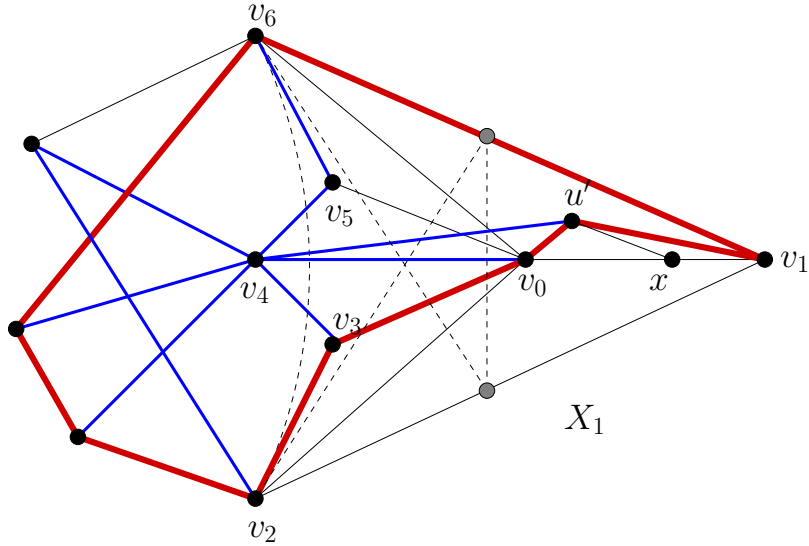}
\caption{Case (b)(ii), $W_{7}$-subdivision exists when $v_{4}$ has three neighbours not in $X_{1}$, contained in two bridges of $G|S'$}
\label{w7case2_1212_4}
\end{center}
\end{figure}

\begin{figure}[!h]
\begin{center}
\includegraphics[width=0.9\textwidth]{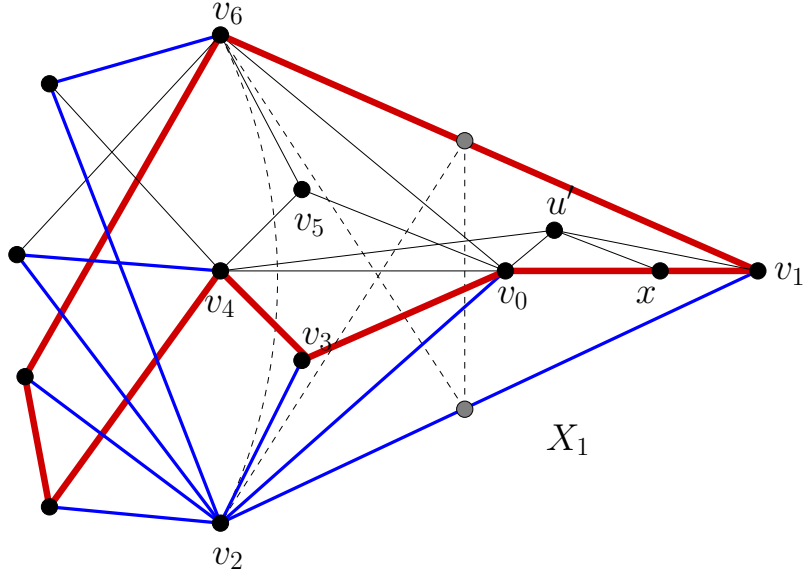}
\caption{Case (b)(ii), $W_{7}$-subdivision exists when $v_{4}$ has three neighbours not in $X_{1}$, each in a separate bridge of $G|S'$}
\label{w7case2_1212_5}
\end{center}
\end{figure}

Suppose then that $v_{4}$ has at most two neighbours not in $Z_{1}$. Therefore, unless $|V(G)\setminus Z_{1}| \le 3$, a type 2 or 4 edge-vertex-cutset is formed from $v_{2}$, $v_{6}$, and the edge or edges joining $v_{4}$ to $V(G)\setminus Z_{1}$. Suppose then that $|V(G)\setminus Z_{1}| \le 3$.

Recall $X_{5} = ((U_{2}\cap T''\cap T_{1}\cap Y_{1})\setminus W)\setminus (A\setminus \{v_{1}, v_{2}, v_{4}\})\setminus (B\setminus \{v_{1}, v_{6}, v_{4}\})$, and that $|X_{5}| \ge 19$.

Let $X_{6} = X_{5}\cap Z_{1}$. Since $|X_{5}|\ge 19$ and $|V(G)\setminus Z_{1}| \le 3$, $X_{6}$ must contain at least 16 vertices.

\textbf{Step 9: Proving $|X_{6}| \ge 16$ is a contradiction.} The vertices $v_{2}$ and $v_{6}$ are contained in the set $X_{6}$, but no other vertices in $H$ can be contained in $X_{6}$. To preserve 3-connectivity, though, there must exist some path $P_{X}$ disjoint from $\{v_{2}, v_{6}\}$ joining $X_{6}\setminus \{v_{2}, v_{6}\}$ to $H$. Let $p_{X}$ be the vertex where $P_{X}$ first meets $H$. To avoid creating a $W_{7}$-subdivision, it must be the case that $p_{X} \in \{v_{1}, v_{4}\}$. However, it has already been argued earlier in this case that any neighbour of $v_{1}$ in $U_{2}\setminus W$  is also in the bridge $T'$. Since any path joining $T'$ to $X_{6}$ must pass through some vertex in $\{v_{1}, v_{2}, v_{6}\}$, it cannot be the case that $p_{X} = v_{1}$. Assume then that $p_{X} = v_{4}$, and that any path joining $X_{6}$ to $H$ must pass through one of $v_{2}$, $v_{4}$, or $v_{6}$, that is, the set $S'$. However, (from Step 8) $X_{6}$ only contains vertices in the bridge $Z_{1}$ of $G|S'$. Thus, $X_{6}$ can only contain the two vertices $v_{2}$ and $v_{6}$, and so $|X_{6}| = 2$, which is a contradiction.

\textbf{1.1.3.} Suppose then that $P_{1}$ and $P_{4}$ do not contain any internal vertices in the bridges $U_{2}$ or $U(u)$, but rather, $x$ is contained in some third bridge $A$ of $G|W$ other than $U_{2}$ or $U(u)$. If $A$ contains internal vertices of both $P_{1}$ and $P_{4}$, then a $W_{7}$-subdivision can be formed in $G$. Suppose then that $A$ contains internal vertices of only one of these paths.

If there exists some fourth bridge of $G|W$ other than $U_{2}$, $U(u)$, and $A$, then by Lemma \ref{threebridges}, a $W_{7}$-subdivision exists in $G$. (See Table \ref{t_w7ii_1}.)

\begin{table}[!h]
\begin{tabular}{p{0.35\textwidth}|p{0.3\textwidth}|p{0.3\textwidth}}
\hline
\textbf{Required in Lemma \ref{threebridges}} & $S = \{u, v, w\}$ & Bridges $X$, $Y$ of $G|S$ \\
\hline
\textbf{Equivalent construct in $G$} & $W = \{v_{0}, v_{1}, v_{4}\}$ & Bridges $U(u)$, $A$ of $G|W$ \\
\hline \hline
\textbf{Required in Lemma \ref{threebridges}}  & Bridge $Z$ of $G|S$ containing $\ge 3$ neighbours of $v$ not in $S$ & $P_{u}$, $P_{w}$ \\
\hline
\textbf{Equivalent construct in $G$}  & Bridge $U_{2}$ of $G|W$ & One of $P_{1}$ or $P_{4}$ (whichever of these has no vertices contained in $A\setminus W$); path in fourth bridge of $G|W$ other than $U_{2}$, $U(u)$, and $A$. \\
\hline

\end{tabular}
\caption{Case (b)(ii), 1.1.3: Applying Lemma \ref{threebridges} to $G$, where there are at least four bridges of $G|W$.} 
\label{t_w7ii_1}
\end{table}

Suppose then that $U_{2}$, $U(u)$, and $A$ are the only bridges of $G|W$.

Suppose $x$ is on the path $P_{4}$. Then $G$ falls into Case (b)(i). Figure \ref{w7case2_1213_1} shows how a graph isomorphic to the type of graph analysed in Case (b)(i) is contained as a subdivision in $G$ in this situation.

\begin{figure}[!h]
\begin{center}
\includegraphics[width=0.6\textwidth]{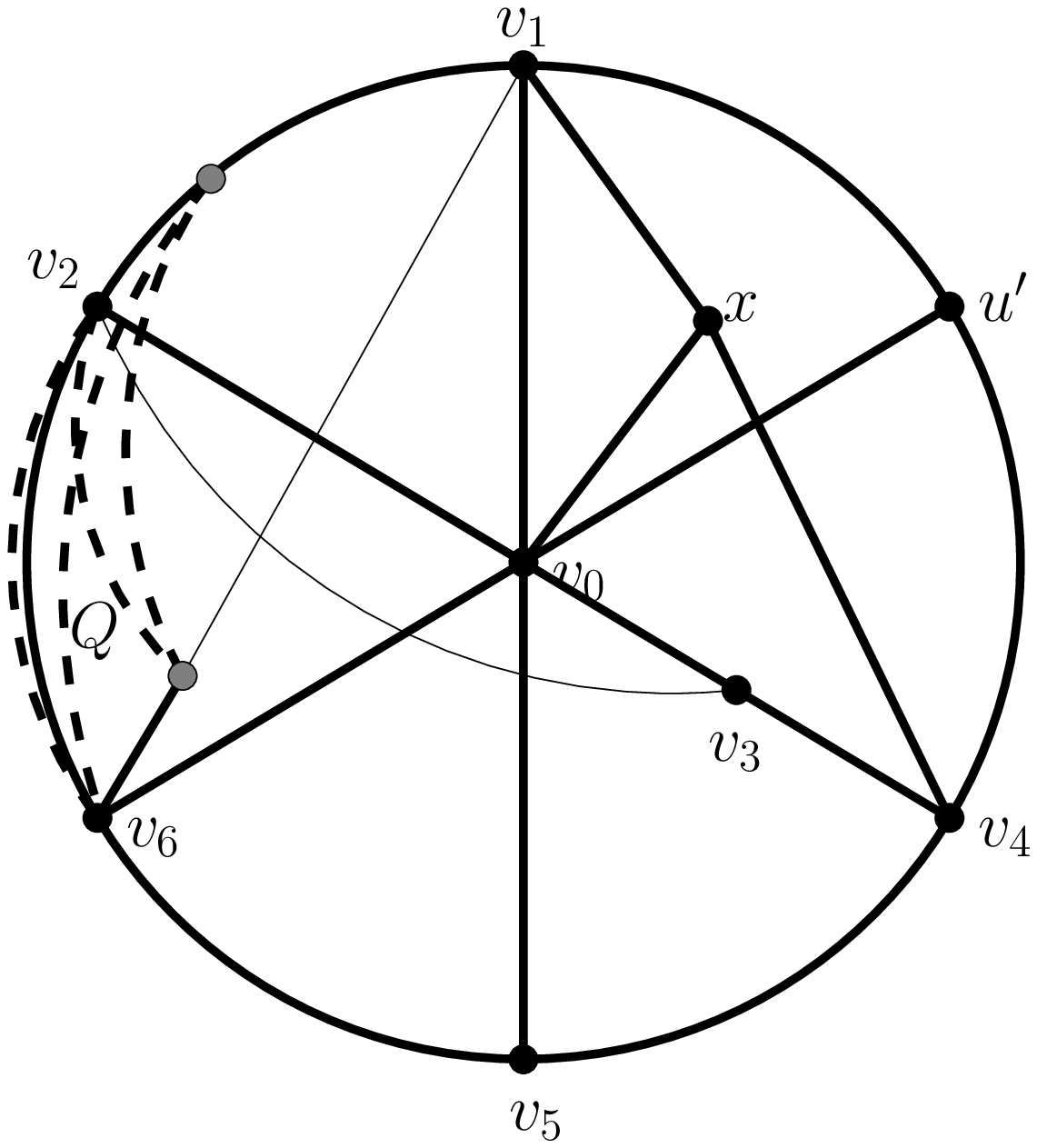}
\caption{Case (b)(ii), vertex $x$ on path $P_{4}$ places $G$ in Case (b)(i). Compare Figure \ref{w7case1}.}
\label{w7case2_1213_1}
\end{center}
\end{figure}

Suppose then that $x$ is on the path $P_{1}$, and $P_{4}$ is a single edge.

If $v_{0}$ contains more than one neighbour in $U(u)\setminus W$ or in $A\setminus W$, then by Lemma \ref{lemmaW7} a $W_{7}$-subdivision exists in $G$. Assume then that $v_{0}$ contains exactly one neighbour in $U(u)\setminus W$, and exactly one neighbour in $A\setminus W$. Thus, if $|(U(u)\cup A)\setminus W| > 3$, a type 4 edge-vertex-cutset can be formed from $v_{1}$, $v_{4}$, and the edges from $v_{0}$ to the two neighbours of $v_{0}$ in $(U(u)\cup A)\setminus W$.

Assume then that $|(U(u)\cup A)\setminus W| \le 3$. Thus, one of $U(u)\setminus W$, $A\setminus W$ contains only one vertex.

If either of the edges $v_{1}v_{4}$ or $v_{0}v_{1}$ exist in $G$, then, Reduction \ref{r1}A is possible. Assume then that these edges do not exist in $G$.

Suppose $v_{1}$ has degree $\ge 7$. Then there exist three neighbours of $v_{1}$, say, $a_{1}$, $a_{2}$, and $a_{3}$, such that $a_{i} \notin N_{H\cup P_{u1}}(v_{1})$ for all $1 \le i \le 3$.

By 3-connectivity, there exist at least two paths disjoint from $v_{1}$ joining $\{a_{1}, a_{2}, a_{3}\}$ to $H\cup Q\cup U(u)\cup A$, such that these paths are also vertex-disjoint from each other. Call these paths $P_{a_{1}}$ and $P_{a_{2}}$. Let $a'_{1}$ and $a'_{2}$ be the vertices where $P_{a_{1}}$ and $P_{a_{2}}$ first meet $H\cup Q\cup U(u)\cup A$ respectively.

Suppose that $\{v_{1}, a'_{1}, a'_{2}\}$ forms a separating set in $G$, the removal of which places $\{a_{1}, a_{2}, a_{3}\}$ and $H\cup Q\cup U(u)\cup A$ in different components. Then, since the bridge of $G|\{v_{1}, a'_{1}, a'_{2}\}$ containing $a_{1}$, $a_{2}$, $a_{3}$ contains at least three neighbours of $v_{1}$, and the bridge of $G|\{v_{1}, a'_{1}, a'_{2}\}$ containing $U(u)$ and $A$ contains at least two neighbours of $v_{1}$, Lemma \ref{lemmaW7} applies to show that $G$ contains a $W_{7}$-subdivision.

Suppose then that $\{v_{0}, a'_{1}, a'_{2}\}$ does not form a separating set in $G$. Thus, there exists some path joining $\{a_{1}, a_{2}, a_{3}\}$ to $H\cup Q\cup U(u)\cup A$, say $P_{a'_{3}}$, such that $P_{a'_{3}}$ is vertex-disjoint from both $P_{a'_{1}}$ and $P_{a'_{2}}$, and $P_{a'_{3}}$ first meets $H\cup Q\cup U(u)\cup A$ at some vertex $a'_{3}$. If $v_{4}\in \{a'_{1}, a'_{2}, a'_{3}\}$ or $v_{0}\in \{a'_{1}, a'_{2}, a'_{3}\}$, then Reduction \ref{r1}A can be performed on $G$. Assume then that this is not the case. Using the program to search and check all other possible placements of $a'_{1}$, $a'_{2}$ and $a'_{3}$ shows that a $W_{7}$-subdivision exists in each case.

Assume then that $v_{1}$ has degree $< 7$.

If $v_{4}$ also has degree $< 7$, then Reduction \ref{r2}A can be performed on $G$. Assume then that $v_{4}$ has degree $\ge 7$.

(A) Suppose $v_{1}$ has at most two neighbours in $U_{2}\setminus W$, say, $x_{1}$ and $x_{2}$.

If $|(U(u)\cup A)\setminus W| = 3$, a type 4 edge-vertex-cutset can be formed from the edges $v_{1}x_{1}$ and $v_{1}x_{2}$, and the vertices $v_{0}$ and $v_{4}$. Assume then that $U(u)\setminus W$ contains only $u'$, and $A\setminus W$ contains only $x$.

Since we know that the edges $v_{1}v_{4}$ and $v_{0}v_{1}$ do not exist in $G$, $v_{0}$ and $v_{4}$ must each have exactly two neighbours in the set $\{v_{1}, x, u'\}$. Therefore, a type 4a edge-vertex-cutset can be formed from the edges $v_{1}x_{1}$ and $v_{2}x_{2}$, and the vertices $v_{4}$ and $v_{0}$.

(B) Suppose then that $v_{1}$ has some third neighbour $y$ in $U_{2}\setminus W$, such that $y\notin N_{H}(v_{1})$.

By 3-connectivity, there must exist some path $Y$ in $U_{2}\setminus W$ joining $y$ to $H\cup Q$, such that $Y$ meets $H\cup Q$ only at its endpoint, say $y'$. Using the program to generate and check all possible such paths $Y$, it is found that a $W_{7}$-subdivision exists in $G$ for each case, except where $y'$ is an internal vertex on the path $Q$. Assume then that this is the case for all such paths $Y$.

Consider the set $S = \{v_{1}, v_{2}, v_{6}\}$. Suppose $S$ does not form a separating set in $G$. Then there exists some path disjoint from $S$ joining the two components of $(H\cup U(u)\cup A\cup Q) - S$.  Using the program to generate and check all possible such paths, it is found that a $W_{7}$-subdivision exists in each case. Assume then that $S$ forms a separating set in $G$. Let $T'$ be the bridge of $G|S$ containing $y$. Let $T''$ be the bridge of $G|S$ containing $v_{3}$, $v_{5}$, $U(u)$, and $A$.

Suppose $|(U(u)\cup A)\setminus W| = 3$. Suppose also that $v_{4}$ has at most two neighbours in $(U(u)\cup A)\setminus W$, say $b_{1}$ and $b_{2}$ (if a second neighbour exists). Since $v_{1}$ has degree $< 7$, and $v_{0}$ has only two neighbours in $(U(u)\cup A)\setminus W$, then a type 2a or 4a edge-vertex-cutset can be formed from $v_{0}$, $v_{1}$, $v_{4}b_{1}$, and $v_{4}b_{2}$ (if $b_{2}$ exists). Assume then that $v_{4}$ is adjacent to all three vertices in $(U(u)\cup A)\setminus W$. Since $v_{4}$ has degree $\ge 7$, $v_{4}$ must have $\ge 3$ neighbours in $U_{2}\setminus W$. Thus, by Lemma \ref{lemmaW7}, a $W_{7}$-subdivision exists centred on $v_{4}$.

Assume then that $|(U(u) \cup A)\setminus W| = 2$.

Suppose $|V(G) \setminus T''| \ge 3$. Recall that for any neighbour $y$ of $v_{1}$ where $y\in U_{2}\setminus W$ but $y\notin N_{H}(v_{1})$, all paths in $U_{2}\setminus W$ joining $y$ to $H\cup Q$ must first meet $H\cup Q$ at an internal vertex of the path $Q$. Thus, any neighbour of $v_{1}$ in $U_{2}\setminus W$ is also in the bridge $T'$. Any neighbours of $v_{1}$ that are not in $T'$, then, must be in $U(u)\setminus W$ or $A\setminus W$. Since $|(U(u)\cup A)\setminus W| = 2$, there can be only two such neighbours of $v_{1}$. Thus, a type 4 edge-vertex-cutset can be formed from $v_{2}$, $v_{6}$, and the edges $v_{1}x$ and $v_{1}u'$.

Assume then that $|V(G) \setminus T''| < 3$.

Let $X = (T''\cap U_{2})\setminus W$. Since $|V(G)| \ge 38$, $|U(u)\cup A| = 5$ (including $W$), and $|V(G) \setminus T''| \le 2$, $X$ must contain at least 31 vertices.

Consider the set $S_{1} = \{v_{2}, v_{0}, v_{4}\}$. Suppose that $S_{1}$ is not a separating set, but rather, there exists some path disjoint from $S_{1}$ joining $v_{3}$ to $v_{1}$. Using the program to check all possible placements of such a path shows that a $W_{7}$-subdivision exists in each case. Suppose then that $S_{1}$ forms a separating set in $G$, with at least two bridges: $T_{3}$, which contains the vertex $v_{3}$, and $T_{1}$, which contains the vertices $v_{1}$, $v_{5}$, $v_{6}$, $u'$, and $x$.

By the same arguments used in Case 1.1.2, it can be assumed that $|T_{3}\setminus S_{1}| \le 3$ and that there are no bridges of $G|S_{1}$ other than $T_{1}$ and $T_{3}$. Thus, $|V(G)\setminus T_{1}| \le 3$. Let $X' = X\cap T_{1}$. Since $|X| \ge 31$, $X'$ must contain at least 28 vertices.

Suppose there exists some internal vertex $p_{2}$ on the path $P_{2}$. By 3-connectivity, there exists some path $Q_{2}$ from $p_{2}$ to $H - P_{2}$ such that $Q_{2}$ meets $H - P_{2}$ only at its endpoint, say, $q_{2}$. Using the program to generate and check all possible placements of $Q_{2}$, it is found that the existence of such a path results in a $W_{7}$-subdivision in $G$, unless $q_{2}$ is contained in the bridge $T_{3}$, or $q_{2} = v_{4}$. Thus, by the same argument used in Case 1.1.2, it can be assumed that $X'$ does not contain any internal vertices on the path $P_{2}$.

Consider the set $S_{2} = \{v_{6}, v_{0}, v_{4}\}$. Suppose that $S_{2}$ is not a separating set, but rather, there exists some path disjoint from $S_{2}$ joining $v_{5}$ to $v_{1}$. Using the program to check all possible placements of such a path shows that a $W_{7}$-subdivision exists in each case. Suppose then that $S_{2}$ forms a separating set in $G$, with at least two bridges: $Y_{5}$, which contains the vertex $v_{5}$, and $Y_{1}$, which contains $v_{1}$, $v_{2}$, $v_{3}$, and $U(u)$.

By the same arguments used in Case 1.1.2, it can be assumed that $|Y_{5}\setminus S_{2}| \le 3$ and that there are no bridges of $G|S_{2}$ other than $Y_{5}$ and $Y_{1}$. Thus, $|V(G)\setminus Y_{1}| \le 3$. Let $X'' = X'\cap Y_{1}$. Since $|X'| \ge 28$, $X''$ must contain at least 25 vertices.

Suppose there exists some internal vertex $p_{6}$ on the path $P_{6}$. By 3-connectivity, there exists some path $Q_{6}$ from $p_{6}$ to $H - P_{6}$ such that $Q_{6}$ meets $H - P_{6}$ only at its endpoint, say, $q_{6}$. Using the program to generate and check all possible placements of $Q_{6}$, it is found that the existence of such a path results in a $W_{7}$-subdivision in $G$, unless $q_{6}$ is contained in the bridge $Y_{5}$, or $q_{6} = v_{4}$. Thus, by the same argument used in Case 1.1.2, it can be assumed that $X''$ does not contain any internal vertices on the path $P_{6}$.

Suppose there exists some internal vertex $p'_{2}$ on the path $v_{1}Cv_{2}$. If $G$ contains the configuration shown in either the second or fourth graphs of Figure \ref{w7case2_Q1}, then one of the endpoints of $Q$ forms such a vertex, and so any internal vertices on $v_{1}Cv_{2}$ are contained in the bridge $T'$. Suppose then that $G$ contains the configuration shown in either the first or third graphs of Figure \ref{w7case2_Q1}. By 3-connectivity, there exists some path $Q'_{2}$ from $p'_{2}$ to $H - v_{1}Cv_{2}$ such that $Q'_{2}$ meets $H$ only at its endpoint, say, $q'_{2}$. Using the program to generate and check all possible placements of $Q'_{2}$, it is found that the existence of such a path results in a $W_{7}$-subdivision in $G$, unless $q'_{2}$ is contained in the bridge $T'$, or $q'_{2} = v_{4}$. Thus, by the same arguments used in Case 1.1.2, it can be assumed either that all internal vertices on the path $v_{1}Cv_{2}$ are contained in the bridge $T'$, and thus are not in the set $X''$, or that all internal vertices on the path $v_{1}Cv_{2}$ are contained in some bridge $A'$ of $G|\{v_{1}, v_{2}, v_{4}\}$ such that $A'$ contains no vertices in $H - v_{4} - V(v_{1}Cv_{2})$, and $|A'\setminus \{v_{1}, v_{2}, v_{4}\}| \le 3$. Let $Z = X'' \setminus (A'\setminus \{v_{1}, v_{2}, v_{4}\})$. Since $|X''| \ge 25$, $|Z| \ge 22$.

Suppose there exists some internal vertex $p'_{6}$ on the path $v_{6}Cv_{1}$. If $G$ contains the configuration shown in either the third or fourth graphs of Figure \ref{w7case2_Q1}, then one of the endpoints of $Q$ forms such a vertex, and so any internal vertices on $v_{6}Cv_{1}$ are contained in the bridge $T'$. Suppose then that $G$ contains the configuration shown in either the first or second graphs of Figure \ref{w7case2_Q1}. By 3-connectivity, there exists some path $Q'_{6}$ from $p'_{6}$ to $H - v_{6}Cv_{1}$ such that $Q'_{6}$ meets $H$ only at its endpoint, say, $q'_{6}$. Using the program to generate and check all possible placements of $Q'_{6}$, it is found that the existence of such a path results in a $W_{7}$-subdivision in $G$, unless $q'_{6}$ is contained in the bridge $T'$, or $q'_{6} = v_{4}$. Thus, by the same arguments used in Case 1.1.2, it can be assumed either that all internal vertices on the path $v_{6}Cv_{1}$ are contained in the bridge $T'$, and thus are not in the set $Z$, or that all internal vertices on the path $v_{6}Cv_{1}$ are contained in some bridge $B$ of $G|\{v_{1}, v_{6}, v_{4}\}$ such that $B$ contains no vertices in $H - v_{4} - V(v_{6}Cv_{1})$, and $|B\setminus \{v_{1}, v_{6}, v_{4}\}| \le 3$. Let $Z' = Z - (B\setminus \{v_{1}, v_{6}, v_{4}\})$. Since $|Z| \ge 22$, $|Z'| \ge 19$.

Let $S' = \{v_{4}, v_{2}, v_{6}\}$.  Suppose there exists a second bridge of $G|S'$ other than that containing $v_{1}$. Let $X_{1}$ be the bridge of $G|S'$ that contains $v_{1}$.

By the same argument used in Case 1.1.2, assume that $v_{4}$ has at most two neighbours not in $X_{1}$. Therefore, unless $|V(G)\setminus X_{1}| \le 3$, a type 2 or 4 edge-vertex-cutset is formed from $v_{2}$, $v_{6}$, and the edge or edges joining $v_{4}$ to $V(G)\setminus X_{1}$. Suppose then that $|V(G)\setminus X_{1}| \le 3$. Let $Z'' = Z'\cap X_{1}$. Since $|Z'| \ge 19$, $|Z''| \ge 16$. However, again using the same arguments of Case 1.1.2, $Z''$ can only contain the two vertices $v_{2}$ and $v_{6}$, which is a contradiction.

\textbf{1.2.} Suppose then that $P_{1}$ and $P_{4}$ are single edges. Then by Lemma \ref{lemma3}, a $W_{7}$-subdivision exists in $G$.

\textbf{2.} Assume now there is no such path $Q$ from $H_{2}$ to $H_{5}$.

Thus, there exist at least three bridges of $G|W$: $U_{2}$, $U_{5}$, and $U(u)$, where $U_{2}$ and $U_{5}$ are the bridges containing the subgraphs $H_{2}$ and $H_{5}$ respectively.

Suppose that $U_{2}\setminus W$ contains some internal vertex $a$ on the path $P_{1}$, and some internal vertex $b$ on the path $P_{4}$. There must exist paths $P_{a}$ and $P_{b}$ in $\langle U_{2}\setminus W\rangle$ joining $a$ and $b$ to $H\cap \langle U_{2}\setminus W\rangle$ respectively. Using the program to generate and check all possible placements of such paths shows that a $W_{7}$-subdivision exists in each case. Assume then that $U_{2}\setminus W$ does not contain internal vertices on both $P_{1}$ and $P_{4}$. By symmetry of the graph, assume also that $U_{5}\setminus W$ does not contain internal vertices on both $P_{1}$ and $P_{4}$.

Suppose there exists some bridge $A$ of $G|W$ such that $A\setminus W$ contains at least three neighbours of $v_{0}$. If $A \notin \{U_{2}, U_{5}\}$, then Lemma \ref{lemmaW7} applies to show that a $W_{7}$-subdivision exists. Suppose then without loss of generality that $A = U_{2}$. Thus, given the conclusion drawn in the previous paragraph, $A$ does not contain internal vertices of both $P_{1}$ and $P_{4}$. Without loss of generality, suppose that there are no internal vertices of $P_{4}$ contained in $A$. If $U_{5}$ does not contain any internal vertices of $P_{4}$, then Lemma \ref{lemmaW7} applies to show that a $W_{7}$-subdivision exists. If $U_{5}$ does contain internal vertices of $P_{4}$, then a $W_{7}$-subdivision is shown to exist by applying Lemma \ref{lemma2} to both $U_{2}$ and $U_{5}$, and using $U(u)$ to form a path from $v_{0}$ to either $v_{1}$ or $v_{4}$ as a seventh spoke.

Assume then that each bridge of $G|W$ contains at most two neighbours of $v_{0}$ not in $W$.

Let $A$ be some bridge of $G|W$. Let $a_{1}$ be some neighbour of $v_{0}$ in $A\setminus W$, and let $a_{2}$ be the second neighbour of $v_{0}$ in $A\setminus W$ if such a vertex exists. If $|A\setminus W| > 3$, then a type 2 or 4 edge-vertex-cutset can be formed from $v_{1}$, $v_{4}$, $v_{0}a_{1}$, and $v_{0}a_{2}$ (if $a_{2}$ exists).

Assume then that each bridge of $G|W$ contains at most three vertices not in $W$. Then, since $|V(G)| \ge 38$, there must be at least 12 bridges of $G|W$.

If there exists some bridge of $G|W$ that contains only one vertex not in $W$, then Reduction \ref{r1}A can be performed on $G$. Assume then that each bridge of $G|W$ contains at least two vertices not in $W$.

Suppose each bridge of $G|W$ contains exactly two vertices not in $W$. Then, since $|V(G)| \ge 38$, there must exist at least 18 bridges of $G|W$. At least one bridge of $G|W$ must be contained as a subdivision in two others, therefore Reduction \ref{r1}B can be performed on $G$.

Assume then that there exists some bridge of $G|W$, say, $U'$, such that $|U'\setminus W| = 3$. To avoid an internal 4-edge-cutset, there must be at least five edges joining $W$ to $U'\setminus W$.

\textbf{2.1.} Suppose there exists some vertex $i \in \{v_{1}, v_{4}\}$ such that $i$ has at least three neighbours in $U'\setminus W$. If there exists some bridge $U''$ such that $U'' \neq U'$ and $i$ has at least two neighbours in $U''\setminus W$, then by Lemma \ref{lemmaW7} a $W_{7}$-subdivision exists centred on $i$. Assume then that every bridge of $G|W$ other than $U'$ (of which there are at least 11) contains exactly one neighbour of $i$ not in $W$. Let $j$ be the vertex in $\{v_{1}, v_{4}\}$ other than $i$.

\textbf{2.1.1.} Suppose there exists some bridge $U_{j}$ of $G|W$ such that $U_{j}\setminus W$ contains at least three neighbours of $j$. If there exists some bridge $U'_{j}$ such that $U'_{j} \neq U_{j}$ and $j$ has at least two neighbours in $U'_{j}\setminus W$, then by Lemma \ref{lemmaW7} a $W_{7}$-subdivision exists centred on $j$. Assume then that every bridge of $G|W$ other than $U_{j}$ contains exactly one neighbour of $j$ not in $W$. Thus, there are at least ten bridges of $G|W$ that contain only one neighbour of $i$ not in $W$ and only one neighbour of $j$ not in $W$. Since it has already been assumed that each bridge of $G|W$ contains at most two neighbours of $v_{0}$ not in $W$, it must be the case that Reduction \ref{r1}B can be performed on $G$.

\textbf{2.1.2.} Assume then that each bridge of $G|W$ contains at most two neighbours of $j$ not in $W$. Thus, there are at least 11 bridges of $G|W$ each with exactly one neighbour of $i$ not in $W$, at most two neighbours of $j$ not in $W$, and at most two neighbours of $v_{0}$ not in $W$. Either Reduction \ref{r1}B or Reduction \ref{r1}C can thus be performed on $G$.

\textbf{2.2.} Suppose then that there are two vertices in $W$, say $x$ and $y$, such that $x$ and $y$ each have exactly two neighbours in $U'\setminus W$. Then a type 2a or 4a edge-vertex-cutset can be formed from $x$, $y$, and the edge or edges joining the third vertex in $W$ to $U'\setminus W$.

\vspace{0.2in}
\noindent \textbf{Case (b)(iii): $u_{1} = v_{1}$, $u_{2} \in P_{4} \setminus \{v_{0}, v_{4}\}$} (Figure \ref{w7case3})

\begin{figure}[!h]
\begin{center}
\includegraphics[width=0.4\textwidth]{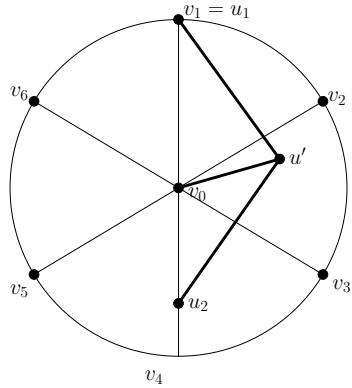}
\caption{Case (b)(iii): $u_{1} = v_{1}$, $u_{2} \in P_{4} \setminus \{v_{0}, v_{4}\}$}
\label{w7case3}
\end{center}
\end{figure}

Let $W = \{v_{0}, v_{1}, v_{4}\}$. Let $H_{2}$ be the subgraph consisting of the path from $v_{1}$ to $v_{4}$ that passes through $v_{2}$ and $v_{3}$, not including endpoints, and all of $P_{2}$ and $P_{3}$ except for $v_{0}$. Let $H_{5}$ be the subgraph consisting of the path from $v_{1}$ to $v_{4}$ that passes through $v_{5}$ and $v_{6}$, not including endpoints, and all of $P_{5}$ and $P_{6}$ except for $v_{0}$. Recall that $U(u)$ is the bridge of $G|V(H)$ which contains $u'$.

\textbf{1.} Suppose there exists some path $Q$ from some point in $H_{2}$ to some point in $H_{5}$. All but four of the possible configurations contain a $W_{7}$-subdivision. The four exceptions are shown in Figure \ref{w7case3_Q1}. Suppose that $G$ contains the configuration shown in one of these graphs.

\begin{figure}[!h]
\begin{center}
\includegraphics[width=0.8\textwidth]{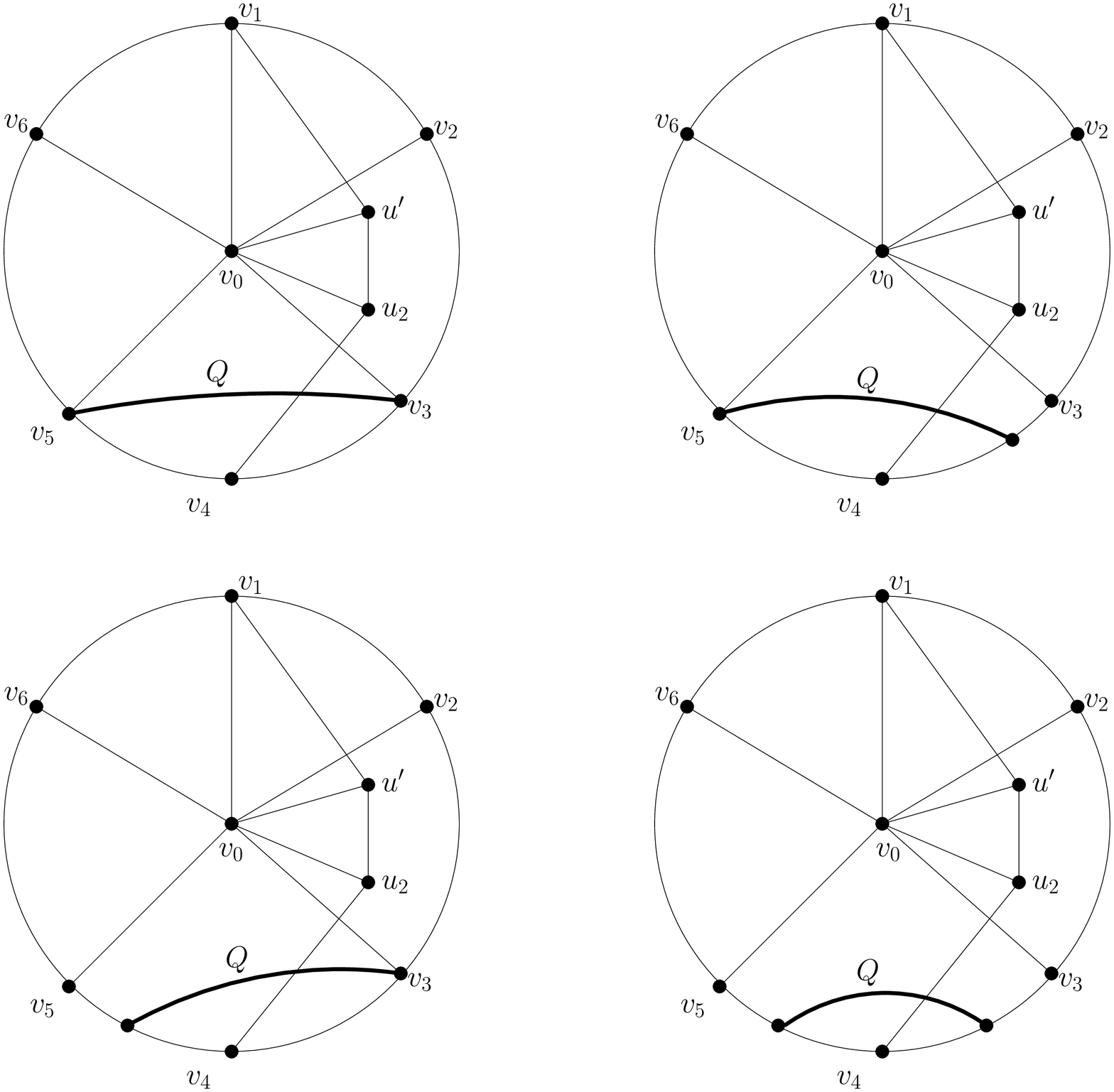}
\caption{Case (b)(iii), path $Q$ from $H_{2}$ to $H_{5}$.}
\label{w7case3_Q1}
\end{center}
\end{figure}

\textbf{1.1.} Suppose there exists some path $R$ in $G$ such that $W$ is not a separating set.

Using the program to generate and check all possible placements of such a path, it is found that the existence of such a path $R$ results in the existence of a $W_{7}$-subdivision in $G$.

\textbf{1.2.} Suppose that no such path $R$ exists in $G$, that is, $U(u)$ forms a bridge of $G|W$, and $H_{2}\cup H_{5}$ is in some bridge of $G|W$ other than $U(u)$. Call this bridge $U_{2}$. The same argument as in Case (b)(ii) 1.2.1.2 can be applied to show that $G$ contains a $W_{7}$-subdivision.

\textbf{2.} Assume now there is no such path $Q$ from $H_{2}$ to $H_{5}$. By symmetry of the graph, we can similarly assume that neither $H_{2}$ nor $H_{5}$ are contained in the bridge $U(u)$.

Thus, there exist at least three bridges of $G|W$: $U_{2}$, $U_{5}$, and $U(u)$, where $U_{2}$ and $U_{5}$ are the bridges containing the subgraphs $H_{2}$ and $H_{5}$ respectively. The same argument as in Case (b)(ii) 2 can be applied to show that $G$ contains a $W_{7}$-subdivision.
\end{proof}

\section{Algorithm}
\label{algorithm}

Theorem \ref{theorem} forms the basis for the following algorithm for solving SHP($W_{7}$).

\subsubsection*{Algorithm 1}

\begin{itemize}
\item[1.] Input: Graph $G$.
\item[2.] If $G$ is 3-connected, go to Step 4; otherwise:
		\begin{itemize}
		\item[(a)] If $G$ is not connected, apply the algorithm recursively to each connected component.
		\item[(b)] If $G$ is not 2-connected, apply the algorithm recursively to each block.
		\item[(c)] Find a separating set $V_{0}$ for $G$ of size 2. Form $G'$ by adding an edge between the two members of $V_{0}$ if none exists already.
		\item[(d)] Find the bridges $U_{1}, \ldots , U_{k}$ of $G'|V_{0}$, and apply the algorithm recursively to each $\langle U_{i} \rangle$, $1 \le i \le k$. If any $\langle U_{i} \rangle$ is accepted, accept $G$; otherwise reject $G$.
		\end{itemize}
\item[3.] If $G$ has an internal 3-edge-cutset, separate $G$ into parts along its 3-edge cutset as described in Algorithm 2 of \cite{Farr88}, and apply the algorithm recursively to each part.
\item[4.] If $G$ has an internal 4-edge-cutset, separate $G$ into parts along its 4-edge-cutset as described in Algorithm 1 of \cite{Robinson08}, and apply the algorithm recursively to each part.
\item[5.] If $G$ has a type 1 or 1a edge-vertex-cutset, separate $G$ into parts along its type 1 or 1a edge-vertex-cutset as follows:
		\begin{itemize}
		\item[(a)] Let $S = \{e_{1}, e_{2}, v\}$ be a type 1 or 1a edge-vertex-cutset of $G$.
		\item[(b)] Let $G_{1}$, $G_{2}$ be the two components of $G - S$.
		\item[(c)] Form $G'_{1}$ from $G$ by replacing $G_{2}$ as described in Theorem \ref{edgevertex} (if $S$ is a type 1 edge-vertex-cutset) or Theorem \ref{edgevertex1a} (if $S$ is a type 1a edge-vertex-cutset), and similarly, form $G'_{2}$ from $G$ by replacing $G_{1}$ as described in Theorem \ref{edgevertex} (if $S$ is a type 1 edge-vertex-cutset) or Theorem \ref{edgevertex1a} (if $S$ is a type 1a edge-vertex-cutset).
		\item[(d)] Apply the algorithm recursively to $G'_{1}$ and $G'_{2}$. If either is accepted, accept $G$; otherwise reject $G$.
		\end{itemize}
\item[6.] If $G$ has a type 2 or 2a edge-vertex-cutset, separate $G$ into parts along its type 2 or 2a edge-vertex-cutset as follows:
		\begin{itemize}
		\item[(a)] Let $S = \{e, v_{1}, v_{2}\}$ be a type 2 or 2a edge-vertex-cutset of $G$.
		\item[(b)] Let $G_{1}$, $G_{2}$ be the two components of $G - S$.
		\item[(c)] Form $G'_{1}$ from $G$ by replacing $G_{2}$ as described in Theorem \ref{edgevertex2} (if $S$ is a type 2 edge-vertex-cutset) or Theorem \ref{edgevertex2a} (if $S$ is a type 2a edge-vertex-cutset), and similarly, form $G'_{2}$ from $G$ by replacing $G_{1}$ as described in Theorem \ref{edgevertex2} (if $S$ is a type 2 edge-vertex-cutset) or Theorem \ref{edgevertex2a} (if $S$ is a type 2a edge-vertex-cutset).
		\item[(d)] Apply the algorithm recursively to $G'_{1}$ and $G'_{2}$. If either is accepted, accept $G$; otherwise reject $G$.
		\end{itemize}
\item[7.] If $G$ has a type 3 or 3a edge-vertex-cutset, separate $G$ into parts along its type 3 or 3a edge-vertex-cutset as follows:
		\begin{itemize}
	\item[(a)] Let $S = \{v, e_{1}, e_{2}, e_{3}, e_{4}\}$ be a type 3 or 3a edge-vertex-cutset of $G$.
		\item[(b)] Let $G_{1}$, $G_{2}$ be the two components of $G - S$, such that $G_{1}$ is the component of $G - S$ that contains exactly two vertices incident with $e_{1}, \ldots, e_{4}$.
		\item[(c)] Form $G'_{1}$ from $G$ by replacing $G_{2}$ as described in Theorem \ref{edgevertex3} (if $S$ is a type 3 edge-vertex-cutset) or Theorem \ref{edgevertex3a} (if $S$ is a type 3a edge-vertex-cutset), and similarly, form $G'_{2}$ from $G$ by replacing $G_{1}$ as described in Theorem \ref{edgevertex3} (if $S$ is a type 3 edge-vertex-cutset) or Theorem \ref{edgevertex3a} (if $S$ is a type 3a edge-vertex-cutset).
		\item[(d)] Apply the algorithm recursively to $G'_{1}$ and $G'_{2}$. If either is accepted, accept $G$; otherwise reject $G$.
		\end{itemize}
\item[8.] If $G$ has a type 4 or 4a edge-vertex-cutset, separate $G$ into parts along its type 4 or 4a edge-vertex-cutset as follows:
		\begin{itemize}
	\item[(a)] Let $S = \{v_{1}, v_{2}, e_{1}, e_{2}\}$ be a type 4 or 4a edge-vertex-cutset of $G$.
		\item[(b)] Let $G_{1}$, $G_{2}$ be the two components of $G - S$, such that $G_{1}$ is the component of $G - S$ that contains exactly one vertex incident with $e_{1}, e_{2}$.
		\item[(c)] Form $G'_{1}$ from $G$ by replacing $G_{2}$ as described in Theorem \ref{edgevertex4} (if $S$ is a type 4 edge-vertex-cutset) or Theorem \ref{edgevertex4a} (if $S$ is a type 4a edge-vertex-cutset), and similarly, form $G'_{2}$ from $G$ by replacing $G_{1}$ as described in Theorem \ref{edgevertex4} (if $S$ is a type 4 edge-vertex-cutset) or Theorem \ref{edgevertex4a} (if $S$ is a type 4a edge-vertex-cutset).
		\item[(d)] Apply the algorithm recursively to $G'_{1}$ and $G'_{2}$. If either is accepted, accept $G$; otherwise reject $G$.
		\end{itemize}
\item[9.] If $G$ has an internal $(1,1,1,1)$-cutset, separate $G$ into parts along its internal $(1,1,1,1)$-cutset as follows:
		\begin{itemize}
		\item[(a)] Let $E' = \{e_{1}, e_{2}, e_{3}, e_{4}\}$ be an internal $(1,1,1,1)$-cutset of $G$.
		\item[(b)] Let $G_{1}$, $G_{2}$ be the two components of $G - S$. Let $u_{1}, u_{2}, u_{3}, u_{4}$ be the endpoints of $e_{1}, \ldots, e_{4}$ in $G_{1}$, and let $v_{1}, v_{2}, v_{3}, v_{4}$ be the endpoints of $e_{1}, \ldots, e_{4}$ in $G_{2}$.
		\item[(c)] Form $G'_{1}$ from $G$ by replacing $G_{2}$ with the subgraph $X$, where $X$ contains only the four vertices $v_{1}, \ldots, v_{4}$, all of which are made adjacent to one another if they were not already. Form $G'_{2}$ from $G$ by replacing $G_{1}$ with the subgraph $Y$, where $Y$ contains only the four vertices $u_{1}, \ldots, u_{4}$, all of which are made adjacent to one another if they were not already.
		\item[(d)] Apply the algorithm recursively to $G'_{1}$ and $G'_{2}$. If either is accepted, accept $G$; otherwise reject $G$.
		\end{itemize}
\item[10.] If some reduction $R$ (where $R$ is one of Reductions \ref{r1}A, \ref{r1}B, \ref{r1}C, \ref{r2}A, \ref{r2}B, \ref{r6}, \ref{r7}, \ref{r8}, or \ref{r1_big}) can be performed on $G$, put $G' = R(G)$. Apply the algorithm recursively to $G'$. If $G'$ is accepted, accept $G$; otherwise, reject $G$.
\item[11.] If $|V(G)| < 38$, perform an exhaustive search of $G$ for a $W_{7}$-subdivision. If such a subdivision is found, accept $G$; otherwise reject $G$.
\item[12.] If $G$ has some vertex of degree at least 7, accept $G$; otherwise, reject $G$.
\end{itemize}

Note that while certain steps in this algorithm must be performed in the order given, in other cases the order can be varied with no effect on the algorithm's correctness. For example, step 10 (performing reductions on $G$) could be executed before any of the steps 3 through to 9 without affecting the outcome of the algorithm. However, steps 5, 6, 7 and 8, all of which deal with edge-vertex-cutsets, must be performed in the order given, since, for example, the theorem given in Section \ref{separatingsets} regarding type 2 edge-vertex-cutsets only applies to graphs with no type 1 edge-vertex-cutsets. Note also that while Step 11 could be performed earlier in the algorithm without altering its correctness, for the purposes of efficiency it is more desirable that this step be performed later.

Steps 2, 3, 4, and 12 use the same techniques as the algorithms presented in \cite{Farr88} and \cite{Robinson08} for solving SHP($W_{4}$), SHP($W_{5}$), and SHP($W_{6}$). Step 5 uses the same technique as in the algorithm of \cite{Robinson08} to deal with internal 4-edge-cutsets. Steps 5 to 9 deal with the new forbidden separating sets defined in Section \ref{separatingsets}, and the correctness of these steps follows from the theorems given in that section (Theorems \ref{edgevertex} to \ref{bigfourcutset}). Step 10 deals with the forbidden reductions defined in Section \ref{reductions}, and the correctness of this step follows from the theorems given in that section (Theorems \ref{reduction1} to \ref{reduction1big}).

Using the same arguments given in \cite{Robinson08}, the most complex steps in this algorithm (those involving finding sets of four edges; i.e., Steps 4, 7, and 9) have a worst case complexity of $O(m^{5})$. Thus, assuming an imbalanced division of $G$ at each recursion of the algorithm, as with the algorithm for SHP($W_{6}$) in \cite{Robinson08}, this algorithm's total complexity is $O(m^{6})$, and therefore runs in polynomial time.

\section{Concluding remarks}
\label{conclusion}

It is hoped that further work in this area may lead to a characterization for the general wheel, $W_{k}$. As mentioned in the Introduction to this paper, such a characterization may not be complete, but rather rely on parameterization of the input, taking $k$ as the parameter, and yielding an algorithm that is fixed-parameter tractable. Certain features of the algorithm in Section \ref{algorithm} may lend themselves to a parameterized algorithm. In particular, the overall structure of the algorithm --- breaking down the input graph into smaller, manageable components, until its size is bounded by some constant, then performing exhaustive search on the remaining, constant-sized input --- strongly resembles the parameterized algorithmic technique of reducing to a problem kernel, as described in \cite{D&F99}.

It is, however, probable that certain difficulties will arise in looking at characterizations beyond $W_{7}$. Given the large increase in length and difficulty between the proofs of the $W_{5}$ \cite{Farr88} and $W_{6}$ cases \cite{Robinson08}, and even more so between the $W_{6}$ and $W_{7}$ case, it seems likely that a characterization involving $W_{8}$ would be extremely complex. While some of the techniques used in this paper --- all of the Reductions defined in Section \ref{reductions}, for example --- are generalizable to higher cases, others are not. Each of the edge-vertex-cutsets defined in Section \ref{separatingsets} is useful in the algorithm for solving SHP($W_{7}$) because of an associated theorem that applies only to the $W_{7}$ case --- these theorems rely on the fact that for $k \le 7$, if $G$ contains some $W_{k}$-subdivision $H$ such that $H$ is centred on some vertex $v$ in an edge-vertex-cutset $S$, there exists some component $G_{1}$ of $G - S$ such that $G_{1}$ contains at most three neighbours of $v$ in $H$. This no longer holds when $k = 8$. Thus, in dealing with pattern graphs $W_{k}$ for $k\ge 8$, it would be necessary to find new techniques to replace edge-vertex-cutsets, or to develop new theorems that broaden their usefulness.

It may also be possible to use some of the techniques presented in this paper to work towards characterizing pattern graphs other than wheels. For example, developing a characterization of graphs containing no subdivisions of $K_{5}$ would be worth investigating.

\end{document}